\documentclass[11pt]{article}
\usepackage[T1]{fontenc}
\usepackage[letterpaper, margin=1in]{geometry}
\usepackage{graphicx} 
\usepackage{amsmath}
\usepackage{amssymb}
\usepackage{amsthm}
\usepackage{mathabx,mathrsfs}
\usepackage{mathtools}
\usepackage{bbm}
\usepackage[colorlinks=true, citecolor=blue, urlcolor=blue, pagebackref=true, linktocpage, plainpages=false, pdfpagelabels]{hyperref}

\usepackage{xcolor}
\usepackage{stmaryrd}
\usepackage{subcaption}
\usepackage{paralist}
\usepackage{booktabs}
\usepackage{enumitem}
\usepackage{undertilde}
\usepackage{cleveref}
\usepackage{tikz-cd}
\usepackage{tikz}

\usepackage[textwidth=2.5cm]{todonotes}

\newtheorem{theorem}{Theorem}[section]
\newtheorem{lemma}{Lemma}[section]
\newtheorem{fact}{Fact}[section]
\newtheorem{corollary}{Corollary}[section]
\newtheorem{proposition}{Proposition}[section]

\theoremstyle{definition}
\newtheorem{definition}{Definition}[section]

\newtheorem{remark}{Remark}[section]
\newtheorem{conjecture}{Conjecture}[section]
\newtheorem{question}{Question}[section]
\newtheorem{example}{Example}[section]

\renewcommand{\setminus}{\smallsetminus}

\newcommand{\dif}{\mathop{}\!\mathrm{d}}

\newcommand{\uF}{\utilde{F}}
\newcommand{\uf}{\utilde{f}}

\newcommand{\ov}{\overline}

\newcommand{\bbC}{\mathbb{C}}
\newcommand{\bbF}{\mathbb{F}}

\newcommand{\bbK}{\mathbb{K}}

\newcommand{\bbN}{\mathbb{N}}

\newcommand{\bbQ}{\mathbb{Q}}
\newcommand{\bbR}{\mathbb{R}}
\newcommand{\bbZ}{\mathbb{Z}}

\newcommand{\sfZ}{\mathsf{Z}}

\newcommand{\Par}{\mathscr{P}}
\newcommand{\Pars}{\mathscr{P}_{\mathrm{s}}}

\newcommand{\calP}{\mathcal{P}}
\newcommand{\calQ}{\mathcal{Q}}
\newcommand{\calR}{\mathcal{R}}
\newcommand{\calV}{\mathcal{V}}
\newcommand{\calW}{\mathcal{W}}
\newcommand{\calX}{\mathcal{X}}
\newcommand{\calZ}{\mathcal{Z}}

\newcommand{\specMM}{\Delta_{\mathrm{MM}}}

\newcommand{\degen}{\trianglelefteq}
\newcommand{\neged}{\trianglerighteq}

\newcommand{\id}{\mathrm{id}}

\newcommand{\ang}[1]{\langle #1 \rangle}
\newcommand{\Ang}[1]{\left\langle #1 \right\rangle}

\newcommand{\floor}[1]{\lfloor #1 \rfloor}
\newcommand{\Floor}[1]{\left\lfloor #1 \right\rfloor}
\newcommand{\ceil}[1]{\lceil #1 \rceil}
\newcommand{\Ceil}[1]{\left\lceil #1 \right\rceil}

\newcommand{\Mat}{\mathrm{Mat}}

\newcommand{\fourcycle}[5][1.25]{%
  \tikz[baseline=-0.7ex, x=1.35ex, y=1.35ex, scale=#1]{
    \node[inner sep=0pt] (L) at (-2.0,0) {$1$};
    \node[inner sep=0pt] (R) at ( 2.0,0) {$2$};
    \node[inner sep=0pt] (U) at (0, 1.6) {$3$};
    \node[inner sep=0pt] (D) at (0,-1.6) {$4$};

    \draw[line width=0.45pt] (L) -- node[pos=.5, above left,  inner sep=0.4pt] {$\scriptstyle #2$} (U);
    \draw[line width=0.45pt] (U) -- node[pos=.5, above right, inner sep=0.4pt] {$\scriptstyle #3$} (R);
    \draw[line width=0.45pt] (R) -- node[pos=.5, below right, inner sep=0.4pt] {$\scriptstyle #4$} (D);
    \draw[line width=0.45pt] (D) -- node[pos=.5, below left,  inner sep=0.4pt] {$\scriptstyle #5$} (L);
  }%
}

\DeclareMathOperator{\Rk}{R}

\DeclareMathOperator{\AR}{\utilde{\mathrm{R}}}
\DeclareMathOperator{\AQ}{\utilde{\mathrm{Q}}}
\DeclareMathOperator{\SR}{SR}
\DeclareMathOperator{\ASR}{\utilde{\mathrm{SR}}}
\DeclareMathOperator{\Q}{Q}

\DeclareMathOperator{\PR}{PR}
\DeclareMathOperator{\CR}{CR}
\DeclareMathOperator{\ACR}{\utilde{\mathrm{CR}}}
\DeclareMathOperator{\NCR}{NCR}

\DeclareMathOperator{\GL}{GL}

\DeclareMathOperator{\eH}{H}

\DeclareMathOperator*{\bbE}{\mathbb E} 
\DeclareMathOperator*{\argmax}{arg\,max} 
\DeclareMathOperator{\supp}{supp}
\DeclareMathOperator{\Rep}{Rep}
\DeclareMathOperator{\Span}{span}

\newcommand{\email}[1]{\href{mailto:#1}{\texttt{#1}}}

\title{The edge of the asymptotic spectrum of tensors}
\author{Josh Alman\thanks{Columbia University. \email{josh@cs.columbia.edu}. Supported in part by NSF Grant CCF-2238221 and a Packard Foundation Fellowship.} \and
    Baitian Li\thanks{Columbia University. \email{bl3052@columbia.edu}. Supported in part by NSF Grant CCF-2238221, a Packard Foundation Fellowship, and a Columbia SEAS Presidential Fellowship.} \and
    Kevin Pratt\thanks{Columbia University. \email{ktp2116@columbia.edu}. Supported in part by NSF Grant CCF-2238221 and a Packard Foundation Fellowship.}}
\date{}

\begin{document}

\maketitle

\begin{abstract}
Strassen founded the theory of the asymptotic spectrum of tensors to study the complexity of matrix multiplication. Since then, this framework has found broader applications in theoretical computer science and combinatorics.

A central challenge in this theory is to explicitly construct new spectral points. In Crelle~1991, Strassen proposed the upper support functionals $\zeta^\theta$ as candidate spectral points, where $\theta$ ranges over a triangle $\Theta$. Recent progress, involving tools and ideas from quantum information theory (Christandl--Vrana--Zuiddam, STOC 2018, JAMS 2021) and convex optimization (Hirai, 2025), culminated in the proof that the upper support functionals are indeed spectral points over the complex numbers (Sakabe--Do\u{g}an--Walter, 2026).

In this paper, we give an even clearer picture of the situation for support functionals when $\theta$ lies along the edges of the triangle. We show that not only are these functionals spectral points, but that they are uniquely determined as spectral points by their behavior on matrix multiplication tensors. As our methods are algebraic, as a corollary this establishes for the first time the existence of nontrivial spectral points over arbitrary fields. Previously, the only spectral points that were known in positive characteristic were the three flattening ranks, corresponding to the three vertices of the triangle.

As part of our argument, we show a close connection between the edge support functionals and \emph{Harder--Narasimhan filtrations} from quiver representation theory. We thus show, using recent work in algorithmic invariant theory, that these support functionals can be computed in deterministic polynomial time. Other ingredients of our proof include a new criterion for abstractly characterizing asymptotic tensor ranks by spectral points, and a characterization of the edge support functionals in terms of matrix multiplication capacity. As another application of these tools, we prove the existence of spectral points for higher-mode tensors beyond those currently known.
\end{abstract}

\pagenumbering{gobble}
\newpage

\tableofcontents

\newpage

\pagenumbering{arabic}

\section{Introduction}

Let $U, V, W$ be finite-dimensional vector spaces over a field $\bbF$. Our main objects of study are $3$-mode tensors, namely elements $T \in U\otimes V\otimes W$. We say that a tensor $S \in U'\otimes V'\otimes W'$ \emph{restricts} to $T$ if there exist linear maps $A\colon U'\to U$, $B\colon V'\to V$, and $C\colon W'\to W$ such that $T = (A\otimes B\otimes C)S$; we write this as $S\geq T$.

The problem of determining whether one tensor restricts to another captures questions from several areas, including the complexity of matrix multiplication in theoretical computer science \cite{Str69Gaussian,Str88AsymSpec,CW90Laser,ADWXXZ25MoreAsym}, progression-free sets \cite{Tao16Capset,TS16SliceRank,CLP17Progression,EG17Capset,BCCGNSU17Capset,KSS18Capset} and sunflowers \cite{NS17Sunflower} in combinatorics, and entanglement transformations \cite{DVC00QI,WDGC13QI,Zuiddam18Thesis,Christandl24QI} in quantum information theory.

\paragraph{Notions of ranks.}
The aforementioned questions boil down to understanding restrictions between a given tensor and the diagonal tensors $\ang{n} = \sum_{i=1}^n x_iy_iz_i\in \bbF^n\otimes \bbF^n\otimes \bbF^n$, which are captured by the notions of tensor rank and subrank. The \emph{rank} $\Rk(T)$ of $T$ is the smallest $r$ such that $\ang{r} \geq T$, and the \emph{subrank} $\Q(T)$ of $T$ is the largest $q$ such that $T \geq \ang{q}$. For example, the rank of the matrix multiplication tensors determines the complexity of matrix multiplication \cite{Str69Gaussian}, while the subrank of the structural tensor of a hypergraph upper-bounds the size of a \emph{hypergraph induced matching} \cite{AVZ20Match}; this framework includes all of the combinatorial problems mentioned above and controls the size of the corresponding structures.

Beyond tensor rank, many other rank notions for tensors have been studied, including slice rank \cite{Tao16Capset,TS16SliceRank}, analytic rank \cite{GW11AR,Lov19AR}, geometric rank \cite{KMZ23GR}, $G$-stable rank \cite{Derksen22GStableRank} and (non-)commutative rank \cite{FR04NCRk}. Like tensor rank, these notions have numerous connections to computer science and combinatorics. The two most important examples for us are:
\begin{itemize}
    \item \emph{Commutative rank.} The problem of finding a deterministic polynomial-time algorithm for computing commutative rank $\CR$ is a notorious open problem in the study of deterministic \emph{polynomial identity testing (PIT)}; such an algorithm would imply circuit lower bounds that seem far beyond our current reach \cite{KI04PIT}. Motivated by this, a series of recent breakthroughs showed that if one replaces commutative rank with noncommutative rank, then efficient algorithms \emph{do} exist, leading to deterministic solutions of the \emph{non-commutative} version of PIT \cite{GGOW16PIT, GGOW20PIT, IQS18NCR, HH21NCR}.
    \item \emph{Slice rank.} In hindsight, the breakthrough upper bound on the size of \emph{cap sets} \cite{EG17Capset} in additive combinatorics comes from upper-bounding subrank using slice rank $\SR$ \cite{Tao16Capset,TS16SliceRank}. This perspective was later extended to prove strong bounds for progression-free sets \cite{CLP17Progression,BCCGNSU17Capset,KSS18Capset} and sunflowers \cite{NS17Sunflower}. In computer science, these results are closely linked to recently identified \emph{barriers} for current approaches to bounding $\omega$ \cite{BCCGNSU17Capset, Alm21Barrier, CVZ21Barrier, CLLZ25Rect}.
\end{itemize}

\paragraph{Asymptotics.}
All of the questions above ultimately concern the ranks and subranks of a growing family of tensors arising from tensor powers of a fixed base case $T^{\otimes n}$. It is therefore natural to study the asymptotic behavior of these ranks, and in particular their regularizations, such as the \emph{asymptotic rank}
\[ \AR(T) \coloneq \lim_{n\to\infty} \Rk(T^{\otimes n})^{1/n}, \]
and similarly one can define the regularizations $\AQ, \ACR, \ASR$ of $\Q, \CR, \SR$, respectively. One can also define the \emph{asymptotic restriction} between given tensors $T$ and $S$ by
\[ T^{\otimes n} \leq S^{\otimes n+o(n)}, \]
which we denote by $T \lesssim S$.

Interestingly, geometric rank, analytic rank, and $G$-stable rank are each equivalent to slice rank up to a constant factor \cite{CM22Equiv}. There is a similar relationship between non-commutative rank and commutative rank \cite{FR04NCRk}. As a consequence, the regularizations of these ranks collapse to the following four quantities, with the known relations
\begin{equation} \label{eqn:asymp-relations}
    {\AQ} \quad \leq \quad {\ASR} \quad \leq \quad {\ACR} \quad \leq \quad {\AR}.
\end{equation}

\paragraph{Asymptotic spectrum.}
In general, it is difficult to compute $\AR$ or $\AQ$, or to decide whether one tensor asymptotically restricts to another. To study this problem, Strassen developed the theory of the \emph{asymptotic spectrum} \cite{Str88AsymSpec,Str91SuppFunc,Str05Kom}. The asymptotic spectrum, denoted by $\calX$, consists of functionals $\phi$ from tensors to nonnegative reals such that $\phi$ is
\begin{itemize}
    \item additive under direct sum: $\phi(T\oplus S) = \phi(T) + \phi(S)$,
    \item multiplicative under tensor product: $\phi(T\otimes S) = \phi(T) \phi(S)$,
    \item monotone under restriction: $\phi(T) \leq \phi(S)$ for $T\leq S$,
    \item and normalized by the unit tensor: $\phi(\ang{1})=1$.
\end{itemize}
The main result, \emph{Strassen duality}, gives a dual characterization of asymptotic restriction in terms of $\calX$:
\begin{center}
    $T\lesssim S$ \quad 
    if and only if \quad 
    $\phi(T) \leq \phi(S)$ for all $\phi \in \calX$.
\end{center}
Moreover, the asymptotic rank and subrank can be characterized as
\begin{center}
    $\displaystyle \AR(T) = \max_{\phi\in\calX} \phi(T)$ \quad
    and \quad
    $\displaystyle \AQ(T) = \min_{\phi\in\calX} \phi(T)$.
\end{center}
Thus, in Strassen's words \cite[pp.~105]{Str88AsymSpec}, a complete description of $\calX$ would \emph{``solve all problems concerning $\lesssim$''}. The difficulty is that Strassen's duality theorem is nonconstructive, so one still has to understand $\calX$ by explicitly constructing functionals in $\calX$, which are called \emph{spectral points}.

The axioms defining $\calX$ are very restrictive. It is instructive to compare Strassen's notion of a spectral point with the notion of a \emph{formal complexity measure} in Boolean complexity \cite{Khr72Comp} (see also \cite[\S8]{Weg87Comp}, \cite[\S23.2.3]{AB09Complexity}). A formal complexity measure $\mu$ is a nonnegative function on Boolean functions $f\colon \{0,1\}^n\to\{0,1\}$ such that every literal $\ell\in \{x_i,\neg x_i\}$ has bounded value $\mu(\ell)\leq 1$. In contrast with Strassen's functionals, which are required to be exactly additive and multiplicative, a formal complexity measure only satisfies a subadditivity-type condition, namely $\mu(f\land g), \mu(f\lor g) \leq \mu(f) + \mu(g)$. The analogy continues at the level of duality: in Strassen's theory, the asymptotic tensor rank $\AR(T)$ is the maximum of $\phi(T)$ over all $\phi\in\calX$, while in Boolean complexity, the formula complexity $\rho(f)$ is the maximum of $\mu(f)$ over all formal complexity measures $\mu$. However, the latter statement is vacuous because $\rho$ itself is a formal complexity measure and dominates every other one. By contrast, although $\AR$ is defined as a maximization over spectral points, it is not itself a spectral point.

These strong axioms have two opposite consequences. On the positive side, they suggest that $\calX$ should be a rather small and rigid space. Even without any explicit description of the asymptotic spectrum, one can already derive nontrivial consequences purely from this rigidity. For instance, Strassen's theory admits an abstract formulation over general semirings rather than only tensors, and the following two examples primarily use the formal structure of asymptotic spectra, without much detailed knowledge of the underlying space $\calX$:
\begin{itemize}
    \item Zuiddam \cite{Zui19Graph} introduced an asymptotic spectrum for graphs, and showed that it captures the notion of Shannon capacity from information theory. This has been used to prove new properties of Shannon capacity (e.g.,~\cite[Example 4.6]{WZ26Spectra}) and give improved bounds on the Shannon capacity of certain graphs~\cite{dBBZ24Shannon}.
    \item Alman and Li \cite{AL25Depth2} used Strassen duality to derive a duality theorem for depth-$2$ circuits, resolving a problem about how to optimally combine fixed constructions in that model.
\end{itemize}

On the other hand, those same restrictions make it substantially harder to construct explicit elements of $\calX$, since one must verify several strong compatibility properties simultaneously. For example, we have known that tensor rank is not multiplicative ever since Strassen's matrix multiplication bound $\Rk(\ang{2,2,2})\leq 7$ \cite{Str69Gaussian}\footnote{Indeed, $\Rk(\ang{2,1,1})=\Rk(\ang{1,2,1})=\Rk(\ang{1,1,2})=2$, but $\ang{2,1,1} \otimes \ang{1,2,1}\otimes \ang{1,1,2} = \ang{2,2,2}$.}, and it was more recently shown that tensor rank is not additive either~\cite{Shi19Counter}. Likewise, all of the rank notions mentioned above, as well as their regularizations, fail to be multiplicative.

The most basic spectral points come from \emph{flattenings}: one may regard $T$ as an element of $U\otimes (V\otimes W)$, hence as a $\dim(U)\times \dim(V\otimes W)$ matrix, and take its matrix rank. Flattening along each of the three modes gives three spectral points $\zeta_{(1)}, \zeta_{(2)}$, and $\zeta_{(3)}$, which Strassen called the \emph{gauge points (Eichpunkte)} \cite[pp.~120]{Str88AsymSpec}, also known as \emph{flattening ranks}.

\paragraph{Support functionals.}
Although these gauge points are rather coarse, Strassen's star-convexity theorem \cite[Theorem~6.5]{Str88AsymSpec} predicts that they should admit a much richer interpolation. Let
\[ \Theta \coloneq \{\theta \in \bbR_{\geq 0}^3 : \theta_1+\theta_2+\theta_3 = 1\}. \]
Then for every $\theta\in \Theta$, there exists a spectral point $\phi\in \calX$ such that
\begin{equation}
    \phi(\ang{E,H,L}) = E^{\theta_3+\theta_1} H^{\theta_1+\theta_2} L^{\theta_2+\theta_3} \label{eqn:itp_mm},
\end{equation}
where $\ang{E,H,L}$ denotes the tensor for $E \times H$ by $H \times L$ matrix multiplication. For each $\varkappa \in \{1,2,3\}$, the extreme case $\theta_\varkappa = 1$ is realized by the gauge point $\zeta_{(\varkappa)}$.

Strassen \cite{Str91SuppFunc} later proposed the notion of the \emph{support functional}, which can be viewed as a concrete realization of this interpolation idea. He defined the upper support functional
\begin{equation} \label{eqn:supp}
    \zeta^\theta(T) \coloneq \min_{\mathrm{bases~of~} U,V,W} \max_{P \in \calP(\supp T)} 2^{\sum_{\varkappa=1}^3 \theta_\varkappa \eH(P_\varkappa)},
\end{equation}
where $\supp(T) \subseteq [\dim U] \times [\dim V] \times [\dim W]$ is the support of $T$ with respect to the chosen bases, $\calP(\Phi)$ denotes the set of probability distributions on a finite set $\Phi$, $P_\varkappa$ is the marginal of $P$ on the $\varkappa$-th coordinate, and $\eH(\cdot)$ is the Shannon entropy. When $\theta_\varkappa=1$, the functional $\zeta^\theta$ reduces to the gauge point $\zeta_{(\varkappa)}$, and in general $\zeta^\theta$ satisfies \eqref{eqn:itp_mm}.

Strassen verified all of the axioms needed for $\zeta^\theta$ to be a spectral point, except for multiplicativity: in general he proved only that $\zeta^\theta$ is \emph{submultiplicative}. He did, however, prove multiplicativity for \emph{oblique (schr\"{a}g)} tensors \cite[\S4]{Str91SuppFunc}, a class that contains many tensors of interest in algebraic complexity theory.

\paragraph{Quantum functionals.} After Strassen introduced the support functionals, progress on explicit spectral points remained limited for nearly two decades. A major breakthrough eventually came from quantum information theory. In their influential work, Christandl, Vrana, and Zuiddam~\cite{CVZ18STOC,CVZ23QFun} defined the notion of a \emph{quantum functional}. For every $\theta\in \Theta$, they defined
\[ F_\theta(T) = \max_{P\in \Delta(T)} 2^{\sum_{\varkappa=1}^3 \theta_\varkappa \eH(P_\varkappa)} \]
where $\Delta(T)$ is the \emph{moment polytope} (also called the \emph{entanglement polytope}) associated to $T$, a convex polytope contained in the space of distributions $\calP([\dim U] \times [\dim V] \times [\dim W])$. The moment polytope admits several different characterizations, two of which are central to their proof. On the one hand, after normalizing $T$ and viewing it as a tripartite quantum state, $\Delta(T)$ describes the feasible quantum marginals of states that can be obtained from $T$ by \emph{stochastic local operations and classical communication (SLOCC)} \cite{WDGC13QI}. On the other hand, $\Delta(T)$ can be described representation-theoretically via the decomposition of tensor powers coming from \emph{Schur--Weyl duality}. The bridge between these viewpoints is \emph{geometric invariant theory (GIT)}, in particular the \emph{Kempf--Ness theory} available over the complex numbers \cite{KN78GIT}.

Using these complementary descriptions, Christandl--Vrana--Zuiddam proved that each $F_\theta$ is indeed a \emph{universal} spectral point. Here ``universal'' means that the required additivity and multiplicativity properties hold for all tensors, whereas Strassen's support functionals had only been shown to be multiplicative on a special subclass.

Quantum functionals also characterize the remaining notions in \eqref{eqn:asymp-relations}. In particular, the asymptotic slice rank always exists and satisfies \cite[\S5]{CVZ23QFun}
\begin{equation} \label{eqn:ASR_quant}
    \ASR(T) = \min_{\theta\in\Theta} F_\theta(T).
\end{equation}
Follow-up work \cite{CLZ23WSR}  showed that the asymptotic commutative rank admits a similar description:
\[ \ACR(T) = \min_{\rho\in [0, 1]} F_{(\rho,1-\rho,0)}(T). \]
Thus, all known asymptotic quantities in \eqref{eqn:asymp-relations} are characterized by a known subset of spectral points, rather than by genuinely new ones. Accordingly, it remained consistent with the available evidence that the quantum functionals $\{F_\theta\}_{\theta\in\Theta}$ might already exhaust the asymptotic spectrum of tensors over $\bbC$. Such a statement would already imply Strassen's asymptotic rank conjecture \cite{Str94ARC}, namely that every $n\times n\times n$ tensor has asymptotic rank at most $n$. This would have striking algorithmic consequences, including $\omega=2$, and would also refute the \emph{set cover conjecture} from fine-grained complexity theory~\cite{BK24ARC,Pratt24SCC,BCHKP25ARC,BKKN25ARC}.

\paragraph{Support functionals strike back.} Very recently, Sakabe, Do\u{g}an, and Walter \cite{SDW26Fun} proved a striking result: the upper support functional actually \emph{coincides} with the quantum functional, that is, $\zeta^\theta = F_\theta$. Consequently, the support functionals introduced by Strassen more than three decades ago were already universal spectral points; what was missing was a proof. The argument of Sakabe--Do\u{g}an--Walter is far from elementary and relies on tools unavailable to Strassen, including a recent duality theorem \cite{Hirai25Dual} from geodesically convex optimization.

Given this equivalence, one obtains another interesting consequence: by a standard application of the minimax theorem to \eqref{eqn:ASR_quant}, Sakabe--Do\u{g}an--Walter derived an alternative characterization of asymptotic slice rank \cite[Corollary~1.2]{SDW26Fun}. (A similar characterization also holds for asymptotic commutative rank.)
\begin{equation} \label{eqn:ASR_supp}
    \ASR(T) = \min_{\mathrm{bases~of~}U,V,W} \max_{P\in\calP(\supp T)} 2^{\min(\eH(P_1), \eH(P_2), \eH(P_3))}.
\end{equation}

The right-hand side of \eqref{eqn:ASR_supp} has been known to serve as an upper bound since the original introduction of the notion \cite[Proposition~6]{TS16SliceRank}. What is new here is that this upper bound is in fact \emph{tight}. To appreciate the content of \eqref{eqn:ASR_supp}, replace $T$ by its tensor power $T^{\otimes n}$. The right-hand side then upper bounds
\[ \ASR(T^{\otimes n}) = \ASR(T)^n, \]
but is allowed to choose bases in the much larger spaces $U^{\otimes n}$, $V^{\otimes n}$, and $W^{\otimes n}$, potentially producing support sets that are not tensor powers of the support obtained in the base case. Thus \cref{eqn:ASR_supp} exhibits a kind of \emph{parallel repetition} phenomenon \cite{Raz98Repet}: passing to tensor powers and optimizing over arbitrary bases yields no additional advantage.

\paragraph{Limitations of previous work.}
Despite the recent progress on the asymptotic spectrum, all of the results discussed above are currently proved only over the complex numbers $\bbC$. It is natural to ask whether these arguments can be adapted to other fields. Since support functionals are defined over an arbitrary field from the outset, one might hope to proceed in two steps:
\begin{enumerate}
    \item find an analogue of the quantum functional over a general field $\bbF$, and extend the proof of \cite{CVZ23QFun} to show that these new functionals are universal spectral points;
    \item then generalize \cite{SDW26Fun} to deduce the universality of the support functionals.
\end{enumerate}
At present, however, there is no straightforward way to carry out this program, although a proposal is made in \cite{CLZ23WSR}. Any extension beyond $\bbC$ would have to overcome genuinely new obstacles.

One obstacle comes from the proof of Christandl--Vrana--Zuiddam \cite{CVZ23QFun}. Their proof of universality uses the quantum-marginal description of the moment polytope, whose connection to the representation-theoretic description relies on Kempf--Ness theory and hence on analytic features specific to $\bbC$. For this reason, the argument does not appear to generalize directly to other fields. A second obstacle comes from the argument of Sakabe--Do\u{g}an--Walter \cite{SDW26Fun}, which uses the smooth structure of $\bbR$ together with tools from differential geometry and geodesically convex optimization.

We note that, since the definition of the upper support functional is purely algebraic, model-theoretic arguments allow one to transfer the universality of support functionals between different algebraically closed fields of the same characteristic. In particular, once universality is proved over $\bbC$, it automatically extends to every algebraically closed field of characteristic zero. Unfortunately though, such arguments cannot transfer results in characteristic zero to characteristic $p$.
\paragraph{Why positive characteristic?}

The reader may wonder why the positive-characteristic case deserves special attention. One reason is conceptual. The upper support functional is defined in purely algebraic terms, so even over $\bbC$ it is natural to seek an algebraic proof of its universality. Such a proof would likely reveal additional structure of tensors that is invisible in the current analytic proofs. Beyond this intrinsic motivation, there are also concrete reasons coming from theoretical computer science and combinatorics.

\emph{Algorithms and complexity.} When designing an algebraic algorithm, the characteristic of the field can genuinely affect algorithmic power. For example, Bj\"{o}rklund's \cite{Bjo14Hamil} $O(1.66^n)$-time algorithm for undirected Hamiltonicity reduces the problem to a carefully designed PIT instance, and its key cancellation mechanism is specific to characteristic $2$. Positive characteristic can also give rise to new challenges. For instance, Kaltofen's polynomial-time factoring algorithm \cite{Kal89Factor} works in characteristic zero, but in characteristic $p$ one encounters additional difficulties with polynomials of the form $f^{p^k}$. This obstruction is one of the reasons that positive characteristic creates difficulties for the \emph{algebraic hardness versus randomness} program \cite{KI04PIT,And20PIT}.

Returning to tensor-related problems, Sch\"onhage \cite[Theorem~2.8]{Sch81Tau} observed that the matrix multiplication exponent $\omega$, while \emph{a priori} depending on the ground field, in fact depends only on its characteristic. So far, the main upper-bound constructions for $\omega$ happen to work uniformly over all fields. However, there are known bounds on the ranks of small matrix multiplication tensors which work only over characteristic 2~\cite{Deepmind22MM,KM23Flip}, as well as bounds which work only over characteristic different from 2~\cite{Wak70MM,Bla03MM}. A better understanding of tensors in different characteristics could still provide new tools, or rule out unpromising approaches, in the search for faster matrix multiplication.

\emph{Combinatorics.} For induced-hypergraph problems such as cap sets and progression-free sets, the core of the slice-rank upper-bound argument is to carefully choose a field over which the slice rank of $T$ becomes deficient. In the cap-set problem, for instance, one analyzes the structure tensor of $\bbZ_3^n$, whose asymptotic slice rank degenerates \emph{only} in characteristic $3$~\cite[Remark~6]{Tao16Capset}.

The characterization \eqref{eqn:ASR_supp} shows that the support-based method used in earlier work to upper bound asymptotic slice rank is in fact tight. It also shows that asymptotic slice rank, despite being defined through an infinite limit, is a \emph{computable} quantity. At present, however, \eqref{eqn:ASR_supp} is proved only over $\bbC$, even though the statement itself looks entirely natural from an algebraic point of view. A better understanding in characteristic $p$ might therefore extend \eqref{eqn:ASR_supp} to all characteristics, and in turn provide a firmer foundation for related problems in additive combinatorics.

\subsection{Our results}

In this work, we make progress on understanding the asymptotic spectrum of tensors over arbitrary fields (including $\mathbb{C}$). We obtain several structural results concerning the \emph{edge} of the spectrum, including that it is uniquely characterized by matrix multiplication.

\paragraph{The edge of the support functionals.} For $\varkappa\in [3]$, let
\[ \Theta_{(\varkappa)} = \{ \theta \in \Theta : \theta_\varkappa = 0 \}. \]
Since $\Theta$ is a triangle, these are precisely its three edges. In some discussion below, we focus without loss of generality on one of the edges, $ \Theta_{(3)} = \{ (\rho, 1-\rho, 0) : \rho\in[0, 1] \}. $

Our first result shows that support functionals on the boundary are universal over \emph{every} field.
\begin{theorem}[see \Cref{cor:edge-universal}]\label{cor:edge-universal-intro}
    For any field $\bbF$ and any $\theta \in \Theta_{(\varkappa)}$, the upper support functional $\zeta^\theta$ is a universal spectral point for $\bbF$-tensors.
\end{theorem}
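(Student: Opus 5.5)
By Strassen's work \cite{Str91SuppFunc}, $\zeta^\theta$ is already known to be additive, monotone under restriction, normalized, and submultiplicative. So the plan is to prove only supermultiplicativity, $\zeta^\theta(S\otimes T)\ge \zeta^\theta(S)\zeta^\theta(T)$, for $\theta=(\rho,1-\rho,0)\in\Theta_{(3)}$; the other edges then follow by permuting the modes. The key structural fact on this edge is that the third mode carries zero weight. We therefore view $T$ as a tuple of matrices $T_1,\dots,T_{\dim W}\colon V^*\to U$, that is, as a representation of the generalized Kronecker quiver. Then $\zeta^\theta(T)$ depends only on the joint behaviour of the pair (row space, column space) under the matrix subspace $\mathcal T=\Span\{T_k\}$.

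\textbf{Step 1: a subspace formula.} Reduce the min--max entropy formula \eqref{eqn:supp} to an optimization over flags, or equivalently over pairs of subspaces $U'\subseteq U$, $V'\subseteq V$. Choosing bases adapted to flags and then maximizing over distributions, the max over $P$ with $\theta_3=0$ becomes a max over pairs of marginals $(P_1,P_2)$ that are compatible with the zero pattern. I expect a formula of the form
\[ \log\zeta^\theta(T)=\min_{\text{flags}}\max_{P}\bigl(\rho \eH(P_1)+(1-\rho)\eH(P_2)\bigr), \]
where the minimizing flag is the \emph{Harder--Narasimhan filtration} of $T$ as a quiver representation with respect to a slope determined by $\rho$. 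The semistable subquotients then contribute via a concave-entropy computation in their dimensions. Concretely, on a $\rho$-semistable piece with dimensions $(a,b)$ the value should be governed by $a^\rho b^{1-\rho}$-type terms glued along the filtration. This is the analogue of King's criterion, and it is purely algebraic, so it holds over any field.

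\textbf{Step 2: HN filtrations under tensor products.} We need to show that the tensor product of $\rho$-semistable objects is again semistable, in the appropriate sense for the tensor product $S\otimes T$ with the Kronecker quiver structure on the third mode merged. Equivalently, the HN filtration of $S\otimes T$ is the product filtration. Given this, the Step 1 formula factors, because entropies add over product distributions. That gives the inequality $\ge$ (the inequality $\le$ is already known).

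\textbf{Main obstacle.} Step 2 is the hard part. Semistability of tensor products is exactly where characteristic zero is usually used, through Kempf--Ness or complete reducibility. To avoid this, I would instead establish supermultiplicativity through the paper's matrix-multiplication characterization. On a semistable piece, one shows $T^{\otimes n}$ asymptotically degenerates from and restricts to matrix multiplication tensors $\ang{E,H,L}$ whose value $E^{\rho}H\,L^{1-\rho}$-type is matched via \eqref{eqn:itp_mm}. Then $\zeta^\theta(T)$ is pinned down as a matrix multiplication capacity, meaning the best exponent at which matrix multiplication can be extracted from $T^{\otimes n}$. This capacity is supermultiplicative by construction, since extractions from $S$ and from $T$ tensor together into an extraction from $S\otimes T$.

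Concretely, the order is as follows. First, prove $\zeta^\theta(T)\le$ capacity by exhibiting the extraction, using the HN pieces and a semistable-rank argument (noncommutative-rank style blow-ups, which work over any field). Second, prove capacity $\le\zeta^\theta(T)$ by monotonicity and the known values on $\ang{E,H,L}$. Combining the two with supermultiplicativity of capacity and Strassen's submultiplicativity of $\zeta^\theta$ yields multiplicativity, and hence universality.
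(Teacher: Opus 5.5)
Your overall architecture is the paper's. You trap $\zeta^\theta$ against the matrix-multiplication capacity $\xi_\gamma$ with $\gamma=(\rho,1,1-\rho)$, bound $\zeta^\theta(T)$ from above using a basis adapted to the HN-filtration, and extract matrix multiplication from semistable pieces. The easy direction $\xi_\gamma\le\zeta^\theta$ is fine.

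\textbf{Main gap.} The hard direction fails exactly at the step you claim to have sidestepped. A compression applied to the HN pieces of $T$ itself loses a constant factor; already in the square case Fortin--Reutenauer only gives $\CR\ge\NCR/2$. Since $\xi_\gamma(T)^N=\xi_\gamma(T^{\otimes N})$, the only way to remove that constant is to extract from $T^{\otimes N}$ with subexponential loss. For that you must know that the semistable subquotients of $T^{\otimes N}$ are the tensor products of those of $T$, grouped by equal slope. That is precisely the statement that tensor products of semistable Kronecker representations are semistable. With only a constant-loss extraction, you would at best get $\xi_\gamma=\lim_N\zeta^\theta(T^{\otimes N})^{1/N}$, the regularization, not $\xi_\gamma=\zeta^\theta$.

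No route can avoid this step. Take $T$ semistable with first two dimensions $n\times m$.
\begin{enumerate}
\item Every basis gives a balanced support (\Cref{lem:semistable_iff_balanced}), so $\zeta^\theta(T)=n^\rho m^{1-\rho}$.
\item Multiplicativity at any $\rho\in(0,1)$ forces $\zeta^\theta(T^{\otimes N})=(n^N)^\rho(m^N)^{1-\rho}$.
\item That value is attained only if every support of $T^{\otimes N}$ is balanced, i.e.\ only if $T^{\otimes N}$ is semistable.
\end{enumerate}
So the theorem itself implies the statement you wanted to bypass.

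The paper proves this step by an elementary argument valid over every field:
\begin{itemize}
\item a representation filtered by semistable subquotients of equal slope is semistable (\Cref{prop:filt_preserve_stab});
\item $\calV\otimes\calV'$ factors as $(\calV\otimes\id)\circ(\id\otimes\calV')$, and each factor is filtered by copies of $\calV$, respectively $\calV'$, so $\calV\otimes\calV'$ is semistable (\Cref{prop:tensor_preserve_stab});
\item together these give the product description of HN-filtrations of tensor powers (\Cref{prop:tensor_hn}).
\end{itemize}
Your concern about Kempf--Ness or characteristic zero does not arise for this particular Kronecker-quiver tensor product.

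\textbf{Second gap.} The compression step itself is not specified. ``Noncommutative-rank style blow-ups'' only address the square case. For a semistable piece of format $n\times m$ with $n<m$, reaching the value $n^\rho m^{1-\rho}$ requires $T\ge\ang{1,\Omega(n),\Omega(m/n)}$. This is a genuinely new ingredient, which the paper obtains from a refined basis-shifting theorem (\Cref{thm:shift,thm:mm_from_semistable}). You give no mechanism for it.

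\textbf{Minor point.} The relevant HN-filtration is taken with respect to the fixed slope $\dim U/\dim V$, independent of $\rho$, not a slope determined by $\rho$. A single basis is optimal along the whole edge.
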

Before our work, for fields of positive characteristic, the only explicit spectral points that were known were the three gauge points $\zeta_{(\varkappa)}$.

Our second result gives a more structural description of these edge functionals. Recall that $\zeta^\theta(T)$ is defined by first minimizing over all choices of bases and then maximizing a weighted marginal entropy over the resulting support. We show that, along a fixed edge $\Theta_{(\varkappa)}$, this outer minimization can be realized by a single basis independent of $\theta$. This preferred basis is obtained by extending a filtration on the relevant modes, namely the influential \emph{Harder--Narasimhan filtration (HN-filtration)} \cite{HN75Filt,HdlP02HNFilt} from quiver representation theory.
\begin{theorem}[informal, see \Cref{cor:edge-hn-formula}]\label{cor:edge-hn-formula-intro}
    For any tensor $T$ and any $\varkappa\in [3]$, there exists a choice of basis such that the outer minimization in \eqref{eqn:supp} is attained by that same basis for every $\theta\in \Theta_{(\varkappa)}$.

    Specifically, let $\{(n_u, m_u)\}_{1 \le u \le r}$ be the dimension data of the HN-filtration of $T$. Then for $\theta = (\rho, 1-\rho, 0)$, we have \[\zeta^\theta = \sum_{u=1}^r n_u^\rho m_u^{1-\rho}.\]
\end{theorem}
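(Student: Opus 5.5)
We would work with the edge $\theta=(\rho,1-\rho,0)$; the other edges follow by permuting modes. Since $\theta_3=0$, the entropy of the third marginal does not contribute. The only relevant data is then the projection $\Phi_{12}\subseteq[\dim U]\times[\dim V]$ of $\supp T$. This projection is the union of the supports of the slices $T_w\in U\otimes V$, and it does not depend on the basis chosen for $W$. So we view $T$ as a Kronecker-quiver representation (a space of matrices $U^*\to V$ spanned by the slices), and its HN-filtration $0=F_0\subsetneq F_1\subsetneq\cdots\subsetneq F_r$ has subquotients of dimensions $(n_u,m_u)$ with strictly decreasing slopes. We must show that the min--max in \eqref{eqn:supp} equals $\sum_u n_u^\rho m_u^{1-\rho}$. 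We do this by proving an upper bound via a specific basis, and a lower bound valid for every basis.

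\emph{Upper bound.} Choose bases of $U$ and $V$ adapted to the filtration. Then every slice is block-triangular: $\Phi_{12}$ is contained in $\bigcup_{u\le v} I_u\times J_v$, with $|I_u|=n_u$ and $|J_v|=m_v$ (or the transposed pattern, depending on conventions). We then bound $\max_P \rho\eH(P_1)+(1-\rho)\eH(P_2)$ over this staircase pattern by convex duality. For any potentials $\alpha_i,\beta_j$ with $\alpha_i+\beta_j\ge 0$ on the pattern, concavity gives
\[
\rho \eH(P_1)+(1-\rho)\eH(P_2)\le \log\Bigl(\textstyle\sum_i 2^{-\alpha_i/\rho}\Bigr)^{\rho}\Bigl(\textstyle\sum_j 2^{-\beta_j/(1-\rho)}\Bigr)^{1-\rho}.
\]
We take $\alpha,\beta$ constant on each block, with values chosen so that the diagonal blocks are tight; this reproduces $\sum_u n_u^\rho m_u^{1-\rho}$, and the optimal $P$ is the uniform product on each diagonal block with block weights proportional to $n_u^\rho m_u^{1-\rho}$. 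The constraint on the off-diagonal blocks $u<v$ then reduces to a monotonicity of the block potentials. That monotonicity is exactly the statement that the slopes $m_u/n_u$ (equivalently the ratios governing $\alpha_u,\beta_u$) are ordered along the filtration. Only the ordering of the slopes is used here, not the full HN property.

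\emph{Lower bound.} Fix arbitrary bases of $U$ and $V$. We must produce a distribution $P$ on $\Phi_{12}$ achieving at least $\sum_u n_u^\rho m_u^{1-\rho}$. We would prescribe target marginals: on the $u$-th HN level, row mass $c_u/n_u$ per row and column mass $c_u/m_u$ per column, where $c_u\propto n_u^\rho m_u^{1-\rho}$. A direct computation shows these marginals already give the desired value of $2^{\rho\eH(P_1)+(1-\rho)\eH(P_2)}$. By the supply--demand (Hall/Gale) theorem, a $P$ on $\Phi_{12}$ with these marginals exists if and only if, for every set $S$ of rows, the prescribed mass of $S$ is at most the prescribed mass of the column neighbourhood $N(S)$. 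Now $S$ spans a coordinate subspace, and $N(S)$ spans a coordinate subspace that contains its image under all slices; together they form a subrepresentation. The defining property of the HN-filtration is that every subrepresentation's dimension vector lies under the HN polygon. Combined with the decreasing slopes, this should give the Hall inequality, by a summation-by-parts over levels of the weighted dimension counts. For the semistable case ($r=1$) this is exactly the statement that semistability forbids a row set with too few neighbours, and the argument is clean.

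\emph{Main obstacle.} In the lower bound the basis is arbitrary, so basis vectors do not come with a well-defined HN level, and the prescribed marginals need not be assignable to individual coordinates. The first thing we would try is to assign levels by a triangular change of labels. Choose an ordering of the row basis vectors such that the spans of the initial segments meet the $F_u$ in subspaces of the correct dimensions, as in Gaussian elimination with respect to the flag; do the same for the columns. We would then verify Hall's condition using that $\dim(\Span S\cap F_u)$ controls the weight of $S$. If this bookkeeping fails, the fallback is an inductive argument on $r$. Restrict $T$ to the top-slope piece $F_1$, which is semistable, and get a $P$ there; then pass to the quotient $T/F_1$ and argue that $\zeta^\theta$ is superadditive along this short exact sequence in every basis. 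This again reduces to a Hall argument, now on the quotient's coordinates.
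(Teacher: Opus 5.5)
Your upper bound is the paper's. In a basis adapted to the HN filtration, $\Phi$ is block-triangular. The paper certifies $\eH_\theta(\Phi)=\log_2\sum_u n_u^\rho m_u^{1-\rho}$ via the first-order condition of \Cref{prop:entropy-maximization} (see \Cref{prop:argmax_2d}); your dual potentials do the same job. There are two slips here:
\begin{itemize}
\item With $\alpha_i+\beta_j\ge 0$ on $\Phi$, the Gibbs bound needs exponents $+\alpha_i/\rho$ and $+\beta_j/(1-\rho)$. Taking $\beta\equiv 0$ and $\alpha\equiv A\to\infty$ shows the displayed version is false.
\item The optimizer is not a uniform product on the diagonal blocks. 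It is a distribution on $\Phi^{(u)}$ with uniform marginals, which exists by \Cref{lem:semistable_iff_balanced}.
\end{itemize}

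Your lower bound takes a genuinely different route. The paper does no combinatorics in a non-adapted basis. It sandwiches $\xi_\gamma\le\phi\le\zeta^\theta$ for a spectral point $\phi$ with $\gamma(\phi)=(\rho,1,1-\rho)$, using star-convexity, \Cref{cor:value_leq_func} and \Cref{prop:bdd_by_supp}. It then proves $\xi_\gamma(T)\ge\sum_u n_u^\rho m_u^{1-\rho}$ asymptotically, by extracting $\ang{1,\lceil N_\nu/2\rceil,\lfloor (M_\nu+N_\nu)/(2N_\nu)\rfloor}$ from each semistable HN subquotient of $T^{\otimes n}$ (\Cref{prop:tensor_hn,thm:mm_from_semistable}).

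Your Gale argument is finite and works basis by basis. As written, though, its decisive step is only announced. The input you name is also not enough: the HN-polygon bound applied to $(\Span\{x_i\}_{i\in S},\Span\{y_j\}_{j\in N(S)})$ gives only $|N(S)|\ge p(|S|)$, and the weights depend on the level. Here is what closes it, for concise $T$ so that all $n_u,m_u\ge1$.

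\emph{Weights.} Put $r_u=c_u/n_u$ and $s_u=c_u/m_u$. Then $r_u$ is nondecreasing and $s_u$ is nonincreasing in $u$. Write $r(S)$ and $s(R)$ for the prescribed masses of a row set $S$ and a column set $R$.

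\emph{Level assignment.} Choose nested $L_1\subseteq\cdots\subseteq L_r=[n]$ with $|L_u|=\dim U_u$, such that $\{x_i+U_u\}_{i\notin L_u}$ is a basis of $U/U_u$. You can build them from $u=r-1$ downward by basis augmentation. Give the rows in $L_u\setminus L_{u-1}$ level $u$. Then $|S\cap L_u|\ge\dim(\Span\{x_i\}_{i\in S}\cap U_u)$ for every $S$. Do the same for the columns with the flag $V_\bullet$.

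\emph{Levelwise bound.} For $X\subseteq U$ let $a_u(X)=\dim(X\cap U_u)-\dim(X\cap U_{u-1})$, and define $\beta_u(Y)$ for $Y\subseteq V$ likewise using $V_\bullet$. Semistability of $\calV_u/\calV_{u-1}$ gives $\beta_u(\calV(X))\ge\frac{m_u}{n_u}a_u(X)$.

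\emph{Hall condition.} Summation by parts with the monotone weights then gives, for $X=\Span\{x_i\}_{i\in S}$,
\[
r(S)\le\sum_u r_u a_u(X)=\sum_u s_u\tfrac{m_u}{n_u}a_u(X)\le\sum_u s_u\beta_u(\calV(X))\le\sum_u s_u\beta_u\bigl(\Span\{y_j\}_{j\in N(S)}\bigr)\le s(N(S)).
\]
The third inequality uses $\calV(X)\subseteq\Span\{y_j\}_{j\in N(S)}$, together with the fact that $Y\mapsto\sum_u s_u\beta_u(Y)$ is monotone because $s$ is nonincreasing.

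With these ingredients your route is complete and more elementary than the paper's. It yields the formula and the simultaneous optimality of the HN basis without tensor powers, star-convexity or the compression theorem. Combined with \Cref{prop:tensor_hn}, it would even give multiplicativity of $\zeta^\theta$ directly. What the paper's detour buys is the identity $\zeta^\theta=\xi_\gamma$, and hence the matrix-multiplication characterization and uniqueness among spectral points with $\gamma_2=1$.

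Your inductive fallback is unnecessary. In a non-adapted basis, restricting to $F_1$ is not a coordinate operation, so it would not reduce the problem.
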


The HN-filtration has recently found connections to other topics in computer science, including algorithmic invariant theory~\cite{Cheng24HN,IOS25HN}, topological data analysis~\cite{FJ26Skyscraper}, and submodular optimization~\cite{IOS25HN}. It is also closely related to non-commutative rank, and recent work in that direction showed that one can compute HN-filtrations in deterministic polynomial time for the class of quivers relevant to us \cite{Cheng24HN,IOS25HN}. Consequently, the edge support functionals are not merely computable, but \emph{efficiently computable}.

\begin{corollary}[informal, see \Cref{cor:alg}]
    For any field and any $\theta\in \Theta_{(\varkappa)}$, the support functional $\zeta^\theta$ can be computed in deterministic polynomial time.
\end{corollary}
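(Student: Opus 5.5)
The plan is to deduce the corollary from \Cref{cor:edge-hn-formula-intro} together with the known polynomial-time algorithms for HN-filtrations of the relevant quiver representations \cite{Cheng24HN,IOS25HN}. By symmetry of the three modes, it suffices to treat the edge $\Theta_{(3)}$, i.e.\ $\theta=(\rho,1-\rho,0)$. Here one views $T\in U\otimes V\otimes W$ as a $\dim W$-tuple of linear maps $V^*\to U$ (or $U^*\to V$), that is, as a representation of the generalized Kronecker quiver with $\dim W$ arrows. The HN-filtration is taken with respect to the slope determined by the two vertex dimensions. Its subquotients have dimension vectors $(n_u,m_u)$, and the formula gives $\zeta^\theta(T)=\sum_{u=1}^r n_u^\rho m_u^{1-\rho}$.

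The algorithm therefore has three steps.
\begin{enumerate}
\item Form the Kronecker-quiver representation from $T$. This only requires writing out the slices of $T$ in the given bases, in time polynomial in $\dim U\cdot\dim V\cdot\dim W$.
\item Run the deterministic polynomial-time HN-filtration algorithm of \cite{Cheng24HN,IOS25HN}. It outputs the subspaces of the filtration, and hence the integers $(n_u,m_u)$, all bounded by $\dim U$ and $\dim V$.
\item Output $\sum_u n_u^\rho m_u^{1-\rho}$.
\end{enumerate}
Since $r\le \dim U+\dim V$, step 3 is a sum of polynomially many explicit terms.

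The main point needing care is field dependence. The cited algorithms are stated for particular fields and computational models, for instance with rank computations done by deterministic linear algebra, or with noncommutative rank computed via the algorithms of \cite{GGOW16PIT,IQS18NCR,HH21NCR}, which work over large enough fields. Over a small finite field I would first pass to a finite extension of polynomial degree. This is legitimate if the HN-filtration, and hence the dimension data, is invariant under field extension. I expect it is, because the filtration is canonical and so Galois-stable, hence defined over the base field; alternatively, one can use that the formula in \Cref{cor:edge-hn-formula} is field-independent.

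A second, mostly formal, point is what ``computed'' means when $\rho$ is irrational. I would state the result as computing the finite list $(n_u,m_u)_u$ exactly. This list determines $\zeta^\theta$ for all $\theta$ on the edge simultaneously, and each value can then be evaluated to any desired precision.
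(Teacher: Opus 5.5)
Your proposal is correct and takes essentially the same route as the paper. You read off the HN dimension data with the Cheng/IOS algorithm, passing to an extension of size $(nmk)^{\Omega(1)}$ when $\bbF$ is a small finite field. You then evaluate $\sum_u n_u^\rho m_u^{1-\rho}$ numerically to the desired precision.

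The only difference is how you justify invariance under base change. The paper combines \Cref{prop:base_change_stability} with uniqueness of the HN-filtration, which works for arbitrary extensions. Your Galois-stability argument also suffices here, since extensions of finite fields are Galois. Your alternative via field-independence of $\zeta^\theta$ is valid too, because $\zeta^\theta_{\bbF}(T)=\xi_{\gamma,\bbF}(T)\le\xi_{\gamma,\bbK}(T_\bbK)=\zeta^\theta_{\bbK}(T_\bbK)\le\zeta^\theta_{\bbF}(T)$.
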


These edge functionals also recover asymptotic commutative rank. More precisely, one first fixes a mode; for definiteness we use the $W$-mode. Over $\bbC$, Sakabe--Do\u{g}an--Walter proved \cite[Theorem~5.14]{SDW26Fun} that
\begin{equation} \label{eqn:acr_supp}
    \ACR(T) = \min_{\theta \in \Theta_{(3)}} \zeta^\theta(T).
\end{equation}

We show that the same identity holds over arbitrary fields.
\begin{theorem}[see \Cref{thm:acr_formula}]
    \Cref{eqn:acr_supp} holds over any field.
\end{theorem}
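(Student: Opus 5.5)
We prove the two inequalities separately; throughout, $\theta=(\rho,1-\rho,0)$, and we use \Cref{cor:edge-universal-intro} and the HN-formula of \Cref{cor:edge-hn-formula-intro}.

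\emph{Upper bound $\ACR(T)\le \min_{\rho}\zeta^{(\rho,1-\rho,0)}(T)$.} First we check that commutative rank satisfies $\CR(T)\le \zeta^{(\rho,1-\rho,0)}(T)$ for every $\rho$. In the HN basis the pencil $T(W^*)\subseteq \mathrm{Hom}(U^*,V)$ is block upper triangular, with diagonal blocks of size $n_u\times m_u$. The generic rank of a block-triangular pencil is at most $\sum_u \max$ over the diagonal pattern. More usefully, it is bounded by the minimum over $u$ of cut bounds of the form $\sum_{v\le u} m_v+\sum_{v>u} n_v$. Each such cut bound is at most $\sum_u \min(n_u,m_u)\le \sum_u n_u^\rho m_u^{1-\rho}$. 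Since $\zeta^\theta$ is multiplicative by \Cref{cor:edge-universal-intro}, applying this to $T^{\otimes N}$ gives $\CR(T^{\otimes N})\le \zeta^\theta(T)^N$. Taking $N$-th roots yields $\ACR(T)\le\zeta^\theta(T)$ for every $\rho$. (Alternatively one can argue directly from the support definition: any $\rho$-weighted support bound dominates the König-type bound on generic rank given by minimum vertex covers of the bipartite support between modes 1 and 2.)

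\emph{Lower bound.} Here we need $\ACR(T)\ge \min_\rho \sum_u n_u^\rho m_u^{1-\rho}$. The plan is to take powers $T^{\otimes N}$ and use the fact that the HN filtration is compatible with tensor products of semistable pieces: slopes add, and a tensor product of semistable representations is semistable. This compatibility is the point at which positive characteristic needs care, and we expect to invoke the algebraic results behind \Cref{cor:edge-hn-formula-intro} rather than reprove them.

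Concretely, the associated graded of $T^{\otimes N}$ decomposes into pieces indexed by multisets of HN factors. Each piece is semistable of dimension $(\prod n_{u_i},\prod m_{u_i})$. A semistable pencil of dimension $(a,b)$ has commutative rank, and in fact noncommutative rank, equal to $\min(a,b)$, by the King/Fortin–Reutenauer criterion: semistability means no shrunk subspace of bad slope.

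The key step is to show that $\CR(T^{\otimes N})$ is at least, up to $2^{o(N)}$ and the Fortin–Reutenauer factor-$2$ loss (which vanishes asymptotically), the maximum over "slope windows". That is, one chooses a slope threshold $s$, keeps all graded pieces with slope $\ge s$ contributing $\min(a,b)$, and uses block triangularity to argue that the diagonal blocks can be realized simultaneously. Since noncommutative rank of a block triangular pencil is at least the sum of the noncommutative ranks of its diagonal blocks, we get $\CR(T^{\otimes N})\ge \tfrac12\NCR(T^{\otimes N})\ge \tfrac12\sum \min(\prod n_{u_i},\prod m_{u_i})$.

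It then remains to evaluate this sum asymptotically. By the method of types, $\sum_{\text{types }p}\min\bigl(\prod_u n_u^{Np_u},\prod_u m_u^{Np_u}\bigr)\binom{N}{Np}$ has exponential rate $\max_p\min\bigl(\eH(p)+\sum p_u\log n_u,\ \eH(p)+\sum p_u\log m_u\bigr)$. By LP/minimax duality, writing $\min(x,y)=\min_{\rho}(\rho x+(1-\rho)y)$ and swapping the max and min (the objective is concave in $p$ and linear in $\rho$), this equals $\min_\rho\max_p\bigl(\eH(p)+\sum_u p_u\log(n_u^\rho m_u^{1-\rho})\bigr)=\min_\rho\log\sum_u n_u^\rho m_u^{1-\rho}$. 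By \Cref{cor:edge-hn-formula-intro}, the right-hand side is $\min_\rho\log\zeta^\theta(T)$, which completes the lower bound.

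The main obstacle is the lower-bound step relating $\NCR(T^{\otimes N})$ to the sum of minima over the graded pieces. This requires that tensor products of HN-semistable factors remain semistable over arbitrary fields, and that passing to the associated graded does not increase the noncommutative rank.
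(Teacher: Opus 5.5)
Your proof is correct in substance, but it takes a different route from the paper. The paper derives \Cref{thm:acr_formula} almost formally from its spectral machinery. The abstract minimization theorem (\Cref{thm:minimization}, via \Cref{prop:f-asymp}) gives $\ACR(T)=\min_{\gamma_2(\phi)=1}\phi(T)$ (\Cref{cor:acr-minimize-edge}). Then \Cref{cor:edge-unique}, which rests on \Cref{prop:MM_edge}, says these $\phi$ are exactly the $\zeta^{(\rho,1-\rho,0)}$, and the HN expression is \Cref{cor:edge-hn-formula}.

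You instead estimate $\CR(T^{\otimes N})$ directly from the product filtration of \Cref{prop:tensor_hn}. Your ingredients are superadditivity of $\NCR$ over block-triangular pencils, $\NCR=\min(a,b)$ for semistable pieces, the Fortin--Reutenauer factor $2$, the method of types, and a Sion minimax swap. In effect this is the $\CR$-analogue of \Cref{thm:value_geq_hn}, with the outer $\min_\rho$ handled by minimax rather than by Strassen duality.

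What your route buys:
\begin{itemize}
\item Your lower bound uses no spectral theory at all.
\item Your upper bound can be made equally elementary. Apply the same slope-$1$ cut to the filtration of $T^{\otimes N}$ instead of invoking multiplicativity of $\zeta^\theta$. This gives the explicit two-sided estimate $\tfrac12 S_N\le\CR(T^{\otimes N})\le S_N$, where $S_N=\sum_{\bar u\in[r]^N}\min\bigl(\prod_i n_{u_i},\prod_i m_{u_i}\bigr)$.
\item Consequently the main theorem is needed only to identify $\min_\rho\sum_u n_u^\rho m_u^{1-\rho}$ with $\min_\rho\zeta^{(\rho,1-\rho,0)}(T)$.
\end{itemize}
The paper's route is shorter given its machinery. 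It also exemplifies the minimization principle the paper uses where no HN-type formula is available, such as higher-mode $\ACR$.

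The ingredients you flag as the main obstacle are already available. \Cref{prop:tensor_preserve_stab} holds over any field. The inequality $\NCR(\calV)\ge\NCR(\calV')+\NCR(\calV/\calV')$ follows from the same computation as in \Cref{prop:filt_preserve_stab}.

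Three things to fix:
\begin{itemize}
\item Not every cut bound $\sum_{v\le u}m_v+\sum_{v>u}n_v$ is at most $\sum_u\min(n_u,m_u)$. In fact they are all at least that. Equality holds at the index where the HN slope crosses $1$, and that is the cut you should take.
\item A semistable pencil need not have commutative rank $\min(a,b)$. The generic $3\times 3$ skew-symmetric pencil is semistable with $\CR=2<3=\NCR$. Only the $\NCR$ claim is true, and that is all your chain through $\CR\ge\tfrac12\NCR$ actually uses.
\item Reduce first to concise $T$. Then no HN slope is $0$ or $\infty$, and the weights $n_u^\rho m_u^{1-\rho}$ in the Gibbs/minimax step are positive.
\end{itemize}
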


One striking feature of all of these results is their \emph{stability} with respect to the ground field. We do not assume that $\bbF$ is algebraically closed, perfect, or even large; in particular, the results also hold over finite fields. This is notable because, \emph{a priori}, extending the field may create new possible support sets after change of basis, and several geometric methods in the literature do require hypotheses such as algebraic closedness or perfectness, for instance in the theory of $G$-stable rank \cite[\S1B]{Derksen22GStableRank}. A similar stability phenomenon was previously known for non-commutative rank, and our work reveals a close connection between edge support functionals and non-commutative rank.

\paragraph{The edge of the spectrum of matrix multiplication.}

Beyond proving that these edge functionals exist as spectral points, we obtain a stronger characterization in terms of matrix multiplication tensors. Note that if $\theta=(\rho,1-\rho, 0)\in \Theta_{(3)}$, then
\[ \zeta^\theta(\ang{E,H,L}) = E^\rho H L^{1-\rho}. \]
Therefore, for any tensor $T$ with $T \geq \ang{E,H,L}$, monotonicity immediately gives $\zeta^\theta(T) \geq E^\rho H L^{1-\rho}$. Our next theorem shows that, asymptotically, this already characterizes $\zeta^\theta$:
\begin{theorem}[see \Cref{cor:edge-mm-char}]\label{cor:edge-mm-char-intro}
    For $\theta \in \Theta_{(\varkappa)}$, the value of $\zeta^\theta(T)$ is given by the most valuable matrix multiplication tensor asymptotically restrictable from $T$:
    \[ \zeta^\theta(T) = \sup \{ \zeta^\theta(\ang{E,H,L})^{1/n} : T^{\otimes n} \geq \ang{E,H,L} \}. \]
\end{theorem}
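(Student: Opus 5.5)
The inequality $\geq$ is immediate. Every $\ang{E,H,L}$ with $T^{\otimes n}\geq \ang{E,H,L}$ satisfies
\[ \zeta^\theta(\ang{E,H,L}) \le \zeta^\theta(T^{\otimes n}) \le \zeta^\theta(T)^n, \]
by monotonicity and submultiplicativity, both of which Strassen proved. So the whole content is the lower bound: for each $\theta=(\rho,1-\rho,0)\in\Theta_{(3)}$ we must construct matrix multiplication restrictions from large powers of $T$ whose value approaches $\zeta^\theta(T)$. I would use the HN description of \Cref{cor:edge-hn-formula-intro}, which gives
\[ \zeta^\theta(T)=\sum_{u=1}^r n_u^\rho m_u^{1-\rho}, \]
where the $(n_u,m_u)$ are the dimension data of the subquotients of the HN-filtration of $T$, viewed as a representation of the generalized Kronecker quiver (slices in the $W$-mode acting as maps $U^*\to V$, say).

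\textbf{Step 1 (semistable pieces).} Each HN subquotient $T_u$ is a semistable Kronecker representation with dimension vector $(n_u,m_u)$. Since the filtration is by subrepresentations, $T$ restricts to, or at least asymptotically degenerates to, the block-triangular associated graded $\bigoplus_u T_u$. I would first try to realize this as an honest restriction after tensor powers, perhaps by projecting onto the triangular pattern via a generic choice of slices. Failing that, I would use the standard fact that degeneration implies asymptotic restriction, which suffices since only $\lesssim$ matters here.

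\textbf{Step 2 (key lemma).} A semistable tensor $S$ with dimension $(n,m)$ should asymptotically restrict to matrix multiplication of value $n^\rho m^{1-\rho}$ for every $\rho$. Concretely, I expect $S^{\otimes k}\gtrsim \ang{n^{a},\,\cdot\,,m^{b}}$ with middle dimension close to the $W$-flattening. The natural route is through non-commutative rank: semistability means that a blow-up $S^{\{d\}}$, obtained by tensoring the slices with $d\times d$ matrices, has full non-commutative rank in a balanced sense. By \cite{IQS18NCR,HH21NCR}, which hold over any field, there are then matrices $A_i$ with $\sum_i S_i\otimes A_i$ of full rank $\min(n,m)d$. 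Contracting the $W$-mode of $S$ against the matrix multiplication structure gives a restriction $S\otimes\ang{d,d,d}$-type $\ge \ang{\,\cdot\,}$ with good parameters. Taking powers and interpolating between the two extreme shapes in $\rho$, using that matrix multiplication values combine multiplicatively, should realize $n^\rho m^{1-\rho}$ up to $o(1)$ in the exponent.

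\textbf{Step 3 (reassembling).} To handle the direct sum $\bigoplus_u T_u$, expand its $N$-th tensor power multinomially. Each type class with frequencies $N p_u$ gives a tensor product $\bigotimes_u T_u^{\otimes Np_u}$ that restricts, by Step 2, to a matrix multiplication tensor. Choosing $p_u\propto n_u^\rho m_u^{1-\rho}$ and counting type classes recovers $\bigl(\sum_u n_u^\rho m_u^{1-\rho}\bigr)^N$ up to subexponential factors. However, a direct sum of many copies of $\ang{E,H,L}$ is not itself one matrix multiplication tensor. I would fix this with the standard trick $\ang{q}\otimes\ang{E,H,L}\gtrsim \ang{E',H',L'}$ for $\ang{q}$ of appropriate size, valid for any $\theta$ on the edge because $\ang{q}\geq\ang{q,1,1}$-type pieces can be absorbed.

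The main obstacle is Step 2: showing over an arbitrary field, not algebraically closed, that a semistable piece yields matrix multiplication with the exact exponent $n^\rho m^{1-\rho}$. This requires making the non-commutative-rank blow-up produce a genuine restriction to $\ang{E,H,L}$ rather than just a rank bound.
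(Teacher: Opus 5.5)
Your direction $\zeta^\theta(T)\ge\sup\{\cdots\}$ is fine, and your architecture (HN pieces, a compression lemma for semistable pieces, recombination over types) matches the paper's. But the lower bound has a genuine gap at Step 2, and the route you sketch cannot fill it. The difficulty is the rectangular shape, not the field.

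Note that $E^\rho HL^{1-\rho}=(EH)^\rho(HL)^{1-\rho}$. If $S^{\otimes N}\ge\ang{E,H,L}$ with $S$ of dimensions $(n,m)$, the flattening ranks $\zeta_{(1)},\zeta_{(2)}$ force $EH\le n^N$ and $HL\le m^N$. So at any interior $\rho$, a near-optimal restriction must make both bounds nearly tight at once, e.g.\ a shape like $\ang{1,n,m/n}$ with the factor $m/n$ in the third dimension. ``Interpolating between extreme shapes'' cannot produce this, since tensoring restrictions only takes geometric means of their values. Non-commutative rank does not produce it either:
\begin{itemize}
\item Full $\NCR$ plus Fortin--Reutenauer gives only $S\ge\ang{1,\lceil n/2\rceil,1}$, where $HL\approx n\ll m$.
\item The blow-up statement gives $S\otimes\ang{d,1,d}\ge\ang{1,nd,1}$ (or rectangular variants). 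This is an inequality with an auxiliary factor on the larger side. It can bound edge spectral points from below, but it is not a restriction of $S^{\otimes N}$, and the factor cannot be cancelled: $\zeta_{(3)}(\ang{1,h,1})=1$, so such an inequality says nothing about the $L$-dimension.
\item Deducing $\xi_\gamma\ge\zeta^\theta$ from spectral inequalities would require control over all of $\calX$, since $\Pi(T)$ is cut out by every spectral point, not just the edge ones.
\end{itemize}
The paper supplies exactly this missing lemma through a refined basis-shifting theorem (\Cref{thm:shift}). After generic recombination of the $U$- and $W$-modes, the vectors $\calV(a_l)(x_i)$ with $i\le\lambda_l$ form a staircase basis of $V$, $\lambda_1\ge\cdots\ge\lambda_k$, and every $x_i$ with $i>\lambda_p$ maps into $V_p$. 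Zeroing out gives $S\ge\ang{1,\lambda_p,p}$. Semistability applied to $\Span\{x_i:i>\lambda_p\}$ forces $\lambda_p\ge (m-(p-1)n)n/(n+m)$, hence $S\ge\ang{1,\lceil n/2\rceil,\lfloor (m+n)/(2n)\rfloor}$ (\Cref{thm:mm_from_semistable}). This has value at least $\tfrac18 n^\rho m^{1-\rho}$, and the constant disappears under tensor powers because semistability is preserved by tensor products.

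Step 1 is also false as you use it, and this breaks Step 3. The one-parameter-subgroup degeneration sends $T$ to the block-diagonal tensor in which the $T_u$ still share the $W$-mode (a direct sum of quiver representations). It does not reach the tensor direct sum $\bigoplus_u T_u$, and that sum is in general not $\lesssim T$. For example, $T=x_1y_1z_1+x_2y_2z_1+x_2y_3z_2$ has HN pieces $T_1\cong\ang{1}$ and $T_2\cong\ang{1,1,2}$, but $\zeta_{(3)}(T)=2<3=\zeta_{(3)}(T_1\oplus T_2)$.

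So you cannot split the $N$-th power into $\binom{N}{\lambda}$ separate copies and absorb them via $\ang{q}\otimes\ang{E,H,L}$. (For that step you would in any case want $\ang{q}\ge\ang{1,q,1}$, which is lossless since $\gamma_2=1$, not $\ang{q,1,1}$.) The fix is to apply the compression lemma to the whole block of a given type or slope at once. That block is semistable, being assembled from semistable pieces of equal slope (\Cref{prop:filt_preserve_stab,prop:tensor_preserve_stab}). Its dimensions are $\bigl(\binom{N}{\lambda}\prod_u n_u^{\lambda_u},\binom{N}{\lambda}\prod_u m_u^{\lambda_u}\bigr)$, so the multiplicity is absorbed automatically. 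The paper does exactly this with the HN filtration of $T^{\otimes N}$ (\Cref{prop:tensor_hn}). Its slope-$\nu$ subquotients are honest restrictions of $T^{\otimes N}$ with $\sum_\nu N_\nu^\rho M_\nu^{1-\rho}=(\sum_u n_u^\rho m_u^{1-\rho})^N$, and the paper keeps the best of the polynomially many $\nu$.

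Finally, in the paper the full HN formula for $\zeta^\theta$ is derived from the statement you are proving, so you may not cite it as given. You only need its easy half, $\zeta^\theta(T)\le\sum_u n_u^\rho m_u^{1-\rho}$. That half follows directly from the definition using an HN-adapted basis (\Cref{lem:semistable_iff_balanced,prop:argmax_2d}).
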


As a consequence, we obtain a uniqueness statement inside the asymptotic spectrum.
\begin{theorem}[see \Cref{cor:edge-unique}]\label{cor:edge-unique-intro}
    If a spectral point $\phi$ has $\phi(\ang{1,n,1}) = n$, then there exists $\theta \in \Theta_{(3)}$ such that $\phi = \zeta^\theta$.
\end{theorem}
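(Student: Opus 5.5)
We prove \Cref{cor:edge-unique-intro} by pinning down $\phi$ on all matrix multiplication tensors and then applying \Cref{cor:edge-mm-char-intro}.

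\textbf{Step 1: the values of $\phi$ on matrix multiplication tensors.} Since $\ang{E,H,L} = \ang{E,1,1}\otimes\ang{1,H,1}\otimes\ang{1,1,L}$ and $\phi$ is multiplicative, we have
\[ \phi(\ang{E,H,L}) = \phi(\ang{E,1,1})\,\phi(\ang{1,H,1})\,\phi(\ang{1,1,L}). \]
The middle factor is $H$ by hypothesis. The maps $n\mapsto\phi(\ang{n,1,1})$ and $n\mapsto\phi(\ang{1,1,n})$ are multiplicative and monotone, because $\ang{n,1,1}\le\ang{n+1,1,1}$. Any multiplicative, monotone, nondecreasing function on the positive integers is a power, so $\phi(\ang{E,1,1}) = E^{a}$ and $\phi(\ang{1,1,L}) = L^{b}$ for some $a,b\ge 0$.

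Next we constrain $a$ and $b$. Each of $\ang{n,1,1},\ang{1,n,1},\ang{1,1,n}$ is a matrix pencil, of flattening rank $n$ in two modes and $1$ in the third. By the asymptotic spectrum bounds $\AQ\le\phi\le\AR$, all three values lie in $[1,n]$, so $a,b\in[0,1]$.

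To get a linear relation between $a$ and $b$, we use the cyclic symmetry of the spectrum: rotating the three modes of $\ang{E,H,L}$ gives a tensor isomorphic to $\ang{H,L,E}$. Combined with the known fact that every spectral point satisfies $\phi(\ang{2,2,2})\ge 4$ (asymptotic subrank of $\ang{n,n,n}$ is $n^2$), this yields $a+b\ge 1$. For the matching upper bound $a+b\le 1$ we use the restriction relating $\ang{n,1,n}$ to $\ang{1,n,1}$ asymptotically; concretely $\ang{n,1,n}\otimes\ang{1,n,1}\cong\ang{n,n,n}$, and $\AR(\ang{n,n,n})\le n^{\omega}$ is too weak here. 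Instead we intend to use the star-convexity structure (Strassen's description of which exponent vectors $(\alpha,\beta,\gamma)$ with $\phi(\ang{E,H,L})=E^\alpha H^\beta L^\gamma$ can arise), which forces $\alpha+\beta+\gamma=2$ with each entry in $[0,1]$. Given $\beta=1$, this gives $b=1-a$. Setting $\rho=a$, we obtain
\[ \phi(\ang{E,H,L}) = E^{\rho}HL^{1-\rho} = \zeta^\theta(\ang{E,H,L}), \qquad \theta=(\rho,1-\rho,0). \]
This exponent constraint is the step I expect to be the main obstacle, and I would look for it earlier in the paper or in Strassen's theory.

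\textbf{Step 2: the lower bound $\phi\ge\zeta^\theta$.} If $T^{\otimes n}\ge\ang{E,H,L}$, then by monotonicity and multiplicativity
\[ \phi(T)^n \ge \phi(\ang{E,H,L}) = \zeta^\theta(\ang{E,H,L}). \]
Taking the supremum over all such $n,E,H,L$ and applying \Cref{cor:edge-mm-char-intro} gives $\phi(T)\ge\zeta^\theta(T)$.

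\textbf{Step 3: equality.} For the reverse inequality I would avoid a direct computation. Both $\phi$ and $\zeta^\theta$ are spectral points (the latter by \Cref{cor:edge-universal-intro}), and the asymptotic spectrum is a compact Hausdorff space on which evaluation maps separate points. If $\phi(T)>\zeta^\theta(T)$ for some $T$, I would look for a counterexample to Step 2: a tensor $T'$ built from $T$ and matrix multiplication tensors, as a formal difference in the Grothendieck ring, with $\phi(T')<\zeta^\theta(T')$.

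A cleaner route is available if \Cref{cor:edge-mm-char-intro} comes in the stronger form suggested by the paper's ``new criterion'': $T\lesssim S$ whenever $\zeta^\theta$ and the relevant matrix multiplication values match. Then I would show $T\lesssim\ang{1,k,1}\otimes(\cdot)$-type upper bounds, that is, that $T$ is asymptotically dominated by tensors on which $\phi$ is already known. Taking $\phi$ of both sides would give $\phi(T)\le\zeta^\theta(T)$ and complete the proof.
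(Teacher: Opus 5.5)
Your parametrization $\phi(\ang{E,H,L})=E^aHL^b$, the bound $a+b\ge 1$, and Step 2 are fine. Two essential pieces are missing, though. \textbf{First, the constraint $a+b\le 1$.} The justification you propose is false. Star-convexity (\Cref{thm:star-convex}) only says that segments from the gauge points stay inside $\specMM$. It does not confine $\specMM$ to the plane $\gamma_1+\gamma_2+\gamma_3=2$: by \Cref{fact:reg_MM} that sum attains the value $\omega$ on $\specMM$, so your claim is equivalent to $\omega=2$. Yet this step is indispensable. If $\gamma(\phi)=(a,1,b)$ with $a+b>1$, then $\phi$ agrees with no edge functional even on matrix multiplication tensors. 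The bound has to use $\gamma_2=1$, and the paper (\Cref{prop:MM_edge}) does so via Sch\"onhage's degeneration $\ang{n,1,n}\oplus\ang{1,(n-1)^2,1}\degen\ang{n^2+1}$. Spectral points are monotone under degeneration (\Cref{thm:duality_tensor}), so $n^{a+b}+(n-1)^2\le n^2+1$, i.e.\ $n^{a+b}\le 2n$ for every $n\ge 2$, whence $a+b\le 1$.

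\textbf{Second, the upper bound $\phi\le\zeta^\theta$.} Step 3 contains no argument, and it cannot be closed abstractly from Step 2. Separation of points is irrelevant, and Step 2 applies only to genuine tensors, not formal differences. Moreover, a pointwise inequality between spectral points does not force equality in general: on $\mathbb{N}[x]$ preordered by evaluation at $x=1$ and $x=2$, the spectral points are these two evaluations, and $\mathrm{ev}_2\ge\mathrm{ev}_1$ everywhere.

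What is missing is \Cref{prop:bdd_by_supp}: every $\phi$ with $\gamma(\phi)=(\rho,1,1-\rho)$ satisfies $\phi\le\zeta^\theta$. The key observation is that any tensor whose first two modes have dimensions $a,b$ is a restriction of $\ang{a,1,b}$ (use one fresh third-mode variable for each pair of basis indices). Hence its $\phi$-value is at most $a^\rho b^{1-\rho}$. To sharpen this to $\zeta^\theta(T)$, fix bases attaining the minimum in $\zeta^\theta(T)$; only the projected support $\Phi=\pi_{12}(\supp T)$ matters, by \Cref{prop:support_proj}. Split $T^{\otimes N}$ into type classes $T_\lambda$ over $\Phi$ and use $T^{\otimes N}\le\bigoplus_\lambda T_\lambda$. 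Each $T_\lambda$ involves at most $2^{N\eH(P(\lambda)_1)}$ and $2^{N\eH(P(\lambda)_2)}$ basis vectors in the first two modes. Additivity of $\phi$ then gives $\phi(T)^N\le (N+1)^{|\Phi|}\,\zeta^\theta(T)^N$.

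Your ``cleaner route'' gestures at this, but the domination is by an explicit restriction into a direct sum of tensors $\ang{a,1,b}$, not by a strengthening of \Cref{cor:edge-mm-char}. With both pieces in place, your Step 2 and this bound sandwich $\phi$ and give $\phi=\zeta^\theta$. This is precisely the uniqueness clause of \Cref{cor:edge-universal}, which the paper invokes after \Cref{prop:MM_edge}.
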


This can be viewed as progress toward \emph{exhausting} the asymptotic spectrum. Even if future work were to prove that more support functionals are universal, or to construct genuinely new spectral points, one still needs ways to rule out other hypothetical spectral points. Our uniqueness theorem does exactly this along one edge. We emphasize that the full asymptotic spectrum $\calX$ could \emph{a priori} be very large, even infinite-dimensional. What our result demonstrates is that one can sometimes translate structural properties of tensors into rigidity statements about $\calX$ itself, thereby excluding classes of hypothetical spectral points. Note that even for the case of tensors over $\bbC$, this is a stronger conclusion than that of \cite{SDW26Fun} for edge functionals.

\paragraph{Towards more spectral points.} The spectral characterizations of asymptotic slice rank and asymptotic (non-)commutative rank suggest a broader principle: for many \emph{reasonable} rank-like quantities, their regularization should arise either as a maximization or as a minimization over a suitable subset of the asymptotic spectrum. We confirm this principle by giving a convenient criterion for proving such characterizations.

\begin{theorem}[informal, see \Cref{prop:asymp-F,prop:f-asymp}] \label{thm:asymp-F-intro}
    Let $F$ be a ``reasonable'' rank parameter which is additive and supermultiplicative. Then, its regularization is a minimization of spectral points, i.e., there is a subset $\calZ \subseteq \calX$ such that \[ \uF(T) = \min_{\phi \in \calZ} \phi(T).\] The analogous result also holds with ``supermultiplicative'' replaced by ``submultiplicative,'' replacing the minimization with maximization.
\end{theorem}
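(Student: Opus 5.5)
The plan is to derive the statement from Strassen duality applied to the preordered semiring of tensors. We run the argument in a localized form, restricted to the spectral points that are dominated by $F$ on a suitable generating set. Throughout, ``reasonable'' will be made precise as three conditions. First, $F$ is monotone under restriction and normalized, $F(\ang{n}) = n$. Second, $F$ is additive under direct sum. Third, $F$ is supermultiplicative, $F(S\otimes T)\geq F(S)F(T)$, so that by Fekete's lemma $\uF(T) = \lim_n F(T^{\otimes n})^{1/n} = \sup_n F(T^{\otimes n})^{1/n}$ exists.

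We first check the easy inequality. Let $\calZ = \{\phi\in\calX : \phi \geq \uF \text{ pointwise}\}$; we will show $\uF = \min_{\phi\in\calZ}\phi$. Clearly $\uF\leq \inf_{\phi\in\calZ}\phi$. Since $\calZ$ is a closed subset of the compact space $\calX$, the infimum will be attained once we show it equals $\uF$.

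The key step is, for fixed $T$ and any $c>\uF(T)$, to produce some $\phi\in\calZ$ with $\phi(T)\leq c$. To do this we define a new preorder on tensors, $S \preceq S'$ if $F(S\otimes R)\leq F(S'\otimes R)$ for all tensors $R$. More robustly, we take the asymptotic version: $S^{\otimes n}\otimes R$ compared with $S'^{\otimes n+o(n)}\otimes R$ under $F$. This preorder is compatible with $\oplus$ and $\otimes$ and refines restriction, by additivity and monotonicity of $F$. The diagonals $\ang{n}$ give it a Strassen-preorder structure, i.e.\ $\ang{n}\preceq\ang{m}$ iff $n\le m$, and every element is bounded by some $\ang{r}$. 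We then apply the abstract Vector Space Positivstellensatz / Strassen duality (in the form of Fritz or Zuiddam) to this preordered semiring. The resulting monotone homomorphisms are exactly spectral points $\phi$ that are also $\preceq$-monotone. Because $F$ is supermultiplicative and additive, such $\phi$ satisfy $\phi\geq \uF$: the asymptotic spectrum of the new preorder characterizes its asymptotic subrank-type quantity, which by construction is $\uF$. So these $\phi$ lie in $\calZ$. Duality then gives $\uF(T) = \min$ over this spectrum of $\phi(T)$, completing the minimization statement.

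The submultiplicative case is handled dually. There $\uF = \inf_n F(T^{\otimes n})^{1/n}$, and we use the preorder defined by $\uF$-domination to characterize $\uF$ as the maximal value, i.e.\ $\uF(T) = \max\{\phi(T) : \phi\in\calX,\ \phi\le \uF\}$.

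The main obstacle is the middle step. One must verify that the preorder induced by $F$ really satisfies the hypotheses of the abstract duality theorem, in particular that it is a semiring preorder: compatibility with $\otimes$ fails for a naive definition because $F$ is only supermultiplicative. One must also verify that the asymptotic quantity it computes equals $\uF$ rather than merely bounding it. I would first try defining the preorder via $S\preceq S'$ iff $\uF(S\otimes R)\le \uF(S'\otimes R)$ for all $R$, after establishing that $\uF$ itself is additive. That additivity is the delicate point, since regularization need not preserve additivity, and I expect to need the ``reasonable'' hypothesis there, e.g.\ a binomial-expansion argument on $(S\oplus T)^{\otimes n}$.
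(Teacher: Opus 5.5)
There is a genuine gap, and it is the step you flag yourself. That the $\preceq$-monotone spectral points satisfy $\phi\geq\uF$ \emph{and} attain $\uF(T)$ is asserted (``by construction'') rather than derived. The fallback you propose for it cannot work, because regularizations of additive supermultiplicative parameters are generally not additive.

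For instance, $\ACR(\ang{p,1,1})=\ACR(\ang{1,1,p})=1$. On the other hand, the middle binomial term gives $\CR\bigl((\ang{p,1,1}\oplus\ang{1,1,p})^{\otimes 2n}\bigr)\geq\binom{2n}{n}\CR(\ang{p^n,1,p^n})=\binom{2n}{n}p^n$. So $\ACR(\ang{p,1,1}\oplus\ang{1,1,p})\geq 2\sqrt p>2$ for $p\geq 2$; this is the phenomenon behind \Cref{lem:acr_34}. A preorder defined through $\uF$-values therefore has no reason to respect $\oplus$.

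Conversely, the obstacle you name for your first definition is not there. Because you quantify over all $R$, compatibility with $\otimes$ is automatic (replace $R$ by $U\otimes R$). Compatibility with $\oplus$ follows from additivity of $F$ and distributivity. $F(0)=0$ gives $0\preceq S$, and $F(\ang{n}\otimes R)=nF(R)$ together with $F(\ang{1})=1$ gives $\ang{n}\preceq\ang{m}\iff n\leq m$. No asymptotic variant is needed.

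What closes the gap is a two-line computation with that first preorder. By additivity and supermultiplicativity, $F(\ang{t}\otimes T^{\otimes n}\otimes R)=tF(T^{\otimes n}\otimes R)\geq tF(T^{\otimes n})F(R)$, with equality at $R=\ang{1}$. Hence $\ang{s}\preceq\ang{t}\otimes T^{\otimes n}$ if and only if $s\leq tF(T^{\otimes n})$. Feed this into the formula $\min_{\phi}\phi(a)=\sup\{(s/t)^{1/n}: s\leq ta^n\}$ used in the proof of \Cref{thm:minimization}, applied to $\preceq$. Writing $\calX_\preceq\subseteq\calX$ for the spectrum of $\preceq$, this gives $\min_{\phi\in\calX_\preceq}\phi(T)=\sup_n F(T^{\otimes n})^{1/n}=\uF(T)$. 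Thus $\calX_\preceq\subseteq\calZ$ and you are done.

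In the submultiplicative case the same preorder gives $T^{\otimes n}\preceq\ang{r}\iff F(T^{\otimes n})\leq r$. So $\Rk_\preceq(T^{\otimes n})=\lceil F(T^{\otimes n})\rceil$ and $\AR_\preceq=\uF$. Then \Cref{thm:duality_abstract} yields $\uF=\max_{\phi\in\calX_\preceq}\phi$, with every such $\phi\leq\uF$.

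Completed this way, your route differs from the paper's. The paper uses the \emph{smallest} preorder generated by $\lfloor f(a)\rfloor\leq_f a$ (resp.\ $a\leq^F\lceil F(a)\rceil$). There, compatibility is free, but monotonicity of the functional must be proved via the zig-zag normal form and binomial estimates; this is where monomial homogeneity and monomial dominance enter. \Cref{prop:asymp-F,prop:f-asymp} then verify those conditions for $\uF$. Your preorder is the \emph{largest} semiring preorder for which $F$ is monotone, so monotonicity is free and exact additivity of $F$ pays for $\oplus$-compatibility. That is shorter for the stated theorem. However, it does not give the intrinsic criteria \Cref{thm:maximization,thm:minimization}, which cover functionals that are merely sub- or superadditive, such as $\uF$ itself.
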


We also give an application to higher-order tensors.
For $d$-mode tensors where $d\geq 4$, a recent preprint of Moshkovitz and Zhu \cite{MZ26CRank} introduced a multilinear generalization of commutative rank, which we still denote by $\CR(T)$. Using Theorem~\ref{thm:asymp-F-intro}, we obtain the following \emph{a priori characterization} of asymptotic commutative rank.
\begin{corollary}[see \Cref{cor:higher-mode-acr-minimize}] \label{cor:higher-mode-acr-minimize-intro}
    The asymptotic (multilinear) commutative rank $\ACR$ is a minimization of spectral points.
\end{corollary}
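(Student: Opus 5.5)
The plan is to apply \Cref{thm:asymp-F-intro} with $F = \CR$, the multilinear commutative rank of Moshkovitz and Zhu for $d$-mode tensors. That theorem gives exactly the desired conclusion, a subset $\calZ\subseteq\calX$ with $\ACR(T)=\min_{\phi\in\calZ}\phi(T)$, once we check its hypotheses for $\CR$. These are: $\CR$ is a ``reasonable'' rank parameter, it is additive under direct sums, and it is supermultiplicative under tensor products.

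First I would recall the definition. For a $d$-mode tensor $T\in V_1\otimes\cdots\otimes V_d$, $\CR(T)$ is, as in the matrix-space case, the maximal rank of a generic contraction of $T$ down to a matrix along two distinguished modes. Equivalently, it is the rank over the rational function field of the matrix obtained by contracting the remaining $d-2$ modes with generic vectors of indeterminates.

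The properties then follow in this order.
\begin{itemize}
    \item \emph{Monotonicity.} A restriction $(A_1\otimes\cdots\otimes A_d)S = T$ turns each generic contraction of $T$ into a contraction of $S$ at specialized points, followed by multiplication by matrices on the two matrix modes. Hence $\CR(T)\le\CR(S)$.
    \item \emph{Normalization.} Direct computation gives $\CR(\ang{n})=n$.
    \item \emph{Additivity.} The generic contraction of $T\oplus S$ is block diagonal, with independent indeterminates in the two blocks. Its rank over the field of rational functions is the sum of the block ranks.
    \item \emph{Supermultiplicativity.} Contract $T\otimes S$ with product vectors $x\otimes y$, where $x$ and $y$ are generic for $T$ and $S$ respectively. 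This specializes the contraction of $T\otimes S$ to the Kronecker product $M_T(x)\otimes M_S(y)$. Its rank is $\CR(T)\CR(S)$, and specialization can only lower rank, so $\CR(T\otimes S)\ge\CR(T)\CR(S)$.
\end{itemize}

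The main obstacle is matching whatever extra ``reasonableness'' condition \Cref{prop:asymp-F} imposes. Typically this is a boundedness or compatibility condition, such as $\CR(T)\le$ the flattening ranks. Or it may be that $\CR$ is sandwiched between subrank and rank, so that the regularization $\ACR=\lim\CR(T^{\otimes n})^{1/n}$ exists and the duality argument (an additive and multiplicative extension via Strassen's Positivstellensatz) applies.

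Over small fields the generic contraction must be interpreted over the rational function field, or over an infinite extension, for the specialization arguments above to work. I would verify that the definition of \cite{MZ26CRank} is taken in this field-independent form. If it is not, I would check that $\CR$ is unchanged under field extension, as is known in the $3$-mode case. With these checks done, \Cref{prop:asymp-F} yields the claimed minimization.
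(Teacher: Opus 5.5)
Your proposal matches the paper's argument: the paper likewise checks that $\CR_{1,2}$ is monotone, normalized, additive and supermultiplicative, and then invokes \Cref{prop:f-asymp} together with \Cref{thm:minimization}. Your Kronecker-product specialization is a clean way to get supermultiplicativity without passing to an infinite field. Note that the result to cite is \Cref{prop:f-asymp} (the supermultiplicative, minimization version), not \Cref{prop:asymp-F}; its only ``reasonableness'' requirement is monotonicity plus normalization, which gives $\CR\le\Rk$ for Fekete's lemma and which you have already verified.
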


Even over $\bbC$, higher-order quantum functionals are not completely understood\footnote{The original definition of the quantum functional in \cite{CVZ23QFun} does not necessarily rely directly on the moment polytope, but it is not clear whether those variants are spectral points.}. Moreover, we prove that higher-order quantum functionals are not sufficient to characterize this higher-order $\ACR$:

\begin{proposition}[informal, see e.g.,~\Cref{thm:separation}] \label{prop:high-order-intro}
    For $d \geq 4$, the asymptotic spectrum of $d$-mode tensors contains points other than the quantum functionals.
\end{proposition}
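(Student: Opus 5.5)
The plan is to exhibit a spectral point of $4$-mode tensors that differs from every quantum functional, using the criterion of \Cref{thm:asymp-F-intro} together with \Cref{cor:higher-mode-acr-minimize-intro}. By that corollary, the multilinear asymptotic commutative rank $\ACR$ of Moshkovitz--Zhu is a minimum $\ACR(T)=\min_{\phi\in\calZ}\phi(T)$ over some subset $\calZ$ of the asymptotic spectrum of $d$-mode tensors. If every point of $\calZ$ were a quantum functional $F_\theta$ (with $\theta$ in the $(d-1)$-simplex), then $\ACR(T)\ge \min_\theta F_\theta(T)$ for all $T$. So the whole argument reduces to one explicit $d$-mode tensor $T_0$ (first $d=4$, then extended to all $d\ge 4$ by padding with trivial modes, i.e.\ tensoring with a factor that is $1$-dimensional in the extra modes) with
\[ \ACR(T_0) < \min_{\theta} F_\theta(T_0). \]
Then some $\phi\in\calZ$ attaining $\phi(T_0)=\ACR(T_0)$ cannot be any $F_\theta$.

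For the upper bound on $\ACR(T_0)$ I would use that $\CR$ in the multilinear sense is at most a natural flattening-type rank. I would pick $T_0$ for which the relevant linear space of multilinear forms has a small commutative rank for a structural reason, for instance a space of singular forms of compression type, and bound $\CR(T_0^{\otimes n})$ submultiplicatively by $\CR(T_0)^n$. For the lower bound on $F_\theta(T_0)$, valid for every $\theta$, I would use the standard facts that $F_\theta(T)\ge 2^{\sum_\varkappa\theta_\varkappa \eH(P_\varkappa)}$ for any $P$ in the moment polytope. In particular, if $T_0$ is \emph{free} or tight, the uniform point, or a point built from the support, is in $\Delta(T_0)$. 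Choosing $T_0$ with all marginals of that point uniform of size $m$ gives $F_\theta(T_0)=m$ for every $\theta$, since each $F_\theta$ is also at most the flattening ranks.

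The candidate I would try first is a $4$-mode analogue of the $3$-mode situation where commutative rank is strictly smaller than the minimum over all three edges. Concretely, take $T_0=\sum_{i} e_i\otimes e_i\otimes f\otimes g$ glued with a cyclic structure (a ``four-cycle'' tensor, in the spirit of the \verb|\fourcycle| macro defined in the preamble). Each pair of modes then shares an index, so every single-mode marginal of the uniform distribution on the support has full entropy $\log m$, while the multilinear commutative rank, which only sees one particular grouping, is about $\sqrt{m}$ or smaller.

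The main obstacle is certifying the lower bound $F_\theta(T_0)\ge m$ uniformly in $\theta$, i.e.\ showing the uniform marginal point lies in the moment polytope. I would do this by showing $T_0$ has an orbit closure point with uniform marginals, via invariant theory: exhibit a nonvanishing $\mathrm{SL}$-invariant, or a tight support, which places the uniform point in $\Delta(T_0)$. A secondary subtlety is matching exactly the Moshkovitz--Zhu definition of $\CR$ so that the upper bound is rigorous.
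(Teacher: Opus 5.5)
Your frame is the paper's: write $\ACR_{1,2}=\min_{\phi\in\calZ}\phi$, then find a tensor on which $\ACR_{1,2}$ is strictly below every known point of $\calZ$. The gap is that you compare only against the genuinely $4$-mode functionals $F_\theta$, $\theta\in\calP([4])$, and that comparison has no content.

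\textbf{Your target is already settled by a known point.} The flattening rank $\zeta_{\{1,3\},\{2,4\}}$ is a spectral point obtained by grouping modes. It is not any $F_\theta$: it equals $1$ on both $\ang{n}_{1,3}$ and $\ang{n}_{2,4}$, whereas $F_\theta$ gives $n^{\theta_1+\theta_3}$ and $n^{\theta_2+\theta_4}$, with exponents summing to $1$. It also already lies in $\calZ$. A decomposition $T=\sum_{r\le R}u_r\otimes v_r$ with $u_r\in V_1\otimes V_3$, $v_r\in V_2\otimes V_4$ writes $M_T(\ov z)$ as a sum of $R$ rank-one matrices, so $\CR_{1,2}\le\zeta_{\{1,3\},\{2,4\}}$.

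What \Cref{thm:separation} asserts is $\calZ\not\subseteq\calQ$, where $\calQ$ also contains the quantum functionals of grouped tensors. To prove it you must do two things:
\begin{itemize}
\item determine $\calQ\cap\calZ$, which by \Cref{lem:exclude-spectral-points} consists of the $F_{(\rho,1-\rho,0,0)}$ and the $3$-mode edge functionals after grouping $\{i,j\}$ with $i\in[2]$, $j\notin[2]$;
\item beat all of them at once, i.e.\ show $\ACR_{1,2}(T)<\ACR_{(\{3,4\})}(T)$ and $\ACR_{1,2}(T)<\ACR_{\{i,j\}}(T)$.
\end{itemize}
In particular, any upper bound on $\ACR_{1,2}(T_0)$ that comes from a flattening rank is itself the value of a point of $\calQ\cap\calZ$. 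Such a bound can never exhibit a new spectral point.

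\textbf{Both concrete steps also fail.} First, $\CR$ is supermultiplicative, not submultiplicative, so the bound $\CR(T_0^{\otimes n})\le\CR(T_0)^n$ is not available. Compression structure does not rescue it. The paper's tensor
\[ T_{p,q}=\ang{p}_{1,3}\oplus\ang{q}_{1,4}\oplus\ang{p}_{2,4}\oplus\ang{q}_{2,3} \]
has four summands each with $\CR_{1,2}=1$, so $\CR_{1,2}(T_{p,q})=4$, yet $\ACR_{1,2}(T_{p,q})=2\min_\rho(p^\rho+q^{1-\rho})$. You have to analyze $\CR_{1,2}(T_0^{\otimes n})$ directly.

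Second, the four-cycle candidate cannot separate. By \Cref{lem:quantum-max-flow}, a weighted $4$-cycle graph tensor has $\CR_{1,2}=\min(n_{13},n_{23})\min(n_{14},n_{24})$. This is exactly the minimum of the flattening ranks along $\{1\}|\{2,3,4\}$, $\{2\}|\{1,3,4\}$, $\{1,3\}|\{2,4\}$ and $\{1,4\}|\{2,3\}$, all of which lie in $\calQ\cap\calZ$. With all weights equal to $n$, it equals the mode dimension $m=n^2$ and every $F_\theta$, not $\sqrt m$. Since powers of graph tensors are graph tensors, no single graph tensor works.

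\textbf{The missing idea is a direct sum.} Take $T_{p,q}$ as above and expand $T_{p,q}^{\otimes n}$ into weighted $4$-cycles. Additivity of $\CR_{1,2}$ together with the product formula of \Cref{lem:quantum-max-flow} splits each summand into two independent $3$-mode pieces, whose exponents can be optimized separately. This gives
\[ \ACR_{1,2}(T_{p,q})\le 2\min_\rho(p^\rho+q^{1-\rho}). \]
For $p\ne q$ this is strictly below $2(\sqrt p+\sqrt q)=\ACR_{(\{3,4\})}(T_{p,q})$. For large $p,q$ it is also below $\min(p,q)\le\ACR_{\{i,j\}}(T_{p,q})$ (\Cref{lem:acr_34,lem:acr_ij}).

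The step you flagged as the main obstacle is the easy one here. On such direct sums the quantum functionals are determined by additivity and multiplicativity from $F_\theta(\ang{n}_{i,j})=n^{\theta_i+\theta_j}$, with no moment-polytope argument needed.
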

In other words, there must be more spectral points we have not yet identified. We give two different proofs of Proposition~\ref{prop:high-order-intro} which focus on different tensors; one (\Cref{thm:separation}) using Corollary~\ref{cor:higher-mode-acr-minimize-intro}, and another (\Cref{rmk:interpolation}) using a higher-mode star-convexity result of~\cite[\S12]{WZ26Spectra}. We hope that our criterion will provide a useful guide for the further study and eventual discovery of new spectral points for higher-mode tensors.

Finally, along the way, we note that our framework may be applied to other rank notions (such as asymptotic slice rank and $G$-stable rank); see \Cref{cor:ASR_minimize} for details.

\subsection{Technical overview}

\paragraph{Abstract results.} Before proving our main results, we begin by establishing useful properties of abstract semirings in \Cref{sec:abstract_functionals}.  We begin our overview by explaining these results, which tell us when a well-behaved functional can already be recovered from the asymptotic spectrum. Let us start with an \emph{upper} functional $F$, that can be thought of as an already regularized functional $\uF$ in the context of \Cref{thm:asymp-F-intro}. What are the necessary conditions for $F$ to be a maximization of spectral points? Clearly, it must be monotone and normalized, but it only needs to be \emph{sub}additive and \emph{sub}multiplicative, and hence need not be a spectral point itself. Moreover, there is also one further compatibility condition that must be present from the outset: $F$ should also satisfy same scaling rule at every spectral point, namely $F(na^m)=nF(a)^m$, which we call \emph{monomial homogeneity}. Our first abstract result (\Cref{thm:maximization}) is that, this set of conditions is not only necessary but also sufficient.

\emph{Proof overview of \Cref{thm:maximization}.}
The basic idea on proving the sufficiency is to reduce the problem to Strassen duality. Since duality tells us that \emph{asymptotic rank} is always a maximization over spectral points, it is enough to find a refinement $\leq^F$ of the original preorder $\leq$ in which the asymptotic rank becomes exactly $F$, since the spectrum of the refinement is a subset of the original spectrum.

The natural way to do this is to enlarge the original preorder by formally declaring that every object is at most the integer predicted by $F$, namely $a\leq^F \ceil{F(a)}$, and then taking the smallest preorder generated by these relations. If this construction works, one would expect the corresponding asymptotic rank to reproduce $F$. The difficulty is to show that this enlarged preorder has not become too large. In particular, we need to prove that $F$ is still monotone for the new preorder, i.e.,
\[ s\leq^F t \implies F(s)\leq F(t). \]
A naive inductive argument quickly runs into trouble, because $F(s) \leq F(t)$ does not necessarily imply $F(s+u)\leq F(t+u)$. What we do instead is to refactor an arbitrary relation $s\leq^F t$ into a cleaner form to work with. It turns out that the most interesting case is the following form: $s$ and $t$ are connected by the following two relations
\[ s \leq b + cd, \qquad b+\ceil{F(c)}d \leq t, \]
because any relation in $\leq^F$ can in turn be written as a concatenation of the above relations. Moreover, the monomial homogeneity condition mentioned above is exactly what enables us to prove the monotonicity of $F$ under such relations. This yields our characterization of maximization.

\emph{The minimization side.}
The statement for lower functionals (which are superadditive and supermultiplicative) is proved in the same spirit. One now builds a preorder forcing $\floor{f(a)}\leq a$ and aims to recover $f$ from asymptotic subrank rather than asymptotic rank. The argument is again parallel, but the key inequality needs one additional hypothesis, which we call \emph{monomial dominance}.

\paragraph{Structure along the edges.}

We now study the upper support functional $\zeta^\theta$ in greater detail, with the goal of motivating the HN-filtration that appears in Theorem~\ref{cor:edge-hn-formula-intro}. This will also provide the structure we need in the proofs of our main results.

Recall that $\zeta^\theta$ is defined by first choosing bases, then maximizing a weighted marginal entropy over the resulting support, and finally minimizing over all choices of bases; see \eqref{eqn:supp}. On the edge $\theta=(\rho,1-\rho,0)$, this expression simplifies dramatically:
\[
\zeta^{(\rho,1-\rho,0)}(T)
= \min_{\text{bases}} \max_{P\in\calP(\supp T)} 2^{\rho \eH(P_1) + (1-\rho)\eH(P_2)}.
\]
Since the third weight is zero, the objective depends only on the first two marginals, so it no longer depends on the full $3$-dimensional support, but only on its projection to the $U\times V$ coordinates. In other words, once we restrict to an edge, the inner optimization becomes essentially a two-dimensional problem (\Cref{prop:support_proj}).

Our first step is therefore to analyze this weighted entropy maximization problem for a set $\Phi\subseteq J\times K$; this is carried out in \Cref{sec:marginal_entropy}. The resulting picture is strikingly rigid. The main structural statement is \Cref{prop:argmax_2d}: after reordering $J$ and $K$, an optimal support must decompose into a block upper-triangular pattern whose diagonal blocks are \emph{balanced}, meaning that each block supports a distribution with the prescribed uniform marginals. This balancedness condition itself admits a purely combinatorial Hall-type characterization, proved in \Cref{prop:balance_char}. Thus the optimization problem reveals a concrete discrete structure underlying the support functional.

The next step is to identify a basis for $T$ in which this discrete structure arises naturally. Here it is useful to reinterpret a tensor through its first two modes. Fixing the third mode gives a list of linear maps $U\to V$, or equivalently a matrix space; we discuss this viewpoint in \Cref{sec:quiver}. A standard way to package such a collection of maps is as a \emph{representation} of the \emph{Kronecker quiver}, but no prior background in quiver representation theory is really needed: in our setting, this is simply convenient language for discussing a tuple of matrices together with its invariant subspaces and quotient pieces.

From this viewpoint, the relevant notion is \emph{semistability}. We remark that this notion originates in the geometric invariant theory (GIT) of general quiver representations and has several very different characterizations. The only one we need here is linear-algebraic, and our arguments do not rely on the machinery of GIT. Our main observation consists of the following two correspondences:
\begin{itemize}
    \item Semistability can be characterized by the absence of \emph{shrunk subspaces} (\Cref{prop:unstable}): a tensor $T$ is semistable if and only if no subspace of $U$ is mapped by the matrix tuple into an unexpectedly small subspace of $V$. This in turn turns out (\Cref{lem:semistable_iff_balanced}) to be equivalent to saying that, for any choice of bases, the support of $T$, projected to the $U\times V$ coordinates, is balanced in the sense described above.
    \item The structural result known as the \emph{Harder--Narasimhan filtration} (\Cref{thm:HN-filtration}) provides a canonical way to decompose any representation into semistable pieces. In other words, this filtration is a nested sequence of linear subspaces of $U$ and $V$ that, after choosing bases adapted to the filtration, makes the projected support block triangular. Surprisingly, the combinatorial condition (\Cref{prop:argmax_2d}) from the optimization problem matches this filtration exactly.
\end{itemize}

These correspondences strongly suggest that the HN-filtration is the right basis for the support functional on the edge. It already exhibits the specific form that attains a maximizer in the entropy maximization and makes the value of the optimization explicit, and beyond the combinatorial structure, its additional linear-algebraic properties will enable the arguments that follow.

\paragraph{Comparing with matrix multiplication.}

We are finally prepared to prove Theorem~\ref{cor:edge-universal-intro}. In principle, with the HN-filtration characterization in hand, one could try to directly argue that $\zeta^{(\rho,1-\rho,0)}$ is a spectral point without making any reference to matrix multiplication. We instead go through matrix multiplication, because this yields the stronger characterization of $\zeta^\theta$ as the most valuable matrix multiplication tensor extractable from $T$ (Theorem~\ref{cor:edge-mm-char-intro}), and also implies the uniqueness statement (Theorem~\ref{cor:edge-unique-intro}).

The main tool introduced for this purpose, in \Cref{sec:value}, is the \emph{lower value functional}
\[
  \xi_\gamma(T) \;=\; \sup\bigl\{ \bigl(E^{\gamma_1} H^{\gamma_2} L^{\gamma_3}\bigr)^{1/N} : T^{\otimes N} \geq \ang{E,H,L} \bigr\},
\]
which measures the most valuable matrix multiplication tensor that can be asymptotically extracted from $T$. The functional $\xi_\gamma$ is a direct analogue of the helper function used in the Coppersmith--Winograd laser method \cite{CW90Laser}.

For $\theta = (\rho,1-\rho,0)$, the value of $\zeta^\theta$ on matrix multiplication tensors is $\zeta^\theta(\ang{E,H,L}) = E^\rho H L^{1-\rho}$, so $\zeta^\theta$ should coincide with $\xi_\gamma$ for $\gamma = (\rho, 1, 1-\rho)$. We verify this by proving $\zeta^\theta = \xi_\gamma$ in three steps. For simplicity, we describe the argument for a \emph{semistable} tensor $T$ of dimensions $n \times m$ (in the first two modes). In this case the HN-filtration is trivial --- $T$ itself is a single semistable piece --- and the entropy formula gives $\zeta^\theta(T) = n^\rho m^{1-\rho}$ directly. The passage from this semistable case to the general case is a standard (though technical) concentration argument for tensor powers, going back to the asymptotic sum inequality of Schönhage \cite{Sch81Tau}.

The three steps are as follows.

\begin{itemize}
    \item \emph{Step 1: $\xi_\gamma \leq \phi$ for any spectral point $\phi$ agreeing with $\zeta^\theta$ on matrix multiplication.} This is almost immediate: if $T^{\otimes N} \geq \ang{E,H,L}$, then monotonicity of $\phi$ gives $\phi(T)^N \geq \phi(\ang{E,H,L}) = E^\rho H L^{1-\rho}$, and taking $N$-th roots and the supremum yields $\xi_\gamma(T) \leq \phi(T)$.
    \item \emph{Step 2: $\phi \leq \zeta^\theta$.} In fact, \emph{any} $n\times m$ tensor satisfies $T \leq \ang{n,1,m}$, so monotonicity of $\phi$ gives $\phi(T) \leq \phi(\ang{n,1,m}) = n^\rho m^{1-\rho} = \zeta^\theta(T)$.
    \item \emph{Step 3: $\zeta^\theta \leq \xi_\gamma$.} This is the hardest step, and amounts to showing that $T$ contains a sufficiently valuable matrix multiplication tensor. In the square case $n = m$, semistability of $T$ is equivalent to $T$ having full non-commutative rank $\NCR(T) = n$. It is known since Fortin--Reutenauer \cite[Corollary~2]{FR04NCRk} that $\CR(T) \geq \NCR(T)/2 = n/2$, so $T \geq \ang{1, n/2, 1}$ and hence $\xi_\gamma(T) \geq n/2$. Since semistability is preserved under tensor powers, the constant factor $1/2$ is removed by passing to $T^{\otimes N}$.
    
    The rectangular case $n \leq m$ is more delicate. Here we want $T \geq \ang{1, \Omega(n), \Omega(m/n)}$, which would give $\xi_\gamma(T) \geq \Omega(n^\rho m^{1-\rho})$; removing the constant again uses tensor powers and semistability. To establish this restriction, we consider the \emph{compression theorem} originally due to Strassen \cite[\S6]{Str88AsymSpec} (in the form proved by Wigderson and Zuiddam \cite[Theorem~11.9]{WZ26Spectra}). The compression theorem provides a general lower bound on the largest matrix multiplication extractable from any tensor; we give an improvement in the case of semistable tensors. This analysis is carried out in \Cref{sec:basis_shift}.
\end{itemize}

\section{Preliminaries}

For a positive integer $n$, we write $[n] = \{1,\dots,n\}$.

For any nonnegative function $F\colon S\to \bbR_{\geq 0}$ on a semigroup $S$, we write its \emph{regularization} as
\[ \uF(x) \coloneq \lim_{n\to\infty} F(x^n)^{1/n} \]
when the limit exists.

\subsection{Tensors}

Fix a base field $\bbF$. Unless otherwise specified, all tensors in this paper are $3$-mode $\bbF$-tensors. Thus, if $U$, $V$, and $W$ are finite-dimensional vector spaces, a tensor is an element $T\in U\otimes V\otimes W$. We refer to $U$, $V$, and $W$ as the first, second, and third \emph{modes} of $T$.

Following the notation common in the matrix multiplication literature \cite{ADWXXZ25MoreAsym}, we choose bases $\{x_1,\dots,x_n\}$, $\{y_1,\dots,y_m\}$, and $\{z_1,\dots,z_p\}$ of $U$, $V$, and $W$, respectively, and write a tensor $T$ with coefficients $a_{ijk}\in\bbF$ by omitting the tensor symbols:
\[ T = \sum_{i=1}^n \sum_{j=1}^m \sum_{k=1}^p a_{ijk} x_i y_j z_k. \]

For tensors $S\in U\otimes V\otimes W$ and $T\in U'\otimes V'\otimes W'$, we write $S\oplus T$ for their direct sum inside
\[ (U\oplus U') \otimes (V\oplus V') \otimes (W \oplus W'), \]
and $S\otimes T$ for their tensor product inside
\[ (U\otimes U') \otimes (V\otimes V') \otimes (W\otimes W'). \]

A \emph{restriction} from $S\in U\otimes V\otimes W$ to $T \in U'\otimes V'\otimes W'$ is given by linear maps $A\colon U \to U'$, $B\colon V \to V'$, and $C\colon W \to W'$ such that
\[ T = (A\otimes B\otimes C)S. \]
Equivalently,
\[ T = \sum_{i,j,k} a_{ijk} A(x_i) B(y_j) C(z_k). \]
In this case we say that $S$ \emph{restricts to} $T$, and write $T\leq S$.

A \emph{degeneration} from $S\in U\otimes V\otimes W$ to $T\in U'\otimes V'\otimes W'$ is given by polynomial families of linear maps
\[ A(\varepsilon)\colon U\to U', \qquad B(\varepsilon)\colon V\to V', \qquad C(\varepsilon)\colon W\to W' \]
with entries in $\bbF[\varepsilon]$, together with an integer $q\geq 0$, such that
\[ (A(\varepsilon)\otimes B(\varepsilon)\otimes C(\varepsilon))S
= \varepsilon^q T + \varepsilon^{q+1} \widetilde T(\varepsilon) \]
for some $\widetilde T(\varepsilon)\in U'\otimes V'\otimes W'\otimes \bbF[\varepsilon]$.
In this case we say that $S$ \emph{degenerates to} $T$, and write $T\degen S$.
Clearly, every restriction is a degeneration, by taking the three families to be constant.

The most basic tensors are the diagonal tensors. For any $n\geq 1$, let
\[ \ang{n} = \sum_{i=1}^n x_i y_i z_i, \]
and define the rank $\Rk(T)$ of a tensor $T$ to be the minimum $r$ such that $T \leq \ang{r}$. Similarly, the subrank $\Q(T)$ is the maximum $q$ such that $\ang{q} \leq T$. The tensor product $\ang{n} \otimes T$ may also be written as $n\odot T$.

\subsection{Strassen duality}

In this subsection we briefly recall Strassen's theory of asymptotic spectra \cite{Str88AsymSpec}. More precisely, we will use the general semiring formulation developed by Wigderson and Zuiddam \cite{WZ26Spectra}. Let $\calR$ be a commutative semiring with $0$ and $1$. A preorder $\leq$ on $\calR$ is called a \emph{Strassen preorder} if every element is nonnegative ($0\leq a$), the preorder is compatible with addition and multiplication, and
\begin{itemize}
    \item (Strong Archimedean) for every nonzero $a$, there exists $n\in\bbN$ such that $1\leq a\leq n$;
    \item (Embedding of natural numbers) for any $n,m\in\bbN$, one has $n\leq m$ in $\bbN$ if and only if $n\leq m$ in $\calR$.
\end{itemize}

In this general setting, the \emph{asymptotic closure} $\lesssim$ of $\leq$ is defined by
\begin{center}
    $a\lesssim b$ \quad if \quad $a^n \leq 2^{o(n)} b^n$.
\end{center}
The abstract notions of rank and subrank are
\begin{center}
    $\Rk_\leq(a) = \min \{ n \in \bbN : n\geq a \}$ \quad and \quad $\Q_\leq(a) = \max\{q\in \bbN : q \leq a\}$,
\end{center}
and we write $\Rk$ and $\Q$ when the preorder is clear from context. Their regularizations are
\begin{center}
    $\displaystyle \AR(a) = \inf_{n\geq 1} \Rk(a^n)^{1/n}$ \quad and \quad
    $\displaystyle \AQ(a) = \sup_{n\geq 1} \Q(a^n)^{1/n}$,
\end{center}
called the asymptotic rank and subrank.

By a \emph{functional} we simply mean a map $F\colon \calR \to \bbR$. The \emph{asymptotic spectrum} of a Strassen preordered semiring $(\calR,\leq)$, denoted by $\calX = \calX_{(\calR, \leq)}$, is the set of nonnegative functionals $\phi\colon \calR \to \bbR_{\geq 0}$ that are normalized ($\phi(1)=1$), additive, multiplicative, and monotone with respect to $\leq$.

\begin{theorem}[Abstract Strassen duality {\cite[\S3]{WZ26Spectra}}] \label{thm:duality_abstract}
    Let $(\calR, \leq)$ be a Strassen preordered semiring, and $\calX$ be the asymptotic spectrum of $\calR$. Then $\calX$ can be endowed with a compact Hausdorff topology such that evaluation at every $a$: $\phi \mapsto \phi(a)$ is a continuous map from $\calX$ to $\bbR$. Moreover,
    \begin{enumerate}
        \item The asymptotic closure $\lesssim$ is characterized by 
        \[ a\lesssim b \iff \forall \phi\in \calX, \phi(a) \leq \phi(b). \]
        \item The asymptotic rank is characterized by
        \[ \AR(a) = \max_{\phi\in \calX} \phi(a). \]
    \end{enumerate}
\end{theorem}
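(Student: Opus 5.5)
The plan is to derive everything from one structural fact: every Strassen preorder extends to a \emph{maximal} one, and a maximal Strassen preorder is total and is induced by a single spectral point. Compactness and the two characterizations then follow by routine arguments.

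\emph{Topology.} If $a\neq 0$, the Strong Archimedean property gives $1\leq a\leq \Rk(a)$. Hence every $\phi\in\calX$ satisfies $\phi(a)\in[1,\Rk(a)]$, and $\phi(0)=0$. Embed $\calX$ into the product $\prod_{a\in\calR}[0,\Rk(a)]$ and give it the subspace topology. The product is compact by Tychonoff. Normalization, additivity, multiplicativity and monotonicity are each pointwise closed conditions, so $\calX$ is a closed subset and hence compact Hausdorff. Evaluation maps are projections, so they are continuous.

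\emph{Part 1.} First the easy direction. If $a^n\leq 2^{o(n)}b^n$, then for every $\phi\in\calX$ we get $\phi(a)^n\leq 2^{o(n)}\phi(b)^n$, and taking $n$-th roots gives $\phi(a)\leq\phi(b)$.

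For the converse, assume $a\not\lesssim b$; we must build a $\phi\in\calX$ with $\phi(a)>\phi(b)$.
\begin{enumerate}
    \item Adjoin to $\leq$ a relation ``in the opposite direction'', namely $b\cdot m\leq a\cdot k$ for suitable integers $m>k$ extracted from the failure of $a^n\leq 2^{o(n)}b^n$, and take the generated semiring preorder.
    \item Show that this enlarged preorder is still Strassen. The only delicate axiom is the embedding of $\bbN$. If some $p\leq q$ with $p>q$ held in the new preorder, one could unwind the finitely many uses of the new relation. Combining this with powers and multiplicativity would then yield $a^n\leq 2^{o(n)}b^n$, a contradiction.
    \item Apply Zorn's lemma: a union of a chain of Strassen preorders extending $\leq$ is again Strassen, so a maximal extension $\preceq$ exists.
    \item Show that $\preceq$ is total. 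If $x,y$ were incomparable, then adjoining $x\preceq y$ or adjoining $y\preceq x$ keeps the preorder Strassen. This is again the same ``no collapse of $\bbN$'' argument.
    \item For a total Strassen preorder, define
    \[ \phi(x)=\inf\{m/n : nx\preceq m\}=\sup\{m/n : m\preceq nx\}. \]
    The two expressions agree by totality together with the Archimedean property. With this description one checks that $\phi$ is normalized, additive, multiplicative and monotone, so $\phi\in\calX$.
    \item Since $\preceq$ contains the adjoined relation $mb\preceq ka$ with $m>k$, we get $m\phi(b)\leq k\phi(a)$, which gives $\phi(a)>\phi(b)$ as required.
\end{enumerate}

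\emph{Part 2.} The maximum in $\AR(a)=\max_{\phi\in\calX}\phi(a)$ is attained because $\calX$ is compact and evaluation at $a$ is continuous.

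The bound $\phi(a)\leq\AR(a)$ is immediate. From $a^n\leq\Rk(a^n)$ we get $\phi(a)^n\leq\Rk(a^n)$ for every $n$, and taking $n$-th roots and the infimum gives the claim.

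For the reverse bound, take a rational $p/q>\max_\phi\phi(a)$. Then $\phi(a^q)<p=\phi(p)$ for every $\phi\in\calX$, so Part 1 gives $a^q\lesssim p$. Concretely, $a^{qn}\leq 2^{o(n)}p^n$, and the right-hand side is itself a natural number. Therefore $\Rk(a^{qn})\leq 2^{o(n)}p^n$, hence $\AR(a)\leq p^{1/q}$. Letting $p/q$ decrease to $\max_\phi\phi(a)$ finishes the argument.

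The main obstacle is the ``no collapse of $\bbN$'' verification, which is needed both when adjoining the relation in step 2 and when proving totality in step 4. It requires a careful normal form for derivations in the generated preorder, using cancellation-type arguments available from the Archimedean property. This is the technical heart of Strassen's original proof and of \cite[\S3]{WZ26Spectra}.
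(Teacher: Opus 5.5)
The paper does not prove this statement; it only cites \cite[\S3]{WZ26Spectra}. Your architecture is Strassen's standard one: Zorn's lemma on Strassen extensions, maximality implies totality, the inf/sup functional of a total preorder, and Tychonoff for compactness. Those parts are fine. The gap is Step 2, which is the only place where $a\not\lesssim b$ is used. You assert it without proof, and the mechanism you indicate does not work as stated. If adjoining $mb\le ka$ collapses $\bbN$ for a \emph{fixed} pair $m>k$, the most you can extract is $a^n\le 2^{o(n)}(m/k)^n b^n$, never $a^n\le 2^{o(n)}b^n$. Spectrally, a collapse only says $m\phi(b)>k\phi(a)$ for all $\phi$. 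So for $a,b\neq 0$ you must choose $1<m/k<\gamma$, where $\gamma=\lim_n c_n^{1/n}$ and $c_n=\min\{c\in\bbN : a^n\le c\,b^n\}$. This $c_n$ is submultiplicative, and $\gamma>1$ exactly when $a\not\lesssim b$. The contradiction must then be with that rate.

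The derivation also needs an idea you do not supply. In cancellative form, $x\le' y$ iff $x+s\,ka\le y+s\,mb$ for some $s\in\calR$. A collapse then reads $q+1+sA\le q+sB$, with $A=ka$, $B=mb$ and an uncontrolled catalyst $s$. Simply raising this to powers leaves a factor like $s^N$, which is an exponential loss. What works is to exploit the ``$+1$''.
\begin{itemize}
\item Pass to $\lesssim$, which cancels additive catalysts and natural-number factors (via $ju+c\le jv+c$ and $Kx\le Ky\Rightarrow x^M\le Ky^M$). This gives $1+sA\lesssim sB$.
\item Prove by induction that $A^{N-1}+sA^N\lesssim sB^N$: multiply the base case by $A^N$ and apply the previous case to $sA^N$.
\item Hence $k^{N-1}a^{N-1}\lesssim \Rk(s)\,m^N b^N$. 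Cancelling $k^{N-1}$ gives rate $m/k<\gamma$, which is the desired contradiction.
\end{itemize}
Step 4 is also not ``the same argument''. There, cross-multiplying the two collapse certificates gives $u+(s+t)\le u$, which contradicts the $\bbN$-embedding outright, with no asymptotics needed.

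Finally, there is a slip in Part 2. From $p/q>\max_\phi\phi(a)$ you cannot conclude $\phi(a^q)<p$. You need integers with $p^{1/q}>\max_\phi\phi(a)$; such values are dense in $[1,\infty)$, and this is what your conclusion $\AR(a)\le p^{1/q}$ actually uses. In addition, ``max'' needs $\calX\neq\emptyset$, which you get by applying Steps 3--5 to $\le$ itself.
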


The next theorem shows that, although degeneration is weaker than restriction at the finite level, the two notions induce the same asymptotic preorder on tensors.
\begin{theorem}[Strassen duality for tensors {\cite[Theorem~3.2]{Str88AsymSpec}, \cite[Theorem~3.41]{WZ26Spectra}}] \label{thm:duality_tensor}
    For two tensors $T, S$, the following statements are equivalent:
    \begin{enumerate}
        \item $T^{\otimes n} \leq S^{\otimes n+o(n)}$;
        \item $T^{\otimes n} \degen S^{\otimes n+o(n)}$;
        \item $\phi(T) \leq \phi(S)$ for every $\phi\in \calX$.
    \end{enumerate}
\end{theorem}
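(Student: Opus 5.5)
The implication (1)$\Rightarrow$(2) is immediate, because every restriction is a degeneration. So the plan is to prove (2)$\Rightarrow$(3) and (3)$\Rightarrow$(1), the latter by way of \Cref{thm:duality_abstract}.

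\textbf{(2)$\Rightarrow$(3).} I would use the standard interpolation trick of Bini--Strassen to turn a degeneration into a restriction at polynomial cost. Suppose $T \degen S$ with approximation degree $q$, and assume the entries of $A(\varepsilon),B(\varepsilon),C(\varepsilon)$ have degree at most $q$. Then $\varepsilon^q T$ is the coefficient of $\varepsilon^q$ in a polynomial of degree at most $3q$. That coefficient can be extracted linearly, and doing so gives
\[ T \leq \ang{c(q)} \otimes S \quad\text{with } c(q)=O(q^2). \]
For instance, write $\varepsilon^q T$ as $\sum_{a+b+c=q} (A_a\otimes B_b\otimes C_c)S$ and absorb the index triples into the diagonal $\ang{\binom{q+2}{2}}$.

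Now take $T^{\otimes n} \degen S^{\otimes n+o(n)}$. Tensor powers of degenerations compose, so the relevant approximation degree grows at most linearly in $n$. We then get
\[ T^{\otimes n} \leq \ang{\mathrm{poly}(n)} \otimes S^{\otimes n+o(n)}. \]
Any $\phi\in\calX$ is monotone, multiplicative and normalized, and $\phi(\ang{k})=k$ by additivity. Hence
\[ \phi(T)^n \leq \mathrm{poly}(n)\,\phi(S)^{n+o(n)}, \]
and taking $n$-th roots and letting $n\to\infty$ gives $\phi(T)\le\phi(S)$. The case $\phi(S)=0$ is handled directly: then $S=0$, which forces $T=0$.

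\textbf{(3)$\Rightarrow$(1).} Let $\calR$ be the semiring of tensors modulo the equivalence $T\leq S\leq T$, with $\oplus$ as addition, $\otimes$ as multiplication, and restriction as the preorder. I would first check the Strassen axioms. Compatibility with $\oplus$ and $\otimes$ is clear. For the strong Archimedean axiom, a nonzero $T$ satisfies $\ang{1}\le T$ (pick a nonzero coefficient and project onto it) and $T\le\ang{\Rk(T)}$. For the embedding of $\bbN$, note that $\ang{n}\le\ang{m}$ forces $n\le m$ by comparing the rank of the first flattening.

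\Cref{thm:duality_abstract} then turns (3) into $T^{\otimes n}\le \ang{2^{o(n)}}\otimes S^{\otimes n}$. It remains to absorb the factor $\ang{2^{o(n)}}$ into $S^{\otimes o(n)}$. If $\AQ(S)>1$, this is easy: $S^{\otimes k}\ge\ang{2}$ for some fixed $k$, so
\[ \ang{2^{o(n)}}\le S^{\otimes o(n)}. \]

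\textbf{Main obstacle.} I expect the remaining case, $\AQ(S)=1$, to be the hardest step. For example, $S=\ang{1,1,2}$ has a rank-$2$ flattening but subrank $1$ forever. Here I would exploit the spectral inequality more cleverly rather than absorb the factor directly. The gauge points give $\zeta_{(\varkappa)}(T)\le\zeta_{(\varkappa)}(S)$ for each $\varkappa$. The plan is to classify such $S$ (up to equivalence, matrix-like tensors whose support sits in a single ``line'' of modes). For these, I would show directly that the flattening-rank inequalities already imply the restriction $T^{\otimes n}\le S^{\otimes n+o(n)}$, or even $T\le S$, in the spirit of Strassen's original treatment.

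If that classification fails, the fallback is to prove the statement with $2^{o(n)}S^{\otimes n}$ in place of $S^{\otimes n+o(n)}$, which is the form \Cref{thm:duality_abstract} gives directly. The equivalence with (1) would then rest only on the case $\AQ(S)>1$ handled above.
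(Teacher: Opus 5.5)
\textbf{Two gaps.} Your overall architecture works, but one step in (2)$\Rightarrow$(3) is false as written, and one case of (3)$\Rightarrow$(1) is left unproved. (The paper itself does not prove this theorem; it quotes it from Strassen and Wigderson--Zuiddam.)

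\textbf{(2)$\Rightarrow$(3).} Your interpolation bound $T\le\ang{\binom{q+2}{2}}\otimes S$ for a single degeneration of approximation degree $q$ is fine over any field. The step after it fails. Hypothesis (2) gives you, for each $n$, a separate degeneration $T^{\otimes n}\degen S^{\otimes n+f(n)}$ with $f(n)=o(n)$ and some approximation degree $q_n$. These are not tensor powers of one fixed degeneration, so nothing controls $q_n$ as a function of $n$. The resulting inequality $\phi(T)^n\le\binom{q_n+2}{2}\phi(S)^{n+f(n)}$ then yields nothing unless $\log q_n=o(n)$. The repair is a double limit. Fix $n$ and take the $k$-th tensor power of that one degeneration; it has degree $kq_n$, so $\phi(T)^{nk}\le\binom{kq_n+2}{2}\,\phi(S)^{(n+f(n))k}$. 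Let $k\to\infty$ to get $\phi(T)^n\le\phi(S)^{n+f(n)}$, and only then let $n\to\infty$. Equivalently, first show that every $\phi\in\calX$ is monotone under a single degeneration, then apply this at each level $n$.

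\textbf{(3)$\Rightarrow$(1).} The case $\AQ(S)=1$ is left open. Your fallback would prove a different statement, with $2^{o(n)}S^{\otimes n}$ in place of $S^{\otimes n+o(n)}$, so the theorem as stated is not yet established. The classification you hope for is true and elementary.
\begin{itemize}
\item Suppose all three flattening ranks of $S$ are at least $2$. If every contraction of $S$ by a linear functional on $U$ had rank at most $1$, these contractions would form a linear space of matrices of rank at most $1$.
\item Such a space always has a common one-dimensional column space or a common one-dimensional row space. Indeed, if $u_1v_1^{\top}$ and $u_2v_2^{\top}$ share neither, their sum has rank $2$, over any field. This would force $\zeta_{(2)}(S)\le 1$ or $\zeta_{(3)}(S)\le 1$, a contradiction.
\item So some contraction has rank at least $2$, i.e.\ $S\ge\ang{1,1,2}$. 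Symmetrically $S\ge\ang{1,2,1}$ and $S\ge\ang{2,1,1}$. Hence $S^{\otimes 3}\ge\ang{2,2,2}\ge\ang{2}$, and $\AQ(S)>1$.
\item Consequently, if $S\neq 0$ and $\AQ(S)=1$, some flattening rank of $S$ equals $1$. Say $\zeta_{(1)}(S)=1$, so $S=u\otimes M$ with $M\in V\otimes W$ of rank $r$.
\item Then (3) applied to $\zeta_{(1)}$ and $\zeta_{(2)}$ forces $T=0$ or $T=u'\otimes M'$ with $\operatorname{rank}M'\le r$. In either case $T\le S$ outright. If $S=0$, then (3) forces $T=0$.
\end{itemize}
With these two repairs your outline becomes a complete proof.
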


\subsection{Matrix multiplication}

Another important family of tensors is given by the matrix multiplication tensors. For positive integers $n,m,p$, let
\[ \ang{n,m,p} = \sum_{i=1}^n \sum_{j=1}^m \sum_{k=1}^p x_{ij}y_{jk}z_{ki}. \]
One checks that
\[ \ang{n,m,p}\otimes \ang{n',m',p'} \cong \ang{nn', mm', pp'}. \]
Thus matrix multiplication tensors are generated, under tensor product, by the special cases $\ang{n,1,1}$, $\ang{1,m,1}$, and $\ang{1,1,p}$. As a consequence, the restriction of a spectral point to matrix multiplication tensors is determined by only three parameters.

\begin{proposition}[{\cite[Proposition~4.3]{Str88AsymSpec}}] \label{prop:param-specmm}
    For any $\phi\in\calX$, there exist unique real numbers $\gamma_1,\gamma_2,\gamma_3$
    such that
    \[ \phi(\ang{n,m,p}) = n^{\gamma_1} m^{\gamma_2} p^{\gamma_3} \]
    for any $n,m,p$.
\end{proposition}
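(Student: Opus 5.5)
The plan is to reduce everything to the three one-parameter families $\ang{n,1,1}$, $\ang{1,m,1}$, $\ang{1,1,p}$, using the isomorphism $\ang{n,m,p}\cong \ang{n,1,1}\otimes\ang{1,m,1}\otimes\ang{1,1,p}$. Given the product formula $\ang{n,m,p}\otimes\ang{n',m',p'}\cong\ang{nn',mm',pp'}$, this decomposition follows by reading off the index structure. Then multiplicativity of $\phi$ gives
\[ \phi(\ang{n,m,p}) = f_1(n)\,f_2(m)\,f_3(p), \]
where $f_1(n)=\phi(\ang{n,1,1})$, and $f_2,f_3$ are defined analogously. It therefore suffices to show that each $f_\varkappa$ is a power function $n\mapsto n^{\gamma_\varkappa}$.

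For a fixed $\varkappa$, the function $f=f_\varkappa\colon \bbZ_{\geq 1}\to \bbR_{\geq 0}$ has three properties, which I would verify in this order.
\begin{enumerate}
    \item It is completely multiplicative, $f(nn')=f(n)f(n')$, again by the product formula.
    \item It is normalized, $f(1)=\phi(\ang{1,1,1})=\phi(\ang{1})=1$, since $\ang{1,1,1}=\ang{1}$.
    \item It is monotone, $f(n)\leq f(n+1)$. The tensor $\ang{n,1,1}=\sum_{i} x_{i1}y_{11}z_{1i}$ is a restriction of $\ang{n+1,1,1}$: project away the extra basis vectors $x_{n+1,1}$ and $z_{1,n+1}$.
\end{enumerate}
Moreover, $f(n)>0$ by multiplicativity and $f(1)=1$. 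For instance, $\ang{n,1,1}$ restricts to $\ang{1}$, so $f(n)\geq 1$.

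The conclusion then follows from the classical lemma that a monotone, completely multiplicative function $f\colon\bbZ_{\geq1}\to\bbR_{>0}$ is of the form $n^\gamma$. To prove it, fix integers $a,b\geq 2$ and $k\geq1$, and choose $\ell$ with $b^\ell\leq a^k< b^{\ell+1}$. Monotonicity and multiplicativity give $f(b)^\ell\leq f(a)^k\leq f(b)^{\ell+1}$. Since $f(b)\geq 1$, taking logarithms gives
\[ \frac{\ell}{k}\log f(b)\leq \log f(a)\leq \frac{\ell+1}{k}\log f(b). \]
Also $\ell/k\to \log a/\log b$ as $k\to\infty$. Hence $\log f(a)/\log a=\log f(b)/\log b$, and calling this common ratio $\gamma$ we get $f(n)=n^\gamma$ for all $n\geq2$; the case $n=1$ holds trivially. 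This is the only mildly nontrivial step.

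Uniqueness is immediate. Evaluating at $\ang{2,1,1}$ gives $2^{\gamma_1}=\phi(\ang{2,1,1})$, which determines $\gamma_1$, and similarly $\ang{1,2,1}$ and $\ang{1,1,2}$ determine $\gamma_2$ and $\gamma_3$.
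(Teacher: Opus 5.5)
Your proof is correct and follows the route the paper indicates just before citing Strassen: factor $\ang{n,m,p}\cong\ang{n,1,1}\otimes\ang{1,m,1}\otimes\ang{1,1,p}$ and use multiplicativity. Your lemma that a monotone, completely multiplicative function is a power function supplies the step the paper leaves to the citation. The only slip is the aside that $f(n)>0$ follows ``by multiplicativity and $f(1)=1$'', which it does not. It does not matter, because your restriction $\ang{1}\leq\ang{n,1,1}$ (or monotonicity together with $f(1)=1$) gives $f(n)\geq 1$, which is all the argument needs.
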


We write $\gamma(\phi) = (\gamma_1,\gamma_2,\gamma_3)$ for the triple given by \Cref{prop:param-specmm}. This leads to the notion of the \emph{spectrum of matrix multiplication}.
\begin{definition}[{\cite[\S 4]{Str88AsymSpec}}] \label{def:specmm}
    The (logarithmic) spectrum $\specMM \subset \bbR^3$ of matrix multiplication is
    \[ \specMM = \gamma(\calX), \]
    namely the set of triples $(\gamma_1,\gamma_2,\gamma_3)$ arising from \Cref{prop:param-specmm}.
\end{definition}

Via Strassen duality, several basic properties of matrix multiplication tensors translate into geometric statements about $\specMM$. The only nontrivial input we use here is Strassen's asymptotic subrank estimate $\AQ(\ang{n,n,n})=n^2$ \cite[Theorem~6.6]{Str87RelBilin}.
\begin{fact} \label{fact:reg_MM}
    We have $\specMM \subseteq [0, 1]^3$. Moreover, the quantity $\gamma_1+\gamma_2+\gamma_3$ attains minimum value $2$ and maximum value $\omega$ on $\specMM$.
\end{fact}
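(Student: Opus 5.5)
We prove the three claims of \Cref{fact:reg_MM} separately, using \Cref{prop:param-specmm} together with the normalization, monotonicity and multiplicativity of spectral points.

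\emph{Step 1: $\specMM\subseteq[0,1]^3$.} Fix $\phi\in\calX$ with $\gamma(\phi)=(\gamma_1,\gamma_2,\gamma_3)$; by symmetry we treat $\gamma_2$. The tensor $\ang{1,m,1}=\sum_j y_j z_j\cdot x$ is, up to relabeling, an $m\times m$ identity matrix placed in modes two and three. It restricts to $\ang{1}$, so $m^{\gamma_2}=\phi(\ang{1,m,1})\geq\phi(\ang{1})=1$, which gives $\gamma_2\geq 0$. It is itself a restriction of $\ang{m}$ (collapse the first mode), so monotonicity gives $m^{\gamma_2}\leq\phi(\ang{m})$. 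Additivity and normalization give $\phi(\ang{m})=m\,\phi(\ang{1})=m$. Hence $\gamma_2\leq 1$.

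\emph{Step 2: lower bound on the sum.} Let $\sigma(\phi)=\gamma_1+\gamma_2+\gamma_3$, so that $\phi(\ang{n,n,n})=n^{\sigma(\phi)}$. Strassen's estimate $\AQ(\ang{n,n,n})=n^2$ says $\ang{n^2}\lesssim\ang{n,n,n}$. By \Cref{thm:duality_tensor}, this gives $n^2=\phi(\ang{n^2})\leq n^{\sigma}$ for every $\phi$, so $\sigma\geq 2$. For attainment, the abstract characterization $\AQ(a)=\min_{\phi\in\calX}\phi(a)$ is the dual of \Cref{thm:duality_abstract}(2) and is stated in the introduction. Applied with $n=2$, it yields some $\phi$ with $2^{\sigma(\phi)}=\AQ(\ang{2,2,2})=4$, i.e.\ $\sigma(\phi)=2$.

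\emph{Step 3: upper bound on the sum.} By \Cref{thm:duality_abstract}(2), $\max_{\phi}\phi(\ang{n,n,n})=\AR(\ang{n,n,n})$. The standard identity $\AR(\ang{n,n,n})=n^{\omega}$ holds over a fixed field, since $\omega$ is defined there as an infimum of exponents and rank is submultiplicative. Taking $n=2$ gives $\max_\phi 2^{\sigma(\phi)}=2^\omega$, so the maximum of $\sigma$ over $\specMM$ is $\omega$ and it is attained, using compactness from \Cref{thm:duality_abstract}. A fixed $n$ suffices for all $n$ because $\sigma$ does not depend on $n$.

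The main obstacle is only bookkeeping. We must check that the minimization form of duality is available: it follows by applying part 1 of \Cref{thm:duality_abstract} to compare $a$ with integers. We must also check that $\AR(\ang{2,2,2})=2^{\omega}$ matches the paper's definition of $\omega$. All the genuine content is in Strassen's subrank estimate, which is taken as given.
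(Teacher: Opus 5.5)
Your proof is correct and is essentially the argument the paper has in mind: the restrictions $\ang{1}\leq\ang{1,m,1}\leq\ang{m}$ give $\specMM\subseteq[0,1]^3$, and Strassen duality ($\AQ=\min_\phi\phi$, $\AR=\max_\phi\phi$), together with $\AQ(\ang{n,n,n})=n^2$ and $\AR(\ang{n,n,n})=n^\omega$, gives the extreme values of $\gamma_1+\gamma_2+\gamma_3$. Three minor remarks: under the paper's convention $\ang{1,m,1}=\sum_j x_{1j}y_{j1}z_{11}$ is the identity in modes one and two, which is harmless; deriving $\AQ=\min$ from \Cref{thm:duality_tensor} rather than the abstract closure avoids having to absorb a $2^{o(n)}$ factor, which is possible here anyway since $\ang{2}\leq\ang{2,2,2}$; and attainment of the minimum $2$ also follows immediately from the gauge point $\zeta_{(1)}$, whose exponent vector is $(1,1,0)$.
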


Finally, we recall Strassen's \emph{star-convexity} theorem. For a set $C\subset \bbR^n$ and a point $p\in C$, we say that $C$ is \emph{star-convex with respect to $p$} if for every $q\in C$, the entire line segment from $p$ to $q$ lies in $C$, that is,
\[ \lambda p + (1-\lambda) q\in C \qquad \text{for all } \lambda\in [0,1]. \]
\begin{theorem}[{\cite[Theorem~6.5]{Str88AsymSpec}, see also \cite[Part II, III]{WZ26Spectra}}] \label{thm:star-convex}
    The spectrum of matrix multiplication $\specMM$ is star-convex with respect to $(1,0,1), (1,1,0)$ and $(0,1,1)$.
\end{theorem}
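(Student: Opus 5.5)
The plan is to fix a spectral point $\phi\in\calX$ with $\gamma(\phi)=q$, fix one of the three vertices (say $p=(1,1,0)$, which is $\gamma(\zeta_{(1)})$), and fix $\lambda\in(0,1)$. The goal is to produce a spectral point $\phi_\lambda$ with $\gamma(\phi_\lambda)=\lambda p+(1-\lambda)q$. By the symmetry of $\ang{n,m,p}$ under cyclic permutation of the modes, the other two vertices then follow. The naive candidate $\phi^{1-\lambda}\zeta_{(1)}^{\lambda}$ is multiplicative and monotone but not additive, so it is not a spectral point. I would therefore not try to write $\phi_\lambda$ down explicitly, and instead obtain it abstractly from \Cref{thm:duality_abstract}.

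Concretely, I would argue by contradiction. Suppose $r=\lambda p+(1-\lambda)q\notin\specMM$. The set $\specMM=\gamma(\calX)$ is compact, since $\calX$ is compact and evaluation is continuous, so $r$ has a neighbourhood disjoint from $\specMM$. Using \Cref{thm:duality_tensor} and the fact that spectral points are determined on matrix multiplication tensors by $\gamma$, I would convert this into an asymptotic restriction between matrix multiplication tensors. The restriction should hold for every $\psi\in\calX$, and it should be strict exactly in a direction that separates $r$ from $\specMM$. The natural family to use consists of relations of the form
\[
\ang{n_1,m_1,p_1}\ \lesssim\ \ang{n_2,m_2,p_2}\oplus\cdots,
\]
with exponents chosen so that $r$ violates the corresponding inequality $\sum_i n_i^{\gamma_1}m_i^{\gamma_2}p_i^{\gamma_3}\ge\cdots$.

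The key step is an interpolation inequality. I want to show that any asymptotic relation among sums of matrix multiplication tensors that is satisfied at $q$ (because $\phi$ respects it) is also satisfied at $r$. The mechanism I would use is Strassen's \emph{compression}: along the vertex $p$, a tensor $\ang{n,m,k}$ restricts, up to $2^{o(N)}$ factors in powers, to a direct sum of many copies of smaller matrix multiplication tensors whose first-mode flattening is preserved. Iterating this on $N$-th powers, and splitting the $N$ tensor factors into a $\lambda N$ portion that is ``flattened'' and a $(1-\lambda)N$ portion that is kept intact, should turn a relation valid at $q$ into one valid at $\lambda p+(1-\lambda)q$. The value at $p$ enters through the flattening rank $\zeta_{(1)}$, which is exactly multiplicative on the compressed pieces. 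Taking $N\to\infty$ then contradicts the separation from the previous paragraph.

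I expect this compression/interpolation step to be the main obstacle. It requires a quantitative restriction $\ang{n,m,k}^{\otimes N}\ge\bigoplus(\text{many copies of }\ang{\cdot,\cdot,\cdot})$ with only subexponential loss, and a careful choice of block sizes so that the exponent bookkeeping lands exactly on the convex combination. I would try first to import this restriction directly from Strassen's compression theorem in \cite[\S6]{Str88AsymSpec}, in the form of \cite[Theorem~11.9]{WZ26Spectra}, rather than reprove it. The remaining steps, namely compactness of $\specMM$, the use of cyclic symmetry, and the passage from ``relations valid at $r$'' to ``$r\in\specMM$'' via \Cref{thm:duality_tensor}, are routine.
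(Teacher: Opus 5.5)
The paper does not prove this statement; it imports it from Strassen and Wigderson--Zuiddam, so your proposal has to stand on its own. The outer shell is fine: compactness of $\specMM$, cyclic symmetry, and arguing by contradiction. One correction there: turning $r\notin\specMM$ into a valid relation among sums of matrix multiplication tensors that fails at $r$ requires Strassen's extension theorem (spectral points of the subsemiring generated by the $\ang{e,h,l}$, with the induced preorder, extend to all tensors); \Cref{thm:duality_tensor} alone does not give it. The real problem is that the step you call the main obstacle is the entire content of the theorem, and the mechanism you sketch cannot deliver it.

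\textbf{Why the interpolation step fails as sketched.} Write $f_\gamma(\bigoplus_i \ang{e_i,h_i,l_i}) = \sum_i e_i^{\gamma_1}h_i^{\gamma_2}l_i^{\gamma_3}$. By taking powers and keeping a dominant type, every relation reduces to one of the form $X=\ang{c}\otimes\ang{e,h,l}\lesssim Y=\bigoplus_j \ang{e_j,h_j,l_j}$. Here $\log f_\gamma(X)$ is linear in $\gamma$, while $\log f_\gamma(Y)$ is a log-sum-exp and hence convex. So the difference $\log f_\gamma(Y)-\log f_\gamma(X)$ is convex in $\gamma$. Its nonnegativity at $p$ and $q$ therefore says nothing about $r$ once the $\ang{e_j,h_j,l_j}$ have different formats.

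Consequently, no argument that only uses ``the relation holds at $p$ and at $q$'' can work. You must use the restriction maps themselves to control how $X^{\otimes N}$ is distributed over the type classes $\binom{N}{\tau}\odot Y^{\tau}$ of $Y^{\otimes N}$. For example, it would suffice to find a single class that carries, up to $2^{o(N)}$, both the $\zeta_{(1)}$-value and the $\phi$-value of $X^{\otimes N}$.

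Your factor-splitting heuristic does not do this, for two reasons:
\begin{itemize}
\item A restriction $X^{\otimes N}\le Y^{\otimes N+o(N)}$ is an arbitrary linear map on the full tensor-power spaces. It does not respect a split into $\lambda N$ and $(1-\lambda)N$ factors.
\item Even for product relations, the split only yields $\zeta_{(1)}(X)^\lambda\phi(X)^{1-\lambda}\le \zeta_{(1)}(Y)^\lambda\phi(Y)^{1-\lambda}$. That is monotonicity of the non-additive product functional you already discarded, and it follows trivially from the two separate inequalities. By H\"older, $f_r(Y)\le \zeta_{(1)}(Y)^\lambda\phi(Y)^{1-\lambda}$, so it is strictly weaker than the needed $f_r(X)\le f_r(Y)$.
\end{itemize}

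The compression you plan to import is also not the relevant one. Restricting $\ang{n,m,k}$ to smaller matrix multiplication tensors while preserving $\zeta_{(1)}$ is trivial, since $\ang{n,m,k}\ge\ang{n,m,1}$. Strassen's compression theorem (obtained via basis shifting, cf.\ \Cref{thm:shift}) is a statement about arbitrary tensors. It says that from tensors such as a matrix multiplication tensor with one mode compressed (the kind of object that arises when you try to isolate a single summand of $Y$), a large matrix multiplication tensor can still be extracted. Making this localization precise is the missing idea. As written, your proposal restates the theorem at its key step rather than proving it.
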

In particular, this theorem \emph{a priori} predicts the existence of spectral points in the triangle
\[ \triangle = \{ \gamma \in [0, 1]^3 : \gamma_1+\gamma_2+\gamma_3 = 2 \}: \]
for every $\gamma\in\triangle$, there exists $\phi\in \calX$ such that $\gamma(\phi) = \gamma$.

\subsection{Commutative rank}

Let $T\in U\otimes V\otimes W$, and choose bases $x_1,\dots,x_n$ of $U$, $y_1,\dots,y_m$ of $V$, and $z_1,\dots,z_p$ of $W$. Writing
\[ T = \sum_{i=1}^n \sum_{j=1}^m \sum_{k=1}^p a_{ijk} x_i y_j z_k, \]
we may regard $T$ as the symbolic matrix
\[ M_T(z) = \sum_{k=1}^p z_k A_k \in \Mat_{n\times m}(\bbF[z_1,\dots,z_p]), \]
where $A_k = (a_{ijk})_{i,j}\in \Mat_{n\times m}(\bbF)$.

\begin{definition}
    The \emph{commutative rank} of $T$ with respect to the third mode is
    \[ \CR_{(3)}(T) \coloneq \operatorname{rank}_{\bbF(z_1,\dots,z_p)} M_T(z). \]
    Equivalently, $\CR_{(3)}(T)$ is the generic rank of the matrix pencil $M_T(z)$.

    More generally, one can define $\CR_{(1)}$ and $\CR_{(2)}$ by choosing a different distinguished mode. Unless otherwise specified, we write
    \[ \CR(T) \coloneq \CR_{(3)}(T). \]
\end{definition}

\begin{proposition}
    The quantity $\CR_{(3)}(T)$ is independent of the choice of bases. Moreover, commutative rank is invariant under field extension, monotone under restriction, and additive under direct sum:
    \[
        \CR_{\bbF}(T) = \CR_{\bbK}(T_\bbK), \qquad
        S\leq T \implies \CR(S)\leq \CR(T), \qquad
        \CR(T\oplus S) = \CR(T) + \CR(S),
    \]
    where $\bbK/\bbF$ is any field extension.
\end{proposition}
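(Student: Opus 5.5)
We argue directly from the symbolic-matrix description $M_T(z)=\sum_k z_kA_k$, treating the four claims in turn.

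\emph{Basis independence.} Changing bases of $U$ and $V$ replaces $M_T(z)$ by $P\,M_T(z)\,Q^{\top}$ for invertible $P\in\GL_n(\bbF)$ and $Q\in\GL_m(\bbF)$. Changing the basis of $W$ applies an invertible linear substitution $z\mapsto Gz$, which extends to an automorphism of the field $\bbF(z_1,\dots,z_p)$. Neither operation changes the rank over that field.

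\emph{Field extension.} After extending scalars to $\bbK$, the matrix $M_T(z)$ is unchanged: it has entries in $\bbF[z]\subseteq\bbK[z]$. Its rank is the largest $r$ for which some $r\times r$ minor is a nonzero polynomial. A polynomial with coefficients in $\bbF$ is nonzero in $\bbF[z]$ exactly when it is nonzero in $\bbK[z]$, so the rank is the same over $\bbF(z)$ and over $\bbK(z)$.

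\emph{Monotonicity.} Suppose $S=(A\otimes B\otimes C)T$. Then
\[ M_S(z') = A\,M_T(C^{\top}z')\,B^{\top}, \]
where the substitution $z=C^{\top}z'$ is a linear change of variables, possibly non-invertible. Every $r\times r$ minor of $M_S(z')$ is, by Cauchy--Binet, a combination of specializations of $r\times r$ minors of $M_T(z)$. If all $r\times r$ minors of $M_T$ vanish identically, then so do all of their specializations, and hence all $r\times r$ minors of $M_S$. This gives $\CR(S)\le\CR(T)$. This is the only step that needs care: we must argue through vanishing of minors, since a non-injective substitution does not induce a field homomorphism on $\bbF(z)$.

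\emph{Additivity.} For the direct sum, the third-mode variables of the two tensors are disjoint sets $z$ and $z'$, and
\[ M_{T\oplus S}(z,z') = \begin{pmatrix} M_T(z) & 0\\ 0 & M_S(z')\end{pmatrix}. \]
The rank of a block-diagonal matrix over $\bbF(z,z')$ is the sum of the ranks of the blocks. Each block's rank over the larger field equals its rank over its own field of variables, by the same argument about minors used for field extension.
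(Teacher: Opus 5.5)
Your proof is correct and follows the same route as the paper. Basis changes act by invertible scalar matrices together with an automorphism of $\bbF(z_1,\dots,z_p)$; field extension is handled by nonvanishing of minors; additivity follows from block-diagonal structure; and monotonicity comes from scalar multiplication plus a linear substitution. Your Cauchy--Binet argument on minors for the possibly non-invertible substitution makes rigorous the step the paper states in one line as ``these operations cannot increase rank.''
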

\begin{proof}
    A change of basis in $U$ or $V$ left- or right-multiplies $M_T(z)$ by an invertible matrix over $\bbF$, so it does not change the rank. A change of basis in $W$ replaces the variables $z_1,\dots,z_p$ by an invertible linear combination of themselves, which induces an automorphism of the rational function field $\bbF(z_1,\dots,z_p)$. Hence $\CR_{(3)}(T)$ is basis-independent.

    For a field extension $\bbK/\bbF$, the commutative rank is the size of the largest minor of $M_T(z)$ that is nonzero as a polynomial in $\bbF[z_1,\dots,z_p]$. Such a minor is nonzero over $\bbF$ if and only if it remains nonzero over $\bbK$, so $\CR_{\bbF}(T)=\CR_{\bbK}(T_\bbK)$.

    If $S=(A\otimes B\otimes C)T$, then $M_S(z')$ is obtained from $M_T(z)$ by left and right multiplication by scalar matrices and by substituting each variable $z_k$ with a linear form in the new variables $z'_1,\dots,z'_{p'}$. These operations cannot increase rank, so $\CR(S)\leq \CR(T)$.

    For direct sum, after choosing compatible bases, the symbolic matrix of $T\oplus S$ is block diagonal with blocks $M_T(z)$ and $M_S(w)$, so the rank is additive. \qedhere
\end{proof}

When $\bbF$ is infinite, commutative rank admits an equivalent description via restriction.
\begin{proposition} \label{prop:cr-restr}
    If $\bbF$ is infinite, then
    \[ \CR(T) = \max\{ q\in \bbN : \ang{1,q,1} \leq T \}. \]
\end{proposition}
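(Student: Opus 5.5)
We prove the two inequalities separately. Throughout, write $r = \CR(T)$ and let $M_T(z) = \sum_{k=1}^p z_k A_k$ be the symbolic matrix of $T$. First we record what $\ang{1,q,1}$ is concretely: $\ang{1,q,1} = \sum_{j=1}^q x_{1j} y_{j1} z_{11}$. After renaming, this is $\sum_{j=1}^q x_j y_j z$, a tensor in $\bbF^q\otimes\bbF^q\otimes\bbF^1$. Its symbolic matrix is $z\cdot I_q$, so $\CR(\ang{1,q,1}) = q$.

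\emph{Upper bound.} Suppose $\ang{1,q,1}\leq T$. Monotonicity of commutative rank under restriction (the preceding proposition) gives
\[ q = \CR(\ang{1,q,1}) \leq \CR(T) = r. \]
This direction uses no hypothesis on $\bbF$.

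\emph{Lower bound.} We construct a restriction $T\geq \ang{1,r,1}$ explicitly.
\begin{enumerate}
    \item Since $\operatorname{rank}_{\bbF(z)} M_T(z) = r$, some $r\times r$ minor $\Delta(z)\in\bbF[z_1,\dots,z_p]$ is a nonzero polynomial.
    \item Because $\bbF$ is infinite, a nonzero polynomial does not vanish identically on $\bbF^p$. So there is a point $c\in\bbF^p$ with $\Delta(c)\neq 0$. This is the only place infiniteness is used; over a finite field every point may be a zero of $\Delta$.
    \item Set $A = \sum_k c_k A_k\in\Mat_{n\times m}(\bbF)$. Then $A$ has rank at least $r$, hence exactly $r$, since specialization cannot raise rank above $r$.
    \item Define $C\colon W\to\bbF$ by $z_k\mapsto c_k$. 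Applying $\id\otimes\id\otimes C$ to $T$ yields the tensor $\sum_{i,j} A_{ij}\, x_i y_j z$ in $U\otimes V\otimes\bbF$.
    \item Choose invertible $P,Q$ with $PAQ^{\top} = \mathrm{diag}(I_r,0)$ (rank normal form). Using $P$ on the first mode and $Q$ on the second transforms the tensor from step 4 into $\sum_{j=1}^r x_j y_j z$.
    \item Finally, project the first and second modes onto their first $r$ coordinates. The result is $\ang{1,r,1}$.
\end{enumerate}
Hence $\ang{1,r,1}\leq T$, giving the lower bound.

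The only nontrivial point is step 2, which is exactly where infiniteness of $\bbF$ enters. The remaining steps are routine linear algebra.
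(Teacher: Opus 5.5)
Your proof is correct and follows essentially the same route as the paper: monotonicity of $\CR$ for one inequality, and, for the other, specializing a nonvanishing $r\times r$ minor at a point of $\bbF^p$ (the one place infiniteness is used) and then reducing the resulting rank-$r$ matrix to $\ang{1,r,1}$. Your explicit rank-normal-form steps just spell out the paper's remark that any rank-$q$ matrix is isomorphic to $\ang{1,q,1}$.
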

\begin{proof}
    If $\ang{1,q,1}\leq T$, then by monotonicity,
    \[ q = \CR(\ang{1,q,1}) \leq \CR(T). \]

    Conversely, let $q=\CR(T)$. Then some $q\times q$ minor of $M_T(z)$ is a nonzero polynomial in $z_1,\dots,z_p$. Since $\bbF$ is infinite, there exists a specialization $\lambda=(\lambda_1,\dots,\lambda_p)\in \bbF^p$ at which this minor remains nonzero. Equivalently, if $C\colon W\to \bbF$ is the linear map defined by $C(z_k)=\lambda_k$, then the contraction
    \[ (\id_U\otimes \id_V \otimes C)T \in U\otimes V \]
    has matrix rank $q$. Any rank-$q$ matrix is isomorphic to $\ang{1,q,1}$, so $\ang{1,q,1}\leq T$.
\end{proof}

\begin{corollary}
    Commutative rank is supermultiplicative:
    \[ \CR(T\otimes S) \geq \CR(T)\CR(S). \]
\end{corollary}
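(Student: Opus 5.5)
The plan is to reduce to the case of an infinite field, where \Cref{prop:cr-restr} identifies $\CR$ with restriction of $\ang{1,q,1}$, and then compose restrictions via the tensor product. Commutative rank is invariant under field extension, and the extension of $T\otimes S$ is $T_\bbK\otimes S_\bbK$. So we may replace $\bbF$ by an infinite extension $\bbK$, for instance $\bbF(t)$ or an algebraic closure, without changing either side of the inequality.

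Over $\bbK$, set $q=\CR(T)$ and $q'=\CR(S)$. By \Cref{prop:cr-restr} we have $\ang{1,q,1}\leq T_\bbK$ and $\ang{1,q',1}\leq S_\bbK$. Restriction is compatible with tensor product: if $(A\otimes B\otimes C)T=T'$ and $(A'\otimes B'\otimes C')S=S'$, then applying $(A\otimes A')\otimes(B\otimes B')\otimes(C\otimes C')$ to $T\otimes S$ gives $T'\otimes S'$. This yields
\[
\ang{1,qq',1}\cong\ang{1,q,1}\otimes\ang{1,q',1}\leq T_\bbK\otimes S_\bbK .
\]
Monotonicity of $\CR$ then gives $\CR(T_\bbK\otimes S_\bbK)\geq \CR(\ang{1,qq',1})=qq'$, since the pencil of $\ang{1,m,1}$ is $z$ times the $m\times m$ identity matrix. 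Field-extension invariance transfers this inequality back to $\bbF$.

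Alternatively, one can argue directly, without restriction. Choose $q\times q$ and $q'\times q'$ minors of $M_T(z)$ and $M_S(w)$ that are nonzero polynomials. Specializing $z$ and $w$ over an infinite extension field gives matrices $M_T(\lambda)$ and $M_S(\mu)$ of ranks $q$ and $q'$. The Kronecker product $M_T(\lambda)\otimes M_S(\mu)$ has rank $qq'$. It is obtained from $M_{T\otimes S}$ by substituting $z_kw_l\mapsto\lambda_k\mu_l$, and a specialization cannot increase rank.

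There is no real obstacle here. The only subtlety is that \Cref{prop:cr-restr} requires an infinite field, and the field-extension invariance of $\CR$ handles this.
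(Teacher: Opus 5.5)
Your proposal is correct and matches the paper's proof: pass to an infinite extension using field-extension invariance of $\CR$, apply \Cref{prop:cr-restr} to get $\ang{1,q,1}\leq T$ and $\ang{1,q',1}\leq S$, tensor these restrictions to get $\ang{1,qq',1}\leq T\otimes S$, and conclude by monotonicity. Your alternative Kronecker-product specialization argument is also valid; it is the same idea written out directly in terms of the symbolic matrices.
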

\begin{proof}
    By the proposition above, we may replace $\bbF$ by an infinite extension field. Let
    \[ q = \CR(T), \qquad s = \CR(S). \]
    Then
    \[ \ang{1,q,1} \leq T, \qquad \ang{1,s,1} \leq S, \]
    so
    \[ \ang{1,qs,1} \cong \ang{1,q,1}\otimes \ang{1,s,1} \leq T\otimes S. \]
    Therefore $\CR(T\otimes S) \geq qs = \CR(T)\CR(S)$. \qedhere
\end{proof}

We define the \emph{asymptotic commutative rank} by
\[ \ACR_{(3)}(T) \coloneq \lim_{n\to \infty} \CR_{(3)}(T^{\otimes n})^{1/n}. \]
Since $\CR(T^{\otimes (m+n)}) \geq \CR(T^{\otimes m})\CR(T^{\otimes n})$ and $\CR(T)\leq \Rk(T)$, Fekete's lemma implies that the limit exists, and
\[ \ACR_{(3)}(T) = \sup_{n\geq 1} \CR_{(3)}(T^{\otimes n})^{1/n}. \]
Unless there is danger of confusion, we abbreviate $\ACR = \ACR_{(3)}$.

Finally, there is a closely related notion called \emph{non-commutative rank}. Fortin and Reutenauer proved \cite[Corollary~2]{FR04NCRk} that
\[ \CR(T) \leq \NCR(T) \leq 2\CR(T). \]
Taking tensor powers and $n$-th roots, we see that the asymptotic non-commutative rank is also equal to $\ACR(T)$.

\section{Abstract functionals} \label{sec:abstract_functionals}

In this section, we work in a general Strassen preordered commutative semiring $\calR$. We first list some basic properties of a functional $F\colon \calR \to \bbR_{\geq 0}$:
\begin{itemize}
    \item A function $F$ is \emph{monotone} if $a\leq b\implies F(a)\leq F(b)$.
    \item A function $F$ is \emph{monomially homogeneous} if $F(ma^n) = mF(a)^n$ for all $m,n\in \bbN$.
    \item A function $F$ is \emph{homogeneous} if
    $F(p(a)) = p(F(a))$ for all polynomials $p(T) \in \bbN[T]$.
\end{itemize}

Note that every monomially homogeneous functional is normalized by taking $n=0$.

\begin{example}
    Let $\calZ$ be a nonempty closed subset of $\calX$, and consider the maximization (minimization, respectively)
    over $\calZ$,
    \[ F(a) = \max_{\phi\in \calZ} \phi(a), \quad f(a) = \min_{\phi\in\calZ} \phi(a). \]
    It is not hard to verify that $F$ ($f$, respectively) is monotone, subadditive (superadditive, respectively),
    submultiplicative (supermultiplicative, respectively), and homogeneous.
    In this section we prove a partial converse to this observation.
\end{example}

\subsection{Characterizing maximization}

\begin{theorem} \label{thm:maximization}
    For a function $F \colon \calR\to \bbR_{\geq 0}$, the following are equivalent:
    \begin{itemize}
        \item $F$ is monotone, subadditive and submultiplicative, monomially homogeneous.
        \item $F$ is the maximization over a subset of the spectrum: there exists a nonempty closed subset $\calZ\subset\calX_\calR$
        such that
        \[ F(a) = \max_{\phi\in\calZ} \phi(a).\]
        Moreover, one can take $\calZ = \{\phi : \forall a\in \calR, \phi(a) \leq F(a)\}$.
    \end{itemize}
\end{theorem}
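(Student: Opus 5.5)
The easy direction (second bullet implies first) is the Example preceding the theorem: a maximum of spectral points over a nonempty closed $\calZ$ is monotone, subadditive and submultiplicative, and monomially homogeneous because each $\phi\in\calZ$ satisfies $\phi(ma^n)=m\phi(a)^n$, so the maximum of these values equals $m\max_\phi\phi(a)^n$. For the converse, the plan is to reduce to \Cref{thm:duality_abstract} applied to a refined Strassen preorder $\leq^F$. We define $\leq^F$ as the smallest preorder on $\calR$ that contains $\leq$, is compatible with $+$ and $\cdot$, and contains the relations $a\leq^F \ceil{F(a)}$ for every $a$. Concretely, $s\leq^F t$ holds iff there is a finite chain $s=s_0,s_1,\dots,s_k=t$ where each step is either an old relation $s_{i}\leq s_{i+1}$ or an elementary move $s_i = b + cd$, $s_{i+1} = b + \ceil{F(c)}d$.

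\textbf{Key lemma (monotonicity of $F$ under $\leq^F$).} If $s\leq^F t$ then $F(s)\le F(t)$. It suffices to handle one elementary move sandwiched by old relations: $s\leq b+cd$ and $b+\ceil{F(c)}d\leq t$. A naive bound $F(b+cd)\le F(b)+F(c)F(d)$ fails to compare with $F(b+\ceil{F(c)}d)$, because $F$ is only subadditive. So we pass to powers. Expand $(b+cd)^N=\sum_j \binom Nj b^{N-j}c^jd^j$. Then
\[F(s)^N \le F((b+cd)^N)\le \sum_j \binom Nj F(c)^j F(b^{N-j}d^j).\]
Monomial homogeneity gives $F(c)^j\le \ceil{F(c)}^j = F(\ceil{F(c)}^j\cdot 1)$, and submultiplicativity with the $\bbN$-coefficient rule gives
\[\binom Nj F(c)^jF(b^{N-j}d^j)\le F\Bigl(\tbinom Nj\ceil{F(c)}^j b^{N-j}d^j\Bigr)\le F\bigl((b+\ceil{F(c)}d)^N\bigr)\le F(t)^N,\]
where the middle inequality is monotonicity under $\leq$, since a single summand restricts from the sum. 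This yields $F(s)^N\le (N+1)F(t)^N$, and taking $N$-th roots as $N\to\infty$ gives $F(s)\le F(t)$. Chains are then handled by induction on $k$, once we check that an arbitrary generated relation can indeed be refactored into such chains; closure of the chain description under adding and multiplying by a fixed element is routine. This refactoring-plus-power step is the main obstacle.

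\textbf{Conclusion.} Next we verify that $\leq^F$ is a Strassen preorder. Strong Archimedean holds because $\leq^F$ contains $\leq$. For embedding of $\bbN$: if $n\leq^F m$ then by the lemma and monomial homogeneity $n=F(n)\le F(m)=m$.

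We then compute the asymptotic rank. Since $a^n\leq^F\ceil{F(a^n)}\le\ceil{F(a)^n}$, we get $\AR_{\leq^F}(a)\le F(a)$. Conversely, if $a^n\leq^F r$ then $F(a)^n=F(a^n)\le r$ by the lemma, so $\AR_{\leq^F}(a)=F(a)$.

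By \Cref{thm:duality_abstract}, $F(a)=\max_{\phi\in\calX_{\leq^F}}\phi(a)$. Moreover $\calX_{\leq^F}\subseteq\calX_\calR$, and it is closed there, since it is cut out by the continuous conditions $\phi(a)\le\phi(b)$. It equals $\{\phi\in\calX:\phi\le F\}$: for $\phi\le F$, monotonicity of $\phi$ along the generators is exactly $\phi(b+cd)\le\phi(b+\ceil{F(c)}d)$. Finally, $\calZ$ is nonempty because $\calR\neq 0$ and the maximum is attained.
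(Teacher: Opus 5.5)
Your proposal is correct and follows essentially the same route as the paper. Both arguments use the refined preorder $\leq^F$ described by zig-zag chains, the lemma $F(b+cd)\le F(b+\ceil{F(c)}d)$ obtained by binomially expanding $N$-th powers, the check that $\leq^F$ is a Strassen preorder, the computation $\AR_{\leq^F}=F$, and abstract Strassen duality. One small slip in attribution: the step $\binom{N}{j}\ceil{F(c)}^j F(b^{N-j}d^j) = F\bigl(\binom{N}{j}\ceil{F(c)}^j b^{N-j}d^j\bigr)$ is an equality coming from monomial homogeneity $F(ma)=mF(a)$, not from submultiplicativity, which only gives the inequality in the opposite direction.
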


To prove this theorem, we first need to study a finer preorder than $\leq$ that captures the information of $F$.
Let $\leq$ be the original preorder, and let $\leq^F$ be the preorder obtained as follows:
It is the smallest semiring preorder such that $a\leq^F b$ whenever $a \leq b$, and $a\leq^F \ceil{F(a)}$.

The following lemma clarifies the structure of $\leq^F$:
\begin{lemma}[Zig-zag normal form]
    For two elements $s,t\in\calR$, $s \leq^F t$ iff there exists a sequence $s=a_0, \dots, a_n = t$ and $\{b_i, c_i, d_i\}_{1\leq i\leq n}$, such that
    \[ a_{i-1} \leq b_i + c_i d_i, \quad b_i + \ceil{F(c_i)} d_i \leq a_i. \]
\end{lemma}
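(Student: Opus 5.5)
The plan is to let $\preceq$ denote the relation defined by the existence of such a zig-zag sequence, and to prove that $\preceq$ coincides with $\leq^F$. The two inclusions go as follows.

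\emph{Soundness} ($\preceq\,\subseteq\,\leq^F$). Each single step is already a $\leq^F$-relation. Since $c_i \leq^F \ceil{F(c_i)}$ and $\leq^F$ is compatible with addition and multiplication, we get
\[ a_{i-1} \leq b_i + c_i d_i \leq^F b_i + \ceil{F(c_i)} d_i \leq a_i. \]
Chaining these steps by transitivity gives $s \leq^F t$.

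\emph{Completeness} ($\leq^F\,\subseteq\,\preceq$). Since $\leq^F$ is the smallest semiring preorder containing the generators, it suffices to show that $\preceq$ is a semiring preorder containing them. I will check the following in order.
\begin{itemize}
\item Reflexivity: take $n=0$, the empty sequence.
\item Transitivity: concatenate the two sequences.
\item $a \leq b$ implies $a \preceq b$: use one step with $b_1 = a$, $c_1 = 0$, $d_1 = 0$. Here I need $\ceil{F(0)} \cdot 0 = 0$, which holds because multiplying by $0$ gives $0$ in any semiring. Alternatively take $c_1 = 1$, $d_1 = 0$.
\item The generator $a \preceq \ceil{F(a)}$: use one step with $b_1 = 0$, $c_1 = a$, $d_1 = 1$.
\end{itemize}

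The remaining task, which is the main step, is compatibility with the semiring operations. Given $s \preceq t$ via a chain $(a_i; b_i, c_i, d_i)$ and any $u$, I need $s+u \preceq t+u$ and $su \preceq tu$. For addition, I use the chain $a_i + u$ with data $(b_i + u, c_i, d_i)$. Both step relations remain valid because the original preorder $\leq$ is additively compatible.

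For multiplication, I use the chain $a_i u$ with data $(b_i u, c_i, d_i u)$. Then
\[ a_{i-1} u \leq b_i u + c_i (d_i u), \qquad b_i u + \ceil{F(c_i)} d_i u \leq a_i u, \]
using multiplicative compatibility of $\leq$ and the fact that $\ceil{F(c_i)}$ is a natural number, so it commutes and distributes. Compatibility with two-sided operations then follows: from $s \preceq t$ and $s' \preceq t'$ I get $s+s' \preceq t+s' \preceq t+t'$ by transitivity, and likewise for products.

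The obstacle I expect is only bookkeeping: confirming that "smallest semiring preorder" means closure under reflexivity, transitivity, and translation and multiplication by arbitrary elements. With that reading, the argument above shows $\preceq$ is such a preorder containing the generators, hence $\leq^F\,\subseteq\,\preceq$, and the equivalence follows. No property of $F$ is needed here; monomial homogeneity enters only later.
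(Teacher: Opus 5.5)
Your proposal is correct and takes essentially the same route as the paper. You prove soundness by chaining $a_{i-1}\leq b_i+c_id_i\leq^F b_i+\ceil{F(c_i)}d_i\leq a_i$, and completeness by checking that the zig-zag relation extends $\leq$, contains $a\preceq\ceil{F(a)}$, and is closed under addition, multiplication, and concatenation, using the same witnesses $(b_i,c_i,d_i)$ as the paper.
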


\begin{proof}
    The ``if'' direction is straightforward: since $\leq^F$ has $c_i \leq^F \ceil{F(c_i)}$ and the original preorder $\leq$,
    we have
    \[ a_{i-1} \; \leq^F \; b_i + c_id_i \; \leq^F \; b_i + \ceil{F(c_i)}d_i \; \leq^F \; a_i. \]
    Thus $s\leq^F t$ follows by transitivity.

    For the ``only if'' direction, we need to prove that the relation we defined is indeed closed.
    \begin{itemize}
        \item Extends $\leq$: If $s\leq t$, then we can just take
        \[ \underbracket{s}_{a_0} \leq \underbracket{s}_{b_1} + \underbracket{0}_{c_1} 
        \underbracket{0}_{d_1}, \quad s + \ceil{F(0)} \cdot 0 \leq \underbracket{t}_{a_1}. \]
        \item Incorporates $F$: For any $a$, we have
        \[ \underbracket{a}_{a_0} \leq \underbracket{0}_{b_1} + \underbracket{a}_{c_1} 
        \underbracket{1}_{d_1}, \quad 0 + \ceil{F(a)} \cdot 1 \leq \underbracket{\ceil{F(a)}}_{a_1}. \]
        \item Additive: If $s, t$ has a corresponding zig-zag normal form, then for any $u$, the normal form of $s+u, t+u$ is given by
        \[ \underbracket{a_{i-1} + u}_{a'_{i-1}} \leq \underbracket{b_i + u}_{b_i'} + \underbracket{c_i}_{c_i'}
            \underbracket{d_i}_{d_i'}, \quad
           b_i+u + \ceil{F(c_i)} d_i \leq \underbracket{a_{i}+u}_{a_i'}. \]
        \item Multiplicative: If $s, t$ has a corresponding zig-zag normal form, then for any $u$, the normal form of $us, ut$ is given by
        \[ \underbracket{u a_{i-1}}_{a'_{i-1}} \leq \underbracket{u b_i}_{b_i'} + \underbracket{c_i}_{c_i'}
            \underbracket{u d_i}_{d_i'}, \quad
           u b_i + \ceil{F(c_i)} u d_i \leq \underbracket{ua_{i}}_{a_i'}. \]
        \item Transitivity: If we have a zig-zag normal for $s, t$ and for $t, u$, then we can concatenate them to obtain the one for $s, u$. \qedhere
    \end{itemize}
\end{proof}

\begin{lemma}\label{lem:F-lifting}
    If $F$ satisfies the first set of conditions in \Cref{thm:maximization}, then for any $b,c,d\in\calR$,
    \[ F(b + cd) \leq F(b + \ceil{F(c)}d). \]
\end{lemma}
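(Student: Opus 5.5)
The plan is to run a tensor-power (amplification) argument. Put $x = b + cd$, $y = b + m d$ with $m = \ceil{F(c)}$, and compare $F(x)^n$ with $F(y)^n$ term by term in the binomial expansion. The constant loss is only polynomial in $n$, so it disappears after taking $n$-th roots. Monomial homogeneity with multiplier $1$ gives $F(x^n) = F(x)^n$ and $F(y^n) = F(y)^n$, which is what lets us pass between an element and its powers.

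\emph{Upper bound for $x$.} Since $\calR$ is commutative, expand
\[ x^n = \sum_{k=0}^n \binom{n}{k} b^{n-k} d^k c^k . \]
Subadditivity bounds $F(x)^n = F(x^n)$ by the sum of $F$ over the individual summands. For a single summand, write $\binom nk b^{n-k}d^kc^k$ as $\binom nk$ times the element $e_k \coloneq b^{n-k}d^k$ times $c^k$. Monomial homogeneity with exponent $1$ gives $F(N a) = N F(a)$ for natural numbers $N$, so the coefficient $\binom nk$ pulls out. Submultiplicativity then gives $F(e_k c^k) \le F(e_k) F(c^k)$, and monomial homogeneity gives $F(c^k) = F(c)^k$. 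Altogether
\[ F(x)^n \le \sum_{k=0}^n \binom nk F(e_k)\, F(c)^k \le \sum_{k=0}^n \binom nk m^k F(e_k), \]
using $F(c) \le m$ and $F(e_k) \ge 0$.

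\emph{Lower bound for $y$.} Expand $y^n = \sum_k \binom nk m^k e_k$. Every element is $\ge 0$ and the preorder is compatible with addition, so $y^n \ge \binom nk m^k e_k$ for each single $k$. By monotonicity and then $F(Na) = NF(a)$,
\[ F(y)^n = F(y^n) \ge \binom nk m^k F(e_k) \quad \text{for each } k. \]

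\emph{Conclusion.} Summing the last inequality over the $n+1$ values of $k$ gives $F(x)^n \le (n+1) F(y)^n$. Taking $n$-th roots, $F(x) \le (n+1)^{1/n} F(y)$, and letting $n \to \infty$ gives $F(b+cd) \le F(b + \ceil{F(c)} d)$.

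The only delicate point is to apply monomial homogeneity exclusively in forms it literally provides: $F(Na) = NF(a)$ for an arbitrary element $a$ (exponent $1$), and $F(a^k) = F(a)^k$ (multiplier $1$). In particular the mixed monomials $e_k = b^{n-k}d^k$ are only ever treated as single elements, never split. This is precisely why the argument needs monomial homogeneity rather than mere subadditivity and submultiplicativity, whose bounds go the wrong way for the lower estimate of $F(y^n)$.
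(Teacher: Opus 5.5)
Your proof is correct and matches the paper's argument essentially step for step. You expand $(b+cd)^n$, bound it using subadditivity, submultiplicativity, $F(c)\le\ceil{F(c)}$ and monomial homogeneity, and then dominate each term $\binom{n}{k}\ceil{F(c)}^k F(b^{n-k}d^k)$ by $F\bigl((b+\ceil{F(c)}d)^n\bigr)$ via monotonicity, so the loss is only the factor $(n+1)^{1/n}\to 1$.
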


\begin{proof}
    We proceed by calculation:
    \begin{align*}
        F(b+cd)^n &= F((b + cd)^n) & \text{(monomially homogeneous)}\\
        &\leq \sum_{k=0}^n \binom{n}{k} F(b^{n-k} (cd)^{k} ) & \text{(subadditive)}\\
        &\leq \sum_{k=0}^n \binom{n}{k} F(c)^{k} F(b^{n-k} d^{k} ) & \text{(submultiplicative)}\\
        &\leq \sum_{k=0}^n \binom{n}{k} \ceil{F(c)}^{k} F(b^{n-k} d^{k} )\\
        &= \sum_{k=0}^n F(\tbinom{n}{k} \ceil{F(c)}^{k} b^{n-k} d^{k}). & \text{(monomially homogeneous)}
    \end{align*}
    Now, as $\binom{n}{k} \ceil{F(c)}^{k} b^{n-k} d^{k}$ appears in the binomial expansion of
    $(b + \ceil{F(c)}d)^n$, we have
    \begin{align*}
        \phantom{F(b+cd)^n} &\leq \sum_{k=0}^n F((b + \ceil{F(c)}d)^n) & \text{(monotone)}\\
        &= (n+1) F(b + \ceil{F(c)}d)^n. & \text{(monomially homogeneous)}
    \end{align*}
    Thus we have $F(b + cd) \leq (n+1)^{1/n} F(b + \ceil{F(c)}d)$. Taking $n\to\infty$ we conclude.
\end{proof}

\begin{proposition} \label{prop:F-order}
    If $F$ satisfies the first set of conditions in \Cref{thm:maximization}, then $\leq^F$
    is a Strassen preorder, moreover, $F$ is $\leq^F$-monotone: $s\leq^F t \implies F(s) \leq F(t)$. The asymptotic spectrum
    $\calX_{(\calR, \leq^F)} \subset \calX_{(\calR, \leq)}$ is the closed subset $\calZ$ cut out by
    $\phi(a) \leq F(a)$ for all $a\in\calR$.
\end{proposition}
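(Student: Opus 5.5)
We prove the three claims in order: first monotonicity of $F$ for $\leq^F$, then the Strassen-preorder axioms (which follow easily from it), and finally the identification of the spectrum.

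\emph{Monotonicity of $F$ under $\leq^F$.} By the zig-zag normal form, it suffices to treat a single link $a_{i-1}\leq b_i+c_id_i$ and $b_i+\ceil{F(c_i)}d_i\leq a_i$, then chain along the sequence. For one link, monotonicity of $F$ under $\leq$ gives $F(a_{i-1})\leq F(b_i+c_id_i)$. \Cref{lem:F-lifting} gives $F(b_i+c_id_i)\leq F(b_i+\ceil{F(c_i)}d_i)$. Monotonicity under $\leq$ again gives $F(b_i+\ceil{F(c_i)}d_i)\leq F(a_i)$. Chaining over $i$ yields $F(s)\leq F(t)$ whenever $s\leq^F t$.

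\emph{Strassen-preorder axioms.} By construction, $\leq^F$ is a semiring preorder extending $\leq$. Hence $0\leq^F a$, compatibility with $+$ and $\cdot$, and the lower half $1\leq^F a$ of strong Archimedeanity all follow from the corresponding properties of $\leq$. The upper half $a\leq^F n$ is inherited because $\leq$ is contained in $\leq^F$. The remaining axiom is that $\bbN$ still embeds faithfully: if $n\leq^F m$ for naturals $n,m$, we must show $n\leq m$. Monomial homogeneity with exponent $0$ gives $F(k)=k\,F(1)$, and with $m=0$ it gives $F(1)=1$, so $F(k)=k$ for all $k\in\bbN$. Monotonicity under $\leq^F$ then gives $n=F(n)\leq F(m)=m$. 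This step is the crux, since it is exactly where \Cref{lem:F-lifting} is needed; everything else is bookkeeping.

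\emph{Identifying the spectrum.} Since $\leq^F$ contains $\leq$, every $\leq^F$-monotone spectral point is $\leq$-monotone, so $\calX_{(\calR,\leq^F)}\subseteq\calX_{(\calR,\leq)}$. Take $\phi\in\calX_{(\calR,\leq^F)}$. From $a\leq^F\ceil{F(a)}$ we get $\phi(a)\leq\ceil{F(a)}$. Applying this to $a^n$ and using multiplicativity of $\phi$ and monomial homogeneity of $F$ gives
\[ \phi(a)^n\leq\ceil{F(a)^n}\leq F(a)^n+1. \]
Taking $n$-th roots and letting $n\to\infty$ yields $\phi(a)\leq F(a)$, so $\phi\in\calZ$.

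Conversely, let $\phi\in\calX_{(\calR,\leq)}$ satisfy $\phi\leq F$ pointwise. Then $\phi$ respects each generating relation of $\leq^F$: it respects $\leq$ by assumption, and $\phi(a)\leq F(a)\leq\ceil{F(a)}=\phi(\ceil{F(a)})$. The relation $\{(s,t):\phi(s)\leq\phi(t)\}$ is a semiring preorder, because $\phi$ is additive, multiplicative and real-valued. Since $\leq^F$ is the smallest semiring preorder containing the generators, $\phi$ is $\leq^F$-monotone. Equivalently, one can check $\phi$ directly on zig-zag links: $\phi(b+cd)=\phi(b)+\phi(c)\phi(d)\leq\phi(b)+\ceil{F(c)}\phi(d)$.

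Finally, $\calZ$ is closed, because it is an intersection of the closed sets $\{\phi:\phi(a)\leq F(a)\}$ and evaluation at each $a$ is continuous by \Cref{thm:duality_abstract}.
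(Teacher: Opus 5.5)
Your proof is correct and follows the paper's argument essentially step for step. Both use $\leq^F$-monotonicity of $F$ via the zig-zag normal form and \Cref{lem:F-lifting}, faithfulness of $\bbN$ from $F(k)=k$, and the bound $\phi(a)^n\le\lceil F(a)^n\rceil$ for the spectrum. You also verify the converse inclusion $\calZ\subseteq\calX_{(\calR,\leq^F)}$ via minimality of $\leq^F$, and the closedness of $\calZ$, both of which the paper only asserts.

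Two small touch-ups. First, $F(k)=k$ comes directly from taking exponent $0$ in monomial homogeneity; setting $m=0$ gives $F(0)=0$, not $F(1)=1$. Second, the limit $(F(a)^n+1)^{1/n}\to F(a)$ needs $F(a)\ge 1$. This holds for $a\neq 0$ because $1\le a$ and $F(1)=1$, and the case $a=0$ is trivial.
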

\begin{proof}
    Almost all properties required for $\leq^F$ to be a Strassen preorder are inherited from $\leq$,
    except the \emph{embedding of natural numbers} which \emph{a priori} could potentially be violated when extending a preorder.
    This will be verified once we prove that $F$ is $\leq^F$-monotone: If natural numbers $n, m$ satisfy $n\leq^F m$,
    then since $F(n)\leq F(m)$ and $F$ is monomially homogeneous, we must have $n\leq m$ in $\bbN$.

    To prove $F$ is $\leq^F$-monotone, we start from a zig-zag normal form of $s, t$.
    By definition, we have
    \[ a_{i-1} \leq b_i + c_i d_i, \quad b_i + \ceil{F(c_i)} d_i \leq a_i, \]
    this gives
    \[ F(a_{i-1}) \stackrel{\leq\text{-monotone}}{\leq} F(b_i+c_id_i) \stackrel{\text{\Cref{lem:F-lifting}}}{\leq}
    F(b_i + \ceil{F(c_i)}d_i) \stackrel{\leq\text{-monotone}}{\leq} F(a_i). \]
    Combining the sequence of inequality, we end up with $F(s) = F(a_0) \leq F(a_n) = F(t)$.

    The essential inequality of $\leq^F$ more than $\leq$ are just those $a \leq^F \ceil{F(a)}$. So
    $\calX_{(\calR, \leq^F)}$ should be the subset cut out by $\phi(a) \leq \ceil{F(a)}$.
    But for each $a$ we also have the inequality for the power
    \[ \phi(a)^n = \phi(a^n) \leq \ceil{F(a^n)} = \ceil{F(a)^n} \]
    by homogeneity. Taking $n\to\infty$ gives $\phi(a) \leq F(a)$.
\end{proof}

\begin{proof}[Proof of \Cref{thm:maximization}]
    In the light of \Cref{prop:F-order}, we only need to characterize $F$ by the asymptotic rank in the preorder $\leq^F$.
    Since $a\leq^F \ceil{F(a)}$, we immediately have
    $\Rk_{\leq^F} (a) \leq \ceil{F(a)}$. On the other hand, if $a\leq^F r$ for some $r\in\bbN$,
    then $F(a) \leq F(r)$, and by monomial homogeneity we have $r = F(r) \geq F(a)$, thus $r \geq \ceil{F(a)}$,
    we conclude $\Rk_{\leq^F} (a) = \ceil{F(a)}$. Hence the asymptotic rank is
    \[ \AR_{\leq^F}(a) = \lim_{n\to\infty} \Rk_{\leq^F}(a^n)^{1/n} = \lim_{n\to\infty} \ceil{F(a)^n}^{1/n} = F(a). \]
    From this characterization, we can write $F$ as maximization by Strassen's duality (\Cref{thm:duality_abstract}):
    \[ F(a) = \AR_{\leq^F}(a) = \max_{\phi\in\calX_{(\calR, \leq^F)}} \phi(a) = \max_{\phi\in\calZ} \phi(a). \qedhere \]
\end{proof}

\subsubsection{Applications}

Using the characterization above, we derive a partial result for upper support functionals $\zeta^\theta$. We begin with a general statement.

\begin{proposition} \label{prop:asymp-F}
    If $F$ is monotone, normalized, \emph{additive} and submultiplicative, then
    \[ \uF(a) \coloneq \lim_{n\to\infty} F(a^n)^{1/n} \]
    always exists, and satisfies the conditions of \Cref{thm:maximization}.
\end{proposition}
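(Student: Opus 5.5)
We must show that $\uF$ exists and is monotone, subadditive, submultiplicative and monomially homogeneous. The plan is to verify each property directly from the corresponding property of $F$, and then pass to the limit.

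\emph{Existence and basic bounds.} Submultiplicativity gives $F(a^{m+n}) \leq F(a^m)F(a^n)$, so $\log F(a^n)$ is subadditive in $n$. Fekete's lemma then yields that the limit exists and equals $\inf_n F(a^n)^{1/n}$. The degenerate case $F(a^n)=0$ for some $n$ gives limit $0$, which is harmless. By the same reasoning, monotonicity of $F$ passes to $\uF$: if $a\leq b$ then $a^n\leq b^n$, so $F(a^n)\leq F(b^n)$. Submultiplicativity of $\uF$ follows from $F((ab)^n)=F(a^nb^n)\leq F(a^n)F(b^n)$ after taking $n$-th roots. For monomial homogeneity, first note $\uF(a^k)=\uF(a)^k$ by passing to a subsequence of the limit. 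Next, additivity and normalization give $F(m)=m$ for $m\in\bbN$ and $F(mx)=mF(x)$. Hence $F((ma^k)^n)=m^nF(a^{kn})$, and so $\uF(ma^k)=m\,\uF(a)^k$.

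\emph{Subadditivity (main step).} Additivity of $F$ gives
\[ F((a+b)^n)=\sum_{k=0}^n\tbinom{n}{k}F(a^kb^{n-k})\leq\sum_k\tbinom nk F(a^k)F(b^{n-k}). \]
The difficulty is that $F(a^k)$ is not bounded by $\uF(a)^k$; it is only bounded above by a quantity that converges to it in the exponential sense. To handle this, fix $\varepsilon>0$. There is a constant $C_\varepsilon$ such that $F(a^k)\leq C_\varepsilon(\uF(a)+\varepsilon)^k$ for all $k$, because $F(a^k)^{1/k}\to\uF(a)$ and only finitely many $k$ violate the bound. The same holds for $b$. Substituting, we get
\[ F((a+b)^n)\leq C_\varepsilon^2\,(\uF(a)+\uF(b)+2\varepsilon)^n. \]
Taking $n$-th roots and letting $n\to\infty$, then $\varepsilon\to0$, gives $\uF(a+b)\leq\uF(a)+\uF(b)$.

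I expect this uniform-constant approximation to be the only nontrivial point; every other property is immediate. Having checked all four conditions, $\uF$ satisfies the hypotheses of \Cref{thm:maximization}.
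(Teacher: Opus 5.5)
Your proof is correct and follows essentially the same route as the paper: Fekete's lemma for existence, direct verification of monotonicity, submultiplicativity and monomial homogeneity (using additivity to get $F(k^n a^n)=k^nF(a^n)$), and, for subadditivity, the binomial expansion combined with a bound on $F(a^k)$ that holds uniformly in $k$. Your bound $F(a^k)\le C_\varepsilon(\uF(a)+\varepsilon)^k$ is a harmless variant of the paper's $F(a^n)=O(2^{\varepsilon n}\uF(a)^n)$, and it even handles the degenerate case $\uF(a)=0$ a bit more cleanly.
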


\begin{proof}
    By Fekete's lemma, the limit exists and can be written as
    \[ \uF (a) = \inf_{n\geq 1} F(a^n)^{1/n}. \]

    We now verify that $\uF$ satisfies the first set of conditions in \Cref{thm:maximization}: monotonicity, subadditivity, submultiplicativity, and monomial homogeneity.

    Monotonicity: If $a\leq b$, then $a^n\leq b^n$ for all $n$, so
    \[ \uF(a) = \lim_{n\to\infty} F(a^n)^{1/n}
    \leq \lim_{n\to\infty} F(b^n)^{1/n} = \uF(b). \]

    Subadditivity: For any $a, b\in\calR$, by the definition of $\uF$, for every $\varepsilon>0$ we have
    \[ F(a^n) = O(2^{\varepsilon n} \uF(a)^{n}), \quad F(b^n) = O(2^{\varepsilon n} \uF(b)^{n}), \]
    where the $O(\cdot)$ here hides a constant depending on $\varepsilon$. Thus
    \begin{align*}
        F((a+b)^n) &\leq \sum_{k=0}^n \binom n k F(a^k) F(b^{n-k}) \\
        &= \sum_{k=0}^n \binom{n}{k} O(2^{\varepsilon n} \cdot \uF(a)^k \uF(b)^{n-k})\\
        &= O(2^{\varepsilon n} (\uF(a) + \uF(b))^n),
    \end{align*}
    thus $\uF(a+b) \leq \uF(a) + \uF(b)$.

    Submultiplicativity: For any $a, b\in\calR$, since $F$ is submultiplicative, we have
    \[ \uF(ab) = \lim_{n\to\infty} F((ab)^n)^{1/n} \leq \lim_{n\to\infty} F(a^n)^{1/n} \cdot F(b^n)^{1/n} 
    = \uF(a) \uF(b). \]

    Monomial homogeneity: First, $\uF$ preserves powers:
    \[ \uF(a^m) = \lim_{n\to\infty} F(a^{mn})^{1/n}
    = \lim_{mn\to\infty} [F(a^{mn})^{1/mn}]^{m} = \uF(a)^m. \]
    Next, we show that $\uF$ preserves multiplication by natural numbers:
    \[ \uF(ka) = \lim_{n\to\infty} F(k^n a^n)^{1/n}
    = \lim_{n\to\infty} k F(a^n)^{1/n} = k\uF(a). \qedhere \]
\end{proof}

\begin{remark} \label{rmk:bv25}
    In a recent paper, Bug\'ar and Vrana derived a more general statement \cite[Theorem~3.23]{BV25Func}. Our additivity assumption cannot be weakened to subadditivity. A counterexample similar to \cite[Remark~3.26]{BV25Func} is presented below. Their notion of \emph{regularization} \cite[Definition~3.13]{BV25Func} differs slightly from ours in general, but it agrees with our definition when $F$ is additive. Moreover, their regularized functional is always a maximization of asymptotic spectral points, even when $F$ is only subadditive rather than additive.

    Let $\calR = \{(0,0)\}\cup \bbZ_{\geq 1}^2$
    with entrywise addition and multiplication, and let $\leq$ be the entrywise preorder.
    Then $1_{\calR}=(1,1)$ and $0_{\calR}=(0,0)$, and $\leq$ is a Strassen preorder.

    Define $F(x,y)=\max(x,\sqrt y)$. One checks that $F$ is monotone, normalized, subadditive, and submultiplicative. In this case $\uF=F$, but it is not a maximization of spectral points because
    \[ 2 = \uF(2(1, 2)) \neq 2\uF(1, 2) = 2\sqrt 2. \]
\end{remark}

Strassen claimed a characterization of the upper support functional $\zeta^\theta$ \cite[\S7 (ii)]{Str91SuppFunc}: there exists a \emph{unique minimal} closed subset $\sfZ^\theta$ such that
\[ \utilde{\zeta}^\theta(T) = \max_{\phi\in \sfZ^\theta} \phi(T), \]
while no published proof is known. Here we show how \Cref{prop:asymp-F} yields a weaker characterization. Bug\'ar and Vrana gave a stronger version than ours by showing that this closed subset need only contain spectral points dominating the lower support functional, namely $\phi \geq \zeta_\theta$; see the comments after \cite[Corollary~3.25]{BV25Func}.
\begin{corollary}
    There exists a closed subset $\calZ^\theta$ such that
    \[ \utilde{\zeta}^\theta(T) = \max_{\phi\in \calZ^\theta} \phi(T). \]
\end{corollary}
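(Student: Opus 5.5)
The plan is to apply \Cref{prop:asymp-F} with $\calR$ the semiring of tensors up to isomorphism, ordered by restriction, and with $F=\zeta^\theta$. The regularization $\utilde{\zeta}^\theta$ then satisfies the hypotheses of \Cref{thm:maximization}. That theorem gives the closed subset
\[ \calZ^\theta = \{\phi\in\calX : \phi(T)\leq \utilde{\zeta}^\theta(T) \text{ for all } T\}, \]
and the maximum over $\calZ^\theta$ equals $\utilde{\zeta}^\theta$. So it remains to check that $\zeta^\theta$ is monotone, normalized, additive and submultiplicative.

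Normalization is immediate: the support of $\ang{1}$ is a single point, so every distribution on it has zero entropy and $\zeta^\theta(\ang{1})=1$. Monotonicity and submultiplicativity are the properties Strassen established for the upper support functional in \cite{Str91SuppFunc}; the paper notes that he verified every spectral-point axiom except multiplicativity. Submultiplicativity also follows directly from the definition. Take product bases of minimizing bases for $S$ and $T$; the support of $S\otimes T$ is then contained in the product of the two supports. For any distribution $P$ on that product, each marginal entropy is subadditive, $\eH(P_\varkappa)\leq \eH(P'_\varkappa)+\eH(P''_\varkappa)$, where $P',P''$ are the factor marginals. Hence the maximum over the product support is at most the product of the two maxima.

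The only genuinely delicate point is additivity. I would cite Strassen's result \cite{Str91SuppFunc} that $\zeta^\theta$ is additive under direct sum. Here is why it holds. For the upper bound, choose block bases for $S\oplus T$. A distribution on the union of the two disjoint supports is a mixture $\lambda P\oplus(1-\lambda)Q$. Its weighted marginal entropy is $h(\lambda)+\lambda\sum_\varkappa\theta_\varkappa\eH(P_\varkappa)+(1-\lambda)\sum_\varkappa\theta_\varkappa\eH(Q_\varkappa)$, using $\sum_\varkappa\theta_\varkappa=1$. Maximizing over $\lambda$ gives $\log_2(2^{x}+2^{y})$, so $\zeta^\theta(S\oplus T)\leq\zeta^\theta(S)+\zeta^\theta(T)$.

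The reverse inequality is harder, because a basis of the sum need not be block-adapted. This is where Strassen's argument is needed, and I would not reprove it. If I had to, I would first try his characterization of $\zeta^\theta$ via minimization over $\theta$-weighted filtrations and the corresponding maximum over ``tight'' configurations.

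Granting these properties, \Cref{prop:asymp-F} applies directly and the corollary follows. The main obstacle is the exact additivity, specifically superadditivity under direct sum, which I would cite rather than reprove.
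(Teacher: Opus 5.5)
Your proposal is correct and is essentially the paper's proof. The paper cites Strassen \cite[Theorem~2.8]{Str91SuppFunc} for monotonicity, normalization, additivity and submultiplicativity of $\zeta^\theta$, then applies \Cref{prop:asymp-F} so that \Cref{thm:maximization} yields $\calZ^\theta=\{\phi : \phi\leq\utilde{\zeta}^\theta\}$; your direct sketches of submultiplicativity and of subadditivity under direct sum are correct but not needed.
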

\begin{proof}
    It is already known that $\zeta^\theta$ is monotone, normalized, additive and submultiplicative \cite[Theorem~2.8]{Str91SuppFunc}. Then by \Cref{prop:asymp-F}, $\utilde{\zeta}^\theta(T)$ satisfies
    the conditions of \Cref{thm:maximization}, thus
    there exists a closed subset $\calZ$ such that
    \[ \utilde{\zeta}^\theta(T) = \max_{\phi\in \calZ} \phi(T). \]
    From the choice of $\calZ$ in \Cref{thm:maximization}, one sees that this construction naturally gives a \emph{maximal} subset rather than a minimal one.
\end{proof}

\subsection{Sufficient condition for minimization}

\begin{theorem} \label{thm:minimization}
    If a function $f\colon \calR\to\bbR_{\geq 0}$ is monotone, superadditive, supermultiplicative, monomially homogeneous,
    and
    \begin{itemize}
        \item (Monomially dominant) For any $s, t$, there exists $k = k(n)$
        such that
        \[ \binom{n}{k} f(s^k t^{n-k}) \geq 2^{-o(n)} f(s+t)^n. \]
    \end{itemize}
    Then $f$ is a minimization of a subset of spectral points: There exists a nonempty closed subset
    $\calZ \subset \calX_{\calR}$ such that $f(a) = \min_{\phi\in\calZ} \phi(a)$. Moreover, one can take $\calZ = \{\phi : \forall a\in \calR, \phi(a) \geq f(a)\}$.
\end{theorem}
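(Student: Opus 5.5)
We mirror the proof of \Cref{thm:maximization}, with subrank in place of rank. Let $\leq_f$ be the smallest semiring preorder containing $\leq$ together with the relations $\floor{f(a)} \leq_f a$ for all $a$. The plan has three steps:
\begin{itemize}
    \item show that $f$ is $\leq_f$-monotone, which makes $\leq_f$ a Strassen preorder;
    \item show that $\Q_{\leq_f}(a) = \floor{f(a)}$, so that $\AQ_{\leq_f} = f$;
    \item invoke Strassen duality to write asymptotic subrank as a minimum over the spectrum, and identify that spectrum with $\calZ$.
\end{itemize}
For the last step we use the subrank half of duality, $\AQ(a)=\min_{\phi\in\calX}\phi(a)$, which is part of the abstract theory in \cite{WZ26Spectra}.

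\textbf{Normal form.} Exactly as in the zig-zag lemma, $s\leq_f t$ holds iff there is a chain $s=a_0,\dots,a_n=t$ with
\[ a_{i-1}\leq b_i+\floor{f(c_i)}d_i, \qquad b_i+c_id_i\leq a_i. \]
Closure under addition, multiplication by $u$ (put $u$ into $b_i$ and $d_i$) and concatenation is checked verbatim. Given this, $\leq_f$-monotonicity of $f$ reduces to the key inequality
\[ f\bigl(b+\floor{f(c)}d\bigr)\leq f(b+cd). \]
Then $f(a_{i-1})\le f(b_i+\floor{f(c_i)}d_i)\le f(b_i+c_id_i)\le f(a_i)$ along the chain. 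This also gives embedding of $\bbN$, since monomial homogeneity forces $f(n)=n$.

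\textbf{Key inequality (main obstacle).} In the upper case we expanded $F(b+cd)^n$ and used subadditivity to split it into terms. Here superadditivity runs the wrong way for splitting $f((b+\floor{f(c)}d)^n)$, and this is exactly where monomial dominance comes in. Set $m=\floor{f(c)}$. By monomial homogeneity and monomial dominance applied to $s=md$, $t=b$, there is a $k=k(n)$ with
\[ f\bigl(b+md\bigr)^n \leq 2^{o(n)}\tbinom nk f\bigl(m^k d^k b^{n-k}\bigr) = 2^{o(n)}\tbinom nk m^k f\bigl(d^kb^{n-k}\bigr). \]
Since $m\le f(c)$, supermultiplicativity gives $m^k f(d^kb^{n-k})\le f(c)^k f(d^kb^{n-k}) = f(c^k)f(d^kb^{n-k})\le f(c^kd^kb^{n-k})$. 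Monomial homogeneity turns $\binom nk f(\cdot)$ into $f(\binom nk\,\cdot)$, and the element $\binom nk (cd)^k b^{n-k}$ is a summand of $(b+cd)^n$. Since every element is $\ge 0$, this summand is $\le(b+cd)^n$, and monotonicity yields the bound $f((b+cd)^n)=f(b+cd)^n$. Taking $n$-th roots and letting $n\to\infty$ removes the $2^{o(n)}$ factor. I expect the bookkeeping here to be routine once the dominance hypothesis is applied to the correct pair; the only subtlety is applying it to $(md,b)$ rather than $(cd,b)$.

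\textbf{Subrank and spectrum.} We have $\floor{f(a)}\le_f a$, and conversely $q\le_f a$ implies $q=f(q)\le f(a)$. Hence $\Q_{\leq_f}(a)=\floor{f(a)}$, and
\[ \AQ_{\leq_f}(a)=\lim_n \floor{f(a)^n}^{1/n}=f(a). \]
(If $f(a)<1$, then $f(a)=0$ by homogeneity, which is the degenerate case.) Duality gives $f(a)=\min_{\phi\in\calX_{(\calR,\leq_f)}}\phi(a)$. Spectral points for $\leq_f$ are exactly those $\phi\in\calX$ satisfying $\phi(a)\ge\floor{f(a)}$ for all $a$. Applying this to $a^n$ and using $f(a^n)=f(a)^n$ gives $\phi(a)\ge f(a)$ after taking $n$-th roots. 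Conversely, $\phi\ge f$ implies $\phi(a)\ge\floor{f(a)}$, so the $\leq_f$-spectrum is precisely $\calZ$. Nonemptiness of $\calZ$ and its closedness in the compact topology of \Cref{thm:duality_abstract} follow as before.
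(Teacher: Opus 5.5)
Your construction of $\leq_f$, the zig-zag normal form, and the key inequality $f(b+\floor{f(c)}d)\leq f(b+cd)$ all coincide with the paper's argument. The key inequality comes from monomial dominance applied to the pair $(\floor{f(c)}d,\,b)$, followed by homogeneity, $\floor{f(c)}\leq f(c)$, supermultiplicativity and monotonicity. Your computation $\Q_{\leq_f}(a)=\floor{f(a)}$ and your identification of the $\leq_f$-spectrum with $\calZ$ also match the paper.

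The gap is in the final step. You use ``$\AQ(a)=\min_{\phi\in\calX}\phi(a)$'' as an unconditional fact about Strassen preordered semirings, and it is not one. The subrank form of duality holds only under an extra hypothesis, such as $2\leq a^k$ for some $k$.

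For a counterexample, on $\bbN[x]$ let $\leq$ be the semiring preorder generated by $0\leq 1$, $1\leq x$, $3\leq 2x$ and $x\leq 2$.
\begin{itemize}
\item This is a Strassen preorder; evaluation at $x=3/2$ shows that $\bbN$ embeds.
\item Every elementary step $gu+v\to hu+v$ either does not decrease the coefficient sum, or replaces $3u+v$ by $2xu+v$ with $u\neq 0$. So once the coefficient sum is at least $2$, it stays at least $2$.
\item Hence $2\not\leq x^n$ for all $n$, and $\AQ(x)=1$.
\item The spectrum consists of the evaluations $x\mapsto y$ with $y\in[3/2,2]$, so its minimum at $x$ is $3/2$.
\end{itemize}

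In your semiring the conclusion is still true, but it needs an argument.
\begin{itemize}
\item When $f(a)>1$, the extra hypothesis holds, since $2\leq\floor{f(a)^k}\leq_f a^k$ for large $k$.
\item When $a\neq 0$ and $f(a)=1$, you must show separately that $\min_{\phi\in\calZ}\phi(a)=1$.
\end{itemize}
Incidentally, $a\neq 0$ gives $f(a)\geq f(1)=1$ via $1\leq a$, which is the strong Archimedean axiom plus monotonicity, not homogeneity as you wrote.

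A uniform repair avoids the case split. Suppose $\min_{\phi\in\calZ}\phi(a)>f(a)$, and choose integers $s,t,k$ with $\min_{\phi\in\calZ}\phi(a)^k\geq s/t>f(a)^k$. Duality for $\lesssim_f$ then gives $s\lesssim_f ta^k$, that is, $s^n\leq_f 2^{o(n)}t^na^{kn}$. Apply $f$, which is $\leq_f$-monotone and monomially homogeneous, take $n$-th roots and let $n\to\infty$: you get $s\leq tf(a)^k$, a contradiction.

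The paper does essentially this by quoting the rational form of duality, $\min_\phi\phi(a)=\sup\{(s/t)^{1/n}: s\leq_f ta^n\}$ \cite[Theorem~3.26]{WZ26Spectra}. It evaluates this using $\Q_{\leq_f}(ta^n)=\floor{tf(a)^n}$ for every $t$, which needs no case distinction.
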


Similar to the proof of \Cref{thm:maximization}, we define a similar preorder and the corresponding normal form. Let the preorder $\leq_f$ be defined as the smallest semiring preorder extending $\leq$ and also satisfying $\floor{f(a)} \leq a$ for all $a$, then we have

\begin{lemma}[Zig-zag normal form]
    For two elements $s,t\in\calR$, $s \leq_f t$ iff there exists a sequence $s=a_0, \dots, a_n = t$ and $\{b_i, c_i, d_i\}_{1\leq i\leq n}$, such that
    \[ a_{i-1} \leq b_i + \floor{f(c_i)} d_i, \quad b_i + c_i d_i \leq a_i. \]
\end{lemma}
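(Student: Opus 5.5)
The proof will mirror the zig-zag normal form lemma for $\leq^F$, with the two inequalities swapped so that the ``gain'' now comes from replacing $\floor{f(c)}$ by $c$. Call $R$ the relation on $\calR$ defined by the existence of a chain $s=a_0,\dots,a_n=t$ with $a_{i-1}\leq b_i+\floor{f(c_i)}d_i$ and $b_i+c_id_i\leq a_i$. I will show that $R$ coincides with $\leq_f$ by proving two inclusions.

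\emph{The ``if'' direction.} Suppose such a chain exists. By definition of $\leq_f$ we have $\floor{f(c_i)}\leq_f c_i$. Since $\leq_f$ is a semiring preorder, this is compatible with multiplication by $d_i$ and addition of $b_i$, which gives
\[ a_{i-1}\leq_f b_i+\floor{f(c_i)}d_i\leq_f b_i+c_id_i\leq_f a_i. \]
Transitivity along the chain then yields $s\leq_f t$.

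\emph{The ``only if'' direction.} Here it suffices to check that $R$ is itself a semiring preorder extending $\leq$ and containing all relations $\floor{f(a)}\leq a$; minimality of $\leq_f$ then forces $\leq_f\subseteq R$. I check each property in turn.
\begin{itemize}
\item \emph{Extends $\leq$.} If $s\leq t$, take a one-step chain with $b_1=s$ and $c_1=d_1=0$. Then $s\leq s+\floor{f(0)}\cdot 0$ and $s+0\leq t$.
\item \emph{Contains $\floor{f(a)}\leq a$.} Take $b_1=0$, $c_1=a$, $d_1=1$. The first inequality reads $\floor{f(a)}\leq \floor{f(a)}\cdot 1$ and the second reads $a\cdot 1\leq a$.
\item \emph{Additive.} Replace every $a_i$ by $a_i+u$ and every $b_i$ by $b_i+u$, keeping $c_i,d_i$ unchanged. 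Both inequalities are preserved because $\leq$ is compatible with addition.
\item \emph{Multiplicative.} Replace $a_i,b_i,d_i$ by $ua_i,ub_i,ud_i$, keeping $c_i$ unchanged. Both inequalities are preserved because $\leq$ is compatible with multiplication; in particular $u\floor{f(c_i)}d_i=\floor{f(c_i)}(ud_i)$, since natural numbers commute with everything in the commutative semiring.
\item \emph{Reflexive and transitive.} Reflexivity is the empty chain, or the one-step chain for $s\leq s$. For transitivity, concatenate the two chains.
\end{itemize}

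I do not expect a serious obstacle. The only point needing care is the multiplicative closure: the scalar $u$ must be absorbed into $d_i$ rather than into $c_i$. Otherwise the term $\floor{f(c_i)}$ would change to $\floor{f(uc_i)}$, which cannot be controlled. It also matters that compatibility with multiplication only needs to be checked for multiplication by a single element $u$; the general product case $s\leq t$, $s'\leq t'$ then follows via $ss'\,R\,ts'\,R\,tt'$ and transitivity.
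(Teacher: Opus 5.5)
Your proposal is correct and follows the paper's approach: the paper omits this proof as ``almost the same'' as the zig-zag lemma for $\leq^F$. Your two directions mirror that proof step by step. The chain gives $s\leq_f t$ by transitivity. Conversely, the chain relation contains $\leq$ and the generators $\floor{f(a)}\leq a$, and is closed under addition, multiplication (with $u$ absorbed into $d_i$) and concatenation, so it contains $\leq_f$.
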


We omit its proof since it's almost the same.

\begin{lemma}
    If $f$ satisfies the first set of conditions in \Cref{thm:minimization}, then for any $b,c,d\in\calR$,
    \[ f(b + \floor{f(c)}d) \leq f(b + cd). \]
\end{lemma}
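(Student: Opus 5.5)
The plan is to dualize the argument of \Cref{lem:F-lifting}. There, subadditivity let us split $F((b+cd)^n)$ into all binomial terms and bound each one. Here superadditivity points the wrong way for that, so we instead use monomial dominance to pick out a \emph{single} binomial term of $(b+qd)^n$ that already carries almost all of $f(b+qd)^n$. We then compare that one term with the matching term of $(b+cd)^n$.

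Write $q=\floor{f(c)}\in\bbN$ and fix $n$. Apply the monomial dominance hypothesis with $s=b$ and $t=qd$. This gives $k=k(n)$ with
\[ f(b+qd)^n \leq 2^{o(n)}\binom{n}{k} f\bigl(b^k (qd)^{n-k}\bigr) = 2^{o(n)}\binom{n}{k} q^{n-k} f\bigl(b^k d^{n-k}\bigr), \]
where the equality is monomial homogeneity used to pull out the natural number $q^{n-k}$.

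Next we bound $q^{n-k}$. Since $q\leq f(c)$, monomial homogeneity gives $q^{n-k}\leq f(c)^{n-k}=f(c^{n-k})$. Supermultiplicativity then gives
\[ q^{n-k} f(b^k d^{n-k}) \leq f(c^{n-k}) f(b^k d^{n-k}) \leq f\bigl(b^k (cd)^{n-k}\bigr). \]
Monomial homogeneity again gives $\binom nk f(b^k(cd)^{n-k}) = f\bigl(\binom nk b^k (cd)^{n-k}\bigr)$. The element $\binom nk b^k(cd)^{n-k}$ is one summand in the binomial expansion of $(b+cd)^n$, and the remaining summands have nonnegative $f$-value. Hence superadditivity (or monotonicity together with $0\leq$ the other terms) gives $f\bigl(\binom nk b^k(cd)^{n-k}\bigr)\leq f((b+cd)^n)=f(b+cd)^n$, the last step by monomial homogeneity.

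Chaining these bounds yields $f(b+qd)^n\leq 2^{o(n)} f(b+cd)^n$. Taking $n$-th roots and letting $n\to\infty$ gives the claim.

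The only delicate points are bookkeeping. First, the degenerate case $q=0$ (or $k=n$) must be handled with the convention $x^0=1$, which is consistent with normalization coming from monomial homogeneity. Second, superadditivity must be applied so that discarding the other binomial terms only decreases $f$; this is where nonnegativity of $f$ is used. I expect the main conceptual step to be the choice of applying dominance to $(b,qd)$ rather than to $(b,cd)$. Only then does the extracted monomial contain the integer $q$, which can be traded for $c$ via $q\le f(c)$ and supermultiplicativity.
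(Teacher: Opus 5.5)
Your proof is correct and matches the paper's own argument step for step. You apply monomial dominance to the pair $(b,\floor{f(c)}d)$ and pull out $\floor{f(c)}^{n-k}$ by monomial homogeneity. You then bound it by $f(c^{n-k})$, absorb it via supermultiplicativity, and dominate the chosen binomial term by $f((b+cd)^n)$ through monotonicity; the differences (exponent indexing, and when $\binom{n}{k}$ is moved inside $f$) are purely cosmetic.
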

\begin{proof}
    We again prove by calculation:
    \begin{align*}
        f(b + \floor{f(c)} d)^n &= f ((b + \floor{f(c)} d)^n) & \text{(monomially homogeneous)}\\
        &\leq 2^{o(n)} \binom{n}{k} f(b^{n-k} (\floor{f(c)}d)^k) & \text{(monomially dominant)}\\
        &= 2^{o(n)} \floor{f(c)}^k f(\tbinom{n}{k} b^{n-k} d^k) & \text{(monomially homogeneous)}\\
        &\leq 2^{o(n)} f(c)^k f(\tbinom{n}{k}b^{n-k} d^k)\\
        &\leq 2^{o(n)} f(\tbinom{n}{k}b^{n-k} (cd)^k)  & \text{(supermultiplicative)}\\
        &\leq 2^{o(n)} f((b + cd)^n) & \text{(monotone)}\\
        &\leq 2^{o(n)} f(b + cd)^n, & \text{(monomially homogeneous)}
    \end{align*}
    Taking $n\to\infty$, we conclude.
\end{proof}

\begin{proposition} \label{prop:f-order}
    If $f$ satisfies the first set of conditions in \Cref{thm:minimization}, then $\leq_f$
    is a Strassen preorder, moreover, $f$ is $\leq_f$-monotone: $s\leq_f t \implies f(s) \leq f(t)$. The asymptotic spectrum
    $\calX_{(\calR, \leq_f)} \subset \calX_{(\calR, \leq)}$ is the closed subset $\calZ$ cut out by
    $\phi(a) \geq f(a)$ for all $a\in\calR$.
\end{proposition}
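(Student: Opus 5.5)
The plan is to copy the proof of \Cref{prop:F-order}, swapping the roles of the two inequalities. Most Strassen preorder axioms for $\leq_f$ come directly from $\leq$: nonnegativity, compatibility with addition and multiplication, and strong Archimedean (a larger preorder only makes $1\leq_f a\leq_f n$ easier). The exception is the embedding of natural numbers. I will obtain it from $\leq_f$-monotonicity of $f$. If $n\leq_f m$ for naturals $n,m$, then $f(n)\leq f(m)$. Monomial homogeneity gives $f(n)=n f(1)=n$, since $f(1)=1$ by taking exponent $0$. Hence $n\leq m$ in $\bbN$.

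For monotonicity I start from the zig-zag normal form of $s\leq_f t$, namely $a_0=s,\dots,a_n=t$ with
\[ a_{i-1}\leq b_i+\floor{f(c_i)}d_i, \qquad b_i+c_id_i\leq a_i. \]
Each step then gives the chain
\[ f(a_{i-1})\leq f(b_i+\floor{f(c_i)}d_i)\leq f(b_i+c_id_i)\leq f(a_i). \]
The outer inequalities use $\leq$-monotonicity. The middle one is the preceding lemma, which is the only place monomial dominance enters. Chaining over $i$ gives $f(s)\leq f(t)$. This is routine once the zig-zag form is available. The omitted proof of that form is parallel to the maximization case, with closure under addition, multiplication by $u$ (absorbing $u$ into $b_i,d_i$) and concatenation. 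The generator $\floor{f(a)}\leq_f a$ is represented as $\floor{f(a)}\leq 0+\floor{f(a)}\cdot 1$ followed by $0+a\cdot1\leq a$.

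For the spectrum, note that $\calX_{(\calR,\leq_f)}$ consists of those $\phi\in\calX_{(\calR,\leq)}$ that are also monotone for $\leq_f$. Since $\leq_f$ is generated by $\leq$ together with the relations $\floor{f(a)}\leq a$, and monotonicity of an additive, multiplicative $\phi$ is preserved under generating a semiring preorder, the condition is exactly $\phi(a)\geq\floor{f(a)}$ for all $a$. If $\phi\geq f$ pointwise, these hold trivially. Conversely, I apply the condition to $a^n$ and use monomial homogeneity:
\[ \phi(a)^n=\phi(a^n)\geq\floor{f(a^n)}=\floor{f(a)^n}\geq f(a)^n-1. \]
Taking $n$-th roots and letting $n\to\infty$ gives $\phi(a)\geq f(a)$ when $f(a)>1$. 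In general $\floor{x^n}^{1/n}\to x$ for $x\geq1$, and for $f(a)<1$ I need to handle the limit separately. Since $\phi(a)\geq0$, the small case needs care. For nonzero $a$, strong Archimedean gives $1\leq a$, so $f(a)\geq f(1)=1$. For $a=0$, superadditivity gives $f(0)\geq 2f(0)$, so $f(0)=0$. So $f(a)<1$ only when $a=0$, where the bound is trivial.

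Closedness of $\calZ$ follows because it is cut out by the inequalities $\phi(a)\geq f(a)$, each closed under the continuity of evaluation from \Cref{thm:duality_abstract}. The main obstacle is the zig-zag closure argument, but it mirrors the maximization case, so I expect no real difficulty.
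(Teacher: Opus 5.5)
Your proposal is correct and follows the route the paper intends; the paper omits this proof as a mirror of \Cref{prop:F-order}. As there, you get the embedding of $\bbN$ from $\le_f$-monotonicity of $f$, monotonicity from the zig-zag normal form together with the lemma $f(b+\floor{f(c)}d)\le f(b+cd)$, and the spectrum description from $\phi(a^n)\ge\floor{f(a^n)}$ by letting $n\to\infty$. Two details are handled more carefully than the paper's sketch of the maximization analogue: you justify explicitly that checking $\phi$ on the generating relations suffices, and you treat the case $f(a)<1$, which can occur only at $a=0$.
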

We again omit the proof of the statement above since there is nothing new.

\begin{proof}[Proof of \Cref{thm:minimization}]
    Similar to the proof of \Cref{thm:maximization}, not hard to see that $\Q_{\leq_f}(a) = \floor{f(a)}$.
    Then by \cite[Theorem~3.26]{WZ26Spectra}, we have
    \begin{align*}
        \min_{\phi\in\calX_{(\calR, \leq_f)}} \phi(a) &= \sup\{ (s/t)^{1/n} : s,t,n \in \bbZ_{\geq 1}, s\leq_f t a^n \}\\
        &= \sup\{ (\Q(ta^n)/t)^{1/n} : t, n\in\bbZ_{\geq 1} \}\\
        &= \sup\{ (\floor{t f(a)^n}/t)^{1/n} : t, n\in\bbZ_{\geq 1} \}\\
        &= f(a). \qedhere
    \end{align*}
\end{proof}

\subsubsection{Applications} \label{sec:min-appl}

\begin{proposition} \label{prop:f-asymp}
    If $f$ is monotone, normalized, \emph{additive} and supermultiplicative, then
    \[ \uf(a) \coloneq \lim_{n\to\infty} f(a^n)^{1/n} \]
    always exists, and satisfies the conditions of \Cref{thm:minimization}.
\end{proposition}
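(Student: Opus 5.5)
The plan is to verify each hypothesis of \Cref{thm:minimization} for $\uf$ directly, in the following order.

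\textbf{Existence and lower bounds.} Since $f$ is supermultiplicative, $\log f(a^n)$ is superadditive in $n$. Boundedness comes from the Strong Archimedean property: there is $r$ with $a\leq r$, so by monotonicity, additivity and normalization $f(a^n)\leq f(r^n)=r^n$. Fekete's lemma then gives $\uf(a)=\sup_n f(a^n)^{1/n}$; the case $a=0$ is trivial. For $a\neq 0$ we have $1\leq a^r$, so $f(a^r)\geq 1$. Writing $k=qm+r$ and using supermultiplicativity,
\[ f(a^k)\geq f(a^m)^q f(a^r)\geq f(a^m)^q . \]
Choosing $m$ with $f(a^m)^{1/m}\geq \uf(a)-\varepsilon$, this yields the uniform lower bound $f(a^k)\geq c_\varepsilon(\uf(a)-\varepsilon)^k$. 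We also record the trivial upper bound $f(x)\leq \uf(x)$.

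\textbf{Monotonicity, superadditivity, supermultiplicativity, monomial homogeneity.} These are as in \Cref{prop:asymp-F} with inequalities reversed. Additivity gives $f(mx)=mf(x)$ and
\[ f((a+b)^n)=\sum_k \tbinom nk f(a^kb^{n-k})\geq \sum_k\tbinom nk f(a^k)f(b^{n-k}). \]
Inserting the lower bounds from the first step gives $f((a+b)^n)\geq c(\uf(a)+\uf(b)-2\varepsilon)^n$, hence $\uf(a+b)\geq\uf(a)+\uf(b)$. Monomial homogeneity, namely $\uf(ka^m)=k\uf(a)^m$, follows exactly as before.

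\textbf{Monomial dominance (main obstacle).} Fix $s,t$, which we may assume nonzero, and fix $n$. Using homogeneity, $\uf(s+t)^{n}=\lim_m f((s+t)^{nm})^{1/m}$. Additivity expands
\[ f((s+t)^{nm})=\sum_K\tbinom{nm}{K}f(s^Kt^{nm-K})\leq (nm+1)\max_K \tbinom{nm}{K} f(s^Kt^{nm-K}). \]
Write the maximizing $K$ as $km+r$ with $0\leq r<m$. Take an integer $R\geq s+t$, using Strong Archimedean. Then monotonicity gives
\[ s^Kt^{nm-K}=(s^kt^{n-k-1})^m\cdot s^rt^{m-r}\leq R^m (s^kt^{n-k-1})^m, \]
and consequently $f(s^Kt^{nm-K})\leq R^m\uf(s^kt^{n-k-1})^m$. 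Since $1\leq t$, this is at most $R^m\uf(s^kt^{n-k})^m$. For the binomial coefficient, $\binom{nm}{K}^{1/m}\leq 2^{nH(K/nm)}$. Here $|K/nm-k/n|\leq 1/n$, so by continuity of entropy near the endpoints $2^{nH(K/nm)}\leq 2^{O(\log n)}\,2^{nH(k/n)}\leq \mathrm{poly}(n)\binom nk$. Taking $m$-th roots and letting $m\to\infty$ along a subsequence with constant $k$ (only $n+1$ values are possible) gives
\[ \uf(s+t)^n\leq R\cdot\mathrm{poly}(n)\tbinom nk\,\uf(s^kt^{n-k}), \]
which is monomial dominance with $k=k(n)$. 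The delicate points, and where I expect the main difficulty, are the bookkeeping of the rounding $K\mapsto k$ and making sure every lost factor is $2^{o(n)}$. The proposal is to absorb the leftover into $R^m$ and to lose only one $t$-degree.
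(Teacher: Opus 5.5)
Your proposal is correct. For existence, monotonicity, superadditivity, supermultiplicativity and monomial homogeneity you follow essentially the paper's route. That route is Fekete's lemma with the bound $f(a^n)\le r^n$ (the paper phrases it as $f\le\Rk$), and then the argument of \Cref{prop:asymp-F} with inequalities reversed, using Fekete-type uniform lower bounds for superadditivity.

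The genuine difference is monomial dominance. The paper argues directly at level $n$ in two lines:
\begin{itemize}
\item additivity splits $f((s+t)^n)$ into $n+1$ terms, so some $k$ has $\binom{n}{k} f(s^kt^{n-k})\ge f((s+t)^n)/(n+1)$;
\item then $\uf\ge f$ (because $\uf=\sup_n f(\cdot^{\,n})^{1/n}$) and $f((s+t)^n)\ge 2^{-o(n)}\uf(s+t)^n$ (straight from the definition of the limit) finish the proof.
\end{itemize}
So the step you flagged as the main obstacle is not really one: you had already recorded both ingredients.

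Your detour works at level $nm$, rounds $K=km+r$, absorbs the leftover via the Archimedean constant $R$, and controls the binomial coefficient via entropy continuity. It is valid, and it buys a quantitatively stronger statement: an explicit loss of $R\cdot\mathrm{poly}(n)$, uniform in $n$. The paper's $2^{-o(n)}$ instead depends on the unknown convergence rate of $f((s+t)^n)^{1/n}$. Letting $m\to\infty$ captures $\uf(s+t)^n$ exactly rather than $f((s+t)^n)$, which is why you get this.

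The cost is the bookkeeping, and it contains one slip. When the maximizing index is $K=nm$, you get $k=n$, and your factorization $(s^kt^{n-k-1})^m\, s^rt^{m-r}$ then involves $t^{-1}$, which is meaningless in a semiring. This case must be treated separately. It is trivial: $s^{nm}=(s^n)^m$ gives $f(s^{nm})\le \uf(s^n)^m$ directly. A minor cosmetic point is that you use $r$ both for the Archimedean bound and for the remainder.
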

\begin{proof}
    Since $f$ is monotone, we have $f(a) \leq \Rk(a)$ and thus $f(a^n)^{1/n} \leq \Rk(a^n)^{1/n} \leq \Rk(a)$, by Fekete's lemma (see the proof of \cite[Corollary~2.12]{WZ26Spectra}), the limit defining $\uf$ exists and can be written as
    \[ \uf(a) = \sup_{n \geq 1} f(a^n)^{1/n}. \]
    One can mimic the proof of \Cref{prop:asymp-F} to verify monotonicity, superadditivity, supermultiplicativity, and monomial homogeneity. So we only need to prove that $\uf$ is monomially dominant.
    By the definition of $\uf$, we have
    \[ f((a+b)^n) \geq 2^{-o(n)} \uf(a + b)^n. \]
    Thanks to additivity, left-hand side expands to $n+1$ monomials
    \[ f((a+b)^n) = \sum_{k=0}^n \binom n k f(a^k b^{n-k}), \]
    so some $k$ satisfies
    \[ \binom n k \uf ( a^k b^{n-k}) \geq \binom n k f( a^k b^{n-k}) \geq (n+1)^{-1} 2^{-o(n)} \uf(a+b)^n. \qedhere \]
\end{proof}

Since commutative rank satisfies all the conditions, we immediately obtain the following characterization.
\begin{corollary} \label{cor:acr-minimize-edge}
    The asymptotic commutative rank is a minimization of spectral points. In particular, if
    \[ \calZ_{(3)} \coloneq \{ \phi : \gamma_2(\phi) = 1 \}, \]
    then
    \[ \ACR_{(3)}(T) = \min_{\phi\in \calZ_{(3)}} \phi(T). \]
\end{corollary}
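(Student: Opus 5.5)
The plan has two parts. First, apply \Cref{prop:f-asymp} and \Cref{thm:minimization} to $f=\CR_{(3)}$. Then identify the resulting closed set $\calZ=\{\phi:\phi\geq \ACR\}$ with $\calZ_{(3)}=\{\phi:\gamma_2(\phi)=1\}$.

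For the first part, $\CR$ is monotone and additive by the earlier proposition. It is normalized, since $\CR(\ang{1})=1$, and supermultiplicative by the corollary. \Cref{prop:f-asymp} then shows that $\ACR$ satisfies the hypotheses of \Cref{thm:minimization}. Hence $\ACR(T)=\min_{\phi\in\calZ}\phi(T)$ with $\calZ=\{\phi\in\calX:\phi(T)\geq\ACR(T)\ \forall T\}$. It remains to prove $\calZ=\calZ_{(3)}$.

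\emph{Inclusion $\calZ\subseteq\calZ_{(3)}$.} If $\phi\in\calZ$, then $\phi(\ang{1,n,1})\geq \ACR(\ang{1,n,1})\geq \CR(\ang{1,n,1})=n$. On the other hand, $\ang{1,n,1}$ is the $n\times n$ identity matrix tensor, so it restricts from $\ang{n}$, giving $\phi(\ang{1,n,1})\leq n$. (Alternatively, use $\gamma_2\leq 1$ from \Cref{fact:reg_MM}.) Therefore $n^{\gamma_2}=n$, i.e.\ $\gamma_2(\phi)=1$.

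\emph{Inclusion $\calZ_{(3)}\subseteq\calZ$.} Let $\gamma_2(\phi)=1$. We need $\phi(T)\geq\ACR(T)$; by multiplicativity it suffices to show $\phi(T^{\otimes N})\geq \CR(T^{\otimes N})$ for every $N$. The natural route is \Cref{prop:cr-restr}: if $q=\CR(S)$ then $\ang{1,q,1}\leq S$, so $\phi(S)\geq \phi(\ang{1,q,1})=q$. This requires $\bbF$ infinite, and that is the main obstacle.

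For finite $\bbF$, I would first try a degeneration version of \Cref{prop:cr-restr}. Specialize $z_k\mapsto\lambda_k$ with $\lambda\in\bbF[\varepsilon]^p$ chosen polynomially, so that the chosen nonzero $q\times q$ minor, as a polynomial in $\varepsilon$, is nonzero. This is possible since $\bbF[\varepsilon]$ is infinite. Then $T$ restricts over $\bbF[\varepsilon]$ to a matrix whose $q\times q$ minor is $\varepsilon^a u$ with $u$ a unit at $\varepsilon=0$ after scaling. Using left/right $\bbF[\varepsilon]$-matrices (adjugate-type operations times powers of $\varepsilon$), one extracts $\varepsilon^{b}\ang{1,q,1}+O(\varepsilon^{b+1})$. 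This shows $\ang{1,q,1}\degen T$.

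By \Cref{thm:duality_tensor}, degeneration implies asymptotic restriction, so every spectral point is monotone under $\degen$. Hence $\phi(T)\geq\phi(\ang{1,q,1})=q$ over any field. The delicate step is making the extraction of $\ang{1,q,1}$ from a pencil with nonzero minor into an honest degeneration. A fallback, if that extraction fails, is to show $\CR(T^{\otimes N})\le \max\{q:\ang{1,q,1}\le T^{\otimes N}\}\cdot 2^{o(N)}$ via a Schwartz--Zippel-type count on tensor powers. Applying the bound to $T^{\otimes N}$, taking $N$-th roots and letting $N\to\infty$ gives $\phi(T)\geq\ACR(T)$, completing the identification.
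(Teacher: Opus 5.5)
Your proof is correct and follows the same two-inclusion structure as the paper. You apply \Cref{prop:f-asymp} and \Cref{thm:minimization} to $\CR$. You get $\calZ\subseteq\calZ_{(3)}$ by evaluating on $\ang{1,n,1}$; the paper uses $n=2$ together with $\gamma_2\le 1$. You get $\calZ_{(3)}\subseteq\calZ$ from $\phi(T)^N\ge\phi(\ang{1,\CR(T^{\otimes N}),1})=\CR(T^{\otimes N})$.

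The only real difference is the finite-field case of the second inclusion. The paper says one may assume $\bbF$ is infinite and invokes \Cref{prop:cr-restr}. But $\phi$ is a spectral point on $\bbF$-tensors, so a restriction $T^{\otimes N}\ge\ang{1,q,1}$ that exists only over an extension field does not by itself constrain $\phi$. That reduction therefore needs an extra descent step. For instance, a restriction $S_\bbK\le T_\bbK$ over a degree-$d$ extension descends to $S\le\ang{d^3}\otimes T$ over $\bbF$, and $d=O(N)$ suffices for $T^{\otimes N}$, so the loss is only subexponential.

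Your degeneration argument avoids this and works verbatim over every field. The step you flag as delicate is routine. Let $N(\varepsilon)$ be the chosen $q\times q$ submatrix of $\sum_k\lambda_k(\varepsilon)A_k$, with $\det N(\varepsilon)=\varepsilon^a u(\varepsilon)$ and $u(0)\neq 0$. Take the first-mode map to be $u(0)^{-1}\,\mathrm{adj}\,N(\varepsilon)$ composed with the row selection, and the second-mode map to be the column selection. This yields $\mathrm{adj}\,N(\varepsilon)\cdot N(\varepsilon)/u(0)=\varepsilon^a I_q+O(\varepsilon^{a+1})$, an honest degeneration $\ang{1,q,1}\degen T$ with polynomial maps. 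Monotonicity of spectral points under $\degen$ then follows from \Cref{thm:duality_tensor}, exactly as the paper uses it in \Cref{prop:MM_edge}. Your Schwartz--Zippel fallback is therefore unnecessary.
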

\begin{proof}
    For any $\phi\in \calZ$ guaranteed by \Cref{thm:minimization}, it should satisfy
    $\phi(T) \geq \ACR(T)$. Choosing $T=\ang{1,2,1}$, we have $2^{\gamma_2} \geq 2$, i.e., $\gamma_2 \geq 1$.
    We always have $\gamma_2 \leq 1$, so $\gamma_2 = 1$.
    This gives $\calZ \subseteq \calZ_{(3)}$.

    On the other hand, for any $T$, we have $T^{\otimes n} \geq \ang{1, \CR(T^{\otimes n}), 1}$ by definition (without loss of generality, we can always assume $\bbF$ is infinite). Plugging into $\phi$, we have $\phi(T)^{n} \geq \CR(T^{\otimes n})$. Taking $n\to\infty$, we obtain
    $\phi(T) \geq \ACR(T)$, thus $\calZ_{(3)} \subseteq \calZ$.
\end{proof}

It is natural to ask whether the same argument can be applied to asymptotic slice rank. The obstacle is that, although slice rank is additive \cite{Gowers21SR}, it is not supermultiplicative \cite[Example~5.2]{CVZ23QFun}.

A plausible workaround is to replace slice rank by an equivalent notion. The most promising candidate is $G$-stable rank (which is also additive): over a perfect field, $G$-stable rank and slice rank are equivalent up to a constant factor \cite[\S1C]{Derksen22GStableRank}. Moreover, in characteristic zero, Derksen proved that $G$-stable rank is supermultiplicative \cite[\S5C]{Derksen22GStableRank}\footnote{Although the theorem is stated over $\bbC$, the same conclusion holds over every characteristic-zero field. Indeed, by model-theoretic arguments it extends to every algebraically closed field of characteristic zero, and then descends to an arbitrary characteristic-zero field $\bbF$ because $G$-stable rank is invariant under field extension \cite[Theorem~2.5]{Derksen22GStableRank}.} and conjectured that the characteristic-zero assumption is unnecessary \cite[pp.~1074]{Derksen22GStableRank}. Thus, once supermultiplicativity of $G$-stable rank is established over a perfect field $\bbF$, the same minimization principle would also apply to $\ASR$.

\begin{corollary} \label{cor:ASR_minimize}
    Let $\bbF$ be a perfect field. If $G$-stable rank is supermultiplicative over $\bbF$, then asymptotic slice rank $\ASR$ over $\bbF$ is a minimization of spectral points.
\end{corollary}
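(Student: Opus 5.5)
The plan is to apply \Cref{prop:f-asymp} to $f = \operatorname{GSR}$, the $G$-stable rank over $\bbF$, and then identify the regularization $\uf$ with $\ASR$.

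\textbf{Checking the hypotheses of \Cref{prop:f-asymp}.} We need that $G$-stable rank is monotone under restriction, normalized, additive, and supermultiplicative.
\begin{itemize}
\item Supermultiplicativity is the assumption of the corollary.
\item Additivity under direct sum and monotonicity under restriction are established by Derksen for arbitrary fields.
\item Normalization $\operatorname{GSR}(\ang{1}) = 1$ is immediate from the definition.
\end{itemize}
If monotonicity is only stated for nondegenerate restrictions in Derksen's setup, I would reduce the general case to it by factoring a restriction through projections and inclusions. That is the one routine check I would carry out explicitly.

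\textbf{Conclusion of the proposition.} Given these hypotheses, \Cref{prop:f-asymp} shows that $\uf(T) = \lim_n \operatorname{GSR}(T^{\otimes n})^{1/n}$ exists. It also shows that $\uf$ satisfies the conditions of \Cref{thm:minimization}, so $\uf(T) = \min_{\phi \in \calZ} \phi(T)$ for the closed set $\calZ = \{\phi : \phi \geq \uf\}$.

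\textbf{Identifying $\uf$ with $\ASR$.} Over a perfect field, Derksen gives constants $c, C > 0$ with
\[ c \cdot \SR(S) \leq \operatorname{GSR}(S) \leq C \cdot \SR(S) \]
for every tensor $S$, in particular for $S = T^{\otimes n}$. Taking $n$-th roots, the constants disappear in the limit. Hence the limit defining $\ASR(T)$ exists and equals $\uf(T)$, giving
\[ \ASR(T) = \min_{\phi\in\calZ}\phi(T). \]

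The main obstacle is purely bookkeeping: we must be sure that every property of $G$-stable rank used above holds over the given perfect (possibly finite) field, and not only over algebraically closed fields. I would handle this through field-extension invariance of $G$-stable rank, \cite[Theorem~2.5]{Derksen22GStableRank}, which lets one pass to the algebraic closure when verifying monotonicity and additivity. Perfectness is needed only for the comparison with slice rank.
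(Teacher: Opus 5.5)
Your proposal is correct and follows the paper's own argument. You apply \Cref{prop:f-asymp}, and hence \Cref{thm:minimization}, to the monotone, normalized, additive and (by hypothesis) supermultiplicative $G$-stable rank, and then use its constant-factor equivalence with slice rank over perfect fields to identify the regularization with $\ASR$. One small inaccuracy does not affect the proof: Derksen sets up $G$-stable rank over perfect fields from the start, so perfectness is used for more than the slice-rank comparison, but it is a standing hypothesis here anyway.
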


\subsubsection{Higher-mode tensors}

We briefly discuss how \Cref{thm:minimization} applies to higher-mode tensors. Let
\[ T \in V_1 \otimes \cdots \otimes V_d, \qquad d \geq 4. \]
Many familiar rank notions extend naturally to this setting, but new phenomena appear because there are now many inequivalent ways to bipartition the set of modes.

Let $\Par$ be a nonempty collection of bipartitions of $[d]$, that is, each element of $\Par$ is of the form $\{S, [d]\setminus S\}$ with $\emptyset \subsetneq S \subsetneq [d]$. The corresponding \emph{partition rank} was introduced by Naslund \cite{Nas20PR}. A tensor has \emph{$\Par$-rank one} if it can be written as $u \otimes v$ for some bipartition $\{S, [d]\setminus S\}\in \Par$, where
\[ u \in \bigotimes_{\varkappa\in S} V_\varkappa, \qquad v \in \bigotimes_{\varkappa\notin S} V_\varkappa. \]
The \emph{partition rank} $\PR^\Par(T)$ is the least number of $\Par$-rank-one tensors whose sum is $T$. When $\Par$ consists of all nontrivial bipartitions, we simply write $\PR(T)$.

This notion interpolates several familiar ranks. If $\Par=\{\{S,[d]\setminus S\}\}$ consists of a single bipartition, then $\PR^\Par(T)$ is exactly the rank of the corresponding flattening. If
\[ \Pars \coloneq \left\{ \{\{\varkappa\}, [d]\setminus\{\varkappa\}\} : \varkappa\in[d] \right\}, \]
then $\PR^{\Pars}(T)$ is the slice rank.

Thus partition ranks already recovers several standard tensor parameters. For $3$-mode tensors there are essentially no further choices beyond these up to permutation of the modes, whereas for $d\geq 4$ they form a much richer family.

The higher-order asymptotic spectrum is also less understood. Over $\bbC$, Christandl--Vrana--Zuiddam constructed the quantum functionals in every order \cite{CVZ23QFun}, and Sakabe--Do\u{g}an--Walter proved that they coincide with Strassen's support functionals \cite{SDW26Fun}. In particular, the spectral characterization of asymptotic slice rank remains valid over $\bbC$ in higher order as well \cite[Corollary~5.7]{CVZ23QFun}, \cite[Corollary~1.2]{SDW26Fun}.

Besides these genuinely $d$-mode functionals, one also gets spectral points of order $d$ by grouping several modes together and viewing a $d$-mode tensor as a tensor of smaller order; composing any spectral point on the lower-order tensor space with this grouping operation gives a spectral point on the original $d$-mode tensor space. At present, these are all the explicit spectral points that we know. We also note that \cite{CVZ23QFun} proposed a more general family of quantum functionals, parameterized by probability distributions supported on arbitrary collections $\Par$ of bipartitions, but only proved the spectral-point property in the case $\Par=\Pars$. These are what are now usually called the quantum functionals.

On the other hand, for partition rank itself we currently do not know an additive and sufficiently multiplicative surrogate to which a variant of \Cref{prop:f-asymp} could be applied. This is in contrast with the slice-rank discussion above, where $G$-stable rank provides precisely such a surrogate.

A natural candidate for partition rank would be geometric rank, which is additive \cite[Lemma~4.3]{KMZ23GR} and equivalent to partition rank up to a constant factor by \cite[Corollary~3]{CM23PR}. However, geometric rank is not supermultiplicative: an explicit counterexample is given in \cite[Remark~6.5]{KMZ23GR}. It would therefore be very interesting to prove for geometric rank some weaker multiplicativity property strong enough to support an analogue of \Cref{prop:f-asymp}, which would in turn yield an \emph{a priori} spectral characterization of asymptotic partition rank.

Fortunately, Moshkovitz and Zhu \cite{MZ26CRank} recently introduced a higher-order analogue of commutative rank. We fix the pair of modes $\{1,2\}$ for definiteness; the same construction works for any unordered pair of distinct modes. Choose bases $x_{1,1},\dots,x_{1,n_1}$ of $V_1$, $x_{2,1},\dots,x_{2,n_2}$ of $V_2$, and bases of the remaining modes, and write
\[ T = \sum_{i_1,\dots,i_d} a_{i_1,\dots,i_d} x_{1,i_1} x_{2,i_2}\cdots x_{d,i_d}. \]
Introduce indeterminates $\ov{z} = (z_{3,1},\dots,z_{3,n_3};\dots; z_{d,1},\dots,z_{d,n_d})$, and define the $n_1\times n_2$ matrix
\[ M_T(\ov{z}) \coloneq \left( \sum_{i_3,\dots,i_d} a_{i_1,\dots,i_d} z_{3,i_3}\cdots z_{d,i_d} \right)_{i_1,i_2}. \]
The \emph{multilinear commutative rank} of $T$ with respect to the pair $(1,2)$ is
\[ \CR_{1,2}(T) \coloneq \operatorname{rank}_{\bbF(\ov{z})} M_T(\ov{z}). \]
When $d=3$, this is exactly the commutative rank $\CR_{(3)}(T)$ discussed earlier.

The same arguments as in the $3$-mode case show that $\CR_{1,2}(T)$ is independent of the choice of bases, monotone under restriction, additive under direct sum, and supermultiplicative under tensor product. Thus its regularization
\[ \ACR_{1,2}(T) \coloneq \lim_{n\to\infty} \CR_{1,2}(T^{\otimes n})^{1/n} \]
is well defined by Fekete's lemma. From now on we abbreviate $\CR=\CR_{1,2}$ and $\ACR=\ACR_{1,2}$.

\begin{corollary} \label{cor:higher-mode-acr-minimize}
    For every $d\geq 4$, the asymptotic multilinear commutative rank $\ACR$ is a minimization of spectral points.
\end{corollary}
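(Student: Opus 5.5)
We apply \Cref{prop:f-asymp} to $f=\CR=\CR_{1,2}$ on the Strassen preordered semiring of $d$-mode tensors; \Cref{thm:minimization} then gives the conclusion directly. So the plan is to check that $\CR$ satisfies the four hypotheses of \Cref{prop:f-asymp}: monotone under restriction, normalized, additive under direct sum, and supermultiplicative under tensor product. The preceding discussion asserts these by ``the same arguments as in the $3$-mode case,'' so the proof mainly records them.

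Normalization is immediate. For the unit tensor $\ang{1}$, the matrix $M_{\ang{1}}(\ov z)$ is the $1\times 1$ matrix $z_{3,1}\cdots z_{d,1}$, which is nonzero, so $\CR(\ang{1})=1$.

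Additivity also follows directly. With bases adapted to the direct sum, $M_{T\oplus S}(\ov z)$ is block diagonal with blocks $M_T(\ov z')$ and $M_S(\ov z'')$. These blocks involve the coordinate sub-blocks of the variables, since a product of variables mixing indices of $T$ and $S$ has zero coefficient. Hence the rank adds over $\bbF(\ov z)$.

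Monotonicity uses the fact that a restriction $(A_1\otimes\cdots\otimes A_d)$ acts as follows: $A_1,A_2$ multiply $M_T$ on the left and right by scalar matrices. The maps $A_\varkappa$ for $\varkappa\ge 3$ correspond to substituting linear forms in the new variables of mode $\varkappa$ for the old variables of that mode. This substitution is a ring homomorphism between polynomial rings, and it can only kill minors, never create them. So the rank does not increase.

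Supermultiplicativity is the only step needing care. In the $3$-mode case we used \Cref{prop:cr-restr}, i.e.\ $T\ge\ang{1,q,1}$, after passing to an infinite field. Here I would argue directly instead. With product bases, the $(i_1i_1',\,i_2i_2')$ entry of $M_{T\otimes S}$ is a multilinear form in the variables $z_{\varkappa,(i,i')}$. Specializing each such variable to the product $z_{\varkappa,i}w_{\varkappa,i'}$ of fresh variables turns $M_{T\otimes S}$ into $M_T(\ov z)\otimes M_S(\ov w)$, the Kronecker product. This specialization is a ring homomorphism, so it cannot increase rank, and the Kronecker product has rank $\CR(T)\CR(S)$ over $\bbF(\ov z,\ov w)$. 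Therefore $\CR(T\otimes S)\ge\CR(T)\CR(S)$. Along the way I need field-extension invariance and basis independence, since ranks over different rational function fields are compared. Both are handled exactly as in the $3$-mode proposition: nonvanishing of a minor polynomial is field-independent, and a basis change in mode $\varkappa\ge3$ induces an automorphism of the polynomial ring.

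Finally, \Cref{prop:f-asymp} yields that $\ACR=\utilde{\CR}$ satisfies the hypotheses of \Cref{thm:minimization}, including monomial dominance, which follows from additivity. \Cref{thm:minimization} then produces the closed set $\calZ=\{\phi:\phi\ge\ACR\}$ with $\ACR(T)=\min_{\phi\in\calZ}\phi(T)$. I expect no real obstacle: the supermultiplicativity specialization is the only point that differs from the $3$-mode argument.
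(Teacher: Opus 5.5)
Your proposal is correct and follows the paper's route: the paper likewise gets the corollary by asserting that $\CR_{1,2}$ is monotone, normalized, additive and supermultiplicative ``by the same arguments as in the $3$-mode case,'' and then invoking \Cref{prop:f-asymp} and \Cref{thm:minimization}. Your one deviation is the supermultiplicativity step, where you specialize $z_{\varkappa,(i,i')}\mapsto z_{\varkappa,i}w_{\varkappa,i'}$ to obtain the Kronecker product $M_T(\ov z)\otimes M_S(\ov w)$; this is valid and avoids the $3$-mode argument's passage to an infinite field and the restriction $T\ge\ang{q}_{1,2}$.
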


At present this is only an \emph{a priori} characterization: we do not know whether the currently available spectral points already suffice to realize this minimum. In fact, we suspect they do not, and that higher-mode tensors will require genuinely new spectral points beyond the presently known support or quantum functionals and the points obtained from lower-order tensors by grouping modes together.

Moshkovitz and Zhu also proved that multilinear commutative rank is closely related to a partition rank. Let $\Par_{1,2}$ denote the set of all bipartitions separating $1$ and $2$. Then over an infinite field,
\[ \CR_{1,2}(T) \leq \PR^{\Par_{1,2}}(T) \leq 2^{d-2} \CR_{1,2}(T). \]
Hence $\CR_{1,2}$ and $\PR^{\Par_{1,2}}$ have the same regularization. In particular, \Cref{cor:higher-mode-acr-minimize} also provides an \emph{a priori} spectral characterization for this asymptotic partition rank.

We now prove that there must exist a spectral point beyond the known ones. We first list all known spectral points in the $4$-mode setting. The first family, of course, consists of the $4$-mode quantum functionals. We can also obtain spectral points by grouping modes together and applying the known spectral points to the grouped tensor. Let $\pi$ be a partition of $[4]$ into nonempty subsets. We write $F_\theta$ for the quantum functional corresponding to $\theta\in\calP(\pi)$. For example, when $\pi$ partitions $[4]$ into singletons, the functionals $F_\theta$ for $\theta\in\calP([4])$ are exactly the $4$-mode quantum functionals, while when $\pi = \{\{1,2\}, \{3\}, \{4\}\}$, the functional $F_\theta$ is a $3$-mode quantum functional after recognizing $V_1\otimes V_2$ as a single mode. We let
\[ \calQ_\pi = \{F_\theta : \theta \in \calP(\pi)\} \]
be the set of quantum functionals obtained from the partition $\pi$.

Also, $\pi$ cannot be the trivial partition $\{[4]\}$. Moreover, when $|\pi|=2$, the corresponding spectral points are just flattening ranks, which already appear either among the $4$-mode quantum functionals (when $\pi$ is a $1$-$3$ partition) or among the $3$-mode quantum functionals obtained after grouping (when $\pi$ is a $2$-$2$ partition). Therefore, we can summarize all known spectral points by
\[ \calQ = \calQ_{[4]} \cup \bigcup_{1\leq i < j \leq 4} \calQ_{\pi_{ij}}, \]
where $\pi_{ij}$ is the partition of $[4]$ into $\{i,j\}$ and all singletons in its complement.

Let $\calZ$ be the set of spectral points guaranteed by \Cref{thm:maximization} for $\ACR_{1,2}$. Then $\calZ$ is the set of spectral points $\phi$ such that $\phi(T) \geq \ACR_{1,2}(T)$ for all $T$. We will show that $\calZ$ must contain some point beyond $\calQ$. To do so, we first observe that some spectral points in $\calQ$ can already be ruled out.

\begin{lemma} \label{lem:exclude-spectral-points}
    
Let $F_\theta \in \calQ$, with $\theta \in \calP(\pi)$. Then $F_\theta \in \calZ$ if and only if $1$ and $2$ are still partitioned, and $\theta$ places all of its mass on $1, 2$, i.e.,
    \[ \sum_{\substack{S\in \pi \\ S \cap \{1, 2\} \neq \emptyset}} \theta_S = 1. \]
\end{lemma}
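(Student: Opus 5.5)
The plan is to exploit that $\calZ=\{\phi\in\calX : \phi(T)\geq \ACR_{1,2}(T)\ \forall T\}$, the set produced by \Cref{thm:minimization} via \Cref{prop:f-asymp}. Both directions then reduce to evaluating $F_\theta$ on one family of test tensors, the ``matrix'' tensors
\[ M_q \coloneq \sum_{i=1}^q e_i\otimes e_i\otimes f\otimes g \in \bbF^q\otimes\bbF^q\otimes\bbF\otimes\bbF, \]
which are just the identity matrix in modes $1,2$ padded with rank-one vectors in modes $3,4$. Directly from the definition, $\CR_{1,2}(M_q^{\otimes n})=q^n$, so $\ACR_{1,2}(M_q)=q$.

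\emph{Key computation.} Let $\theta\in\calP(\pi)$, and when $1,2$ lie in different blocks let $S_1\ni 1$ and $S_2\ni 2$ be those blocks. I would show
\[ F_\theta(M_q)=q^{\theta_{S_1}+\theta_{S_2}} \quad\text{if } S_1\neq S_2, \qquad F_\theta(M_q)=1 \quad\text{if } 1,2 \text{ lie in the same block}. \]
The argument goes through the entropy formula $F_\theta(T)=\max_{P\in\Delta(T)}2^{\sum_S\theta_S\eH(P_S)}$ applied to the grouped tensor.
\begin{itemize}
\item \emph{Same block.} After grouping, $M_q$ is a tensor with a one-dimensional space in every grouped mode. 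In the block containing $1,2$ it is the single vector $\sum_i e_i\otimes e_i$, and in every other block it is rank one. So every marginal is a point mass, all entropies vanish, and $F_\theta(M_q)=1$. (Alternatively, $M_q$ is isomorphic to $\ang{1}$ after grouping.)
\item \emph{Different blocks.} The grouped tensor is $\ang{q}$ in the modes $S_1,S_2$ and rank one elsewhere. The blocks other than $S_1,S_2$ have one-dimensional support, hence zero entropy. The marginals on $S_1,S_2$ have rank at most $q$, hence entropy at most $\log q$, giving the upper bound. The uniform point, realized by $M_q$ itself (it is critical), gives the matching lower bound.
\end{itemize}

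\emph{Necessity.} Suppose $F_\theta\in\calZ$. Then $F_\theta(M_q)\geq q$ for all $q$. By the key computation this forces $1$ and $2$ to lie in different blocks and $\theta_{S_1}+\theta_{S_2}=1$, which is exactly the stated mass condition.

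\emph{Sufficiency.} Suppose $1,2$ are separated and $\theta_{S_1}+\theta_{S_2}=1$, so that $F_\theta(M_q)=q$. Work over $\bbC$, which is infinite. Set $q=\CR_{1,2}(T^{\otimes n})$. As in \Cref{prop:cr-restr}, choose a specialization of the variables $\ov z$ at which a nonzero $q\times q$ minor of $M_{T^{\otimes n}}(\ov z)$ survives. Contracting modes $3,\dots,d$ with the corresponding linear functionals, and then applying suitable maps on modes $1,2$, gives the restriction $T^{\otimes n}\geq M_q$. Since $F_\theta$ is a spectral point, it is monotone and multiplicative, so
\[ F_\theta(T)^n\geq F_\theta(M_q)=q. \]
Taking $n$-th roots and letting $n\to\infty$ yields $F_\theta(T)\geq\ACR_{1,2}(T)$, i.e.\ $F_\theta\in\calZ$.

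The main obstacle is the lower bound $F_\theta(M_q)\geq q^{\theta_{S_1}+\theta_{S_2}}$ in the key computation, uniformly over all partitions $\pi$ appearing in $\calQ$. I would handle it by exhibiting the uniform-marginal point explicitly: the state $M_q/\sqrt q$ has maximally mixed marginals on $S_1$ and $S_2$ and pure marginals elsewhere, so that point lies in $\Delta(M_q)$. If that route is awkward, the fallback is the restrictions $\ang{q}\geq M_q$ and $M_q\otimes(\text{permuted copies})\geq\ang{q}$-type sandwiches combined with multiplicativity of $F_\theta$.
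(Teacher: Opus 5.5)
Your proposal is correct and follows the paper's proof essentially step for step. Your $M_q$ is the paper's $\ang{q}_{1,2}$; necessity compares $F_\theta(\ang{q}_{1,2})$, which is either $q^{\theta_{S_1}+\theta_{S_2}}$ or $1$, with $\ACR_{1,2}(\ang{q}_{1,2})=q$; and sufficiency uses the restriction $T^{\otimes n}\geq \ang{\CR_{1,2}(T^{\otimes n})}_{1,2}$ together with monotonicity and multiplicativity. The only cosmetic difference is that you compute $F_\theta(\ang{q}_{1,2})$ from the moment-polytope description, whereas the paper reads it off the support-functional formulation.
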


To prove this, we first introduce a useful family of tensors, namely diagonal tensors supported only on a subset of modes. For a nonempty $S\subseteq [4]$, we define $\ang{n}_S$ to be the diagonal tensor of size $n$ on the modes in $S$. For example, if we denote the four modes by $x,y,z,w$, then $\ang{n}_{1,3}$ is the tensor
\[ \ang{n}_{1,3} = \sum_{i=1}^n x_i y_1 z_i w_1. \]

\begin{proof}[Proof of \Cref{lem:exclude-spectral-points}]
    We first prove the ``only if'' direction. Let $T = \ang{n}_{1,2}$. Then $T$ is essentially the $n\times n$ identity matrix on the modes $V_1, V_2$.
    Therefore, we have $\CR_{1,2}(T) = n$ and thus $\ACR_{1,2}(T) = n$. On the other hand, from the support functional formulation of $F_\theta$, it's not hard to compute that, if $1, 2$ are partitioned by $\pi$, then
    \[ F_\theta(T) = n^{\sum_{S\in \pi : S\cap\{1,2\}\neq \emptyset} \theta_S}. \]
    Since $F_\theta \in \calZ$, we have $F_\theta(T) \geq \ACR_{1,2}(T)$, so the exponent must be at least $1$, which gives the desired conclusion.

    However, if $1, 2$ are not partitioned by $\pi$, since $T$ becomes an ``unit tensor'' in $(V_1\otimes V_2) \otimes V_3 \otimes V_4$, we have $F_\theta(T) = 1$, in that case $F_\theta \geq \ACR_{1,2}$ cannot hold.

    For the ``if'' direction, we may assume without loss of generality that $\bbF$ is infinite. For a tensor $T$ with $\CR_{1,2}(T) = q$, this means that $T \geq \ang{q}_{1,2}$, so we have $F_\theta(T) \geq F_\theta(\ang{q}_{1,2}) = q$. Taking tensor powers, we conclude that $F_\theta(T) \geq \ACR_{1,2}(T)$.
\end{proof}

Therefore, to show the existence of some spectral point in $\calZ$ beyond $\calQ$, it suffices to construct a tensor $T$ such that $\ACR_{1,2}(T)$ is strictly smaller than $F_\theta(T)$ for every $F_\theta \in \calQ \cap \calZ$, i.e.,
\[ \ACR_{1,2}(T) < \min_{F_\theta \in \calQ \cap \calZ} F_\theta(T). \]
Factoring $\calQ \cap \calZ = (\calQ_{[4]} \cap \calZ) \cup \bigcup_{\{i,j\} \neq \{1,2\}} (\calQ_{ij} \cap \calZ)$, we can rewrite the right-hand side in terms of the different components of quantum functionals:
\begin{itemize}
    \item For $\calQ_{[4]} \cap \calZ$, these are the functionals $F_{(\rho, 1-\rho, 0, 0)}$. Since the third and fourth modes both carry zero weight, this is equivalent to recognizing $V_3\otimes V_4$ as a single mode, so this also coincides with $\calQ_{34} \cap \calZ$. Their minimization is exactly the asymptotic commutative rank for the corresponding $3$-mode tensors. We denote this by $\ACR_{(\{3, 4\})}$ to distinguish it from $\ACR_{1,2}$.
    \item For the other $F_\theta \in \calQ_{ij} \cap \calZ$, namely those with $i\in [2]$ and $j\notin [2]$, their minimization amounts to first flattening $T$ into a tensor over the modes $V_i \otimes V_j$, $V_{3-i}$, and the remaining mode, and then taking the asymptotic commutative rank, treating $V_i \otimes V_j$ and $V_{3-i}$ as the two sides of the matrix. We write this as $\ACR_{\{i, j\}}$ for short.
    \item Since the partition corresponding to $\calQ_{12}$ didn't separate $1$ and $2$, we have $\calQ_{12} \cap \calZ = \emptyset$.
\end{itemize}
In sum, we can rewrite the desired inequality as
\[ \ACR_{1,2} (T) < \ACR_{(\{3,4\})}(T), \]
and
\[ \ACR_{1,2} (T) < \ACR_{\{i,j\}}(T) \]
for all $i\in [2]$ and $j\notin [2]$.

Our candidate for separating these quantities is the following tensor:
\[ T_{p,q} = \ang{p}_{1,3} \oplus \ang{q}_{1,4} \oplus \ang{p}_{2,4} \oplus \ang{q}_{2,3}. \]

We first give a basic estimate for $\CR_{\{i,j\}}(T_{p,q})$ for $i\in [2]$ and $j\notin [2]$.
\begin{lemma} \label{lem:acr_ij}
    We have the following lower bounds:
    \[ \ACR_{\{1,3\}}(T_{p,q}), \ACR_{\{2,4\}}(T_{p,q}) \geq q, \quad \ACR_{\{1,4\}}(T_{p,q}), \ACR_{\{2,3\}}(T_{p,q}) \geq p. \]
\end{lemma}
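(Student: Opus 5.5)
The plan is to exhibit, for each of the four flattenings, a single direct summand of $T_{p,q}$ that already has large commutative rank. Then I will pass to the regularization using $\ACR(T) = \sup_{n} \CR(T^{\otimes n})^{1/n} \ge \CR(T)$, which holds by the Fekete argument since $\CR$ is supermultiplicative. Recall that $\ACR_{\{i,j\}}$ is the asymptotic commutative rank of the $3$-mode tensor obtained by grouping $V_i \otimes V_j$ into one mode. The two matrix sides are $V_i\otimes V_j$ and $V_{3-i}$, and the variables come from the remaining mode. Commutative rank of a $3$-mode tensor is additive under direct sum and monotone under restriction. Grouping modes commutes with direct sums, in the sense that the grouped tensor of $T_{p,q}$ restricts to the grouped tensor of each summand via the coordinate projections. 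So it suffices to show $\CR \ge q$ (resp.\ $p$) for one summand in each case.

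I will write the four modes as $x,y,z,w$ and compute each case directly.

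For $\ACR_{\{2,3\}}$ (sides $V_2\otimes V_3$ and $V_1$, variables from $V_1$'s complement $V_4$), I use the summand $\ang{p}_{1,3} = \sum_{i=1}^p x_i y_1 z_i w_1$. The specialization $w_1 \mapsto 1$ gives the matrix $\sum_i x_i \otimes (y_1 z_i)$. The vectors $y_1 z_i$ are linearly independent in $V_2\otimes V_3$, so this matrix has rank $p$.

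For $\ACR_{\{1,4\}}$ (sides $V_1\otimes V_4$ and $V_2$, variables $V_3$), I use $\ang{p}_{2,4} = \sum_i x_1 y_i z_1 w_i$. Setting $z_1\mapsto 1$ gives $\sum_i (x_1 w_i)\otimes y_i$, of rank $p$.

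For $\ACR_{\{1,3\}}$ (sides $V_1\otimes V_3$ and $V_2$, variables $V_4$), I use $\ang{q}_{2,3} = \sum_i x_1 y_i z_i w_1$. This gives $\sum_i (x_1 z_i)\otimes y_i$, of rank $q$.

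For $\ACR_{\{2,4\}}$ (sides $V_2\otimes V_4$ and $V_1$, variables $V_3$), I use $\ang{q}_{1,4} = \sum_i x_i y_1 z_1 w_i$. This gives $\sum_i x_i\otimes(y_1 w_i)$, of rank $q$.

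In each case a specialization of the variables has rank $r$, so the generic rank is at least $r$. By monotonicity or additivity, $\CR_{\{i,j\}}(T_{p,q}) \ge r$, and hence $\ACR_{\{i,j\}}(T_{p,q}) \ge r$.

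There is no serious obstacle here; the work is bookkeeping. The only care needed is to match the summand to the pairing: it must be the one whose two supported modes land on opposite sides of the grouped matrix, with its unsupported mode serving as the variable mode. I also need to check that grouping modes preserves direct sums and restrictions, which is immediate because the coordinate projections onto a summand act mode-wise.
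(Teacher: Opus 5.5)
Your proposal is correct and follows the paper's proof. You restrict to the summand whose two supported modes fall on opposite sides of the grouped matrix (e.g.\ $\ang{q}_{2,3}$ for $\ACR_{\{1,3\}}$), observe that it flattens to an identity matrix of size $q$ (resp.\ $p$), and pass to the regularization via $\ACR \geq \CR$; the paper writes out only the $\{1,3\}$ case and calls the rest analogous, whereas you check all four pairings explicitly, and your matching of summands to pairings is right in each case.
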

\begin{proof}
    We only prove $\ACR_{\{1,3\}}(T_{p,q}) \geq q$; the others are analogous. First, since $T_{p,q} \geq \ang{q}_{2,3}$, we may restrict attention to this summand. After flattening $T_{p,q}$ into a tensor over the modes $V_1 \otimes V_3$, $V_2$, and $V_4$, the tensor $\ang{q}_{2,3}$ becomes the identity matrix of size $q$ on the modes $V_1\otimes V_3$ and $V_2$, so $\CR_{\{1,3\}}(T_{p,q}) \geq q$. Taking tensor powers and limits gives $\ACR_{\{1,3\}}(T_{p,q}) \geq q$.
\end{proof}

We next compute $\ACR_{(\{3,4\})}(T_{p,q})$. For this we use \Cref{thm:acr_formula}, which will be proved later.
\begin{lemma} \label{lem:acr_34}
    We have $\ACR_{(\{3,4\})}(T_{p,q}) = 2(\sqrt p + \sqrt q)$.
\end{lemma}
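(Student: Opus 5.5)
The plan is to view $T_{p,q}$ as a $3$-mode tensor in $V_1\otimes V_2\otimes (V_3\otimes V_4)$ and to apply \Cref{thm:acr_formula}. This theorem gives $\ACR_{(\{3,4\})}(T_{p,q}) = \min_{\rho\in[0,1]} \zeta^{(\rho,1-\rho,0)}(T_{p,q})$, where the grouped mode $V_3\otimes V_4$ plays the role of the third mode and carries weight $0$. The computation then reduces to evaluating the edge support functionals on each direct summand and minimizing over $\rho$.

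First I compute $\zeta^{(\rho,1-\rho,0)}$ on each summand. After grouping, $\ang{p}_{1,3} = \sum_i x_i y_1 (z_i w_1)$ is a tensor whose first and third modes have dimension $p$ and whose second mode has dimension $1$. Up to relabeling of modes it is the matrix multiplication tensor with $E=p$, $H=L=1$ (that is, $\ang{p,1,1}$ with $x_{i1}=x_i$, $y_{11}=y_1$, $z_{1i}=z_iw_1$). Hence $\zeta^{(\rho,1-\rho,0)}(\ang{p}_{1,3}) = p^\rho$.

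The same value can be checked directly from \eqref{eqn:supp}. The support is $\{(i,1,i)\}$, so the weighted entropy $\rho\eH(P_1)+(1-\rho)\eH(P_2)$ is at most $\rho\log p$, with equality for the uniform distribution. Every basis change yields a support that still projects onto all $p$ first coordinates in a "diagonal" fashion, since the first flattening rank equals $p$. So the value is exactly $p^\rho$.

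Likewise, $\ang{q}_{1,4}$ gives $q^\rho$. The summands $\ang{p}_{2,4}$ and $\ang{q}_{2,3}$ have the second mode as the nontrivial matrix side, so they give $p^{1-\rho}$ and $q^{1-\rho}$.

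Since $\zeta^{(\rho,1-\rho,0)}$ is a spectral point by \Cref{cor:edge-universal}, it is additive under direct sum (additivity was in any case already known from Strassen). Therefore
\[ \zeta^{(\rho,1-\rho,0)}(T_{p,q}) = p^\rho + p^{1-\rho} + q^\rho + q^{1-\rho}. \]

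Finally I minimize over $\rho\in[0,1]$. For $a\geq 1$ the function $g_a(\rho) = a^\rho + a^{1-\rho}$ is convex in $\rho$, being a sum of exponentials, and symmetric under $\rho\mapsto 1-\rho$. Hence $g_a$ attains its minimum at $\rho=1/2$, with value $2\sqrt a$. Because both $g_p$ and $g_q$ are minimized at the same point $\rho=1/2$, the minimum of their sum is $2\sqrt p + 2\sqrt q$, which is the claimed value.

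The main obstacle is justifying the values on the summands rigorously, specifically the lower bound $\zeta^{(\rho,1-\rho,0)}(\ang{p}_{1,3}) \geq p^\rho$ under all basis changes. I would handle this by recognizing the summand as a matrix multiplication tensor and invoking the known values \eqref{eqn:itp_mm} of $\zeta^\theta$ on such tensors, rather than redoing the basis minimization by hand.
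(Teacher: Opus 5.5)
Your proposal is correct and follows the paper's proof. The paper likewise recognizes the flattened $T_{p,q}$ as $\ang{p,1,1}\oplus\ang{q,1,1}\oplus\ang{1,1,p}\oplus\ang{1,1,q}$, applies \Cref{thm:acr_formula} using the values $E^\rho H L^{1-\rho}$ on matrix multiplication tensors together with additivity, and minimizes at $\rho=1/2$; your convexity-and-symmetry argument for the minimizer simply makes explicit a step the paper leaves implicit.
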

\begin{proof}
    After flattening $T_{p,q}$ into a tensor over the modes $V_1$, $V_2$, and $V_3 \otimes V_4$, we can recognize $T_{p,q}$ as
    \[ \ang{p,1,1} \oplus \ang{q,1,1} \oplus \ang{1,1,p} \oplus \ang{1,1,q}, \]
    so by \Cref{thm:acr_formula} we have
    \[ \ACR_{(\{3,4\})}(T_{p,q}) = \min_{\rho \in [0, 1]} p^\rho + q^\rho + p^{1-\rho} + q^{1-\rho} = 2(\sqrt p + \sqrt q) \]
    which is attained at $\rho = 1/2$.
\end{proof}

Finally, we cannot compute $\ACR_{1,2}(T_{p,q})$ by spectral methods, since we do not yet know the spectral points that characterize it, but we can analyze it directly. By definition, to compute $\ACR_{1,2}(T_{p,q})$ we need to estimate $\CR_{1,2}(T_{p,q}^{\otimes n})$ for large $n$. Since $\CR_{1,2}$ is additive, we only need to consider the expansion
\[ T_{p,q}^{\otimes n} = \bigoplus_{n_{13} + n_{14} + n_{23} + n_{24} = n} \binom{n}{n_{13}, n_{14}, n_{23}, n_{24}} \odot \ang{p^{n_{13}}}_{1,3} \otimes \ang{q^{n_{14}}}_{1,4} \otimes \ang{q^{n_{23}}}_{2,3} \otimes \ang{p^{n_{24}}}_{2,4}. \]

We express these tensor products as \emph{graph tensors}, which have been studied before, e.g.~in \cite{CZ19Graph,CVZ19Graph}, although we use a slightly different notation. We use weighted graphs to denote the tensor products
\begin{align*}
    \Ang{\fourcycle{n_{13}}{n_{23}}{n_{24}}{n_{14}}} & \coloneq \ang{n_{13}}_{1,3} \otimes \ang{n_{23}}_{2,3} \otimes \ang{n_{24}}_{2,4} \otimes \ang{n_{14}}_{1,4} \\
    &= \sum_{\substack{i_{13} \in [n_{13}] \\ i_{23} \in [n_{23}] \\ i_{24} \in [n_{24}] \\ i_{14} \in [n_{14}]}} x_{i_{13} i_{14}} y_{i_{23} i_{24}} z_{i_{13} i_{23}} w_{i_{14} i_{24}}.
\end{align*}
More generally, for a graph $G$ with vertex set $[d]$ and edge weights $n_{ij}$, we define the corresponding graph tensor $\ang{G}$ as a $d$-mode tensor by
\[ \ang{G} = \bigotimes_{1\leq i < j \leq d} \ang{n_{ij}}_{i,j}. \]

Now we compute the multilinear commutative rank of these graph tensors. We remark that this is a special case of a more general result known as quantum max-flow \cite{CFSSM16MaxFlow}.
\begin{lemma} \label{lem:quantum-max-flow}
    The multilinear commutative rank is
    \[ \CR_{1,2}\left(\Ang{\fourcycle{n_{13}}{n_{23}}{n_{24}}{n_{14}}}\right) = \min(n_{13}, n_{23}) \cdot \min(n_{14}, n_{24}). \]
\end{lemma}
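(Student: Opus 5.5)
The plan is to compute the symbolic matrix directly and show that it factors as a Kronecker product. Index the first mode by pairs $(i_{13},i_{14})$ and the second by pairs $(i_{23},i_{24})$. Substituting indeterminates $z_{i_{13}i_{23}}$ for the third mode and $w_{i_{14}i_{24}}$ for the fourth, the matrix $M_T(\ov z)$ has entry $z_{i_{13}i_{23}}\,w_{i_{14}i_{24}}$ in row $(i_{13},i_{14})$ and column $(i_{23},i_{24})$. Hence
\[ M_T(\ov z) = Z \otimes W, \]
where $Z=(z_{ab})$ is a generic $n_{13}\times n_{23}$ matrix of independent indeterminates and $W=(w_{cd})$ is a generic $n_{14}\times n_{24}$ matrix of independent indeterminates. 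The whole statement then reduces to computing the rank of this Kronecker product over $\bbF(\ov z)$.

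Over a field, the rank of a Kronecker product is multiplicative: $\operatorname{rank}(Z\otimes W)=\operatorname{rank}(Z)\operatorname{rank}(W)$. We apply this over the field $\bbF(\ov z)$, which contains the entries of both factors. A matrix whose entries are distinct independent indeterminates has full generic rank $\min$ of its dimensions, since its leading square minor has a nonzero monomial of the form (product of the diagonal indeterminates) that appears with coefficient $1$. So
\[ \operatorname{rank} Z=\min(n_{13},n_{23}), \qquad \operatorname{rank} W=\min(n_{14},n_{24}), \]
and the formula follows.

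The only point needing care is the bookkeeping that identifies $M_T$ with $Z\otimes W$. This amounts to unwinding the definition $\ang{G}=\bigotimes\ang{n_{ij}}_{i,j}$. The displayed coordinate formula in the statement already gives $x_{i_{13}i_{14}}\,y_{i_{23}i_{24}}\,z_{i_{13}i_{23}}\,w_{i_{14}i_{24}}$, so each nonzero coefficient is $1$ exactly for the matching index patterns, and the identification can be read off from it. No field-size assumption is needed, because rank over the rational function field is purely algebraic. A more conceptual alternative would go through $\CR_{1,2}$ being supermultiplicative together with a flattening upper bound, but the direct Kronecker computation is the route I would follow.
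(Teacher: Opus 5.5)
Your proof is correct, but it takes a different route from the paper's. The paper never writes down the symbolic matrix. It gets the upper bound from four separate estimates: the two side lengths $n_{13}n_{14}$ and $n_{23}n_{24}$ of $M_T$, plus explicit decompositions of $T$ into $n_{13}n_{24}$ (resp.\ $n_{14}n_{23}$) products of a tensor in $V_1\otimes V_4$ with a tensor in $V_2\otimes V_3$, each contributing a rank-one piece. For the lower bound, it first restricts to the case $n_{13}=n_{23}$, $n_{14}=n_{24}$. It then applies $\ang{n,1,n}\geq\ang{1,n,1}$ along each of the paths $1$--$3$--$2$ and $1$--$4$--$2$ to get $T\geq\ang{n_{13}n_{14}}_{1,2}$, and concludes by monotonicity.

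Your identification $M_T=Z\otimes W$, with $Z$ and $W$ generic matrices in disjoint sets of indeterminates, together with multiplicativity of rank under Kronecker products, settles both bounds at once. It also makes transparent the factorization the paper relies on in the next proposition: the four-cycle value is the product of the values for the two paths through modes $3$ and $4$. What the paper's argument buys in return is its cut-versus-flow organization. That form still gives (possibly non-matching) upper and lower bounds for general graph tensors, where the symbolic matrix need not be a single Kronecker product of independent generic matrices. For example, an edge between modes $3$ and $4$ turns $M_T$ into a sum of such products.

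Your remark that no field-size assumption is needed is right. The paper's argument needs none either, since monotonicity of $\CR_{1,2}$ under restriction and the explicit decompositions hold over any field.
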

\begin{proof}
    Let the tensor be $T$. Clearly $\CR_{1,2}(T) \leq n_{13} \cdot n_{14}$ and $\CR_{1,2}(T) \leq n_{23} \cdot n_{24}$, since these are the side lengths of the matrix. To prove $\CR_{1,2}(T) \leq n_{13} n_{24}$, we use the decomposition
    \[ T = \sum_{\substack{i_{13}\in [n_{13}] \\ i_{24} \in [n_{24}]}} \left(\sum_{i_{14} \in [n_{14}]} x_{i_{13} i_{14}} w_{i_{14} i_{24}}\right) \left(\sum_{i_{23} \in [n_{23}]} y_{i_{23} i_{24}} z_{i_{13} i_{23}} \right), \]
    which gives a rank decomposition of size $n_{13} n_{24}$. The same argument gives $\CR_{1,2}(T) \leq n_{14} n_{23}$ as well. This proves the upper bound.

    For the lower bound, without loss of generality we may assume $n_{13}=n_{23}$ and $n_{14}=n_{24}$; otherwise, we can simply restrict to a smaller tensor. In this case, $\ang{n_{13}}_{1,2}\otimes \ang{n_{13}}_{2,3}$ is the matrix multiplication tensor $\ang{n_{13}, 1, n_{13}}$ over the modes $V_1, V_2, V_3$. Since we know that $\ang{n,1,n} \geq \ang{1,n,1} = \ang{n}_{1,2}$, this gives
    $\ang{n_{13}}_{1,2}\otimes \ang{n_{13}}_{2,3} \geq \ang{n_{13}}_{1,2}$, and hence
    \[ T \geq \ang{n_{13}}_{1,2} \otimes \ang{n_{14}}_{1,2} = \ang{n_{13} n_{14}}_{1,2}, \]
    so $\CR_{1,2}(T) \geq n_{13} n_{14}$, which completes the proof.
\end{proof}

\begin{proposition}
    The asymptotic multilinear commutative rank of $T_{p,q}$ is
    \[ \ACR_{1,2}(T_{p,q}) = \min_{\rho\in [0, 1]} 2(p^\rho + q^{1-\rho}). \]
\end{proposition}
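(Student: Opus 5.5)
The plan is to compute $\CR_{1,2}(T_{p,q}^{\otimes n})$ exactly from the direct-sum expansion above, and then read off its exponential growth rate. Additivity of $\CR_{1,2}$, combined with \Cref{lem:quantum-max-flow} applied to each summand, gives
\[ \CR_{1,2}(T_{p,q}^{\otimes n}) = \sum_{n_{13}+n_{14}+n_{23}+n_{24}=n} \binom{n}{n_{13},n_{14},n_{23},n_{24}} \min\bigl(p^{n_{13}}, q^{n_{23}}\bigr)\cdot \min\bigl(q^{n_{14}}, p^{n_{24}}\bigr). \]
Here the edge weights $p^{n_{13}}, q^{n_{23}}, q^{n_{14}}, p^{n_{24}}$ replace $n_{13}, n_{23}, n_{14}, n_{24}$ in the lemma. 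Everything that follows is an analysis of this explicit sum.

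\emph{Upper bound.} For any $\lambda,\mu\in[0,1]$ we use the weighted AM--GM-type bounds
\[ \min(x,y)\le x^\lambda y^{1-\lambda}, \qquad \min(x',y')\le x'^{\mu}y'^{1-\mu}. \]
Applying these termwise, the multinomial theorem bounds the sum by
\[ \bigl(p^\lambda + q^{1-\lambda} + q^{\mu} + p^{1-\mu}\bigr)^n. \]
Taking $n$-th roots and optimizing over $\lambda$ and $\mu$ gives an upper bound on $\ACR_{1,2}(T_{p,q})$. The two parameters separate. Substituting $\mu = 1-\rho'$ turns $q^\mu + p^{1-\mu}$ into $p^{\rho'} + q^{1-\rho'}$, which is the same function as $p^\lambda+q^{1-\lambda}$. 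So the optimum is attained with both pieces equal, and we obtain
\[ \ACR_{1,2}(T_{p,q}) \le \min_{\rho} 2\bigl(p^\rho+q^{1-\rho}\bigr). \]

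\emph{Lower bound.} Restrict to type classes. Let $P=(a,b,c,d)$ be a distribution on the four edges $13,14,23,24$, and take $n_{13}\approx an$, $n_{14}\approx bn$, $n_{23}\approx cn$, $n_{24}\approx dn$. The corresponding single term has size
\[ 2^{n\left(\eH(P) + \min(a\log p,\, c\log q) + \min(b\log q,\, d\log p)\right) - o(n)}. \]
Hence $\log \ACR_{1,2}(T_{p,q})$ is at least the maximum over $P$ of the exponent $\eH(P) + \min(a\log p, c\log q) + \min(b\log q, d\log p)$. Writing each minimum as $\min_{\lambda\in[0,1]}$ (respectively $\min_{\mu\in[0,1]}$) of a convex combination, this becomes
\[ \max_P \min_{\lambda,\mu} \Bigl[\eH(P) + \lambda a\log p + (1-\lambda)c\log q + \mu b\log q + (1-\mu) d\log p\Bigr]. \]
The objective is concave in $P$ and affine in $(\lambda,\mu)$, and both domains are compact and convex. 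So the minimax theorem lets us swap $\max$ and $\min$. For fixed $(\lambda,\mu)$, the inner maximum of entropy plus a linear functional is the log-partition function
\[ \log\bigl(p^\lambda + q^{1-\lambda} + q^\mu + p^{1-\mu}\bigr). \]
This matches the upper bound exactly.

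The only genuinely delicate point is the lower-bound step: the minimax exchange, together with the standard rounding of $P$ to integer types. The rounding costs only $2^{o(n)}$, and the polynomial number of types does not matter. Everything else is bookkeeping on top of \Cref{lem:quantum-max-flow} and additivity.
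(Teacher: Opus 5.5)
Your argument is correct, but it takes a different route from the paper. The paper also starts from additivity and \Cref{lem:quantum-max-flow}. It then uses the product form of each summand to regroup the sum by $k=n_{13}+n_{23}$ as
\[ \sum_k\tbinom{n}{k}\,\CR_{1,2}\bigl((\ang{p}_{1,3}\oplus\ang{q}_{2,3})^{\otimes k}\bigr)\,\CR_{1,2}\bigl((\ang{q}_{1,4}\oplus\ang{p}_{2,4})^{\otimes (n-k)}\bigr). \]
It evaluates each factor as $r^{k-o(k)}$, with $r=\min_\rho(p^\rho+q^{1-\rho})$, by invoking \Cref{thm:acr_formula} on these $3$-mode tensors; the factor $2^n$ then comes from $\sum_k\binom{n}{k}$.

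You instead read off the growth rate of the explicit sum directly. For the upper bound you use the termwise estimate $\min(x,y)\le x^\lambda y^{1-\lambda}$ and the multinomial theorem. For the lower bound you combine the method of types, the minimax exchange and the Gibbs variational formula; all of these steps check out.

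What your route buys is independence from the paper's main theorem and its HN-filtration machinery. In effect it reproves $\ACR(\ang{p}_{1,3}\oplus\ang{q}_{2,3})=r$ for these particular tensors, since \Cref{lem:quantum-max-flow} already gives the exact commutative rank of every summand. The cost is a small convex-analysis step. The paper's route is shorter given its machinery, and it makes visible that the $4$-mode quantity splits into two $3$-mode edge-functional computations.
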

\begin{proof}
    \Cref{lem:quantum-max-flow} in fact tells us an important simplification:
    \[ \CR_{1,2}\left( \Ang{\fourcycle{n_{13}}{n_{23}}{n_{24}}{n_{14}}} \right) = \CR_{1,2}\left( \Ang{\fourcycle{n_{13}}{n_{23}}{}{}} \right) \CR_{1,2}\left( \Ang{\fourcycle{}{}{n_{24}}{n_{14}}} \right), \]
    where the unweighted edges have unit weight. We now use the additivity of $\CR_{1,2}$ to get
    \begin{align*}
        \CR_{1,2}(T_{p,q}^{\otimes n})
        &= \CR_{1,2}\left( \bigoplus_{n_{13} + n_{14} + n_{23} + n_{24} = n} \binom{n}{n_{13}, n_{14}, n_{23}, n_{24}} \odot \Ang{\fourcycle{p^{n_{13}}}{q^{n_{23}}}{p^{n_{24}}}{q^{n_{14}}}} \right) \\
        &= \sum_{n_{13} + n_{14} + n_{23} + n_{24} = n} \binom{n}{n_{13}, n_{14}, n_{23}, n_{24}} \CR_{1,2}\left( \Ang{\fourcycle{p^{n_{13}}}{q^{n_{23}}}{p^{n_{24}}}{q^{n_{14}}}} \right) \\
        &= \sum_{k = 0}^n \binom{n}{k} \CR_{1,2}((\ang{p}_{1,3} \oplus \ang{q}_{2,3})^{\otimes k}) \CR_{1,2} ((\ang{q}_{1,4} \oplus \ang{p}_{2,4})^{\otimes n-k}).
    \end{align*}
    But computing $\CR_{1,2}((\ang{p}_{1,3} \oplus \ang{q}_{2,3})^{\otimes k})$ as a multilinear commutative rank is the same as omitting the fourth mode and computing commutative rank as a $3$-mode tensor. Therefore, by \Cref{thm:acr_formula} we have
    \[ \CR_{1,2}((\ang{p}_{1,3} \oplus \ang{q}_{2,3})^{\otimes k}) = r^{k-o(k)}, \]
    where
    \[ r = \min_{\rho\in [0, 1]} p^\rho + q^{1-\rho}. \]
    The same argument applies to $\CR_{1,2} ((\ang{q}_{1,4} \oplus \ang{p}_{2,4})^{\otimes n-k})$, so we have
    \begin{align*}
        \CR_{1,2}(T_{p,q}^{\otimes n}) &= \sum_{k=0}^n \binom{n}{k} r^{k-o(k)} r^{(n-k)-o(n-k)}\\
        &= \sum_{k=0}^n \binom{n}{k}r^{n-o(n)}\\
        &= 2^n r^{n-o(n)}.
    \end{align*}
    This gives $\ACR_{1,2}(T_{p,q}) = 2r = \min_{\rho\in [0, 1]} 2(p^\rho + q^{1-\rho})$, as claimed.
\end{proof}

\begin{theorem} \label{thm:separation}
    There exists a spectral point $\phi \in \calZ \setminus \calQ$ that is needed to characterize $\ACR_{1,2}$, but is not among the quantum functionals, even allowing those obtained by grouping modes together.
\end{theorem}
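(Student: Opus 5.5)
The plan is to exhibit specific $p,q$ for which $T=T_{p,q}$ satisfies
\[ \ACR_{1,2}(T) < \min_{F_\theta\in\calQ\cap\calZ} F_\theta(T). \]
By \Cref{cor:higher-mode-acr-minimize}, together with the choice of $\calZ$ from \Cref{thm:minimization} (the set of $\phi$ with $\phi\geq \ACR_{1,2}$ everywhere), we have $\ACR_{1,2}(T)=\min_{\phi\in\calZ}\phi(T)$. The minimum is attained by some $\phi\in\calZ$, and if $\phi$ lay in $\calQ$ then $\phi\in\calQ\cap\calZ$, which would contradict the strict inequality. So $\phi\in\calZ\setminus\calQ$. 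Compactness of $\calZ$, which is closed in the compact $\calX$, guarantees that the minimum is a genuine minimum.

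To reduce the strict inequality to explicit quantities, we use \Cref{lem:exclude-spectral-points} and the decomposition after it. The elements of $\calQ\cap\calZ$ are the functionals $F_{(\rho,1-\rho,0,0)}$ (equivalently, those in $\calQ_{34}\cap\calZ$) and those in $\calQ_{ij}\cap\calZ$ with $i\in[2]$, $j\notin[2]$. For the first family, the minimum over $\rho$ of $F_{(\rho,1-\rho,0,0)}(T)$ is bounded below by (in fact equals) $\ACR_{(\{3,4\})}(T)$. The reason is that each such functional, viewed on the grouped $3$-mode tensor, lies in $\calZ_{(3)}$ of \Cref{cor:acr-minimize-edge}, so it dominates $\ACR$ of the grouped tensor. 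Likewise, each functional in $\calQ_{ij}\cap\calZ$ is a $3$-mode functional with $\gamma_2=1$ for the grouping $(V_i\otimes V_j, V_{3-i}, \text{rest})$, hence it dominates $\ACR_{\{i,j\}}(T)$. It therefore suffices to prove $\ACR_{1,2}(T)<\ACR_{(\{3,4\})}(T)$ and $\ACR_{1,2}(T)<\ACR_{\{i,j\}}(T)$ for all four pairs $(i,j)$.

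Next we plug in the computed values. By the last proposition, $\ACR_{1,2}(T)=2\min_\rho(p^\rho+q^{1-\rho})$. Taking $\rho=1/2$ and $p=q$ gives $\ACR_{1,2}(T)\le 4\sqrt p$, and in fact with $p=q$ the minimum is $\min_\rho(p^\rho+p^{1-\rho})=2\sqrt p$, so $\ACR_{1,2}(T)=4\sqrt p$. \Cref{lem:acr_34} gives $\ACR_{(\{3,4\})}(T)=4\sqrt p$ as well, which is not strict. So $p=q$ is the wrong choice, and we take $p\neq q$ instead.

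With $p\ne q$, we have $\min_\rho(p^\rho+q^{1-\rho})\le \min(p,q)+1$ at the endpoints. For instance, with $q=1$ this gives $\ACR_{1,2}\le 2\min_\rho(p^\rho+1)=4$, attained at $\rho=0$. On the other side, $\ACR_{(\{3,4\})}=2(\sqrt p+1)>4$ once $p>1$. For the pairs, \Cref{lem:acr_ij} gives only $\ge q$ or $\ge p$, and the bound $\ge q=1$ is too weak. Hence we choose $p,q$ both large but unbalanced. Taking $q=p^2$, $\ACR_{1,2}=2\min_\rho(p^\rho+p^{2-2\rho})$; at $\rho=2/3$ this equals $4p^{2/3}$, and by convexity that is the minimum. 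Then $\ACR_{(\{3,4\})}=2(p^{1/2}+p)>4p^{2/3}$ for large $p$, and the lower bounds from \Cref{lem:acr_ij} are $\min(p,q)=p>4p^{2/3}$ for large $p$ (say $p=2^{12}$). All strict inequalities then hold, and the argument closes.

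The main obstacle is justifying the domination claims rigorously: that each minimum over $\calQ_{ij}\cap\calZ$ (respectively over the $\calQ_{34}$ family) is at least the corresponding $3$-mode $\ACR$. The first thing to try is to check, for each grouped quantum functional $F_\theta$ in $\calZ$, that its $3$-mode matrix-multiplication parameter satisfies $\gamma_2=1$ with respect to the right pair of sides. This should follow directly from the evaluation on $\ang{n}_{1,2}$ in the proof of \Cref{lem:exclude-spectral-points}, after which \Cref{cor:acr-minimize-edge} applies. The convexity calculation for $\ACR_{1,2}$ is routine.
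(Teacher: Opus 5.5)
Your proposal is correct and follows essentially the paper's route: the same tensor $T_{p,q}$, the same reduction via \Cref{lem:exclude-spectral-points} to beating $\ACR_{(\{3,4\})}$ and the four $\ACR_{\{i,j\}}$, and the same argument that the minimizer cannot lie in $\calQ$; the only difference is that you fix the explicit unbalanced choice $q=p^2$, $p=2^{12}$, whereas the paper takes $p\neq q$ both large and gets strictness from $\rho=1/2$ not minimizing $p^\rho+q^{1-\rho}$. One harmless slip: $\rho=2/3$ is not the minimizer of $p^\rho+p^{2-2\rho}$ (the derivative there is $-p^{2/3}\ln p<0$), but you only use the upper bound $\ACR_{1,2}(T)\le 4p^{2/3}$, which remains valid.
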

\begin{proof}
    We take $p \neq q$ and both sufficiently large. Then, by \Cref{lem:acr_ij,lem:acr_34}, we have
    \[ \min_{\phi \in \calZ \cap \calQ} \phi(T_{p,q}) \]
    is dominated by $\ACR_{(\{3,4\})}(T_{p,q}) = 2(\sqrt p + \sqrt q)$, while $\ACR_{1,2}(T_{p,q}) = \min_{\rho\in [0, 1]} 2(p^\rho + q^{1-\rho})$ is strictly smaller than $2(\sqrt p + \sqrt q)$ when $p \neq q$ (it attains $2(\sqrt p+\sqrt q)$ when $\rho=1/2$). This gives the desired separation.
\end{proof}

\begin{remark} \label{rmk:interpolation}
    Finally, we remark that, for the purpose of proving the existence of a spectral point beyond $\calQ$ (\Cref{prop:high-order-intro}), one could also appeal to a higher-order analogue of star-convexity proved by Wigderson and Zuiddam \cite[Theorem~12.13]{WZ26Spectra}. In the $4$-mode case, their result says the following. For any spectral point $\phi$ and any flattening rank along one mode $\zeta_{(\varkappa)}$, there is an interpolation between these two spectral points over graph tensors: for any $\lambda \in [0, 1]$, there is a spectral point $\phi_\lambda$ such that for any $4$-vertex weighted graph $G$,
\[ \phi_\lambda(\ang{G}) = \zeta_{(\varkappa)}(\ang{G})^\lambda \phi(\ang{G})^{1-\lambda}. \]
    We indeed have spectral points that are not of the form $\zeta_{(\varkappa)}$, for example the flattening rank along the bipartition $\{\{1,2\}, \{3, 4\}\}$, which we denote by $\zeta_{\{1,2\}, \{3,4\}}$. Applying the above interpolation to $\zeta_{\{1,2\}, \{3,4\}}$ together with the flattening rank $\zeta_{(1)}$ along the first mode, we obtain a spectral point $\phi_\lambda$ such that
\[ \phi_\lambda(\ang{G}) = \zeta_{(1)}(\ang{G})^\lambda \zeta_{\{1,2\}, \{3,4\}}(\ang{G})^{1-\lambda}. \]
    Although this particular functional can be covered by a quantum functional in $\calQ_{34}$, it can be further interpolated with $\zeta_{(3)}$ to obtain a functional $\phi_{\alpha \beta}$ such that
\[ \phi_{\alpha \beta}(\ang{G}) = \zeta_{(1)}(\ang{G})^\alpha \zeta_{(3)}(\ang{G})^\beta \zeta_{\{1,2\}, \{3,4\}}(\ang{G})^{1-\alpha-\beta}. \]
    When $\alpha, \beta > 0$ and $\alpha + \beta < 1$, one can verify that no spectral point in $\calQ$ can cover $\phi_{\alpha \beta}$. This provides an alternative proof of the existence of some spectral point beyond $\calQ$.
\end{remark}

\subsubsection{Discussion of the monomially dominant condition} \label{sec:monomial_dominant}

We show that the monomially dominant hypothesis in \Cref{thm:minimization} is sufficient but not necessary.

\begin{proposition} \label{prop:counterexample-monomial-dom}
    There exists a minimization of spectral points that is not monomially dominant.
\end{proposition}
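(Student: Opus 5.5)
We will prove \Cref{prop:counterexample-monomial-dom} with a two-point toy semiring, so that everything can be computed by hand. Take the semiring $\calR = \{(0,0)\}\cup \bbZ_{\geq 1}^2$ from \Cref{rmk:bv25}, with entrywise addition, multiplication and preorder. That remark already records that this is a Strassen preorder. The nonzero elements have both coordinates $\geq 1$, which gives the strong Archimedean property, and the naturals embed diagonally as $n\mapsto (n,n)$.

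\emph{Step 1: the two spectral points.} The projections $\phi_1(x,y)=x$ and $\phi_2(x,y)=y$ are normalized, additive, multiplicative and monotone, so $\phi_1,\phi_2\in\calX_\calR$. Let $\calZ=\{\phi_1,\phi_2\}$; it is finite, hence closed in the compact Hausdorff topology of \Cref{thm:duality_abstract}, and nonempty. The minimization over $\calZ$ is
\[ f(x,y)=\min(x,y). \]
So $f$ is a minimization of spectral points, and we do not need the full spectrum.

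\emph{Step 2: monomial dominance fails.} Choose $s=(1,2)$ and $t=(2,1)$, so that $s+t=(3,3)$ and $f(s+t)^n=3^n$. For $0\leq k\leq n$ we have $s^kt^{n-k}=(2^{n-k},2^k)$, hence
\[ \tbinom{n}{k} f(s^kt^{n-k}) = \tbinom nk\, 2^{\min(k,n-k)} . \]
By symmetry we may take $k=an$ with $a\leq 1/2$. The exponent is then at most $n\,(\eH(a)+a)+O(\log n)$. The function $\eH(a)+a$ has derivative $\log_2\frac{1-a}{a}+1$, which vanishes at $a=1/3$. Its maximum is therefore $\eH(1/3)+1/3=\log_2 3-1/3$.

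Consequently $\max_k \binom nk f(s^kt^{n-k})\leq \mathrm{poly}(n)\cdot 3^n 2^{-n/3}$. This is not $\geq 2^{-o(n)}3^n$ for any choice of $k=k(n)$, so $f$ is not monomially dominant.

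The only point needing care is the entropy optimization in Step 2. It suffices to bound $\binom nk\leq 2^{n\eH(k/n)}$ and maximize the concave function $\eH(a)+\min(a,1-a)$. The maximum is strictly below $\log_2 3$, which is exactly the gap we need.
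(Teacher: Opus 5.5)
Your construction is essentially the paper's. The paper takes $\bbN[a,b]$ with the preorder given by evaluation at $(1,2)$ and $(2,1)$, uses the subrank $\min\{p(1,2),p(2,1)\}$, and tests $s=a$, $t=b$. Your semiring from Remark~\ref{rmk:bv25} is just the image of that evaluation map, and your Step~1 is fine.

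\textbf{The problem is the entropy optimization in Step~2, which is wrong as written.}
\begin{itemize}
\item The derivative $\log_2\frac{1-a}{a}+1$ vanishes at $a=2/3$, not at $a=1/3$. At $a=2/3$ one gets $\eH(2/3)+2/3=\log_2 3$, as it must be, since $\sum_k\binom nk 2^k=3^n$.
\item On your range $a\in[0,1/2]$ the derivative is at least $1$. So $\eH(a)+a$ is increasing there, and its maximum is $3/2$, attained at $a=1/2$.
\item Consequently your displayed bound $\max_k\binom nk f(s^kt^{n-k})\le\mathrm{poly}(n)\cdot 3^n2^{-n/3}$ is false. For $k=n/2$ one has $\binom{n}{n/2}2^{n/2}\ge 2^{3n/2}/(n+1)$, and $2^{3/2}\approx 2.83$ exceeds $3\cdot 2^{-1/3}\approx 2.38$.
\end{itemize}

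The repair is immediate and is exactly the paper's estimate:
\[
\tbinom nk\, 2^{\min(k,n-k)}\le 2^{n(\eH(k/n)+\min(k/n,\,1-k/n))}\le 2^{3n/2}.
\]
Since $2^{3/2}<3$, no choice of $k(n)$ reaches $2^{-o(n)}3^n$, so $f$ is indeed not monomially dominant.

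Note that the kink of $\min(a,1-a)$ at $a=1/2$ is the entire source of the gap. Without the $\min$, the exponent $\eH(a)+a$ reaches exactly $\log_2 3$. Your final paragraph states the right principle: the concave function $\eH(a)+\min(a,1-a)$ has maximum $3/2<\log_2 3$. But the computation above it contradicts that principle, so Step~2 must be corrected before the argument is complete.
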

\begin{proof}
    Let $\calR=\bbN[a,b]$ be the free commutative semiring on two generators. We define a preorder on $\calR$ by
    \[ p \leq q \iff p(1,2)\leq q(1,2) \text{ and } p(2,1)\leq q(2,1), \]
    where $p,q\in \bbN[a,b]$ are viewed as bivariate polynomials.
    It is immediate that this is a Strassen preorder. It is not hard to verify that the asymptotic spectrum of $\calR$ is
    \[ \calX_{(\calR,\leq)}=\{(1,2),(2,1)\}, \]
    where we identify a spectral point with its values on $(a,b)$.

    Next, the subrank of $p\in\calR$ is
    \[ \Q(p)=\max\{n\in\bbN : n\leq p\}=\min\{p(1,2),p(2,1)\}. \]
    This can be written as
    \[ \Q(p)=\min_{\phi\in\calX_{(\calR,\leq)}} \phi(p), \]
    Thus $\Q$ is a minimization of spectral points.

    We now show that $\Q$ is not monomially dominant. We have
    \[ \Q(a+b)=\min\{3,3\}=3, \]
    so
    \[ \Q((a+b)^n)=3^n. \]
    On the other hand, for every $0\leq k\leq n$,
    \[
        \Q(a^k b^{n-k})
        =\min\{2^{n-k},2^k\}
        =2^{\min(k,n-k)}.
    \]
    Hence, writing $\lambda=k/n$, we get
    \begin{align*}
        \binom{n}{k}\Q(a^k b^{n-k})
        &= \binom{n}{k} 2^{\min(k,n-k)} \\
        &\leq 2^{\eH(\lambda)n} \cdot 2^{\min(\lambda,1-\lambda)n} \\
        &\leq 2^{(\eH(1/2)+1/2)n}
        = (2^{3/2})^n.
    \end{align*}
    Since $2^{3/2}<3$, this is exponentially smaller than $3^n=\Q(a+b)^n$. Therefore, no choice of $k=k(n)$ can satisfy
    \[ \binom{n}{k}\Q(a^k b^{n-k}) \geq 2^{-o(n)} \Q(a+b)^n, \]
    and so $\Q$ is not monomially dominant. \qedhere
\end{proof}

For some asymptotic spectra of interest, the asymptotic subrank does satisfy the monomially dominant condition. For example, in the framework used to study Shannon capacity \cite{Zui19Graph}, the subrank corresponds to the independence number $\alpha(G)$ of graphs, which is clearly additive: $\alpha(G\sqcup H) = \alpha(G) + \alpha(H)$. Therefore Shannon capacity, i.e., the asymptotic subrank in this framework, is monomially dominant.

However, tensor subrank is not additive \cite[Theorem~5.1]{DMZ24QR}, and the same is true for border subrank \cite[Remark~12]{BCD25BQ}, so at present we cannot tell whether the monomially dominant property holds for asymptotic subrank via \Cref{prop:f-asymp}, even though it is a minimization of spectral points. We believe that either a positive or a negative answer to this question would reveal a deep structure of tensor restriction:
\begin{question}
    Is asymptotic tensor subrank monomially dominant?
\end{question}

\section{Weighted marginal entropy maximization} \label{sec:marginal_entropy}

Let $I_1, I_2, \dots, I_d$ be finite index sets of sizes $n_1,\dots,n_d$, respectively, and let $\Phi\subset I_1 \times \cdots \times I_d$ be a nonempty subset. We let $\calP(\Phi)$ denote the set of probability distributions on $\Phi$, and for each $P\in \calP(\Phi)$ we write $P(\alpha)$ for the probability of the outcome $\alpha\in \Phi$. For $1\leq \varkappa \leq d$ we let $P_\varkappa$ be the marginal distribution of $P$ on the set $I_\varkappa$.

For $\theta \in \bbR_{\geq 0}^d$ satisfying $\theta_1 + \cdots + \theta_d = 1$, we define the maximal $\theta$-weighted marginal entropy of $\Phi$ to be
\[ \eH_\theta(\Phi) = \max_{P \in \calP(\Phi)} \sum_{\varkappa = 1}^d \theta_\varkappa \eH(P_\varkappa). \]
This notion is crucial in the definition of support functionals defined by Strassen \cite{Str91SuppFunc}.

By the basic entropy inequality $\eH(P) \leq \log_2  |{\supp (P)}|$, we have $\eH_\theta(\Phi) \leq \sum_{\kappa=1}^d \theta_\varkappa \log_2 n_\varkappa$. An almost immediate consequence is that $\eH_\theta(\Phi)$ attains this upper bound if and only if there is a distribution $P\in \calP(\Phi)$ such that every marginal $P_\varkappa$ is uniform. When such a distribution exists, we say $\Phi$ is \emph{balanced}.

\subsection{Maximizer condition}

Strassen analyzed the maximizer of the concave optimization problem defining $\eH_\theta$ to deduce an equivalence between two different definitions of upper support functional \cite[Proposition~2.3]{Str91SuppFunc}. In this section, we extend Strassen's analysis from $3$-mode to $d$-modes.

\begin{proposition}\label{prop:entropy-maximization}
    A distribution $P\in \calP(\Phi)$ is a maximizer of $\eH_\theta(\Phi)$ if and only if, for every $\alpha\in \Phi$,
    \[ P(\alpha) > 0 \implies \alpha \in \argmax h_P, \]
    where
    \[ h_P(\alpha) = \sum_{\varkappa=1}^d \theta_\varkappa \log \frac 1{P_\varkappa(\alpha_\varkappa)}. \]
\end{proposition}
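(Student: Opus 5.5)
The objective $G(P)=\sum_\varkappa \theta_\varkappa \eH(P_\varkappa)$ is concave on the simplex $\calP(\Phi)$. This holds because each marginal map $P\mapsto P_\varkappa$ is linear, Shannon entropy is concave, and $\theta_\varkappa\ge 0$. The plan is therefore to use first-order (KKT) optimality for a concave function on a simplex: $P$ is a maximizer if and only if the directional derivative of $G$ at $P$ toward every other distribution $Q\in\calP(\Phi)$ is $\le 0$. We will compute this directional derivative explicitly and show it equals $\sum_\alpha (Q(\alpha)-P(\alpha))h_P(\alpha)$, with $h_P$ as in the statement. (Here the logarithm base only rescales $h_P$, so it does not affect the argmax.)

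\emph{Step 1: the derivative.} For a single marginal, write $\eH(P_\varkappa)=\sum_{i\in I_\varkappa} P_\varkappa(i)\log\frac{1}{P_\varkappa(i)}$ and set $P^t=P+t(Q-P)$. Differentiating at $t=0^+$ gives
\[
\sum_i (Q_\varkappa(i)-P_\varkappa(i))\log\tfrac{1}{P_\varkappa(i)} - \sum_i (Q_\varkappa(i)-P_\varkappa(i)).
\]
The last sum vanishes because both are probability distributions. Rewriting the remaining sum over $\alpha\in\Phi$ yields $\sum_\alpha (Q(\alpha)-P(\alpha))\log\frac1{P_\varkappa(\alpha_\varkappa)}$. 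Summing with weights $\theta_\varkappa$ gives $\langle Q-P,h_P\rangle$.

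The delicate point, and what I expect to be the main obstacle, is the case where some $P_\varkappa(i)=0$ while $Q_\varkappa(i)>0$. Then $h_P(\alpha)=+\infty$ for some $\alpha\in\Phi$ with $\theta_\varkappa>0$. There the one-sided derivative is $+\infty$, since $-x\log x$ has infinite slope at $0$. So $P$ is not a maximizer, and correspondingly $\argmax h_P$ consists only of points with $P(\alpha)=0$, so the condition fails as well, consistently. I will treat this case separately: if $\alpha\in\Phi$ has $h_P(\alpha)=\infty$, moving mass toward $\alpha$ strictly increases $G$ for small $t$, and since every $\alpha$ with $P(\alpha)>0$ has finite $h_P$, both sides of the equivalence fail. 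Coordinates with $\theta_\varkappa=0$ contribute nothing and can be dropped, adopting $0\cdot\infty=0$.

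\emph{Step 2: the equivalence in the finite case.} Assume $h_P$ is finite on $\Phi$. By concavity, $P$ is optimal if and only if $\langle Q-P,h_P\rangle\le 0$ for all $Q$. This is equivalent to $\max_\alpha h_P(\alpha)\le \sum_\alpha P(\alpha)h_P(\alpha)$, taking $Q$ to be point masses, with linearity covering general $Q$. Since the right side is a $P$-average of $h_P$, it is at most the max, with equality exactly when $h_P$ attains its maximum on all of $\supp P$. That is precisely the stated condition. Sufficiency uses concavity via $G(Q)\le G(P)+\langle Q-P,h_P\rangle$; necessity uses the fact that a positive directional derivative gives an improvement.
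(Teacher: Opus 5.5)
Your proposal is correct and follows essentially the same route as the paper: concavity of the weighted entropy, the first-variation formula $\langle Q-P,h_P\rangle$, and the concavity (supporting-hyperplane) inequality for sufficiency. The only differences are minor. You get necessity in one step by comparing $\max h_P$ with the $P$-average of $h_P$, where the paper first uses two-sided perturbations $\delta_\alpha-\delta_\beta$ to show $h_P$ is constant on the support. You also handle the boundary case $h_P=+\infty$ explicitly, which the paper's proof passes over silently.
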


\begin{proof}
    Let
    \[ F(Q) \coloneq \sum_{\varkappa=1}^d \theta_\varkappa \eH(Q_\varkappa), \]
    so that $\eH_\theta(\Phi)=\max_{Q\in\calP(\Phi)} F(Q)$.
    Since each marginal map $Q \mapsto Q_\varkappa$ is linear and Shannon entropy is concave, $F$ is a concave function on the simplex $\calP(\Phi)$.

    We first compute the first variation of $F$ at a point $P$.
    Let $u\colon \Phi \to \bbR$ satisfy $\sum_{\alpha\in\Phi} u(\alpha)=0$, and write $u_\varkappa$ for its marginal on $I_\varkappa$.
    Then for all sufficiently small $t$, the point $P+tu$ still lies in the affine span of $\calP(\Phi)$, and
    \begin{align*}
        \left.\frac{\dif}{\dif t}\right|_{t=0} F(P+tu)
        &= \sum_{\varkappa=1}^d \theta_\varkappa \sum_{i\in I_\varkappa} u_\varkappa(i)
            \log \frac{1}{P_\varkappa(i)} \\
        &= \sum_{\alpha\in\Phi} u(\alpha)
            \sum_{\varkappa=1}^d \theta_\varkappa \log \frac{1}{P_\varkappa(\alpha_\varkappa)} \\
        &= \sum_{\alpha\in\Phi} u(\alpha) h_P(\alpha).
    \end{align*}
    Indeed, differentiating $-x\log x$ produces an additive constant, but this constant disappears because each $u_\varkappa$ has total sum $0$.

    We now prove the necessity.
    Assume that $P$ maximizes $F$.

    First, if $\alpha,\beta\in\Phi$ both satisfy $P(\alpha),P(\beta)>0$, then for all sufficiently small $t$ the perturbation
    \[ P+t(\delta_\alpha-\delta_\beta) \]
    remains in $\calP(\Phi)$.
    Since $t=0$ is a maximum point of the one-variable concave function
    $t \mapsto F(P+t(\delta_\alpha-\delta_\beta))$, its derivative at $0$ must vanish, and therefore
    \[ h_P(\alpha)-h_P(\beta)=0. \]
    So $h_P$ is constant on the support of $P$; denote this common value by $M$.

    Next, fix any $\beta\in\Phi$.
    The segment $(1-t)P+t\delta_\beta$ lies in $\calP(\Phi)$ for $0\le t\le 1$, hence optimality of $P$ gives
    \[ 0 \ge \left.\frac{\dif}{\dif t}\right|_{t=0^+} F\bigl((1-t)P+t\delta_\beta\bigr). \]
    Using the first-variation formula with $u=\delta_\beta-P$, we obtain
    \begin{align*}
        0
        &\ge h_P(\beta)-\sum_{\alpha\in\Phi} P(\alpha) h_P(\alpha) \\
        &= h_P(\beta)-M.
    \end{align*}
    Hence $h_P(\beta)\le M$ for every $\beta\in\Phi$, so every point in the support of $P$ belongs to $\argmax h_P$.

    Conversely, assume that every $\alpha$ with $P(\alpha)>0$ belongs to $\argmax h_P$.
    Let $M=\max_{\alpha\in\Phi} h_P(\alpha)$.
    Then $h_P(\alpha)=M$ on the support of $P$, and $h_P(\alpha)\le M$ for all $\alpha\in\Phi$.

    For any $Q\in\calP(\Phi)$, concavity of $F$ gives
    \[ F(Q) \le F(P) + \sum_{\alpha\in\Phi} (Q(\alpha)-P(\alpha)) h_P(\alpha). \]
    Therefore
    \begin{align*}
        F(Q)-F(P)
        &\le \sum_{\alpha\in\Phi} Q(\alpha) h_P(\alpha) - \sum_{\alpha\in\Phi} P(\alpha) h_P(\alpha) \\
        &\le M \sum_{\alpha\in\Phi} Q(\alpha) - M \sum_{\alpha\in\Phi} P(\alpha) \\
        &= 0.
    \end{align*}
    Thus $F(Q)\le F(P)$ for all $Q\in\calP(\Phi)$, so $P$ is indeed an argmax of $\eH_\theta(\Phi)$.
\end{proof}

\subsection{The two-dimensional case}

In this section, we give a more structured conclusion of optimal distributions in the $d=2$ case. In this section we let $J, K$ denote the two index sets and $\Phi \subset J \times K$. We begin with a finer description of the maximizer distribution.

\begin{proposition} \label{prop:argmax_2d}
    Assume $\Phi$ has no empty rows or columns. Consider a partition
    \begin{center}
        $J = J_1 \sqcup \cdots \sqcup J_m$ \quad and \quad $K = K_1 \sqcup \cdots \sqcup K_m$, 
    \end{center}
    such that:
    \begin{enumerate}
        \item (Diagonal balanced) For any $1\leq u\leq m$, the set $\Phi^{(u)} = \Phi \cap (J_u \times K_u)$ is a balanced subset of
        $J_u\times K_u$.
        \item (Blockwise triangular) Any $(j, k)\in \Phi$ lives in some $J_u \times K_v$ where $u\geq v$.
        \item (Convexity) Let $\mu_u = |J_u| / |K_u|$, it satisfies
        \[ \mu_1 > \mu_2 > \cdots > \mu_m, \]
    \end{enumerate}
    then let $P^{(u)}$ be a distribution over $\Phi^{(u)}$ that has uniform marginal, and consider
    \begin{center}
        $\displaystyle P^\star = \sum_{u=1}^m p_u P^{(u)}$, \quad where \quad
        $\displaystyle p_u = \frac{|J_u|^\rho |K_u|^{1-\rho}}{\sum_{u=1}^m |J_v|^\rho |K_v|^{1-\rho}}$,
    \end{center}
    is a distribution attaining $\eH_\theta(\Phi)$ for $\theta = (\rho, 1-\rho)$, and
    \[ \eH_\theta(\Phi) = \log_2 \left( \sum_{u=1}^m |J_u|^\rho |K_u|^{1-\rho} \right). \]
\end{proposition}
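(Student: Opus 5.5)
The plan is to verify the optimality criterion of \Cref{prop:entropy-maximization} for $P^\star$, with $d=2$ and $\theta=(\rho,1-\rho)$.

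First I compute the marginals of $P^\star$. Since $P^{(u)}$ is supported on $J_u\times K_u$ with uniform marginals, for $j\in J_u$ we get $P^\star_1(j)=p_u/|J_u|$, and for $k\in K_v$ we get $P^\star_2(k)=p_v/|K_v|$. Write $Z=\sum_v |J_v|^\rho|K_v|^{1-\rho}$. For $(j,k)\in J_u\times K_v$ this gives
\[ h_{P^\star}(j,k)=\rho\log\frac{|J_u|}{p_u}+(1-\rho)\log\frac{|K_v|}{p_v}. \]
Substituting $p_u=|J_u|^\rho|K_u|^{1-\rho}/Z$ yields $\log(|J_u|/p_u)=\log Z+(1-\rho)\log\mu_u$ and $\log(|K_v|/p_v)=\log Z-\rho\log\mu_v$. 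Hence
\[ h_{P^\star}(j,k)=\log Z+\rho(1-\rho)\bigl(\log\mu_u-\log\mu_v\bigr). \]

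Next I check the argmax condition. On the diagonal blocks $u=v$, which contain the whole support of $P^\star$, we have $h=\log Z$. By blockwise triangularity, every $(j,k)\in\Phi$ lies in a block with $u\geq v$. By convexity this gives $\mu_u\leq\mu_v$, so $h(j,k)\leq\log Z$. Therefore every point in the support of $P^\star$ lies in $\argmax h_{P^\star}$, and \Cref{prop:entropy-maximization} shows that $P^\star$ is a maximizer.

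It remains to evaluate $F(P^\star)$. The function $h$ is constant equal to $\log Z$ on the support of $P^\star$, and $F(P^\star)=\sum_\alpha P^\star(\alpha)h_{P^\star}(\alpha)$ because the entropies are expectations of $\log(1/P_\varkappa)$. So $F(P^\star)=\log Z$, taken in base $2$, which is the claimed formula.

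The only places needing care are bookkeeping points. One is to confirm that $P^\star$ is a genuine distribution on $\Phi$, using $\sum_u p_u=1$ and $\Phi^{(u)}\subseteq\Phi$. The other is the degenerate edge cases $\rho\in\{0,1\}$, where the factor $\rho(1-\rho)$ vanishes and the inequality holds trivially. The hypothesis of no empty rows or columns is not actually needed for this direction beyond ensuring each $J_u,K_u$ is nonempty, which balancedness of $\Phi^{(u)}$ already implies. The main obstacle, mild as it is, is getting the sign right: the triangularity direction $u\geq v$ must pair with the decreasing order of $\mu_u$ so that off-diagonal entries have $h\leq\log Z$, not $\geq$.
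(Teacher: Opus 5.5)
Your proposal is correct and follows the paper's proof essentially step by step. You use the same marginal computation, arrive at the same formula $h_{P^\star}(j,k)=\log Z+\rho(1-\rho)(\log\mu_u-\log\mu_v)$, and appeal to \Cref{prop:entropy-maximization} in the same way. The one small difference is how you evaluate the optimum: you use $F(P^\star)=\sum_\alpha P^\star(\alpha)h_{P^\star}(\alpha)$ together with $h$ being constant on the support, which is slightly quicker than the paper's chain-rule computation.
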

\begin{proof}
    Let $D = \sum_{u=1}^m |J_u|^\rho |K_u|^{1-\rho}$, for each $j\in J_u$, we have the $J$-marginal equals to
    \begin{align*}
        P^\star_J(j) &= p_u P^{(u)}_J(j) \\
        &= \frac{|J_u|^{\rho} |K_u|^{1-\rho}}{D} \cdot \frac 1{|J_u|}\\
        &= \frac{\mu_u^{\rho-1}}{D}
    \end{align*}
    since $P^{(u)}$ has uniform marginal over $J_u$. Similarly, for each $k \in K_u$ the $K$-margin equals to
    \[ P^\star_K(k) = \frac{|J_u|^{\rho} |K_u|^{1-\rho}}{D} \cdot \frac 1{|K_u|} = \frac{\mu_u^{\rho}}{D}. \]
    For every $(j, k) \in \Phi$, since $j\in J_u, k\in K_v$ with $u\geq v$, it has $h$-value
    \begin{align*}
        h_{P^\star}(j, k) &= \rho \log \left(\frac{D}{\mu_u^{\rho-1}}\right)
        + (1-\rho) \log \left(\frac{D}{\mu_v^{\rho}}\right)\\
        &= \log D + \rho(1-\rho) (\log \mu_u  - \log \mu_v)\\
        &\leq \log D.
    \end{align*}
    For any $(j, k)\in \supp P^\star$ we have $u=v$, so it satisfies the condition of \Cref{prop:entropy-maximization}, hence $P^\star$ is a maximizer.

    Finally we compute $\eH_\theta(\Phi)$. By definition $\eH_\theta(\Phi) = \rho \eH(P^\star_J) + (1-\rho) \eH(P^\star_K)$, the distribution $P^\star$ can be regarded as first sampling the block index $u$ with probability $|J_u|^\rho |K_u|^{1-\rho} / D$, and then sampling $(j, k)\sim P^{(u)}$. We first have
    \[ \eH(P^\star_J) = \eH(u) + \eH(P^{(u)}_J \mid u) \]
    by the chain rule, and since each $P^{(u)}_J$ is uniform distribution over $J_u$, we have $\eH(P^{(u)}_J \mid u) = \bbE_u[\log_2 |J_u|]$. The similar statement holds for $K$, so
    \begin{align*}
        \eH_\theta(\Phi) &= \rho (\eH(u) + \bbE_u[\log_2 |J_u|]) + (1-\rho)(\eH(u) + \bbE_u[\log_2 |K_u|])\\
        &= \eH(u) + \rho \bbE_u[\log_2 |J_u|] + (1-\rho) \bbE_u[\log_2 |K_u|]\\
        &= \bbE_u \left[ \log_2 \frac{D}{|J_u|^\rho |K_u|^{1-\rho}} \right] + \bbE_u[\log_2 |J_u|^\rho |K_u|^{1-\rho}]\\
        &= \log D. \qedhere
    \end{align*}
\end{proof}

When $|J| = |K|$, the condition that $\Phi$ is balanced just says that there is a \emph{bistochastic matrix} with support $\Phi$. The \emph{Birkhoff--von Neumann theorem} says that any bistochastic matrix can be written as convex combination of permutation matrices; this in particular implies that $\Phi$ contains a permutation. In the language of graph theory, this means that $\Phi$ is balanced if and only if the bipartite graph with bi-adjacency matrix $\Phi$ admits a perfect matching. This is equivalent to \emph{Hall's condition}: $\Phi$ admits a perfect matching if and only if every $J' \subseteq J$ satisfies $|N(J')| \geq |J'|$, where $N(J') = \bigcup_{j\in J'} \Phi_j$, and $\Phi_j = \{k : (j, k)\in \Phi\}$.

We now describe a generalization of Hall's condition to the case when $|J|$ may not equal to $|K|$.
\begin{proposition}\label{prop:balance_char}
    The subset $\Phi$ is balanced if and only if for every $J'\subseteq J$,
    \[ |N(J')| \geq \frac{|K|}{|J|} |J'|. \]
\end{proposition}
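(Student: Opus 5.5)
Balancedness means there is a distribution $P$ on $\Phi$ with $P_J$ uniform on $J$ (mass $1/|J|$ each) and $P_K$ uniform on $K$ (mass $1/|K|$ each). We prove both directions via a flow / transportation formulation, i.e.\ the supply--demand version of Hall's theorem (Gale's theorem).

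\emph{Necessity} is a counting argument. Suppose $P$ witnesses balancedness and fix $J'\subseteq J$. All mass of $P$ sitting on rows in $J'$ lies in $J'\times N(J')$. That mass equals $|J'|/|J|$, and it is at most the total $K$-marginal mass of $N(J')$, which is $|N(J')|/|K|$. Rearranging gives $|N(J')|\geq \frac{|K|}{|J|}|J'|$.

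\emph{Sufficiency.} Build a network with the following structure:
\begin{itemize}
\item a source $s$ joined to each $j\in J$ with capacity $|K|$;
\item each $k\in K$ joined to a sink $t$ with capacity $|J|$;
\item an infinite-capacity arc $j\to k$ for every $(j,k)\in\Phi$.
\end{itemize}
Both the source side and the sink side have total capacity $|J||K|$. A flow of value $|J||K|$ saturates all source and sink arcs. Dividing that flow by $|J||K|$ then gives a distribution on $\Phi$ with $J$-marginal $1/|J|$ and $K$-marginal $1/|K|$, which is exactly balancedness.

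By max-flow/min-cut it suffices to show every $s$--$t$ cut has capacity at least $|J||K|$. Take a finite cut, and let $J'\subseteq J$ be the rows on the source side. Finiteness forces every $k\in N(J')$ onto the source side as well, so the arcs $k\to t$ for $k\in N(J')$ are cut. The capacity is therefore at least
\[ (|J|-|J'|)\,|K| + |N(J')|\,|J| \;\geq\; |J||K| - |J'||K| + |J'||K| = |J||K|, \]
using the hypothesis $|N(J')|\,|J|\geq |K|\,|J'|$. Hence the max flow equals $|J||K|$ and $\Phi$ is balanced. Real-valued capacities are no issue: max-flow/min-cut holds for them, and here they are integers anyway.

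The main care point is setting up the cut analysis correctly: a finite cut cannot cut any $\Phi$-arc, which forces $N(J')$ to the source side. The rest is the short computation above. An alternative route avoiding flows is to blow up each $j$ into $|K|$ copies and each $k$ into $|J|$ copies and apply the classical Hall condition stated just before this proposition. For a set $S$ of copies meeting a row set $J'$, the copy neighborhood has size $|J|\,|N(J')|\geq |K||J'|\geq |S|$. So a perfect matching exists, and averaging it gives the uniform-marginal distribution.
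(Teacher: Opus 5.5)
Your proof is correct and follows essentially the same route as the paper. The paper builds the same bipartite flow network, with capacities $1/|J|$ and $1/|K|$ instead of your integer-scaled $|K|$ and $|J|$, and reads off the generalized Hall condition from max-flow/min-cut, observing as you do that a finite cut forces $N(J')$ onto the source side. Your separate counting argument for necessity is just the weak-duality half of that same min-cut computation, which the paper handles within a single if-and-only-if chain.
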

\begin{proof}
    We prove this proposition by transforming the linear programming problem into a maximum flow problem on a bipartite network. For a comprehensive introduction to network flows and the foundational theorems used here, we refer the reader to Ahuja, Magnanti, and Orlin \cite[Chap.~6]{AMO93Flow}.

    We construct a directed flow network $G = (V, E)$ as follows. Let the vertex set be $V = \{s, t\} \sqcup J \sqcup K$, where $s$ is a designated source node and $t$ is a designated sink node. We define the directed edges and their capacities $c(u, v)$ as:
    \begin{enumerate}
        \item For each $j \in J$, add an edge $(s, j)$ with capacity $c(s, j) = \frac{1}{|J|}$.
        \item For each $k \in K$, add an edge $(k, t)$ with capacity $c(k, t) = \frac{1}{|K|}$.
        \item For each pair $(j, k) \in \Phi$, add an edge $(j, k)$ with infinite capacity, i.e., $c(j, k) = \infty$.
    \end{enumerate}

    The variables $p_{jk}$ in the linear program correspond to the flow on the edges $(j, k)$. The constraints $\sum_{k} p_{jk} \le \frac{1}{|J|}$ and $\sum_{j} p_{jk} \le \frac{1}{|K|}$ correspond to the capacity constraints on the edges leaving the source $s$ and entering the sink $t$, respectively. The optimum of this linear program is exactly the maximum flow from $s$ to $t$ in $G$. 

    Since the total capacity of the edges leaving $s$ is $\sum_{j \in J} \frac{1}{|J|} = 1$, the maximum flow is bounded above by $1$. The LP optimum attains $1$ (meaning there is a distribution $P \in \calP(\Phi)$ having uniform marginals, and thus $\Phi$ is balanced) if and only if the maximum flow in $G$ is exactly $1$.

    By the Max-Flow Min-Cut theorem \cite[Theorem~6.3]{AMO93Flow}, the maximum flow in a network equals the minimum capacity of an $s$-$t$ cut. An $s$-$t$ cut is a partition of $V$ into two disjoint sets $S$ and $T$ such that $s \in S$ and $t \in T$. The capacity of the cut is the sum of the capacities of all edges directed from $S$ to $T$. Therefore, the maximum flow is $1$ if and only if every $s$-$t$ cut has a capacity of at least $1$.

    Consider an arbitrary $s$-$t$ cut $(S, T)$. Let $A = S \cap J$ and $B = S \cap K$. The capacity of this cut is given by:
    $$ C(S, T) = \sum_{j \in J \setminus A} c(s, j) + \sum_{k \in B} c(k, t) + \sum_{j \in A, k \in K \setminus B} c(j, k). $$

    If there exists an edge $(j, k) \in \Phi$ such that $j \in A$ and $k \notin B$, the term $c(j, k)$ contributes $\infty$ to the cut capacity. Because we are only interested in whether the minimum cut capacity is at least $1$, cuts with infinite capacity trivially satisfy the condition. To find the minimum cut, we restrict our attention to cuts with finite capacity. This requires that $B \supseteq \{k \in K : \exists j \in A \text{ such that } (j, k) \in \Phi\}$. 

    Let $N(A) = \bigcup_{j \in A} \Phi_j$ denote the neighborhood of $A$ in $\Phi$. The condition for a finite-capacity cut is simply $B \supseteq N(A)$. For such cuts, the capacity simplifies to:
    $$ C(S, T) = \frac{|J \setminus A|}{|J|} + \frac{|B|}{|K|} = 1 - \frac{|A|}{|J|} + \frac{|B|}{|K|}. $$

    For the minimum cut capacity to be at least $1$, we must have:
    $$ 1 - \frac{|A|}{|J|} + \frac{|B|}{|K|} \ge 1 \implies \frac{|B|}{|K|} \ge \frac{|A|}{|J|}, $$
    for all subsets $A \subseteq J$ and all valid $B \supseteq N(A)$. For a fixed $A$, the cut capacity is minimized when $B$ is chosen to be as small as possible, which is exactly $B = N(A)$. 

    Thus, the condition holds if and only if for every $A \subseteq J$:
    $$ \frac{|N(A)|}{|K|} \ge \frac{|A|}{|J|}. $$

    Replacing $A$ with $J'$, we obtain the generalized Hall's condition.
\end{proof}

\section{Matrix spaces as quiver representations} \label{sec:quiver}

In this section, we use the language of quiver representations to organize matrix spaces and the structural notions attached to them. Although this viewpoint has deep connections to geometric invariant theory, we will only need the linear-algebraic features relevant to our argument. We will now review the basics of quiver representations which are relevant to our setting. Along the way we prove several useful facts which are likely known by experts in this area, but for which we did not find standard references for.

We first recall the notion of a quiver representation, following the textbook of Derksen and Weyman \cite{DW17Quiver}. A \emph{quiver} $Q$ is a pair $Q=(Q_0, Q_1)$ consisting of a \emph{vertex set} $Q_0$ and an \emph{arrow set} $Q_1$ (directed edges). Each arrow $a\in Q_1$ has tail $ta$ and head $ha$. A \emph{representation} $\calV$ of $Q$ consists of
\begin{itemize}
    \item a finite-dimensional vector space $\calV(x)$ for each vertex $x\in Q_0$, and
    \item a linear map $\calV(a)\colon \calV(ta) \to \calV(ha)$ for each arrow $a\in Q_1$.
\end{itemize}

For our purposes, the main example is a $k$-tuple of $n\times m$ matrices. This can be viewed as a representation of a quiver with two vertices $0,1$ and $k$ arrows from $0$ to $1$. This quiver is the \emph{(generalized) Kronecker quiver} $\theta(k)$.
\[\begin{tikzcd}
	0 & 1
	\arrow["\cdots"{description}, draw=none, from=1-1, to=1-2]
	\arrow[shift left=3, from=1-1, to=1-2]
	\arrow[shift right=3, from=1-1, to=1-2]
\end{tikzcd}\]
If $U=\calV(0)$ and $V=\calV(1)$ have dimensions $n,m$ respectively, then the maps $\calV(a_1),\dots,\calV(a_k)$ are exactly a $k$-tuple of $n\times m$ matrices.

From now on, we assume $Q=\theta(k)$ is this Kronecker quiver, so every $\calV$ can be regarded as a matrix tuple. Let $\Rep_{(n, m)}(\theta(k))$ denote the space of $\theta(k)$-representations on fixed vector spaces $U,V$. This is an $nmk$-dimensional vector space. Hence the following three objects are equivalent:
\begin{center}
    $\theta(k)$-representation $\calV$ \quad $\cong$ \quad
    $n\times m$ matrix $k$-tuple $\mathcal{B}$ \quad $\cong$ \quad
    $n\times m\times k$ tensor $T$.
\end{center}

For subspaces $U'\subseteq U$ and $V'\subseteq V$, if $\calV(a_i)(U') \subseteq V'$ for all $i$, then the restricted maps define a representation $\calV'$ on $U',V'$. In this case, we call $\calV'\subseteq\calV$ a \emph{subrepresentation}.

If $\calV'$ is a subrepresentation of $\calV$, then each map $\calV(a_i)$ induces a map $U/U' \to V/V'$. The resulting quotient representation is denoted by $\calV/\calV'$.

\subsection{Shrunk subspaces}

An integer pair $\sigma=(\sigma_0,\sigma_1)\in\bbZ^2$ is called a \emph{weight}, and defines
\[ \sigma(\calV) = \sigma_0 \dim U + \sigma_1 \dim V. \]
If $\sigma(\calV)=0$, we say that $\calV$ is \emph{$\sigma$-semistable} if
\[ \sigma(\calV') \le 0 \]
for every subrepresentation $\calV'$. For a representation on vector spaces of dimensions $(n,m)$, the notion of semistability used in this paper is exactly $\sigma$-semistability for the weight
\[ \sigma=(m,-n), \]
since
\[ \sigma(\calV') \le 0 \iff m\dim U' - n\dim V' \le 0 \iff \frac{\dim U'}{\dim V'} \le \frac{n}{m}. \]

For a nonzero representation $\calV$ on vector spaces $U,V$, define
\[ \mu^\star(\calV) = \frac{\dim U}{\dim V}, \]
with the convention that $\mu^\star(\calV)=\infty$ when $\dim V=0$. We say that $\calV$ is \emph{semistable} if
\[ \mu^\star(\calV') \leq \mu^\star(\calV) \]
for every nonzero subrepresentation $\calV'$, and \emph{unstable} otherwise. This is the notion of semistability used throughout the paper. It is equivalent to the standard quiver-theoretic notion for the Kronecker quiver, but we will not need that perspective here.

For a subspace $U^* \subseteq U$, let
\[ \calV(U^*) \coloneq \sum_{i=1}^k \calV(a_i)(U^*) \subseteq V. \]
In other words, $\calV(U^*)$ is the span of the images of $U^*$ under the $k$ maps defining $\calV$. The following criterion is an immediate reformulation of instability for the slope $\mu^\star$.
\begin{proposition} \label{prop:unstable}
    A representation $\calV$ is unstable if and only if there exists a nonzero subspace $U^*\subseteq U$ such that
    \[ \frac{\dim \calV(U^*)}{\dim U^*} < \frac{\dim V}{\dim U}. \]
\end{proposition}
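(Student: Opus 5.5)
The argument is a direct translation between subrepresentations and subspaces of $U$, in both directions. The key observation is that for any subspace $U^*\subseteq U$, the pair $(U^*,\calV(U^*))$ is a subrepresentation, because each $\calV(a_i)$ maps $U^*$ into $\calV(U^*)$ by construction. Moreover, it is the subrepresentation with the largest slope among those whose first component is $U^*$: if $(U^*,V')$ is any subrepresentation, then $V'\supseteq \calV(a_i)(U^*)$ for all $i$, hence $V'\supseteq\calV(U^*)$, and so $\dim V'\ge \dim\calV(U^*)$.

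For the ``if'' direction, suppose $U^*\neq 0$ satisfies $\dim\calV(U^*)/\dim U^* < \dim V/\dim U$. Set $\calV'=(U^*,\calV(U^*))$, which is a nonzero subrepresentation since $U^*\neq 0$. Inverting the strict inequality gives $\mu^\star(\calV')=\dim U^*/\dim\calV(U^*) > \dim U/\dim V=\mu^\star(\calV)$. When $\dim\calV(U^*)=0$, we read $\mu^\star(\calV')=\infty$ and the inequality still holds, provided $\dim V>0$; this is automatic, since otherwise the right-hand side $\dim V/\dim U$ would be $0$ and no nonnegative ratio could be strictly smaller. So $\calV$ is unstable.

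For the ``only if'' direction, let $\calV'=(U',V')$ be a nonzero subrepresentation with $\mu^\star(\calV')>\mu^\star(\calV)$. First I note that $U'\neq 0$. Indeed, if $U'=0$ then $V'\neq 0$ and $\mu^\star(\calV')=0$, which cannot exceed $\mu^\star(\calV)\ge 0$. Take $U^*=U'$. By the maximality observation, $\dim\calV(U^*)\le\dim V'$, so $\mu^\star(U^*,\calV(U^*))\ge\mu^\star(\calV')>\mu^\star(\calV)$. Inverting gives $\dim\calV(U^*)/\dim U^* < \dim V/\dim U$.

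The only real care needed is in the degenerate cases, namely $\dim V=0$ (where $\mu^\star(\calV)=\infty$ and nothing can exceed it) and $\dim U=0$, together with consistently interpreting the $\infty$ convention when inverting ratios. I expect no substantive obstacle beyond this bookkeeping. Implicitly I assume $\dim U>0$ so that the right-hand side of the criterion is defined; in the case $\dim U=0$, both sides of the equivalence are false.
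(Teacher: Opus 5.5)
Your proposal is correct and follows the paper's proof. For one direction you take the subrepresentation $(U^*,\calV(U^*))$, and for the other you use $\calV(U^*)\subseteq V^*$ for a destabilizing subrepresentation $(U^*,V^*)$. Your extra bookkeeping (showing $U'\neq 0$ and handling the $\infty$ convention when inverting ratios) fills in degenerate cases that the paper's proof leaves implicit.
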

\begin{proof}
    Suppose first that such a subspace $U^*$ exists, and set $V^*=\calV(U^*)$. Then $(U^*,V^*)$ is a subrepresentation of $\calV$, and
    \[ \mu^\star(U^*,V^*) = \frac{\dim U^*}{\dim V^*} > \frac{\dim U}{\dim V} = \mu^\star(\calV). \]
    Hence $\calV$ is not semistable, i.e., it is unstable.

    Conversely, suppose that $\calV$ is unstable. Then by definition there is a nonzero subrepresentation $(U^*,V^*)$ with
    \[ \mu^\star(U^*,V^*) > \mu^\star(\calV). \]
    Since $\calV(U^*)\subseteq V^*$, we have
    \[ \frac{\dim \calV(U^*)}{\dim U^*} \leq \frac{\dim V^*}{\dim U^*} < \frac{\dim V}{\dim U}, \]
    as required.
\end{proof}

Any subspace $U^*$ satisfying \Cref{prop:unstable} will be called a \emph{shrunk subspace} of $\calV$. In the square case $\dim U=\dim V$, the inequality becomes
\[ \dim \calV(U^*) < \dim U^*, \]
which is exactly the classical notion of a shrunk subspace in the study of non-commutative rank \cite{IQS17NCR,IQS18NCR}.

\begin{proposition} \label{prop:filt_preserve_stab}
    Let $\calV'$ be a subrepresentation of $\calV$. Assume that
    \begin{itemize}
        \item $\mu^\star(\calV') = \mu^\star(\calV/\calV')$, equivalently $\mu^\star(\calV') = \mu^\star(\calV)$,
        \item $\calV'$ and $\calV/\calV'$ are semistable,
    \end{itemize}
    Then $\calV$ is also semistable.
\end{proposition}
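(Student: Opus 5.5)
The plan is a see-saw (seesaw) argument on slopes, using the intersection and image of an arbitrary subrepresentation. Write $\calV$ on $(U,V)$ and $\calV'$ on $(U',V')$. Let $\mu=\mu^\star(\calV)$. First check the stated equivalence: $\dim U=\dim U'+\dim(U/U')$ and similarly for $V$. So if $\dim U'/\dim V'=\dim(U/U')/\dim(V/V')=\mu'$, then $\dim U/\dim V=\mu'$ by the mediant property. Conversely, from $\mu^\star(\calV')=\mu^\star(\calV)$ the quotient has the same ratio. Degenerate cases where some $V$-dimension is $0$ must be handled by the $\infty$ convention. If $\dim V'=0$ with $U'\neq 0$, then $\mu^\star(\calV')=\infty$ forces $\dim V=0$ under the hypothesis, and then everything is trivial. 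It is cleanest to phrase slope comparisons as the cross-multiplied inequality $\dim U_1\cdot\dim V\le \dim U\cdot \dim V_1$, which avoids division.

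Take any nonzero subrepresentation $\calW=(U_1,V_1)$ of $\calV$. Form $\calW\cap\calV'=(U_1\cap U',V_1\cap V')$, which is a subrepresentation of $\calV'$ since the maps preserve both. Also form the image of $\calW$ in $\calV/\calV'$, namely $((U_1+U')/U',(V_1+V')/V')$, which is a subrepresentation of the quotient. By dimension additivity,
\[ \dim U_1=\dim(U_1\cap U')+\dim((U_1+U')/U'), \]
and the same holds for $V_1$. Semistability of $\calV'$ gives $\dim(U_1\cap U')\cdot \dim V'\le \dim U'\cdot\dim(V_1\cap V')$, i.e.\ $m_0\,\dim(U_1\cap U')\le n_0\,\dim(V_1\cap V')$ with slope $\mu=n_0/m_0$. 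This cross-multiplied form also holds trivially when the piece is zero. The analogous inequality holds for the quotient piece. Rewriting both in the form $\dim(\cdot\text{ in }U)\le \mu\,\dim(\cdot\text{ in }V)$, which is legitimate since both have the common slope $\mu$, and adding them gives $\dim U_1\le\mu\dim V_1$. This is exactly $\mu^\star(\calW)\le\mu$, i.e.\ $\calV$ is semistable.

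The main obstacle is only bookkeeping in edge cases. A piece may be zero, which is fine in cross-multiplied form. A piece may have $\dim V$-part $0$ while its $U$-part is nonzero; semistability of that piece then forces its own slope to be $\infty$, contradicting finiteness unless $\mu=\infty$. So I would first dispose of the case $\dim V=0$ separately, where $\mu=\infty$ and semistability is vacuous. In the remaining case, I would use the linear inequality form $\dim U_\ast\le\mu\dim V_\ast$ as the definition of slope-boundedness for possibly zero pieces.
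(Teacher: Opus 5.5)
Your argument is correct and is essentially the paper's: both split a test object along $\calV'$ into its intersection with $\calV'$ and its image in $\calV/\calV'$, apply semistability of $\calV'$ and of $\calV/\calV'$ to the two pieces, and add using the common slope. The only difference is that the paper tests arbitrary subspaces $U^*\subseteq U$ through the shrunk-subspace criterion (\Cref{prop:unstable}), which requires the inequality $\dim\bigl(\calV(U^*)/\calV(U^*\cap U')\bigr)\ge\dim\bigl((\calV(U^*)+V')/V'\bigr)$. You instead test subrepresentations directly, so both of your dimension splittings are exact and you can treat the degenerate slopes $0$ and $\infty$ explicitly.
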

\begin{proof}
    Since $\mu^\star(\calV')=\mu^\star(\calV/\calV')=\mu^\star(\calV)$, let us denote this common value by $\mu^\star$.

    Let $U^*\subseteq U$ be arbitrary. We decompose $\calV(U^*)$ according to the subrepresentation $\calV'$:
    \[ \dim \calV(U^*) = \dim \calV(U^* \cap U') + \dim \frac{\calV(U^*)}{\calV(U^* \cap U')}. \]
    Since $\calV'$ is semistable, \Cref{prop:unstable} gives
    \[ \dim \calV(U^* \cap U') \geq (\mu^\star)^{-1} \dim(U^*\cap U'). \]

    For the second summand, note that
    \begin{align*}
        \dim \frac{\calV(U^*)}{\calV(U^* \cap U')} &\geq \dim \frac{\calV(U^*) + V'}{\calV(U^* \cap U') + V'}\\
        &= \dim \frac{\calV(U^*) + V'}{V'}\\
        &= \dim (\calV/\calV')\left(\frac{U^* + U'}{U'}\right)\\
        &\geq (\mu^\star)^{-1} \dim \frac{U^* + U'}{U'}\\
        &= (\mu^\star)^{-1} \dim \frac{U^*}{U^* \cap U'}.
    \end{align*}
    where the inequality uses the semistability of $\calV/\calV'$ and \Cref{prop:unstable}.

    Adding the two bounds yields
    \[ \dim \calV(U^*) \geq (\mu^\star)^{-1} \dim U^*. \]
    Since this holds for every subspace $U^*\subseteq U$, \Cref{prop:unstable} implies that $\calV$ is semistable.
\end{proof}

Let $\calV$ be a representation of $\theta(k)$ and $\calV'$ be a representation of $\theta(k')$. We define their tensor product $\calV \otimes \calV'$ to be a representation of $\theta(k k')$ given by arrows $a_{ii'}$ where $1\leq i\leq k, 1\leq i'\leq k'$. The two linear spaces of $\calV\otimes\calV'$ are simply $U\otimes U'$ and $V\otimes V'$, and the linear maps are given by $(\calV \otimes \calV')(a_{ii'}) = \calV(a_i) \otimes \calV'(a_{i'})$.
\begin{proposition} \label{prop:tensor_preserve_stab}
    If $\calV$ and $\calV'$ are semistable, then so is $\calV \otimes \calV'$.
\end{proposition}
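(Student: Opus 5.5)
The plan is to factor the tensor product representation into two stages, each of which is a direct sum of copies of a semistable representation, and then chain the expansion inequalities from \Cref{prop:unstable}. Write $n=\dim U$, $m=\dim V$, $n'=\dim U'$, $m'=\dim V'$, and let $A_i=\calV(a_i)$, $A'_{i'}=\calV'(a_{i'})$. By \Cref{prop:unstable}, semistability of $\calV$ is equivalent to $\dim \calV(U^*)\geq \frac{m}{n}\dim U^*$ for every subspace $U^*\subseteq U$, and similarly for $\calV'$. The goal is the analogous inequality with ratio $\frac{mm'}{nn'}$ for every subspace $X\subseteq U\otimes U'$.

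The key identity is $A_i\otimes A'_{i'}=(A_i\otimes \id_{V'})(\id_U\otimes A'_{i'})$. Because of this, for any $X\subseteq U\otimes U'$,
\[ (\calV\otimes\calV')(X) = \calW_1\bigl(\calW_2(X)\bigr). \]
Here $\calW_2$ is the $\theta(k')$-representation $U\otimes U'\to U\otimes V'$ with maps $\id_U\otimes A'_{i'}$, and $\calW_1$ is the $\theta(k)$-representation $U\otimes V'\to V\otimes V'$ with maps $A_i\otimes \id_{V'}$. To verify the identity, note that the span over $i,i'$ of $(A_i\otimes\id)(\id\otimes A'_{i'})X$ equals the span over $i$ of $(A_i\otimes\id)$ applied to $\sum_{i'}(\id\otimes A'_{i'})X$, by linearity.

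Next I show both $\calW_1$ and $\calW_2$ are semistable. After choosing a basis of $U$, $\calW_2$ is the direct sum of $n$ copies of $\calV'$. Similarly, after choosing a basis of $V'$, $\calW_1$ is the direct sum of $m'$ copies of $\calV$. A direct sum $\calV'\oplus\calV''$ of semistable representations of equal slope is semistable by \Cref{prop:filt_preserve_stab}: take the first summand as the subrepresentation, so the quotient is the second summand and all slopes agree. Inducting on the number of copies gives semistability of $\calW_1$ and $\calW_2$. Their slopes are $\mu^\star(\calW_2)=n'/m'$ and $\mu^\star(\calW_1)=n/m$.

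Applying \Cref{prop:unstable} twice then gives
\[ \dim(\calV\otimes\calV')(X)\geq \frac{m}{n}\dim\calW_2(X)\geq \frac{m}{n}\cdot\frac{m'}{n'}\dim X = \frac{\dim(V\otimes V')}{\dim(U\otimes U')}\dim X, \]
so no shrunk subspace exists and $\calV\otimes\calV'$ is semistable by \Cref{prop:unstable}.

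The main point of care is the factorization identity together with the bookkeeping for degenerate dimensions. If some dimension is zero, such as $m=0$ or $n=0$, the slope convention $\mu^\star=\infty$ needs a separate trivial check. Also, the direct-sum step requires that the slope equality hypothesis of \Cref{prop:filt_preserve_stab} holds, which it does since all copies are identical. Beyond these checks I do not expect a real obstacle.
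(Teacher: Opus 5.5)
Your proof is correct and follows essentially the same route as the paper: you factor $(\calV\otimes\calV')(X)$ through $\id_U\otimes\calV'$ and $\calV\otimes\id_{V'}$, show each factor is semistable via \Cref{prop:filt_preserve_stab}, and chain the two expansion inequalities from \Cref{prop:unstable}. The only cosmetic difference is that you split each factor as a direct sum of copies of $\calV'$ (resp.\ $\calV$) using bases, whereas the paper uses complete flags in $U$ and $V'$ to build filtrations with those same subquotients.
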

\begin{proof}
    Let $U^* \subseteq U \otimes U'$ be arbitrary. We write
    \begin{align*}
        (\calV \otimes \calV')(U^*) &= \sum_{i=1}^k \sum_{i'=1}^{k'} (\calV(a_i) \otimes \calV'(a_{i'}))(U^*)  \\
        &= \sum_{i=1}^k \sum_{i'=1}^{k'} (\calV(a_i) \otimes \id_{V'}) (\id_{U} \otimes \calV'(a_{i'})) (U^*).
    \end{align*}
    Let $\calV^* = \calV \otimes \id_{V'}$ and $\calV'^* = \id_U \otimes \calV'$. Thus $\calV^*$ is a $\theta(k)$-representation on
    \[ U\otimes V' \to V\otimes V', \]
    and $\calV'^*$ is a $\theta(k')$-representation on
    \[ U\otimes U' \to U\otimes V'. \]
    Choose a complete flag in $V'$. Tensoring this flag with $U$ and with $V$ gives a filtration of $\calV^*$ whose successive quotients are all isomorphic to $\calV$. Repeatedly applying \Cref{prop:filt_preserve_stab}, we conclude that $\calV^*$ is semistable, with
    \[ \mu^\star(\calV^*)=\mu^\star(\calV). \]
    Likewise, a complete flag in $U$ induces a filtration of $\calV'^*$ whose successive quotients are all isomorphic to $\calV'$, so $\calV'^*$ is semistable and
    \[ \mu^\star(\calV'^*)=\mu^\star(\calV'). \]
    Therefore,
    \begin{align*}
        \dim (\calV \otimes \calV')(U^*) &= \dim \calV^*(\calV'^*(U^*))\\
        & \geq \mu^\star(\calV)^{-1} \dim \calV'^*(U^*)\\
        & \geq \mu^\star(\calV)^{-1} \mu^\star(\calV')^{-1} \dim U^*\\
        &= \mu^\star(\calV \otimes \calV')^{-1} \dim U^*.
    \end{align*}
    Since this holds for every subspace $U^*\subseteq U\otimes U'$, \Cref{prop:unstable} shows that $\calV \otimes \calV'$ is semistable.
\end{proof}

Let $\bbK / \bbF$ be a field extension, and $\calV$ be a representation over $\bbF$, we let $\calV_\bbK$ denote its base change where the linear spaces are $U_\bbK = \bbK \otimes_\bbF U$ and $V_\bbK = \bbK \otimes_\bbF V$. Similar to the fact that non-commutative rank is invariant under base change \cite[Lemma~5.3]{IQS18NCR}, it also holds for semistability. We present an elementary proof here, see also \cite[Proposition~2.4]{HS20Stab} for general quivers.

\begin{proposition}\label{prop:base_change_stability}
    For a semistable representation $\calV$ over $\bbF$, its base change $\calV_\bbK$ is also semistable.
\end{proposition}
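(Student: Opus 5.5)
The plan is to argue by contradiction via \Cref{prop:unstable}. If $\calV_\bbK$ were unstable, there would be a nonzero shrunk subspace $U^*\subseteq U_\bbK$ with $\dim \calV_\bbK(U^*)/\dim U^* < \dim V/\dim U$. We want to produce a shrunk subspace defined over $\bbF$. The cleanest tool is the \emph{maximal destabilizing} (or canonical) subspace, which is Galois-stable, but $\bbK/\bbF$ may be arbitrary (inseparable, transcendental). So I would instead work with the \emph{rank function} and a descent argument that avoids Galois theory.

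\textbf{Step 1: reduce to a finitely generated extension, then to a uniform algebraic statement.} Any witness $U^*$ is spanned by finitely many vectors, so its coordinates lie in a finitely generated subextension. Fix $d=\dim U^*$ and $r=\dim \calV_\bbK(U^*)$. The condition ``there exists a $d$-dimensional subspace $U^*$ with $\dim \calV(U^*)\le r$'' is equivalent to the vanishing of certain polynomials. Concretely, let $X$ be an $n\times d$ matrix of indeterminates. Then $\calV(\mathrm{col}(X))$ is the column span of the $m\times kd$ block matrix $[A_1X,\dots,A_kX]$. Thus the condition says there is $X$ of rank $d$ with $\operatorname{rank}[A_iX]_i\le r$. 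This alone does not descend, since a system of polynomial equations can have $\bbK$-points without $\bbF$-points.

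\textbf{Step 2: the actual descent (main obstacle).} I would instead use the extremal structure. Among all $\bbK$-subspaces $U^*\subseteq U_\bbK$, consider the quantity $g(U^*)= m\dim U^* - n\dim\calV_\bbK(U^*)$. Its maximum $g^\star>0$ is attained, and by submodularity of $U^*\mapsto \dim\calV_\bbK(U^*)$ (since $\calV(A+B)=\calV(A)+\calV(B)$ and $\calV(A\cap B)\subseteq \calV(A)\cap\calV(B)$), the maximizers are closed under sum and intersection. Hence there is a unique largest maximizer $U^\sharp$. Now I need $U^\sharp$ to be defined over $\bbF$. For this, I would use the criterion that a $\bbK$-subspace of $\bbK\otimes_\bbF U$ is defined over $\bbF$ iff it is stable under all $\bbF$-linear ``coefficient functionals''. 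More precisely, choose an $\bbF$-basis $\{e_\lambda\}$ of $\bbK$ and, for each $\lambda$, the $\bbF$-linear projection $\pi_\lambda\colon \bbK\to\bbF$ extended to $\pi_\lambda\otimes\id\colon U_\bbK\to U$. A standard lemma says $W\subseteq U_\bbK$ is defined over $\bbF$ iff $W$ is spanned by the vectors $\pi_\lambda(w)$ for $w\in W$, i.e.\ iff $\bbK\cdot\sum_\lambda \pi_\lambda(W)\subseteq W$. So it suffices to show that $W'=\bbK\cdot\sum_{\lambda}(\pi_\lambda\otimes\id)(U^\sharp)$ is again a maximizer; it contains $U^\sharp$, so maximality forces equality.

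To check this, note that $\calV_\bbK$ commutes with $\pi_\lambda\otimes\id$, since the maps $A_i$ are defined over $\bbF$. Therefore $\calV_\bbK(W')=\bbK\cdot\sum_\lambda \pi_\lambda(\calV_\bbK(U^\sharp))$, and both $W'$ and $\calV(W')$ are the ``$\bbF$-hulls'' of $U^\sharp$ and $\calV(U^\sharp)$. The dimension comparison $g(W')\ge g(U^\sharp)$ is the delicate point, and I expect it to be the main difficulty. My first attempt would be to realize $W'$ as a finite sum $\sum_\lambda c_\lambda U^\sharp$-type translates. Working in the $\bbF$-algebra $\bbK\otimes_\bbF\bbK$, apply $\id\otimes$ (multiplication twisted) to obtain images $\psi_\lambda(U^\sharp)$, where each $\psi_\lambda$ is a $\bbK$-semilinear-free operator commuting with $\calV$. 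Then use closure of maximizers under sums of images by operators commuting with $\calV$. That closure follows from submodularity, provided each image $\psi(U^\sharp)$ satisfies $g(\psi(U^\sharp))\ge g(U^\sharp)$; and that holds because $\dim\calV(\psi U)\le\dim\psi\calV(U)$ while $\psi$ may drop dimension, so some care is needed with kernels.

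\textbf{Fallback.} If this proves messy, I would instead cite the analogous argument of \cite[Lemma~5.3]{IQS18NCR} for non-commutative rank. I would reduce the rectangular case to the square case by replacing $\calV$ with $\calV\otimes\calV''$, where $\calV''$ is a semistable representation of slope $m/n$ (for instance a suitable matrix-multiplication-type tuple) defined over the prime field. This turns semistability into full non-commutative rank via \Cref{prop:tensor_preserve_stab}. Base-change invariance of $\NCR$ then finishes the argument, with a converse check that squareness detects semistability of the original $\calV$.
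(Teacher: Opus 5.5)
\textbf{The main line has a gap.} Your Step~2 fails exactly where you flag it as delicate: everything rests on $g(W')\ge g(U^\sharp)$, and the proposal never establishes it.

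Because $\pi_\lambda\otimes\id$ is only $\bbF$-linear, $W'$ is the $\bbF$-rational hull of $U^\sharp$. Passing to the hull enlarges both $U^\sharp$ and $\calV_\bbK(U^\sharp)$. What you need is $m(\dim W'-\dim U^\sharp)\ge n(\dim\calV_\bbK(W')-\dim\calV_\bbK(U^\sharp))$, and nothing in the proposal forces this.

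The suggested mechanism via operators commuting with $\calV$ does not supply it either. Take a $\bbK$-linear pair $\psi=(\psi_U,\psi_V)$ intertwining the arrows. Then $\calV_\bbK(\psi_U U^\sharp)=\psi_V\calV_\bbK(U^\sharp)$, and $\calV_\bbK(U^\sharp\cap\ker\psi_U)\subseteq\calV_\bbK(U^\sharp)\cap\ker\psi_V$. The best this gives is $g(\psi_U U^\sharp)\ge g(U^\sharp)-g(U^\sharp\cap\ker\psi_U)$. To conclude you would need $g(U^\sharp\cap\ker\psi_U)\le 0$, which nothing guarantees: subspaces of a maximizer can themselves have $g>0$, for instance the smallest maximizer. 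The ``twisted multiplication'' operators are also never actually specified.

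Descending $U^\sharp$ amounts to base-change invariance of a piece of the HN filtration, which is a refinement of the statement you are proving. The paper avoids descent entirely. After reducing to a finite extension of degree $d$, it views a $\bbK$-subspace $U^*\subseteq U_\bbK$ as an $\bbF$-subspace and notes $\calV_\bbK(U^*)=(\calW\otimes\calV)(U^*)$. Here $\calW$ is the $\theta(d)$-representation on $\bbK$ given by multiplication by an $\bbF$-basis of $\bbK$. Since $\calW(x)=\bbK$ for every $x\neq 0$, $\calW$ is semistable of slope $1$, and \Cref{prop:tensor_preserve_stab} rules out any shrunk subspace.

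\textbf{The fallback is a correct proof by a different route}, once two details are filled in.
\begin{itemize}
\item \emph{Choice of $\calV''$.} Take the full representation $\bbF^m\to\bbF^n$ whose arrows are all $mn$ elementary matrices; this is the tensor $\ang{m,1,n}$. Every nonzero $x$ satisfies $\calV''(x)=\bbF^n$, so $\calV''$ is semistable of slope $m/n$ and defined over the prime field. Then $\calV\otimes\calV''$ is a square $nm\times nm$ semistable representation by \Cref{prop:tensor_preserve_stab}. Hence it has full non-commutative rank, and this persists over $\bbK$ by \cite[Lemma~5.3]{IQS18NCR}.
\item \emph{Converse check.} Suppose $U^*$ were shrunk for $\calV_\bbK$. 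Then $U^*\otimes U''_\bbK$ maps into $\calV_\bbK(U^*)\otimes V''_\bbK$. That space has dimension $n\dim\calV_\bbK(U^*)<m\dim U^*=\dim(U^*\otimes U''_\bbK)$, so $U^*\otimes U''_\bbK$ is a shrunk subspace of the square product, a contradiction.
\end{itemize}

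Compared with the paper, you use the same device, tensoring with a full and hence semistable Kronecker representation. You use it only to square up the dimensions, and you outsource the change of field to the known $\NCR$ result. This handles arbitrary extensions in one stroke, provided the cited lemma does. The paper instead uses its full representation $\calW$ to absorb the field extension itself. Its argument is therefore self-contained, and in the square case it reproves base-change invariance of non-commutative rank rather than assuming it.
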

\begin{proof}
    We first reduce the general field extension to finite extension. Suppose $\calV_\bbK$ is unstable, by \Cref{prop:unstable}, it is witnessed by some shrunk subspace $U^*$, which is characterized by a purely algebraic condition, so by Hilbert's Nullstellensatz such witness must exists over the algebraic closure $\ov{\bbF}$. Moreover, such $U^*$ is spanned by finitely many vectors, one can always take the finite extension that these spanning vectors live in, $U^*$ remains a shrunk subspace over this subfield.

    We assume $\bbK/\bbF$ is a finite extension with degree $d = [\bbK : \bbF]$, and $\alpha_1,\dots,\alpha_d$ be a basis of this field extension. Suppose $\calV_\bbK$ is unstable, then by \Cref{prop:unstable} there is a shrunk subspace $U^*$ over $\bbK$ such that $\calV_\bbK(U^*) < (\mu^\star)^{-1} \dim_{\bbK} (U^*)$. Recall that by definition
    \[ \calV_\bbK(U^*) = \sum_{i=1}^k \calV_\bbK(a_i) (U^*), \]
    firstly it is no harm to scale by $\bbK$, so we have
    \[ \calV_\bbK(U^*) = \sum_{j=1}^d \sum_{i=1}^k \alpha_j \calV_\bbK(a_i) (U^*). \]
    But now suppose $U^*$ is arbitrary $\bbF$-subspace of $U_\bbK$, the right-hand side can be further written as sum of range of $\bbF$-linear maps
    \[ \calV_\bbK(U^*) = \sum_{j=1}^d \sum_{i=1}^k (\alpha_j \otimes_\bbF \calV(a_i)) (U^*) = (\calW \otimes \calV)(U^*), \]
    where $\calW$ is a $\theta(d)$-representation over $\bbF$ with $\bbF$-linear maps as multiplications $\alpha_j \colon \bbK\to\bbK$.

    For every nonzero $x\in \bbK$, we have $\calW(x) = \bbK$, so $\calW$ is semistable, thus \Cref{prop:tensor_preserve_stab} tells $\calW \otimes \calV$ is semistable, such shrunk subspace $U^*$ cannot exist.
\end{proof}

\subsection{Harder--Narasimhan filtration}

The following theorem tells us that every quiver representation admits a canonical decomposition into semistable factors, known as the \emph{Harder--Narasimhan filtration (HN-filtration)}. This result actually holds for any slope function and even has a much broader statement not restricted to quiver representations. But we only state the following version for our purpose.

\begin{theorem}[Harder--Narasimhan filtration {\cite[Theorem~2.5]{HdlP02HNFilt}}] \label{thm:HN-filtration}
    Every representation $\calV$ has a filtration
    \[ 0 = \calV_0 \subsetneq \calV_1 \subsetneq \cdots \subsetneq \calV_r = \calV, \]
    such that
    \begin{enumerate}
        \item For every $1\leq i < r$ we have $\mu^\star(\calV_i / \calV_{i-1}) > \mu^\star(\calV_{i+1}/\calV_i)$.
        \item For every $1\leq i\leq r$, $\calV_i/\calV_{i-1}$ is semistable.
    \end{enumerate}
    Moreover, such filtration is unique.
\end{theorem}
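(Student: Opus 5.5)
The plan is the standard maximal-destabilizing-subrepresentation argument. Since the slope $\mu^\star$ is a ratio of additive quantities, every comparison will be made linear. For a representation $\calW$ with spaces $(U_\calW,V_\calW)$ and a real number $\mu$, I would use
\[ \mu^\star(\calW) \le \mu \iff \dim U_\calW - \mu \dim V_\calW \le 0 , \]
with the convention $\mu^\star=\infty$ when $\dim V_\calW=0$. Both $\dim U$ and $\dim V$ are additive on short exact sequences $0\to\calV'\to\calV\to\calV/\calV'\to 0$. This gives the see-saw (mediant) property: if $\mu^\star(\calV')\le\mu$ and $\mu^\star(\calV/\calV')\le\mu$, then $\mu^\star(\calV)\le\mu$, and the same holds with $\ge$. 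It also shows that $\mu^\star(\calV)$ lies between the slopes of $\calV'$ and $\calV/\calV'$. I would check this once, including the degenerate cases where a piece is zero or has $\dim V=0$, and then use it throughout.

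\emph{Existence.} Nonzero subrepresentations of $\calV$ have bounded dimensions, so only finitely many slopes occur. Let $\mu_{\max}$ be the largest, and among subrepresentations of slope $\mu_{\max}$ choose one, $\calV_1$, of maximal total dimension $\dim U+\dim V$.

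First I would show $\calV_1$ is the unique such subrepresentation. Take another one, $\calW$. Then $\calV_1+\calW$ is a subrepresentation, and there is an exact sequence $0\to \calV_1\cap\calW\to\calV_1\oplus\calW\to\calV_1+\calW\to 0$. The term $\calV_1\cap\calW$ is zero or has slope $\le\mu_{\max}$, while $\calV_1\oplus\calW$ has slope exactly $\mu_{\max}$. The see-saw property then gives $\mu^\star(\calV_1+\calW)\ge\mu_{\max}$, hence equality. Maximality of the dimension forces $\calW\subseteq\calV_1+\calW=\calV_1$.

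Clearly $\calV_1$ is semistable, since its subrepresentations are subrepresentations of $\calV$. Next consider a nonzero subrepresentation of $\calV/\calV_1$; it has the form $\calW/\calV_1$ with $\calV_1\subsetneq\calW$. It must have slope $<\mu_{\max}$. Otherwise, by the see-saw, $\calW$ would have slope $\ge\mu_{\max}$, hence $=\mu_{\max}$, with larger dimension than $\calV_1$. Now apply induction on $\dim U+\dim V$ to $\calV/\calV_1$ and pull its HN-filtration back to $\calV$. Every subquotient is semistable, and the first new slope is $\mu_{\max}(\calV/\calV_1)<\mu^\star(\calV_1)$, which gives condition~1.

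\emph{Uniqueness.} Given any filtration satisfying 1 and 2, with quotient slopes $\mu_1>\cdots>\mu_r$, I would prove the following claim. Every nonzero subrepresentation $\calW$ has $\mu^\star(\calW)\le\mu_1$, with equality only if $\calW\subseteq\calV_1$. To see this, filter $\calW$ by $\calW\cap\calV_j$. Each successive quotient embeds into $\calV_j/\calV_{j-1}$, so by semistability it is either zero or has slope $\le\mu_j\le\mu_1$. Iterating the see-saw gives $\mu^\star(\calW)\le\mu_1$. Moreover, equality forces every nonzero piece with $j\ge2$ to vanish, because $\mu_j<\mu_1$ strictly, and therefore $\calW\subseteq\calV_1$.

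Hence $\calV_1$ is exactly the canonical maximal destabilizing subrepresentation constructed above: it has the maximal slope and contains every subrepresentation of that slope. The remainder of the filtration is an HN-filtration of $\calV/\calV_1$, so induction gives uniqueness.

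The main obstacle is purely bookkeeping: making the see-saw inequalities (especially the strict "equality only if" step) correct in the presence of the $\infty$ convention and zero pieces. Phrasing everything through the linear functional $\dim U-\mu\dim V$ is what I would rely on to avoid these issues.
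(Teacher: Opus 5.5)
Your proof is correct and complete. The see-saw property follows from additivity of $\dim U-\mu\dim V$ for finite $\mu$, and the case $\mu=\infty$ is handled separately via "$\mu^\star(\calW)=\infty$ iff $\calW$ has zero $V$-part". Your use of $\calV_1\cap\calW$ and $\calV_1+\calW$ correctly shows that the maximal destabilizing subrepresentation is unique. Filtering an arbitrary $\calW$ by $\calW\cap\calV_j$ correctly shows that any filtration satisfying conditions 1 and 2 must begin with that subrepresentation. The strictness $\mu_j<\mu_1$ for $j\ge 2$ is exactly what forces $\calW\subseteq\calV_1$. The only thing left implicit is the trivial base case $\calV=0$ (take $r=0$).

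The paper gives no proof of this statement; it imports it from \cite[Theorem~2.5]{HdlP02HNFilt}. What you have written is the standard maximal-destabilizing-subrepresentation argument for that result. Your version has one small advantage over quoting a general HN theorem: it works directly with the paper's slope $\mu^\star=\dim U/\dim V$ and its $\infty$ convention. So there is no need to first recast $\mu^\star$ as a slope with strictly positive denominator, for instance $\dim U/(\dim U+\dim V)$, which induces the same ordering.
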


Since the HN-filtration is unique, we say these $\calV_i/\calV_{i-1}$ are the \emph{subquotients} of $\calV$, and the \emph{dimension data} of $\calV$ be $(n_i, m_i) = (\dim (U_i/U_{i-1}), \dim(V_i/V_{i-1}))$.

We say a representation $\calV$ is \emph{concise} if it doesn't have slope $0$ or $\infty$ in its HN-filtration.
\begin{proposition} \label{prop:tensor_hn}
    Let $\calV, \calV'$ be two concise representations. Then their tensor product $\calV \otimes \calV'$ is concise, and each slope appearing in the HN-filtration of $\calV \otimes \calV'$ is a product of one slope in $\calV$ and one slope in $\calV'$. More precisely, let
    \[ \mu_i = \frac{n_i}{m_i}, \quad \mu'_j = \frac{n'_j}{m'_j} \]
    be the slope of subquotients of $\calV$ and $\calV'$, then the subquotient of $\calV\otimes \calV'$ having slope $\nu$ has dimension
    \[ \left( \sum_{\mu_i \mu'_j = \nu} n_i n'_j,  \sum_{\mu_i \mu'_j = \nu} m_i m'_j \right). \]
\end{proposition}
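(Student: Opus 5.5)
The plan is to build an explicit filtration of $\calV\otimes\calV'$ with semistable subquotients of strictly decreasing slopes; the uniqueness part of \Cref{thm:HN-filtration} then identifies it as the HN-filtration. Let $0=\calV_0\subsetneq\cdots\subsetneq\calV_r=\calV$ and $0=\calV'_0\subsetneq\cdots\subsetneq\calV'_{r'}=\calV'$ be the HN-filtrations, with slopes $\mu_1>\cdots>\mu_r$ and $\mu'_1>\cdots>\mu'_{r'}$. By conciseness all of these are finite and positive, so each $n_i,m_i,n'_j,m'_j$ is positive and the products $\mu_i\mu'_j$ are finite and positive. Let $\nu_1>\nu_2>\cdots>\nu_s$ be the distinct values among these products. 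For each $t$, define the subrepresentation
\[ \calW_t \coloneq \sum_{\mu_i\mu'_j \geq \nu_t} \calV_i\otimes\calV'_j \subseteq \calV\otimes\calV', \]
that is, the sum of the subspaces $U_i\otimes U'_j$ at vertex $0$ and $V_i\otimes V'_j$ at vertex $1$. Each $\calV_i\otimes\calV'_j$ is a subrepresentation, so $\calW_t$ is one too. Because the slopes decrease strictly, the index set $\{(i,j):\mu_i\mu'_j\ge\nu_t\}$ is a down-set in the product order on $[r]\times[r']$.

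The first key step is to identify the subquotients. Choose bases of $U$ and $V$ adapted to the filtration $\{\calV_i\}$, and likewise for $U',V'$. Then $U\otimes U'$ splits as the direct sum of the blocks $(U_i/U_{i-1})\otimes(U'_j/U'_{j-1})$ (via these lifts), and the sum over a down-set is spanned by the corresponding blocks. Consequently $\calW_t/\calW_{t-1}$ has a filtration by adding the cells $(i,j)$ with $\mu_i\mu'_j=\nu_t$ one at a time, in any linear extension of the product order. Each successive quotient in that filtration is isomorphic to $(\calV_i/\calV_{i-1})\otimes(\calV'_j/\calV'_{j-1})$. To check this, observe that a map $\calV(a)\otimes\calV'(a')$ sends block $(i,j)$ into the span of blocks $(\le i,\le j)$, and modulo the earlier cells it acts as the induced tensor map.

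By \Cref{prop:tensor_preserve_stab}, each such piece is semistable with slope $\mu_i\mu'_j=\nu_t$. Repeated use of \Cref{prop:filt_preserve_stab} then shows that $\calW_t/\calW_{t-1}$ is semistable of slope $\nu_t$. Its dimension is $\bigl(\sum_{\mu_i\mu'_j=\nu_t}n_in'_j,\ \sum_{\mu_i\mu'_j=\nu_t}m_im'_j\bigr)$, which is exactly the claimed formula. The slopes $\nu_t$ strictly decrease in $t$, so uniqueness in \Cref{thm:HN-filtration} says this is the HN-filtration. Conciseness of the tensor product follows because every $\nu_t$ is finite and positive.

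I expect the main obstacle to be the bookkeeping in the first key step. There are two things to verify carefully. First, the sum of the subspaces over a down-set has the expected direct-sum block structure, which requires nestedness of the filtrations and cannot be taken for granted for arbitrary sums of tensor subspaces. Second, adding one cell of the down-set at a time yields a quotient that is exactly the tensor product of the two subquotients, as a representation of the Kronecker quiver $\theta(kk')$ and not merely as a pair of vector spaces. I will handle both by working throughout in the adapted bases, so that every operator is block upper-triangular with respect to the product poset.
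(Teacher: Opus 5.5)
Your proposal is correct and follows essentially the paper's proof: the paper sorts all cells $(i,j)$ by $\mu_i\mu'_j$ into a single refined filtration, applies \Cref{prop:tensor_preserve_stab} to each one-cell subquotient, and then merges adjacent equal-slope pieces via \Cref{prop:filt_preserve_stab}. You group by slope first and then filter each group cell by cell, which uses the same ingredients in a different order. Your adapted-basis bookkeeping with down-sets and block-triangularity, together with the explicit appeal to uniqueness of the HN-filtration, supplies the details the paper leaves as ``not hard to see.''
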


\begin{proof}
    Let's write $\calV^* = \calV \otimes \calV'$, let the length of filtration of $\calV, \calV'$ be $r, r'$, and we then construct a filtration as follows. We first sort all the values $\mu_i \mu'_j$ in the descending order that, presented by sequences $\{(i_t, j_t)\}_{1\leq t\leq r \cdot r'}$ such that each $(i, j)$ appears exactly once in the sequence, and
    \begin{center}
        $\nu_1 \geq \nu_2 \geq \cdots \geq \nu_{r\cdot r'}$ \quad where \quad $\nu_t = \mu_{i_t} \mu'_{j_t}$.
    \end{center}
    Then we define the filtration by
    \[ \calV_t^* =(U^*_t, V^*_t), \quad U^*_t = \sum_{s=1}^t U_{i_s} \otimes U'_{j_s}, \quad V^*_t = \sum_{s=1}^t V_{i_s} \otimes V'_{j_s}, \]
    and it's not hard to see that $\calV^*_t / \calV^*_{t-1} = (\calV_{i_t}/\calV_{i_t-1}) \otimes (\calV'_{j_t}/\calV'_{j_t-1})$, which is semistable with slope $\nu_t$ by \Cref{prop:tensor_preserve_stab}.

    The current filtration might not be Harder--Narasimhan since the slope is not strictly decreasing, however, by \Cref{prop:filt_preserve_stab} we can take the coarser filtration by grouping adjacent subquotients with same slope, then each subquotient has the desired dimension.
\end{proof}

\section{Basis shifting} \label{sec:basis_shift}

In this section we prove a structural result on tensors, refining the \emph{basis shifting} theorem of Wigderson--Zuiddam \cite[Theorem~11.9]{WZ26Spectra}. Results of this type go back to Strassen \cite[\S6]{Str88AsymSpec}, where they were used to establish a compression property of matrix multiplication tensors and, in turn, the star-convexity of $\specMM$.

We will then use this refinement to prove another compression statement: any semistable tensor in the sense of \cref{sec:quiver} restricts to a matrix multiplication tensor while losing only a constant fraction of the dimensions in the $U,V$ modes. Recall that Fortin--Reutenauer \cite[Corollary~2]{FR04NCRk} proved
\[ \CR(T) \geq \NCR(T)/2. \]
In particular, if $T$ has full non-commutative rank, then
\[ T \geq \ang{1,\lceil n/2\rceil,1}. \]
Our result can be viewed as a rectangular generalization of this phenomenon.

Unlike \cref{sec:quiver}, here we switch the roles of the tensor modes. For an $n\times m\times k$ tensor $T$, we regard $T$ as an $n\times k$ array of vectors $T_{il}\in \bbF^m$ ($1\le i\le n$, $1\le l\le k$). For $A\in\GL_n(\bbF)$ and $B\in\GL_k(\bbF)$, we write $A\cdot T\cdot B$ for the array obtained by linear recombination of rows and columns via left and right multiplication by $A$ and $B$.

Throughout this section, $\bbF$ is an infinite field. We say a property holds for a \emph{generic} choice of parameters $x_1,\dots,x_N\in\bbF$ if it holds on a nonempty Zariski open subset of $\bbF^N$.

\begin{theorem} \label{thm:shift}
    For any $n\times k$ $\bbF^m$-valued array $T$, there exist integers
    \[ \lambda_1 \geq \cdots \geq \lambda_k \geq 0 \]
    such that, for a generic pair $(A,B)\in \GL_n(\bbF)\times \GL_k(\bbF)$, there is a lower unitriangular matrix $L$ for which, writing
    \[ T' = LA\cdot T\cdot B, \]
    \begin{enumerate}
        \item The vectors $T'_{il}$ with $1\leq l\leq k$ and $1\leq i\leq \lambda_l$ form a basis of $\Span T' \subseteq \bbF^m$.
        \label{item:basis}
        \item For every $1\leq p\leq k$, if $i' > \lambda_p$ and $l' \geq p$, then $T'_{i'l'}$ is a linear combination of $\{T'_{il}\}_{i\leq \lambda_l,\; l\leq p}$.
        \label{item:span}
    \end{enumerate}
\end{theorem}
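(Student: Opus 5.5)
The plan is to define the $\lambda_p$ through generic ranks, realize them by Gaussian elimination down the rows, and then check that a single row operation $L$ works for all later columns at once.

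\textbf{Defining $\lambda$.} For $(A,B)$ put $T^{A,B}=A\cdot T\cdot B$, and let
\[ W_p=\Span\{T^{A,B}_{il}: i\le n,\ l\le p\}. \]
Let $r_p$ be the generic value of $\dim W_p$. It is lower semicontinuous in $(A,B)$, so the maximum is attained on a nonempty Zariski open set. Set $\lambda_p=r_p-r_{p-1}$. Note that $W_p$ is determined by the first $p$ columns of $TB$ alone, since $A$ only recombines rows.

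\textbf{Step 1: $\lambda$ is nonincreasing.} Column $p$ of $TB$ is $\sum_l B_{lp}T_{\cdot l}$, a generic linear combination of the columns of $T$. Consider the linear map $x\mapsto \bigl(\sum_l x_l T_{il}\bigr)_i$ composed with the quotient by $W_{p-1}$. By lower semicontinuity, the generic dimension of the image of a generic column modulo $W_{p-1}$ is the maximum over $x$. Enlarging $W_{p-1}$ to $W_p$ can only lower that maximum. Since columns $p$ and $p+1$ are exchangeable under genericity of $B$, this gives $\lambda_{p+1}\le\lambda_p$.

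\textbf{Step 2: building $L$.} I would proceed column by column, as in the proof of \cite[Theorem~11.9]{WZ26Spectra}. The span of $\{T'_{ip}\}_i$ modulo $W_{p-1}$ has dimension $\lambda_p$. Because $A$ is generic (a generic row mixing), the first $\lambda_p$ rows of column $p$ are already independent modulo $W_{p-1}$. This holds simultaneously for all $p$: each condition is a nonvanishing minor, hence a nonempty open condition, and a finite intersection of such sets is still nonempty open. By Step 1, the rows $i\le\lambda_p$ of column $p$ contain the rows $i\le\lambda_{p+1}$ used at the next stage. Counting then gives item~\ref{item:basis}: the vectors $T'_{il}$ with $i\le\lambda_l$ number $\sum_p\lambda_p=r_k=\dim\Span T'$ and are independent, so they form a basis. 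It remains to choose a lower unitriangular $L$ so that every row $i'>\lambda_p$ becomes, in all columns $l'\ge p$, a combination of the basis vectors from columns $\le p$. I would build $L$ as a product of elementary row operations processed in order of $p$. At stage $p$, for each row $i'>\lambda_p$, subtract the combination of rows $i\le\lambda_p$ that kills $T'_{i'p}$ modulo $W_{p-1}$. Rows in the range $\lambda_{p+1}<i'\le\lambda_p$ are untouched at stage $p$, so earlier stages are not destroyed, and the product is indeed lower unitriangular.

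\textbf{Main obstacle.} The difficulty is that one row operation must work for all later columns $l'\ge p$ simultaneously, not just column $p$. Write the relation found at stage $p$ as a dependency of row $i'$ on rows $\le\lambda_p$ modulo $W_{p-1}$. For a generic column vector $x$ its coefficients $c(x)$ are rational functions of $x$. I would prove that they are in fact independent of $x$ by a dimension argument. If $c(x)$ varied with $x$, then taking a generic combination of two columns $x,y$, that is, column $x+ty$, would produce a row $i'>\lambda_p$ that is independent modulo $W_{p-1}$ plus the span of rows $\le\lambda_p$. The image of a generic column modulo $W_{p-1}$ would then have dimension greater than $\lambda_p$, contradicting the maximality in Step 1. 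Equivalently, the row-space map modulo $W_{p-1}$ has constant rank $\lambda_p$ on a linear space of columns, and a standard argument shows its kernel structure on rows is constant. Since $B$ is generic, the columns $l'>p$ are among these generic columns, so the same $L$ works for them and item~\ref{item:span} follows. I expect making this constancy rigorous, especially intersecting the required open conditions over all $p$ at once, to be the delicate part.
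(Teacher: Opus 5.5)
Your definition of $\lambda$, your Step~1 and your derivation of item~1 are fine. Step~1 is more direct than the paper, which gets $\lambda_p\le\lambda_{p-1}$ only as a byproduct of item~2 at stage $p-1$. You are also right that item~1 holds for every lower unitriangular $L$.

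\textbf{The gap is the key lemma for item~2.} You claim that the coefficients $c(x)$ expressing row $i'$ through rows $\le\lambda_p$ modulo $W_{p-1}$ do not depend on the column $x$. This is false, and so is the ``constant kernel'' principle you invoke.

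Take $n=k=2$, $m=1$ and $T$ the $2\times 2$ identity array. Then $\lambda=(1,0)$ and $W_0=0$. Column $x$ of $A\cdot T$ is $Ax$, so $c(x)=(Ax)_2/(Ax)_1$, which varies with $x$. Indeed no $L$ can make row $2$ vanish in both columns, because $T'=LAB$ is invertible. Constancy of $c$ would force $T'_{i'l'}\in W_{p-1}$ for all $l'\ge p$. That is strictly stronger than item~2 and false here; item~2 only asks for $T'_{22}\in W_1=\bbF$.

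Your dimension argument cannot detect this. For generic $t$, the column $x+ty$ is itself a generic column. Maximality then already places its row $i'$ in $W_{p-1}$ plus the span of its first $\lambda_p$ rows, so varying coefficients cost no rank and give no contradiction.

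\textbf{What is true, and what the paper proves, is a first-order statement.} If column $x$ has the maximal rank $\lambda_p$ modulo $W_{p-1}$, then every other column maps the kernel at $x$ into the image at $x$, which is $W_p/W_{p-1}$.

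Concretely, after your stage-$p$ operation, rows $i'>\lambda_p$ of column $p$ lie in $W_{p-1}$. Choose a basis of $\bbF^m$ extending a basis of $W_{p-1}$ by $T'_{1p},\dots,T'_{\lambda_p p}$. Suppose some $T'_{i'l'}$ with $l'>p$ has a nonzero coefficient $\beta$ on a basis vector $e_t$ outside $W_p$. Take the minor of $\alpha T'_{\cdot,p}+T'_{\cdot,l'}$ on rows $[\lambda_p]\cup\{i'\}$ and on the coordinates of $T'_{1p},\dots,T'_{\lambda_p p}$ and $e_t$. On these rows and coordinates, column $p$ contributes $\alpha$ times a block $\mathrm{diag}(I_{\lambda_p},0)$. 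Hence the minor is a polynomial in $\alpha$ whose $\alpha^{\lambda_p}$-coefficient is $\beta$.

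Now replace $B_{\cdot,p}$ by $\alpha B_{\cdot,p}+B_{\cdot,l'}$, keeping the columns before $p$ fixed. For suitable $\alpha$ this gives $\dim W_p-\dim W_{p-1}>\lambda_p$, contradicting maximality. The conclusion is exactly $T'_{i'l'}\in W_p$, which is item~2, and no constancy is needed.

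The same fact supplies the real reason earlier stages survive. Rows $i'>\lambda_{p-1}$ already have column-$p$ entries in $W_{p-1}$, so they receive zero coefficients at stage $p$. What protects them is not that rows in $(\lambda_{p+1},\lambda_p]$ are left untouched.
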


\begin{proof}
    Let $V=\bbF^m$. For $0\leq p\leq k$, let $V_p$ denote the span of the first $p$ columns of $A\cdot T\cdot B$, with the convention $V_0=0$. Since left multiplication only recombines rows, the spaces $V_p$ depend only on the first $p$ columns of $B$, and not on $A$ or $L$.

    We define $\lambda_1,\ldots,\lambda_k$ recursively. First, let $\lambda_1$ be the maximal possible value of $\dim V_1$ as $B_{\cdot,1}$ varies. More generally, once $\lambda_1,\ldots,\lambda_{p-1}$ have been defined, let $\lambda_p$ be the maximal possible value of
    \[ \dim V_p-\dim V_{p-1} \]
    among choices of the first $p$ columns of $B$ satisfying
    \[ \dim V_j=\lambda_1+\cdots+\lambda_j \qquad (1\leq j\leq p-1). \]
    We will prove along the way that
    \[ \lambda_1\geq \lambda_2\geq \cdots \geq \lambda_k\geq 0. \]

    For each $p$, the condition $\dim V_p\geq d$ is the nonvanishing of some $d$-minor, hence is Zariski open in the first $p$ columns of $B$. It follows inductively that there is a nonempty Zariski open subset of $\GL_k(\bbF)$ on which
    \[ \dim V_p=\lambda_1+\cdots+\lambda_p \qquad (1\leq p\leq k). \]
    Fix such a generic choice of $B$.

    We now choose $A$ generically so that, for every $p$, the vectors $(A\cdot T\cdot B)_{1p},\dots,(A\cdot T\cdot B)_{\lambda_p,p}$ project to a basis of $V_p/V_{p-1}$. This is again a nonempty Zariski-open condition, so it holds simultaneously for all $p$ for generic $A$.

    We next construct the lower unitriangular matrix $L$ column by column. For $p=1$, write
    \[ L = \begin{bmatrix}
        L^{(-1)} & 0\\
        L^{(1)} & I_{n - \lambda_1}
    \end{bmatrix}, \]
    where $L^{(-1)}$ is an undetermined $\lambda_1\times\lambda_1$ lower unitriangular matrix, and $L^{(1)}$ is chosen so that $T'_{i1}=0$ for all $i>\lambda_1$. Thus $T'_{11},\dots,T'_{\lambda_1,1}$ form a basis of $V_1$.

    We claim that property \ref{item:span} holds for $p=1$. Choose a basis of $V$ whose first $\lambda_1$ vectors are $e_i=T'_{i1}$. Suppose, for contradiction, that some $T'_{i'l'}$ with $i'>\lambda_1$ does not lie in $\Span\{e_i\}_{i\leq \lambda_1}$. Then it has a nonzero coefficient on some basis vector $e_t$ with $t>\lambda_1$. Consider rows
    \[ R=[\lambda_1] \cup \{i'\}, \]
    columns
    \[ C=[\lambda_1] \cup \{t\}. \]
    The corresponding minors of columns $1$ and $l'$ have the form
    \[ (T'_{\cdot,1})_{R,C} = \begin{bmatrix}
        I_{\lambda_1} \\ & 0
    \end{bmatrix}, \quad (T'_{\cdot,l'})_{R,C} = \begin{bmatrix}
        *_{\lambda_1 \times \lambda_1} & *\\
        * & \beta
    \end{bmatrix}, \]
    where $\beta\neq 0$. Therefore
    \[ (\alpha T'_{\cdot,1} + T'_{\cdot,l'})_{R,C} = \begin{bmatrix}
        \alpha I + * & *\\
        * & \beta
    \end{bmatrix}, \]
    and the coefficient of $\alpha^{\lambda_1}$ in this determinant is $\beta\neq 0$. Hence for some $\alpha\in \bbF$, the column $\alpha T'_{\cdot,1}+T'_{\cdot,l'}$ spans a space of dimension strictly larger than $\lambda_1$. Since replacing the first column of $B$ by $B_{\cdot,1}+\alpha B_{\cdot,l'}$ produces exactly this new first column, this contradicts the maximality of $\lambda_1$.

    We now turn to $p=2$. Since property \ref{item:span} is already known for $p=1$, every entry $T'_{i2}$ with $i>\lambda_1$ lies in $V_1$. Hence every new contribution to $V_2/V_1$ comes from the first $\lambda_1$ rows, and therefore $\lambda_2\leq \lambda_1$. By our generic choice of $B$ and $A$, the vectors $(A\cdot T\cdot B)_{12},\dots,(A\cdot T\cdot B)_{\lambda_2,2}$ project to a basis of $V_2/V_1$.

    We now further decompose $L^{(-1)}$ as
    \[ L^{(-1)} = \begin{bmatrix}
        L^{(-2)} & 0\\
        L^{(2)} & I_{\lambda_1-\lambda_2}
    \end{bmatrix} \]
    where $L^{(-2)}$ is an undetermined $\lambda_2\times\lambda_2$ lower unitriangular matrix, and $L^{(2)}$ is chosen so that $T'_{i2}\in V_1$ for every $\lambda_2<i\leq \lambda_1$. Thus $T'_{12},\dots,T'_{\lambda_2,2}$ form a basis of a complement of $V_1$ in $V_2$.

    We claim that property \ref{item:span} also holds for $p=2$. Choose a basis of $V$ whose first $\lambda_1$ vectors are $e_i=T'_{i1}$ and whose next $\lambda_2$ vectors are $e_{\lambda_1+i}=T'_{i2}$. Suppose, for contradiction, that some $T'_{i'l'}$ with $i'>\lambda_2$ and $l'\geq 2$ does not lie in $\Span\{e_i\}_{i\leq \lambda_1+\lambda_2}$. Since property \ref{item:span} is already known for $p=1$, necessarily $i'\leq \lambda_1$. Then $T'_{i'l'}$ has a nonzero coefficient on some $e_t$ with $t>\lambda_1+\lambda_2$. Consider rows
    \[ R= [\lambda_2] \cup \{i'\}, \]
    columns
    \[ C=\{\lambda_1+1,\dots,\lambda_1+\lambda_2,t\}. \]
    The corresponding submatrices of columns $2$ and $l'$ are
    \[ (T'_{\cdot,2})_{R,C} = \begin{bmatrix}
        I_{\lambda_2} \\ & 0
    \end{bmatrix}, \quad (T'_{\cdot,l'})_{R,C} = \begin{bmatrix}
        *_{\lambda_2\times \lambda_2} & *\\
        * & \beta
    \end{bmatrix}. \]
    The determinant of the corresponding minor of $\alpha T'_{\cdot,2}+T'_{\cdot,l'}$ is a nonzero polynomial in $\alpha$, so for generic $\alpha\in \bbF$ it is nonzero. Hence replacing the second column of $B$ by $B_{\cdot,2}+\alpha B_{\cdot,l'}$ while keeping $B_{\cdot,1}$ fixed produces a new choice for which $\dim V_2-\dim V_1>\lambda_2$, contradicting the maximality of $\lambda_2$.

    The induction step for general $p$ is identical. Assume that we have already arranged the first $p-1$ columns so that:
    \begin{itemize}
        \item $\dim V_j=\lambda_1+\cdots+\lambda_j$ for every $j<p$;
        \item the vectors $T'_{1j},\dots,T'_{\lambda_j,j}$ form a basis of a complement of $V_{j-1}$ in $V_j$;
        \item for every $j<p$ and every $i>\lambda_j$, one has $T'_{ij}\in V_{j-1}$.
    \end{itemize}
    Then every new contribution to $V_p/V_{p-1}$ must come from the first $\lambda_{p-1}$ rows of column $p$, so $\lambda_p\leq \lambda_{p-1}$. By the generic choice of $A$, the first $\lambda_p$ entries of column $p$ project to a basis of $V_p/V_{p-1}$, and by choosing the next block of $L$ we can force $T'_{ip}\in V_{p-1}$ for all $i>\lambda_p$. The same determinant argument as above shows that if some $T'_{i'l'}$ with $i'>\lambda_p$ and $l'\geq p$ did not lie in $V_p$, then replacing column $p$ of $B$ by $B_{\cdot,p}+\alpha B_{\cdot,l'}$ would increase $\dim V_p-\dim V_{p-1}$, a contradiction.

    Iterating up to $p=k$ gives $\dim V_k=\lambda_1+\cdots+\lambda_k$ and
    \[ V_k=\Span T'. \]
    Therefore the vectors $T'_{il}$ with $1\leq l\leq k$ and $1\leq i\leq \lambda_l$ form a basis of $\Span T'$, establishing \ref{item:basis}; and the inductive containment statement is exactly \ref{item:span}.
\end{proof}

\subsection{Compressing semistable tensors}

\begin{theorem} \label{thm:mm_from_semistable}
    Assume $\bbF$ is infinite. Let $T$ be an $n\times m\times k$ semistable tensor with $n\leq m$, then for every $1\leq p \leq m/n$ we have
    \[ T \geq \Ang{1, \left\lceil \frac{(m-(p-1)n)n}{n+m}\right\rceil , p}. \]
\end{theorem}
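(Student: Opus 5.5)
The plan is to apply the basis shifting theorem (\Cref{thm:shift}) to $T$, viewed as an $n\times k$ array of vectors in $V=\bbF^m$ (rows indexed by a basis of $U$, columns by a basis of $W$). Semistability will then give a lower bound on the integer $\lambda_p$ from that theorem, and a large $\lambda_p$ lets us read off the matrix multiplication tensor directly. Recall that $\ang{1,q,p}=\sum_{j\le q,\,l\le p} x_j y_{jl} z_l$. So it suffices to exhibit $q$ independent directions in $U$ and $p$ directions in $W$ whose $qp$ images $T_{jl}\in V$ are linearly independent: we restrict $U$ onto those $q$ coordinates, $W$ onto those $p$ coordinates, and $V$ by a projection sending the $qp$ vectors to the standard basis $y_{jl}$.

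\textbf{Step 1 (reduction to a bound on $\lambda_p$).} Since $\bbF$ is infinite, \Cref{thm:shift} applies. It gives $\lambda_1\ge\cdots\ge\lambda_k$ and invertible changes of basis $LA$ on $U$ and $B$ on $W$, producing $T'$ whose vectors $\{T'_{il}: i\le\lambda_l\}$ form a basis of $\Span T'$. The tensor $T'$ is isomorphic to $T$, so $T'\ge$ anything $T$ restricts to and conversely. If $\lambda_p\ge q$, then since the $\lambda_l$ are nonincreasing, every $(i,l)$ with $i\le q$, $l\le p$ satisfies $i\le\lambda_l$. Hence the vectors $T'_{il}$ for $i\le q$, $l\le p$ are part of a basis and are independent, and the restriction above yields $T\ge\ang{1,q,p}$. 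It therefore remains to show
\[ \lambda_p \ge \frac{(m-(p-1)n)n}{n+m}, \]
after which integrality of $\lambda_p$ gives the ceiling.

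\textbf{Step 2 (shrunk subspace from the shifted form).} Let $U^*\subseteq U$ be the span of the (transformed) basis vectors of $U$ indexed by rows $i>\lambda_p$; it has dimension $n-\lambda_p$. The image $\calV(U^*)$ is spanned by all $T'_{il}$ with $i>\lambda_p$, over all columns $l$, since the change of basis $B$ on $W$ does not change the span of the images. We bound its dimension in two parts:
\begin{itemize}
\item For $l<p$, these vectors lie in $V_{p-1}$, which has dimension $\lambda_1+\cdots+\lambda_{p-1}\le (p-1)n$.
\item For $l\ge p$, property \ref{item:span} of \Cref{thm:shift} places them in $V_p=V_{p-1}+\Span\{T'_{ip}: i\le\lambda_p\}$.
\end{itemize}
Hence $\dim\calV(U^*)\le (p-1)n+\lambda_p$.

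\textbf{Step 3 (apply semistability).} If $\lambda_p=n$ the bound is immediate, because $(m-(p-1)n)n/(n+m)\le n$. Otherwise $U^*\ne0$, and semistability through \Cref{prop:unstable} gives $\dim\calV(U^*)\ge \frac mn(n-\lambda_p)$. Combining this with Step 2,
\[ (p-1)n+\lambda_p\ \ge\ m-\tfrac mn\lambda_p, \]
which rearranges to $\lambda_p(n+m)/n\ge m-(p-1)n$, exactly the required bound. The hypothesis $p\le m/n$ only ensures that the target is positive.

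\textbf{Main obstacle.} The main step needing care is Step 2. One must check that the row span of $T'$ with $i>\lambda_p$, taken in the new basis $LA$ of $U$, really is a subspace of $U$ whose image under the $k$ slices is spanned by exactly those array entries. One must also check that the containment bound from property \ref{item:span} is correctly combined with the crude bound $\dim V_{p-1}\le (p-1)n$. Everything else is bookkeeping.
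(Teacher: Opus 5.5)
Your proposal is correct and follows the paper's proof essentially step for step. You apply \Cref{thm:shift}, read off $\ang{1,\lambda_p,p}$ from the first $\lambda_p$ rows and $p$ columns, and lower-bound $\lambda_p$ by applying semistability to the span of the rows $i>\lambda_p$, whose image lies in $V_p$ of dimension at most $(p-1)n+\lambda_p$. The only differences are cosmetic: you project $V$ onto the span of the $qp$ independent vectors instead of changing basis of $V$, and you bound $\dim V_{p-1}\le (p-1)n$ directly instead of via $\lambda_l\le n$. Note also that $p\le k$, needed for $\lambda_p$ to be defined, is automatic: semistability forces $\calV(U)=V$, hence $m\le nk$ and $k\ge m/n\ge p$.
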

In particular, when $m \geq (2p-1)n$, this gives $T \geq \ang{1, \lceil n/2\rceil, p}$, moreover $T \geq \ang{1, \lceil n/2\rceil, \lfloor(m+n)/(2n)\rfloor}$.
\begin{proof}
    Let's consider $T$ as a representation $\calV$, by \Cref{thm:shift}, we change basis on $U$ and $W$ (i.e., recombine the $k$ different linear maps) such that there exists integers
    \[ \lambda_1 \geq \cdots \geq \lambda_k \geq 0, \]
    and the vectors $\{\calV(a_l)(x_i)\}_{i\leq \lambda_l}$ form a basis of $V$, so we may change the basis on $V$ to let
    \[ \calV(a_1)(x_i) = y_i, \quad \calV(a_2)(x_i) = y_{\lambda_1+i}, \quad \dots,\quad  \calV(a_l)(x_i) = y_{\lambda_1+\cdots+\lambda_{l-1}+i}. \]
    Moreover, property \ref{item:span} guarantees that for $i > \lambda_p$, we have $\calV(x_i) \subset V_p$, where $V_p = \Span \{y_i : i \leq \lambda_1+\cdots+\lambda_p\}$. Consider a restriction of $T$ by zeroing out all $\{x_i : i > \lambda_p\}$ and $\{z_l : l > p\}$, the result tensor is
    \[ T \geq \sum_{i=1}^{\lambda_p} \sum_{l=1}^p x_i y_{\lambda_1+\cdots+\lambda_{l-1} + i} z_l, \]
    this can be relabeled to
    \[ \sum_{i=1}^{\lambda_p} \sum_{l=1}^p x_i y'_{il} z_l \cong \ang{1,\lambda_p, p}. \]

    We then use the semistable condition to show that $\lambda_p$ is large when $p$ is small.
    For each $1\leq p\leq m/n$, we let $U^*_p = \Span\{x_i : i > \lambda_p\}$, since $\calV(U^*_p) \subseteq V_p$, we have
    \[ \frac{m}{n}(n-\lambda_p) = \mu^\star(\calV)^{-1} \dim(U^*_p) \stackrel{\text{Prop.~\ref{prop:unstable}}}{\leq} \dim \calV(U^*_p) \leq \dim (V_p) = \lambda_1 + \cdots + \lambda_p. \]
    We then substitude the trivial upper bound $\lambda_l \leq n$ for all $l < p$, this gives
    \[ \lambda_p \geq \frac{(m-(p-1)n)n}{n+m}. \qedhere \]
\end{proof}

\section{Lower value functional} \label{sec:value}

In this section, we introduce the notion of the lower value functional. Throughout this section, whenever we write $\gamma$, we mean a vector $\gamma=(\gamma_1,\gamma_2,\gamma_3) \in \bbR_{\geq 0}^3$ with at least one nonzero coordinate. We then define $\xi_\gamma$ by
\[ \xi_\gamma (T) = \sup \left\{ (E^{\gamma_1} H^{\gamma_2} L^{\gamma_3})^{1/N} : T^{\otimes N} \geq \ang{E,H,L} \right\}. \]

The name originates from the study of fast matrix multiplication. Starting from Coppersmith and Winograd \cite[\S8]{CW90Laser}, the notion of \emph{value}
\[ V_\tau(T) = \sup \left\{ \left(\sum_{k=1}^K (E_k H_k L_k)^{\tau}\right)^{1/N} : T^{\otimes N} \neged \bigoplus_{k=1}^K \ang{E_k, H_k, L_k} \right\} \]
has been crucial in the analysis of the laser method. Both this and $\xi_\gamma$ try to capture the ``most valuable'' matrix multiplication tensor that one can asymptotically obtain from $T$. The difference is that our definition only allows restriction to a single matrix multiplication tensor, not a direct sum. Even when $\gamma=(\tau,\tau,\tau)$, it is not clear \emph{a priori} whether the two notions coincide. Our reason for introducing $\xi_\gamma$ is that it admits a particularly convenient convex-geometric formulation, see \Cref{sec:value-body}.

The following properties are immediate from the definition.
\begin{fact}
    For any $\gamma$, the lower value functional $\xi_\gamma$ is
    \begin{enumerate}
        \item monotone,
        \item supermultiplicative,
        \item moreover, for any $T$ and $n$, $\xi_\gamma(T^{\otimes n}) = \xi_\gamma(T)^n$, and
        \item for any $\lambda > 0$, $\xi_{\lambda\gamma}(T) = \xi_{\gamma}(T)^\lambda$.
    \end{enumerate}
\end{fact}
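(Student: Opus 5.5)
Each of the four properties follows directly from the definition of $\xi_\gamma$ as a supremum over restrictions $T^{\otimes N}\geq \ang{E,H,L}$. I will use two facts about matrix multiplication tensors: they are closed under tensor product, $\ang{E,H,L}\otimes\ang{E',H',L'}\cong\ang{EE',HH',LL'}$, and restriction is compatible with tensor products. Throughout, for a restriction $T^{\otimes N}\geq\ang{E,H,L}$ I write $v=E^{\gamma_1}H^{\gamma_2}L^{\gamma_3}$ for its value. The set of admissible restrictions is nonempty, since $T^{\otimes N}\geq\ang{1}=\ang{1,1,1}$ whenever $T\neq 0$; the zero tensor is handled by a convention.

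\emph{Monotonicity.} Suppose $T\leq S$. Then $T^{\otimes N}\leq S^{\otimes N}$, so every restriction $T^{\otimes N}\geq\ang{E,H,L}$ is also a restriction of $S^{\otimes N}$. The set defining $\xi_\gamma(T)$ is therefore contained in the set defining $\xi_\gamma(S)$, and the suprema compare accordingly.

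\emph{Supermultiplicativity.} Take restrictions $T^{\otimes N}\geq\ang{E,H,L}$ and $S^{\otimes M}\geq\ang{E',H',L'}$, with values $v$ and $v'$. Raise the first to the $M$-th tensor power and the second to the $N$-th. This gives $(T\otimes S)^{\otimes NM}\geq\ang{E^ME'^N,H^MH'^N,L^ML'^N}$, which has value $v^Mv'^N$. Taking the $NM$-th root yields $v^{1/N}v'^{1/M}$. Taking suprema over both restrictions gives $\xi_\gamma(T\otimes S)\geq\xi_\gamma(T)\xi_\gamma(S)$.

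\emph{Power identity.} The inequality $\xi_\gamma(T^{\otimes n})\geq\xi_\gamma(T)^n$ holds for two reasons. It follows by iterating supermultiplicativity. It also follows directly: if $T^{\otimes N}\geq\ang{E,H,L}$, then $(T^{\otimes n})^{\otimes N}\geq\ang{E^n,H^n,L^n}$, whose normalized value is $(v^{1/N})^n$. For the reverse inequality, any restriction $(T^{\otimes n})^{\otimes N}\geq\ang{E,H,L}$ is the restriction $T^{\otimes nN}\geq\ang{E,H,L}$. That restriction contributes $v^{1/(nN)}\leq\xi_\gamma(T)$ to $\xi_\gamma(T)$, so $v^{1/N}\leq\xi_\gamma(T)^n$. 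Taking the supremum gives the claim.

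\emph{Scaling in $\gamma$.} For $\lambda>0$, every admissible restriction has value $(E^{\lambda\gamma_1}H^{\lambda\gamma_2}L^{\lambda\gamma_3})^{1/N}=\bigl((E^{\gamma_1}H^{\gamma_2}L^{\gamma_3})^{1/N}\bigr)^\lambda$ under $\lambda\gamma$. Since $x\mapsto x^\lambda$ is increasing and continuous on $\bbR_{\geq0}$, it commutes with the supremum. The only genuinely careful point in the whole argument is the bookkeeping of $N$-th roots in the power identity; there is no real obstacle.
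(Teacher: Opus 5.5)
Your proposal is correct and carries out exactly the direct check from the definition that the paper has in mind: the paper calls these properties immediate and gives no separate proof. Your bookkeeping is sound in the supermultiplicativity step, where you pass to the common exponent $NM$, and in both directions of the power identity.
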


\begin{proposition}
    If $\xi_\gamma(\ang{n}) \geq n$, then $\xi_\gamma$ is superadditive.
\end{proposition}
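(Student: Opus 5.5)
The plan is to compare $(S\oplus T)^{\otimes N}$ with its binomial expansion, pick out the single dominant term, and use the listed properties of $\xi_\gamma$ (monotone, supermultiplicative, $\xi_\gamma(T^{\otimes n})=\xi_\gamma(T)^n$). I read the hypothesis as holding for every $n$. If it were only given for one fixed $n\ge 2$, I would first upgrade it: take the largest power $n^j\le c$. Then $\ang{c}\geq\ang{n^j}$, and by monotonicity and the power rule $\xi_\gamma(\ang{c})\geq n^j\geq c/n$. This constant loss disappears after taking $N$-th roots below.

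\emph{Step 1 (expansion).} For tensors $S,T$ and $N\ge1$ we have
\[ (S\oplus T)^{\otimes N}\cong\bigoplus_{k=0}^N \tbinom{N}{k}\odot\bigl(S^{\otimes k}\otimes T^{\otimes (N-k)}\bigr). \]
Projecting onto one summand is a restriction. So for each $k$,
\[ (S\oplus T)^{\otimes N}\geq \ang{\tbinom{N}{k}}\otimes S^{\otimes k}\otimes T^{\otimes(N-k)}, \]
using that a direct sum of $c$ copies of $X$ equals $\ang{c}\otimes X$.

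\emph{Step 2 (lower bound on one term).} Apply monotonicity, supermultiplicativity and the power rule:
\[ \xi_\gamma(S\oplus T)^N=\xi_\gamma\bigl((S\oplus T)^{\otimes N}\bigr)\geq \xi_\gamma(\ang{\tbinom Nk})\,\xi_\gamma(S)^k\,\xi_\gamma(T)^{N-k}\geq \tbinom Nk \xi_\gamma(S)^k\xi_\gamma(T)^{N-k}. \]
In the weak form of the hypothesis, the last inequality holds up to a factor $1/n$. Supermultiplicativity is applied to three factors, which is fine since it follows directly from taking tensor products of the witnessing restrictions $T^{\otimes N}\geq\ang{E,H,L}$. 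If some $\xi_\gamma$ value is $+\infty$ or degenerate, the inequality is either trivial or follows by the same argument with arbitrarily large approximants from the supremum.

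\emph{Step 3 (choose $k$ and take roots).} Write $a=\xi_\gamma(S)$ and $b=\xi_\gamma(T)$. Choose $k$ maximizing $\binom Nk a^kb^{N-k}$; since there are $N+1$ terms summing to $(a+b)^N$, this term is at least $(a+b)^N/(N+1)$. Hence
\[ \xi_\gamma(S\oplus T)\geq (N+1)^{-1/N}\, n^{-1/N}\,(a+b). \]
Letting $N\to\infty$ gives $\xi_\gamma(S\oplus T)\ge \xi_\gamma(S)+\xi_\gamma(T)$.

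The only real point of care is Step 2. One must see that the factor $\binom Nk$, a multiplicity of identical summands, is itself a unit tensor $\ang{\binom Nk}$. This is exactly where the hypothesis $\xi_\gamma(\ang{n})\ge n$ is needed: without it, the multiplicity of summands could contribute nothing to the value. Everything else is routine bookkeeping with the sup in the definition of $\xi_\gamma$.
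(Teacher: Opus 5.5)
Your proposal is correct and follows the paper's proof. Both expand $(S\oplus T)^{\otimes N}$ binomially, restrict to one term $\binom{N}{k}\odot S^{\otimes k}\otimes T^{\otimes(N-k)}$, bound it using monotonicity, supermultiplicativity and the hypothesis, and remove the $(N+1)$ loss by taking $N$-th roots; the paper sums over all $k$ instead of picking the largest term, which is equivalent. Your side remark that the hypothesis for a single $n\ge 2$ already suffices, up to a constant factor that vanishes in the limit, is a valid small strengthening.
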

\begin{proof}
    For any tensors $T,S$, the binomial expansion gives
    \[ (T\oplus S)^{\otimes n} = \bigoplus_{k=0}^n \binom{n}{k} \odot T^{\otimes k} S^{\otimes (n-k)}
    \geq \binom{n}{k} \odot T^{\otimes k} S^{\otimes (n-k)}, \]
    so $(T\oplus S)^{\otimes n}$ restricts to each monomial term. By monotonicity, supermultiplicativity, and the assumption on $\xi_\gamma(\ang{n})$, we obtain
    \[ \xi_\gamma (T\oplus S)^n \geq \binom{n}{k}
    \xi_\gamma(T)^k \xi_\gamma(S)^{n-k}, \]
    for every $0\leq k\leq n$. Summing over $k$, we get
    \[ (n+1)\xi_\gamma(T\oplus S)^n \geq (\xi_\gamma(T) + \xi_\gamma(S))^n. \]
    Taking $n\to\infty$, we conclude that
    \[ \xi_\gamma(T\oplus S) \geq \xi_\gamma(T) + \xi_\gamma(S). \qedhere \]
\end{proof}

\subsection{Feasible region} \label{sec:value-body}

For any nonzero tensor $T$, we define the \emph{feasible region} $\Pi(T)$ of $T$ by
\[ \Pi(T) \coloneq \overline{\left\{ \left(\frac{h_1}{n}, \frac{h_2}{n}, \frac{h_3}{n}\right) : T^{\otimes n} \geq \ang{2^{h_1}, 2^{h_2}, 2^{h_3}} \right\}}, \]
or equivalently
\[ \Pi(T) = \overline{\left\{ \left(\frac{\lg E}{n}, \frac{\lg H}{n}, \frac{\lg L}{n}\right) : T^{\otimes n} \geq \ang{E, H, L} \right\}}. \]

With such definition, we immediately get the following facts.
\begin{fact}
    For any nonzero $T$,
    \begin{enumerate}
        \item The feasible region $\Pi(T)$ is a convex compact subset of $\bbR_{\geq 0}^3$, and contains $0$.
        \item The lower value functional can be written as 
        \[ \xi_{\gamma}(T) = \max_{\eta \in \Pi(T)} 2^{\sum_{\varkappa=1}^3 \gamma_\varkappa \eta_\varkappa}. \]
    \end{enumerate}
\end{fact}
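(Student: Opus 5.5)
The plan is to verify the two items directly from the definition, treating the set before closure as the main object. Let
\[ S(T) \coloneq \left\{ \left(\tfrac{\lg E}{n}, \tfrac{\lg H}{n}, \tfrac{\lg L}{n}\right) : n\geq 1,\ T^{\otimes n} \geq \ang{E,H,L} \right\}, \]
so that $\Pi(T)=\ov{S(T)}$.

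\emph{Nonnegativity and $0\in\Pi(T)$.} Since $E,H,L\geq 1$, every point of $S(T)$ lies in $\bbR_{\geq 0}^3$. Since $T\neq 0$, some coefficient is nonzero, so $T\geq \ang{1}=\ang{1,1,1}$. This puts $0\in S(T)$.

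\emph{Boundedness.} Use the flattening ranks, which are monotone and multiplicative. If $T^{\otimes n}\geq \ang{E,H,L}$, then $\zeta_{(1)}(\ang{E,H,L})=EL\leq (\dim U)^n$, and similarly $EH\leq(\dim V)^n$ and $HL\leq (\dim W)^n$. Hence each coordinate of a point of $S(T)$ is at most $\lg\max(\dim U,\dim V,\dim W)$. So $S(T)$ is bounded and $\Pi(T)$ is compact.

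\emph{Convexity.} This is the only step needing care; the point is that $S(T)$ is closed under rational convex combinations.
\begin{enumerate}
\item Let $\eta\in S(T)$ come from $T^{\otimes n}\geq\ang{E,H,L}$ and $\eta'\in S(T)$ come from $T^{\otimes n'}\geq \ang{E',H',L'}$.
\item Tensoring the restrictions gives $T^{\otimes(n+n')}\geq \ang{EE',HH',LL'}$, which yields the point $\frac{n\eta+n'\eta'}{n+n'}\in S(T)$.
\item Replacing the first restriction by its $a$-th tensor power and the second by its $b$-th power (same points $\eta,\eta'$, now with $an$ and $bn'$ copies), we obtain $\frac{an\eta+bn'\eta'}{an+bn'}\in S(T)$ for all $a,b\geq 1$.
\item The weights $\frac{an}{an+bn'}$ with $a,b\geq 1$ range over all rationals in $(0,1)$.
\item Hence the closure $\Pi(T)$ contains every convex combination of two of its points, by continuity and density of the rationals.
\end{enumerate}

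\emph{Formula for $\xi_\gamma$.} By definition, $\xi_\gamma(T)=\sup_{\eta\in S(T)}2^{\sum_\varkappa\gamma_\varkappa\eta_\varkappa}$. The map $\eta\mapsto 2^{\gamma\cdot\eta}$ is continuous, so its supremum over $S(T)$ equals its supremum over the closure $\Pi(T)$. Since $\Pi(T)$ is compact, this supremum is attained, which gives the stated maximum.
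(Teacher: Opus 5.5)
Your argument is correct and is the routine verification the paper has in mind when it calls these facts immediate; the paper gives no written proof. You get nonnegativity and $0\in\Pi(T)$ from $T\ge\ang{1}$, compactness from flattening-rank bounds, convexity by tensoring restrictions with rational weights and then closing, and the max formula from continuity plus compactness.

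One harmless slip: with the paper's convention $\ang{E,H,L}=\sum x_{ij}y_{jk}z_{ki}$, the first mode has dimension $EH$. So the correct bounds are $EH\le(\dim U)^n$, $HL\le(\dim V)^n$ and $LE\le(\dim W)^n$. Your pairing is cyclically shifted, but each of $E,H,L$ is still bounded by one of these products, so the boundedness conclusion is unaffected.
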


It causes no harm to replace restriction by asymptotic restriction:
\begin{lemma} \label{lem:feasible_asymp}
    For any nonzero tensor $T$,
    \[ \Pi(T) = \overline{\left\{ \left(\frac{\lg E}{n}, \frac{\lg H}{n}, \frac{\lg L}{n}\right) : T^{\otimes n} \gtrsim \ang{E, H, L} \right\}}. \]
\end{lemma}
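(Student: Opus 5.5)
One inclusion is immediate: every actual restriction $T^{\otimes n}\geq\ang{E,H,L}$ is in particular an asymptotic restriction, so the set defining $\Pi(T)$ is contained in the right-hand set, and hence so is its closure.

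For the reverse inclusion I would take a point $(\lg E/n,\lg H/n,\lg L/n)$ with $T^{\otimes n}\gtrsim\ang{E,H,L}$. By the definition of $\lesssim$ together with \Cref{thm:duality_tensor}, this means that for every $N$ we have
\[ \ang{E,H,L}^{\otimes N}\leq T^{\otimes (nN+o(N))}. \]
The obstacle is the excess $o(N)$ tensor factors: what we want is a restriction from an exact power $T^{\otimes M}$, with a rate arbitrarily close to the target point. So I would set $M=nN+s_N$ with $s_N=o(N)$. Then $T^{\otimes M}\geq\ang{E^N,H^N,L^N}$, which says precisely that
\[ \bigl(N\lg E/M,\;N\lg H/M,\;N\lg L/M\bigr) \]
lies in the generating set of $\Pi(T)$. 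As $N\to\infty$ we have $N/M\to 1/n$, so these points converge to $(\lg E/n,\lg H/n,\lg L/n)$. This point is therefore in the closure $\Pi(T)$.

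A second closure argument then handles limit points of the right-hand set, since $\Pi(T)$ is closed by definition. The only care needed is that $T$ is nonzero; this is used implicitly so that powers of $T$ make sense as restriction sources (the $o(N)$ factors are just extra $T$'s, nothing requiring $T\geq\ang{1}$). Since we never need to pad, no such argument is needed at all. No real obstacle is expected; the main point is simply to absorb the sublinear excess into the denominator.
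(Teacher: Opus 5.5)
Your proposal is correct and follows essentially the same route as the paper: you note the trivial inclusion, use \Cref{thm:duality_tensor} to turn the asymptotic restriction into $\ang{E^N,H^N,L^N}\leq T^{\otimes (nN+o(N))}$, absorb the sublinear excess into the denominator, and pass to the closure. You handle general $n$ directly instead of first reducing to $n=1$ as the paper does, which if anything avoids a small implicit scaling step.
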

\begin{proof}
    Let $\Pi'(T)$ denote the set defined using asymptotic restriction. It is immediate that $\Pi(T) \subseteq \Pi'(T)$. To show the reverse inclusion, it suffices to prove that if $\ang{E,H,L} \lesssim T$, then $(\lg E,\lg H,\lg L) \in \Pi(T)$. Then by Strassen duality (\Cref{thm:duality_tensor}), we have $\ang{E^n,H^n,L^n} \leq T^{\otimes n+o(n)}$. By the definition of $\Pi(T)$ this gives
    \[ \left(\frac{n\lg E}{n+o(n)}, \frac{n\lg H}{n+o(n)}, \frac{n\lg L}{n+o(n)}\right) \in \Pi(T), \]
    since $\Pi(T)$ is closed, taking $n\to\infty$ proves the claim.
\end{proof}

Again, Strassen duality implies the following dual characterization:
\begin{proposition}
    For any nonzero tensor $T$ and $\eta\in \bbR_{\geq 0}^3$, $\eta \in \Pi(T)$ if and only if
    \begin{center}
        $\displaystyle\sum_{\varkappa=1}^3 \gamma_\varkappa(\phi) \eta_\varkappa \leq \lg \phi(T)$ \quad for all \quad $\phi\in\calX$.
    \end{center}
    In other words, $\Pi(T)$ is cut out by hyperplanes
    \[ \Pi(T) = \bbR_{\geq 0}^3 \cap \bigcap_{\phi\in \calX} \left\{ \eta : \sum_{\varkappa=1}^3 \gamma_\varkappa(\phi) \eta_\varkappa \leq \log_2 \phi(T) \right\}. \]
\end{proposition}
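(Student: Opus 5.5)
The plan is to combine \Cref{lem:feasible_asymp} with Strassen duality for tensors (\Cref{thm:duality_tensor}) and the parametrization of spectral points on matrix multiplication tensors (\Cref{prop:param-specmm}), and to prove the two inclusions separately.

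\emph{Necessity.} Suppose first that $\eta$ is one of the generating points, $\eta = (\lg E,\lg H,\lg L)/n$ with $T^{\otimes n}\geq \ang{E,H,L}$. For any $\phi\in\calX$, monotonicity and multiplicativity give
\[ \phi(T)^n \geq \phi(\ang{E,H,L}) = E^{\gamma_1(\phi)}H^{\gamma_2(\phi)}L^{\gamma_3(\phi)}. \]
Taking $\lg$ and dividing by $n$ yields $\sum_\varkappa \gamma_\varkappa(\phi)\eta_\varkappa\leq \lg\phi(T)$. Each such inequality defines a closed half-space, so it persists under closure. Coordinates of points in $\Pi(T)$ are nonnegative, so $\Pi(T)$ lies in the right-hand side.

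\emph{Sufficiency.} Take $\eta\geq 0$ satisfying all the inequalities. I first handle the rational case $\eta = (h_1,h_2,h_3)/n$ with integer $h_\varkappa$. Then for every $\phi$,
\[ \phi(\ang{2^{h_1},2^{h_2},2^{h_3}}) = 2^{\sum \gamma_\varkappa h_\varkappa}\leq \phi(T)^n = \phi(T^{\otimes n}). \]
By \Cref{thm:duality_tensor}, this gives $\ang{2^{h_1},2^{h_2},2^{h_3}}\lesssim T^{\otimes n}$, and \Cref{lem:feasible_asymp} then places $\eta$ in $\Pi(T)$.

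\emph{Main obstacle.} The hard part is a general real $\eta$, because the exponents $2^{h_\varkappa}$ must be integers. I would not approximate $\eta$ from below in a way that preserves all the (infinitely many) inequalities directly. Instead, I would scale: for $t\in(0,1)$, the point $t\eta$ satisfies strict inequalities whenever $\lg\phi(T)>0$. Using compactness of $\calX$ and continuity of $\phi\mapsto\gamma(\phi)$ and $\phi\mapsto\phi(T)$, a slack uniform in $\phi$ survives a small rational perturbation $\eta'\leq t\eta$ (rounding each coordinate down). Here $\gamma\in[0,1]^3$ by \Cref{fact:reg_MM}, so lowering coordinates only helps.

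The degenerate case $\lg\phi(T)=0$ needs care. For nonzero $T$ we have $\phi(T)\geq 1$, since $T\geq\ang{1}$. If $\phi(T)=1$, lowering $\eta$ still preserves $\sum\gamma_\varkappa\eta'_\varkappa\leq\sum\gamma_\varkappa\eta_\varkappa\leq 0$, because $\gamma\geq 0$. Hence rounding down coordinatewise keeps every inequality valid, and no uniform slack is actually needed.

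So, for any $\eta$ in the right-hand side, its coordinatewise rational lower approximations lie in $\Pi(T)$ by the rational case. Since $\Pi(T)$ is closed, letting the approximations converge to $\eta$ concludes the proof.
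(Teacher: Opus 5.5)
Your argument is correct and is essentially the paper's proof. Necessity comes from evaluating spectral points on the generating restrictions and passing to the closure. Sufficiency comes from coordinatewise rational lower approximation, which preserves every inequality since $\gamma(\phi)\geq 0$, followed by Strassen duality, \Cref{lem:feasible_asymp}, and closedness of $\Pi(T)$. The scaling/compactness detour in your ``main obstacle'' paragraph is unnecessary, as you end up observing yourself.
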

\begin{proof}
    If $\eta \in \Pi(T)$, then there exists a sequence of $(h_1,h_2,h_3,n)$ such that $T^{\otimes n} \geq \ang{2^{h_1}, 2^{h_2}, 2^{h_3}}$, and $(h_1,h_2,h_3)/n \to \eta$. Then plug in any $\phi\in \calX$, we have $\phi(T)^n \geq \phi(\ang{2^{h_1}, 2^{h_2}, 2^{h_3}})$. By the definition of $\gamma(\phi)$, we have $\sum_{\varkappa=1}^3 \gamma_\varkappa(\phi) h_\varkappa \leq n\lg \phi(T)$. Take the limit $(h_1,h_2,h_3)/n \to \eta$, we obtain $\sum_{\varkappa=1}^3 \gamma_\varkappa(\phi) \eta_\varkappa \leq \lg \phi(T)$.

    Conversely, if $\eta$ satisfies all the inequalities, we consider a sequence of rational approximation $p_\varkappa = h_\varkappa / n \in [0, \eta_\varkappa]$ such that $(p_1,p_2,p_3) \to \eta$. Then we must have $\sum_{\varkappa=1}^3 \gamma_\varkappa(\phi) h_\varkappa \leq n \sum_{\varkappa=1}^3 \gamma_\varkappa(\phi) \eta_\varkappa \leq n \lg \phi(T)$, that said,
    \[ \phi(\ang{2^{h_1}, 2^{h_2}, 2^{h_3}}) \leq \phi(T)^n. \]
    By Strassen's duality, this means $\ang{2^{h_1}, 2^{h_2}, 2^{h_3}} \lesssim T^{\otimes n}$ thus $(p_1,p_2,p_3) \in \Pi(T)$ by \Cref{lem:feasible_asymp}. Take the limit, we conclude $\eta\in \Pi(T)$.
\end{proof}

Take exponential on both sides, we obtain the following corollary.
\begin{corollary} \label{cor:value_leq_func}
    For any $\phi\in \calX$, we always have $\xi_{\gamma(\phi)} \leq \phi$.
\end{corollary}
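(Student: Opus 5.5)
The plan is to read off \Cref{cor:value_leq_func} directly from the preceding proposition together with the max-formula for $\xi_\gamma$ over the feasible region. Fix $\phi\in\calX$ and write $\gamma=\gamma(\phi)$. By \Cref{fact:reg_MM}, $\gamma\in[0,1]^3$. Since $\gamma_1+\gamma_2+\gamma_3\geq 2$, at least one coordinate is nonzero, so $\xi_\gamma$ is defined. Fix a nonzero tensor $T$; the zero tensor is trivial, because any restriction $T^{\otimes N}\geq\ang{E,H,L}$ with $E,H,L\geq 1$ forces $T\neq 0$.

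By the earlier Fact we have $\xi_\gamma(T)=\max_{\eta\in\Pi(T)}2^{\sum_\varkappa\gamma_\varkappa\eta_\varkappa}$. The preceding proposition says that every $\eta\in\Pi(T)$ satisfies the half-space inequality $\sum_\varkappa\gamma_\varkappa(\phi)\eta_\varkappa\leq\lg\phi(T)$ for this particular $\phi$. Exponentiating and taking the maximum over $\eta$ then gives $\xi_{\gamma(\phi)}(T)\leq\phi(T)$.

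Alternatively, I would also record the one-line direct argument, which avoids the closure. If $T^{\otimes N}\geq\ang{E,H,L}$, then monotonicity and multiplicativity of $\phi$ together with \Cref{prop:param-specmm} give $\phi(T)^N\geq E^{\gamma_1}H^{\gamma_2}L^{\gamma_3}$. Taking $N$-th roots and the supremum finishes the proof.

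There is no real obstacle here. The only care needed is the degenerate case of $T=0$ and checking that $\gamma(\phi)$ is an admissible parameter, which is handled above.
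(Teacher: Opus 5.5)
Your proposal is correct and matches the paper, which obtains the corollary by exponentiating the half-space description of $\Pi(T)$ from the preceding proposition and combining it with the formula $\xi_\gamma(T)=\max_{\eta\in\Pi(T)}2^{\sum_\varkappa\gamma_\varkappa\eta_\varkappa}$. Your direct alternative, applying monotonicity and multiplicativity of $\phi$ to each restriction $T^{\otimes N}\geq\ang{E,H,L}$, is simply the forward direction of that proposition without passing to the closure, and it is the Step~1 argument sketched in the paper's technical overview.
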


When $\gamma = (\rho, 1, 1-\rho)$, we have
\[ \xi_\gamma(\ang{n}) \geq \xi_\gamma(\ang{1,n,1}) \geq n, \]
so $\xi_\gamma$ is superadditive in this case, and therefore has all the formal properties one expects from a lower functional. In \Cref{sec:final_strike}, we will show that
\[ \xi_{(\rho,1,1-\rho)} = \zeta^{(\rho,1-\rho,0)}, \]
and hence that $\xi_\gamma$ is in fact a spectral point on this edge.

Could one hope to extend this argument and show that some other $\xi_\gamma$ is a spectral point as well? This appears to be substantially more difficult. Indeed, \Cref{cor:value_leq_func} implies that whenever $\gamma\in\specMM$, one has
\[ \xi_\gamma(\ang{n}) \leq n. \]
On the other hand, we also have the scaling identity
\[ \xi_{\lambda\gamma}(T)=\xi_\gamma(T)^\lambda \qquad (\lambda>0). \]
Combining these two observations, we see that if $\xi_\gamma$ were itself a spectral point, then $\gamma$ would have to be \emph{radially extremal} in $\specMM$: whenever $\lambda\gamma\in\specMM$, one must have $\lambda\leq 1$.

This already suggests that away from the edges the situation is substantially more delicate. Suppose, for instance, that one wanted to prove an identity of the form
\[ \zeta^\theta = \xi_\gamma, \qquad \gamma=(\theta_3+\theta_1,\theta_1+\theta_2,\theta_2+\theta_3). \]
Then the discussion above would force $\lambda\gamma\notin\specMM$ for every $\lambda>1$. In the interior case $\theta_1,\theta_2,\theta_3>0$, this would in fact imply $\omega=2$: if $\omega>2$, then $\specMM$ strictly contains the base triangle $\triangle$, and star-convexity would produce a point of the form $(1+\varepsilon)\gamma$ with $\varepsilon>0$, contradicting radial extremality.

There are two ways to read this. Optimistically, if future work eventually proves $\omega=2$, then the lower value functionals introduced here may become useful tools for further exhausting the asymptotic spectrum. Pessimistically, it may mean that away from the boundary the functionals $\xi_\gamma$ are not the correct scaffolding to work on.

\section{Main theorem} \label{sec:main}

In this section we will frequently use the notion of \emph{types}. For a finite index set $I$, consider its $k$-fold Cartesian product $I^{\times k}$, where each element $\ov{i}\in I^{\times k}$ can be written as a sequence $\ov{i} = (i_1,\dots,i_k)$ with $i_l \in I$ for all $l$. For any fixed $\ov{i}$, the type $\lambda = \lambda(\ov{i}) \in \bbN^I$ of $\ov{i}$ records the number of occurrences of each element $a\in I$ in $\ov{i}$: $\lambda_a = |\{l : i_l = a\}|$. Note that $\lambda / k \in \calP(I)$ is naturally a distribution on $I$, so we write $P(\lambda) = \lambda/k$.

Say $I = [n]$, and fix a particular type $\lambda$ with sum $|\lambda| = k$. The number of sequences $\ov{i} \in I^{\times k}$ having type $\lambda$ can be written as a multinomial coefficient, which we bound via entropy:
\begin{equation}\label{multibound}
|\{\ov{i} \in I^{\times k} : \lambda(\ov{i}) = \lambda \}| = \binom{k}{\lambda_1, \dots, \lambda_{n}} \leq 2^{\eH(P(\lambda)) k}.
\end{equation}

\subsection{Edge of matrix multiplication spectrum}

We now determine the edge of $\specMM$ cut out by the condition $\gamma_2=1$. From \Cref{fact:reg_MM}, one only gets the coarse bounds
\[ 0\leq \gamma_1,\gamma_3\leq 1, \qquad 1\leq \gamma_1+\gamma_3\leq \omega-1. \]
The next proposition shows that, on this edge, there is in fact no room beyond the line segment already predicted by Strassen's star-convexity theorem.

\begin{proposition} \label{prop:MM_edge}
    For any $\phi\in\calX$, if $\gamma_2(\phi)=1$, then $\gamma_1(\phi)+\gamma_3(\phi)=1$. Equivalently,
    \[ \gamma(\calZ_{(3)}) = \{ (\rho,1,1-\rho) : \rho\in[0,1] \}. \]
\end{proposition}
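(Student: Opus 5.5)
The lower bound $\gamma_1+\gamma_3\geq 1$ is free: by \Cref{fact:reg_MM}, $\gamma_1+\gamma_2+\gamma_3\geq 2$, and $\gamma_2=1$. So the plan is to prove $\gamma_1+\gamma_3\leq 1$ by contradiction. Fix $\phi\in\calX$ with $\gamma_2(\phi)=1$ and $\gamma_1+\gamma_3=1+\delta$ for some $\delta>0$. We will find a tensor whose value under $\phi$ is bounded above by a border-rank argument but bounded below, via the laser method, by something strictly larger. The second equality in the statement then follows from this together with star-convexity (\Cref{thm:star-convex}) with respect to $(1,1,0)$ and $(0,1,1)$, which shows that every point $(\rho,1,1-\rho)$ is attained.

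For the test tensor I would use the Coppersmith--Winograd tensor $CW_q$ (or its small variant without the corner terms). In the usual block decomposition it is a sum of blocks isomorphic to $\ang{1,q,1}$, $\ang{q,1,1}$ and $\ang{1,1,q}$, plus three unit corner terms. Its border rank is $q+2$, so $CW_q\degen\ang{q+2}$. By \Cref{thm:duality_tensor}, spectral points are monotone under degeneration, hence $\phi(CW_q)\leq q+2$.

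For the lower bound, I would run the laser method on $CW_q^{\otimes N}$ with a non-symmetric block distribution $\alpha$: weight $1-2\varepsilon$ on the $\ang{1,q,1}$ blocks and $\varepsilon$ on each of the other two kinds. Standard hashing and zeroing-out (Salem--Spencer sets, à la Coppersmith--Winograd) would give a degeneration of $CW_q^{\otimes N}$ to a direct sum of $2^{(c(\varepsilon)-o(1))N}$ copies of a matrix multiplication tensor $\ang{q^{\varepsilon N},q^{(1-2\varepsilon)N},q^{\varepsilon N}}$. Here $c(\varepsilon)$ is the entropy gain left after the hashing losses, and for small $\varepsilon$ it behaves like $\varepsilon\log(1/\varepsilon)$ up to constants. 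Additivity and multiplicativity of $\phi$, together with monotonicity under degeneration, then yield
\[ \lg\phi(CW_q)\;\geq\; c(\varepsilon) + \bigl((1-2\varepsilon)+\varepsilon(1+\delta)\bigr)\lg q \;=\; \lg q + \varepsilon(\delta-1)\lg q + c(\varepsilon). \]
This is the step where $\gamma_2=1$ is used essentially: the dominant $\ang{1,q,1}$ part already contributes the full $\lg q$.

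The main obstacle is that the term $\varepsilon(\delta-1)\lg q$ is negative unless $\delta>1$. So the naive weighting must be refined in one of two ways. First, I could instead use a tensor built only from $\ang{1,q,1}$-type and $\ang{q,1,q}$-type pieces, so that the secondary block contributes $q^{\gamma_1+\gamma_3}=q^{1+\delta}$ rather than $q^{\gamma_1}$ and $q^{\gamma_3}$ separately. A natural candidate is a Strassen-type tensor $\sum_i (x_0y_iz_i+x_iy_0z_i)$ combined with a diagonal part; such tensors have border rank about $q+1$ and, after laser hashing, yield $\ang{q^{k},1,q^{k}}$-type pieces. The target inequality would then read
\[ \lg q+\varepsilon\delta\lg q + c(\varepsilon) \;>\; \lg(q+O(1)), \]
which holds for fixed small $\varepsilon$ once $q$ is large, since the left side exceeds $\lg q$ by a constant multiple of $\lg q$ while the right side exceeds it by only $O(1/q)$. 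Second, I could tune the block distribution so that the hashing loss $c(\varepsilon)$ remains positive. I expect verifying the border-rank bound and the hashing accounting for this asymmetric distribution to be the hardest part. The contradiction then gives $\delta\leq 0$, as required.
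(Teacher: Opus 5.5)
\textbf{There is a genuine gap: the upper bound $\gamma_1+\gamma_3\le 1$, which is the actual content of the proposition, is never established.} Your lower bound $\gamma_1+\gamma_3\ge 1$ and your star-convexity argument for the reverse inclusion are fine and match the paper. For the upper bound, you conclude that your Coppersmith--Winograd computation fails and offer two fixes, neither carried out. The first rests on a miscount. A ``Strassen-type tensor plus a diagonal part'' is just the small CW tensor $cw_q=\sum_i(x_0y_iz_i+x_iy_0z_i+x_iy_iz_0)$ again. In its $N$-th power, an $\ang{q^k,1,q^k}$ factor consumes $k$ positions of type $\ang{1,1,q}$ and $k$ positions of type $\ang{q,1,1}$. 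So with mass $2\varepsilon$ on those positions, the exponent of $q$ is $1-\varepsilon(1-\delta)$, exactly as in your first computation, not $1+\varepsilon\delta$. Nor can any tensor contain $\ang{q,1,q}$ as a single block while having border rank $q+O(1)$: the third flattening of $\ang{q,1,q}$ has rank $q^2$.

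What you actually missed is that $\varepsilon$ must depend on $q$; your ``obstacle'' comes only from fixing $\varepsilon$ and letting $q\to\infty$. Your own bound reads $\lg\phi(cw_q)\ge \lg q+h(\varepsilon)-\varepsilon(1-\delta)\lg q$, where $h$ is the binary entropy. Here $c(\varepsilon)=h(\varepsilon)$ for $\varepsilon\le 1/3$, since the block marginals determine the distribution. Maximize over $\varepsilon$: the optimum is $\varepsilon=1/(1+q^{1-\delta})$, which lies below $1/3$ for large $q$ because $\delta\le\omega-2<1$. This gives $\phi(cw_q)\ge q+q^{\delta}$, which contradicts $\phi(cw_q)\le q+2$ as soon as $q^{\delta}>2$. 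So your route can be repaired, but it still needs the full hashing/Salem--Spencer machinery.

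The paper avoids all of this with Sch\"onhage's direct-sum degeneration $\ang{n,1,n}\oplus\ang{1,(n-1)^2,1}\degen\ang{n^2+1}$. Evaluating $\phi$ gives $n^{\gamma_1+\gamma_3}+(n-1)^2\le n^2+1$, i.e.\ $n^{\gamma_1+\gamma_3}\le 2n$ for every $n$. This is precisely the ``tensor built from $\ang{1,\cdot,1}$ and $\ang{q,1,q}$ pieces'' you were reaching for. The sizes are balanced so that the $\ang{1,(n-1)^2,1}$ summand absorbs all but $2n$ of the border rank.
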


The key ingredient is \emph{Sch\"onhage's direct sum identity}.
\begin{lemma}[Sch\"onhage {\cite[Lemma~6.1]{Sch81Tau}}]
    For any $n,m\geq 2$, one has the degeneration
    \[ \ang{n,1,m} \oplus \ang{1,(n-1)(m-1),1} \degen \ang{nm + 1}. \]
\end{lemma}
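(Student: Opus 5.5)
The plan is to write down an explicit border rank decomposition of length $nm+1$, with coefficients in $\bbZ[\varepsilon]$, and to read off the target tensor as the $\varepsilon^2$-coefficient. Because all coefficients will be integers, the identity holds over every field, which is what the rest of the paper needs. Write the target as
\[ \sum_{i=1}^n\sum_{k=1}^m x_i y_k z_{ki} \;+\; \sum_{i<n,\,k<m} u_{ik} v_{ik} w, \]
where the second sum is $\ang{1,(n-1)(m-1),1}$ on fresh basis vectors $u_{ik}, v_{ik}$ (indexed by $i<n$, $k<m$) and a single vector $w$. First extend the $u$'s and $v$'s to all pairs $i\le n$, $k\le m$ by the convention
\[ u_{nk} = -\sum_{i<n} u_{ik},\quad u_{im}=0 \ (i<n),\qquad v_{im} = -\sum_{k<m} v_{ik},\quad v_{nk}=0\ (k<m),\qquad u_{nm}=v_{nm}=0. \]
This makes $\sum_i u_{ik}=0$ for every $k$ and $\sum_k v_{ik}=0$ for every $i$. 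In addition, $u_{ik}v_{ik}=0$ whenever $i=n$ or $k=m$.

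Next consider the $nm+1$ products
\[ \sum_{i=1}^n\sum_{k=1}^m (x_i+\varepsilon u_{ik})(y_k+\varepsilon v_{ik})(\varepsilon^2 z_{ki} + w)\;-\;\Bigl(\sum_i x_i\Bigr)\Bigl(\sum_k y_k\Bigr) w . \]
Expand this in powers of $\varepsilon$ and check the low-order terms in turn.
\begin{itemize}
\item At order $\varepsilon^0$ the double sum contributes exactly $(\sum_i x_i)(\sum_k y_k)w$, which the last (rank-one) term cancels.
\item At order $\varepsilon^1$ only $w$-terms occur: $\sum_k(\sum_i u_{ik})y_k w + \sum_i x_i(\sum_k v_{ik})w$. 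Both vanish by the zero-sum conventions.
\item At order $\varepsilon^2$ there are two contributions. The $z$-terms give $\sum_{i,k}x_iy_kz_{ki}=\ang{n,1,m}$. The $w$-terms give $\sum_{i,k}u_{ik}v_{ik}w=\sum_{i<n,k<m}u_{ik}v_{ik}w$, because all boundary products vanish.
\end{itemize}
All higher-order terms are absorbed into $\varepsilon^3\widetilde T(\varepsilon)$. This gives a degeneration with $q=2$ in the sense of the preliminaries.

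To match the formal definition, I then phrase this via maps out of $\ang{nm+1}$. Send the $r$-th diagonal basis triple to the three linear forms of the $r$-th product above; these assignments are polynomial in $\varepsilon$. The sign on the last term can be put into one of its factors. The main point needing care is the $\varepsilon^1$ cancellation. The convention for the boundary $u$'s and $v$'s has to kill those terms while leaving no unwanted $\varepsilon^2$ terms, and that is exactly what the choice $u_{im}=0$, $v_{nk}=0$ achieves. I would double-check this bookkeeping, together with the fact that the hypothesis $n,m\ge 2$ is only used so that the index ranges are nonempty.
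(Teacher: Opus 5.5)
Your proposal is correct. The zero-sum conventions kill the $\varepsilon^1$ terms, the boundary products $u_{ik}v_{ik}$ vanish, and the $\varepsilon^2$ coefficient is exactly $\ang{n,1,m}\oplus\ang{1,(n-1)(m-1),1}$, with integer coefficients, so the degeneration holds over every field. This is essentially Schönhage's original border-rank construction, which the paper does not reprove but simply cites from \cite[Lemma~6.1]{Sch81Tau}.
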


\begin{proof}[Proof of \Cref{prop:MM_edge}]
    Since $\specMM\subseteq [0,1]^3$ and every point of $\specMM$ satisfies $\gamma_1+\gamma_2+\gamma_3\geq 2$, the assumption $\gamma_2=1$ already implies
    \[ \gamma_1+\gamma_3 \geq 1. \]

    For the reverse inequality, apply Sch\"onhage's identity with $m=n$:
    \[ \ang{n,1,n} \oplus \ang{1,(n-1)^2,1} \degen \ang{n^2+1}, \]
    and evaluate any $\phi\in\calX$ with $\gamma_2(\phi)=1$. Since degeneration induces the same asymptotic preorder as restriction by \Cref{thm:duality_tensor}, spectral points are monotone for this relation as well. Therefore
    \[ n^{\gamma_1 + \gamma_3} + (n-1)^2 \leq n^2+1, \]
    hence $n^{\gamma_1+\gamma_3}\leq 2n$ for every $n\geq 2$. Letting $n\to\infty$ gives $\gamma_1+\gamma_3\leq 1$, and therefore $\gamma_1+\gamma_3=1$.

    It remains to prove the converse inclusion. The two endpoints $(1,1,0)$ and $(0,1,1)$ are gauge points, hence belong to $\specMM$. By star-convexity (\Cref{thm:star-convex}), the whole line segment between them also lies in $\specMM$, namely
    \[ \{(\rho,1,1-\rho): \rho\in[0,1]\} \subseteq \specMM. \]
    This proves the claimed description of $\gamma(\calZ_{(3)})$.
\end{proof}

The proposition is closely related to rectangular matrix multiplication exponents. For $a,b,c>0$, define
\[ \omega(a,b,c) = \inf \left\{ \tau : \Rk(\ang{\ceil{n^a}, \ceil{n^b}, \ceil{n^c}}) = O(n^\tau) \right\}. \]
By Strassen duality, it is not hard to verify that
\[ \omega(a,b,c) = \max_{\gamma\in\specMM} (a\gamma_1 + b\gamma_2 + c\gamma_3). \]
We record \cite[Proposition~4.5]{Str88AsymSpec} for completeness. In particular, \Cref{prop:MM_edge} is equivalent to the assertion that, among all $\gamma\in\specMM$ with $\gamma_2=1$, the quantity $\gamma_1+\gamma_3$ has supremum $1$. By the displayed formula with $(a,b,c)=(1,k,1)$, this is in turn equivalent to the asymptotic statement
\[ \lim_{k\to\infty } (\omega(1, k, 1) - k - 1) = 0. \]

This consequence of Sch\"onhage's identity was already noticed by Lotti and Romani \cite{LR83Rect}, who proved the quantitative bound
\[ \omega(1,k,1) = k + 1 + O\left(\frac 1{\log k}\right). \]
Later work based on the laser method improved rectangular exponents in many regimes \cite{Cop97Rect,HP98Rect,LeGall12Rect,LU18Rect,LeGall24Rect,VXXZ24Rect,ADWXXZ25MoreAsym}, but did not improve this asymptotic order $O(1/\log k)$.

\subsection{Bounding every functional by the support functional}

\begin{lemma}\label{prop:support_proj}
    For every tensor $T$, we have
    \[\zeta^{(\rho,1-\rho, 0)}(T) = \min_{g \in\GL(U) \times \GL(V)} \max_{P \in \calP(\supp_{U\times V} g \cdot T)} 2^{\rho \eH(P_1) + (1-\rho) \eH(P_2)},\]
    where 
    \[\supp_{U\times V}(g\cdot T) = \pi_{12}(\supp(g \cdot T)) \subseteq [n] \times [m]\] is the projection of $\supp(T) \subseteq [n] \times [m] \times [k]$ onto the first two coordinates.
\end{lemma}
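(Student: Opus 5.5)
Two observations suffice: the inner maximum depends only on the projected support, and the choice of basis of $W$ can always be made irrelevant.

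First, fix bases of $U,V,W$, i.e.\ an element $g=(g_1,g_2,g_3)\in\GL(U)\times\GL(V)\times\GL(W)$, and let $\Phi=\supp(g\cdot T)\subseteq[n]\times[m]\times[k]$. Since $\theta_3=0$, the objective $\rho\eH(P_1)+(1-\rho)\eH(P_2)$ depends only on the pushforward $\pi_{12*}P$. Every distribution on $\Phi$ pushes forward to a distribution on $\pi_{12}(\Phi)$. Conversely, every distribution $Q$ on $\pi_{12}(\Phi)$ lifts to some $P\in\calP(\Phi)$: for each $(i,j)\in\pi_{12}(\Phi)$ choose one $k$ with $(i,j,k)\in\Phi$ and put the mass $Q(i,j)$ there. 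Hence
\[ \max_{P\in\calP(\Phi)} 2^{\rho\eH(P_1)+(1-\rho)\eH(P_2)} = \max_{Q\in\calP(\pi_{12}\Phi)} 2^{\rho\eH(Q_1)+(1-\rho)\eH(Q_2)}. \]
This reduces the statement to understanding how $\pi_{12}(\supp(g\cdot T))$ depends on $g$.

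Second, I claim that $\pi_{12}(\supp(g\cdot T))$ does not depend on $g_3$. Write $g\cdot T=\sum_{i,j} x_iy_j\otimes w_{ij}$ with $w_{ij}\in W$. Then $(i,j)\in\pi_{12}(\supp)$ if and only if $w_{ij}\neq 0$. Changing the basis of $W$ replaces each $w_{ij}$ by $g_3w_{ij}$, and since $g_3$ is invertible this vector is nonzero exactly when $w_{ij}$ is. So the projected support is the set of pairs $(i,j)$ with nonzero slice $w_{ij}$, which is determined by $(g_1,g_2)$ alone. Consequently, minimizing over all triples $(g_1,g_2,g_3)$ is the same as minimizing over pairs $(g_1,g_2)\in\GL(U)\times\GL(V)$, with $g_3$ arbitrary (say the identity), which is exactly the right-hand side of the lemma.

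Combining the two steps with the definition \eqref{eqn:supp} gives the identity. I expect no real obstacle: the only points needing care are that the lifting of $Q$ stays inside $\Phi$, which it does because each chosen $(i,j,k)$ lies in $\Phi$, and that the minimum in \eqref{eqn:supp} is taken over the same family of supports on both sides, which is ensured by the invariance under $g_3$.
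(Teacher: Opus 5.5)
Your proposal is correct and follows essentially the same route as the paper: you use the pushforward/lifting argument (placing each mass $Q(i,j)$ on one chosen $k$ with $(i,j,k)$ in the support) to reduce the inner maximum to the projected support. You then use the invariance of the projected support under $\GL(W)$, via $w_{ij}\neq 0 \iff g_3 w_{ij}\neq 0$, to reduce the outer minimum to $\GL(U)\times\GL(V)$.
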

\begin{proof}
	By definition,
	\[ \zeta^\theta(T) = \min_{g \in\GL(U) \times \GL(V) \times \GL(W)} \max_{P \in \calP(g\cdot T)} 2^{\rho \eH(P_1) + (1-\rho) \eH(P_2)}.\]
	The inner objective only depends on the first two marginals of $P$. Hence maximizing over distributions on $\supp(g \cdot T)$ is equivalent to maximizing over distributions on the projected support. Indeed, note that if $P$ is a distribution on $\supp(T)$, then its pushforward under $\pi_{12}$ has the same marginals as $P$. Conversely, if $P'$ is a distribution on $\supp_{U\times V} T \subseteq [n] \times [m]$, then the distribution $P$ on $\supp(T)$ given by choosing for each $(i,j) \in \supp_{U\times V} T$ an index $k_{ij}$ such that $(i,j,k) \in \supp(g \cdot T)$, and setting
	\[
	P_{ijk} =
	\begin{cases}
		P'_{ij}, & \text{if } k = k_{ij},\\
		0, & \text{otherwise}
	\end{cases}
	\]
	has the same first two marginals.
	
	Finally, the projected support is invariant under the action of any $g \in \GL(W)$ in the third mode, since $T_{i,j,\cdot} = 0$ (as vector in $W$) if and only if $g \cdot T_{i,j,\cdot} = 0$.
\end{proof}

\begin{proposition} \label{prop:bdd_by_supp}
    If $\phi\in \calX$ satisfies $\gamma_2(\phi) = 1$, say
    $\gamma(\phi) = (\rho, 1, 1-\rho)$, then 
    \[ \phi(T) \leq \zeta^{\theta}(T) \]
    for all $T$, where $\theta = (\rho, 1-\rho, 0)$.
\end{proposition}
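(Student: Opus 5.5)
The plan is to bound $\phi(T)^N=\phi(T^{\otimes N})$ by splitting $T^{\otimes N}$ into pieces according to marginal types. Each piece is dominated by a rectangular tensor $\ang{a,1,b}$, whose $\phi$-value is fixed by $\gamma(\phi)$. I then count the pieces and take $N$-th roots. This is essentially Strassen's classical argument that spectral points are bounded by upper support functionals, adapted to the projected support.

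\emph{Setup.} By \Cref{prop:support_proj} and the $\GL(U)\times\GL(V)\times\GL(W)$-invariance of $\phi$, I fix bases $x_1,\dots,x_n$ of $U$ and $y_1,\dots,y_m$ of $V$ attaining the minimum there. Let $\Phi=\supp_{U\times V}(T)\subseteq[n]\times[m]$ be the projected support, so that
\[ \zeta^\theta(T)=2^{\max_{P\in\calP(\Phi)}\rho\eH(P_1)+(1-\rho)\eH(P_2)}. \]
In $T^{\otimes N}$ the $U$-basis is indexed by $[n]^N$ and the $V$-basis by $[m]^N$. For each type $\lambda$ on $[n]$ and each type $\mu$ on $[m]$, let $T_{\lambda,\mu}$ be the tensor obtained from $T^{\otimes N}$ by zeroing out all $x$-basis vectors not of type $\lambda$ and all $y$-basis vectors not of type $\mu$. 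This is a restriction, and $T^{\otimes N}=\sum_{\lambda,\mu}T_{\lambda,\mu}$.

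\emph{Subadditivity.} Since $S+S'\leq S\oplus S'$ via the summing maps, every spectral point satisfies $\phi(S+S')\leq\phi(S)+\phi(S')$. Hence
\[ \phi(T)^N\leq\sum_{\lambda,\mu}\phi(T_{\lambda,\mu}). \]
There are at most $(N+1)^{n+m}$ pairs $(\lambda,\mu)$.

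\emph{Bounding each piece.} Every tensor whose first two modes have dimensions $a,b$ is a restriction of $\ang{a,1,b}=\sum_{i,j}x_iy_jz_{ji}$: send $z_{ji}$ to the $W$-fiber $T_{ij\cdot}$. The $U$-support of $T_{\lambda,\mu}$ has size $\binom{N}{\lambda}\leq2^{N\eH(\lambda/N)}$ by \eqref{multibound}, and similarly for the $V$-support. Using $\phi(\ang{a,1,b})=a^\rho b^{1-\rho}$, I get
\[ \phi(T_{\lambda,\mu})\leq2^{N(\rho\eH(\lambda/N)+(1-\rho)\eH(\mu/N))}. \]

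\emph{Which pieces survive.} The one point needing care is to show that only admissible pairs $(\lambda,\mu)$ contribute. If $T_{\lambda,\mu}\neq0$, then some coefficient of $T^{\otimes N}$ at $(\ov i,\ov j,\cdot)$ is nonzero, with $\ov i$ of type $\lambda$ and $\ov j$ of type $\mu$. This forces $(i_l,j_l)\in\Phi$ for every $l$. The joint type of the sequence of pairs $(i_l,j_l)$ is then a distribution $P\in\calP(\Phi)$ with marginals $P_1=\lambda/N$ and $P_2=\mu/N$. Consequently every nonzero term is at most $\zeta^\theta(T)^N$, so
\[ \phi(T)^N\leq(N+1)^{n+m}\zeta^\theta(T)^N. \]
Taking $N$-th roots and letting $N\to\infty$ gives $\phi(T)\leq\zeta^\theta(T)$.
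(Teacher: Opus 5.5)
Your proof is correct and follows essentially the same route as the paper. You fix an optimal basis via \Cref{prop:support_proj}, split $T^{\otimes N}$ into pieces by types and use subadditivity, bound each piece by $\ang{a,1,b}$ using $\gamma(\phi)$ and \eqref{multibound}, and absorb the polynomially many types. The only cosmetic difference is the indexing of the pieces: the paper uses joint types over $\Phi$, so each type is directly a distribution on $\Phi$, whereas you use pairs of marginal types, so each piece is an honest restriction, and then correctly check that only pairs arising as marginals of some $P\in\calP(\Phi)$ give nonzero pieces.
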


\begin{proof} 
    By \Cref{prop:support_proj}, there exists a choice of basis of $T$ such that, with $\Phi = \supp_{U\times V} T$,
	\[ \zeta^\theta(T) = \max_{P\in \calP(\Phi)} 2^{\rho \eH(P_1) + (1-\rho) \eH(P_2)}.\]
	
    For a type $\lambda$ of size-$n$ over $\Phi$, let $T_{\lambda}$ be the projection of $T^{\otimes n}$ onto its monomials $\ov{x}_{\ov{i}}\ov{y}_{\ov{j}}$ of type $\lambda$. Then 
	\[T^{\otimes n} = \sum_{|\lambda| = n} T_\lambda \le \bigoplus_{|\lambda| = n} T_\lambda, \]
    so by subadditivity of $\phi$,
    \[\phi(T^{\otimes n}) \leq \sum_{|\lambda| = n} \phi(T_\lambda).\]
	
    Let $a_\lambda$ and $b_\lambda$ be the number of basis vectors in the first and second factors appearing in $T_\lambda$. By \cref{multibound}, 
    \[a_\lambda \le 2^{\eH(P(\lambda)_1) n}, \quad b_\lambda \le 2^{\eH(P(\lambda)_2) n}.\]
    Any tensor with first and second dimensions $a_\lambda$ and $b_\lambda$ is a restriction of $\ang{a_\lambda, 1, b_\lambda}$, as is evident from the representation
    \[ T_\lambda = \sum_{\ov{i},\ov{j}} x_{\ov{i}} y_{\ov{j}} \left(\sum_{k} u_{\ov{i}\ov{j}k} z_k \right) \leq  \sum_{\ov{i},\ov{j}} x_{\ov{i}} y_{\ov{j}} z'_{\ov{i}\ov{j}} \cong \ang{a_\lambda, 1, b_\lambda}. \]
    Hence by monotonicity,
    \[\phi(T_\lambda) \le \phi(\ang{ a_\lambda, 1, b_\lambda}) = a_\lambda^\rho b_\lambda^{1-\rho} \le 2^{(\rho \eH(P(\lambda)_1) + (1-\rho)\eH(P(\lambda)_2))n}.\]
	Summing over all $\lambda$ and using the definition of $\zeta^\theta(T)$ we find that
	\[\phi(T^{\otimes n}) \le \sum_\lambda 2^{(\rho \eH(P(\lambda)_1) + (1-\rho)\eH(P(\lambda)_2))n} \le n^{O(1)} \zeta^\theta(T)^n \]
	as the number of types is at most $(n+1)^{(\dim U)(\dim V)}$. Taking $n$th roots and letting $n \to \infty$, we conclude that $\phi(T) \leq \zeta^\theta(T)$.
\end{proof}

\subsection{Extracting matrix multiplication from the Harder--Narasimhan filtration} \label{sec:final_strike}

Let us first summarize the results established so far. For any spectral point $\phi$ satisfying $\gamma_2(\phi) = 1$, \Cref{prop:MM_edge} gives $\gamma(\phi) = (\rho, 1, 1-\rho)$ for some $\rho \in [0, 1]$. Let $\theta = (\rho,1-\rho, 0)$. Then we have
\begin{itemize}
    \item By \Cref{cor:value_leq_func} we have the lower bound $\xi_\gamma \leq \phi$.
    \item By \Cref{prop:bdd_by_supp} we have the upper bound $\phi \leq \zeta^\theta$.
    \item By \Cref{thm:star-convex}, at least one $\phi$ exists.
    \item Without loss of generality, we may assume that $T$ is concise. Viewing $T$ as a representation $\calV$, it is then also concise. Let
    \[ (U_0, V_0) \subsetneq (U_1, V_1) \subsetneq \cdots \subsetneq (U_r, V_r) = (U, V) \]
    be the HN-filtration of $\calV$. We choose bases of $T$ as follows. Let the first $n_1$ vectors $x_i$ form a basis of $U_1$, then choose $n_2$ further basis vectors extending $U_1$ to $U_2$. In general, we take the vectors $\{x_{n_1+\cdots+n_{u-1}+i}\}_{1\leq i\leq n_u}$ to be a basis of a complement extending $U_{u-1}$ to $U_u$. Similarly, we choose a basis $\{y_i\}$ adapted to the filtration $\{V_u\}$ of $V$.
    Letting $\Phi=\supp_{U\times V}(T)$ be the projected support of $T$ in these bases, we have $\zeta^\theta(T) \leq 2^{\eH_\theta(\Phi)}$ by the definition of the upper support functional.
\end{itemize}

Thus we have
\[ \xi_\gamma(T) \leq \phi(T) \leq \zeta^\theta(T) \leq 2^{\eH_\theta(\Phi)}, \]
and we now prove the reverse inequality.

\begin{theorem}\label{thm:value_geq_hn}
    Under the above assumptions, we have $\xi_\gamma(T) \geq 2^{\eH_\theta(\Phi)}$.
\end{theorem}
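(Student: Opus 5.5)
The plan is to compute $2^{\eH_\theta(\Phi)}$ explicitly from the HN data, and then show that $T^{\otimes N}$ restricts to a single matrix multiplication tensor of almost that value by passing through one semistable HN subquotient of $T^{\otimes N}$. Throughout, let $(n_u,m_u)$ be the dimension data of the HN-filtration of $\calV$, and set $\mu_u=n_u/m_u$.

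\emph{Step 1: computing $2^{\eH_\theta(\Phi)}$.} I will check that the partition $J_u=\{x\text{-indices of the block extending } U_{u-1}\text{ to } U_u\}$, $K_u$ (defined the same way for $V$) satisfies the three hypotheses of \Cref{prop:argmax_2d}.
\begin{itemize}
\item Blockwise triangularity: since each $(U_u,V_u)$ is a subrepresentation, $\calV(U_u)\subseteq V_u$. Hence a basis vector $x$ in block $u$ pairs only with $y$'s in blocks $v\le u$.
\item Convexity: the HN slopes strictly decrease by \Cref{thm:HN-filtration}. Conciseness guarantees the slopes are finite and positive.
\item Diagonal balance: the diagonal block $\Phi^{(u)}$ is exactly the projected support of the subquotient $\calV_u/\calV_{u-1}$ in the induced bases. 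For a coordinate subset $J'\subseteq J_u$, let $U^*$ be the span of the corresponding $x$'s. Then $N(J')$ indexes a coordinate subspace containing $\calV_u/\calV_{u-1}(U^*)$. Semistability together with \Cref{prop:unstable} gives $|N(J')|\ge \dim(\ldots)(U^*)\ge (m_u/n_u)|J'|$, and \Cref{prop:balance_char} then yields balance.
\end{itemize}
Therefore $2^{\eH_\theta(\Phi)}=D:=\sum_u n_u^\rho m_u^{1-\rho}$.

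\emph{Step 2: tensor powers.} By \Cref{prop:tensor_hn}, the HN-filtration of $\calV^{\otimes N}$ has subquotients indexed by slopes $\nu$. The subquotient with slope $\nu$ has dimensions $(a_\nu,b_\nu)=\bigl(\sum n_{\bar u},\sum m_{\bar u}\bigr)$, where both sums run over multi-indices $\bar u\in[r]^N$ with $\prod \mu_{u_l}=\nu$, and $n_{\bar u}=\prod n_{u_l}$. All summands in one group share the ratio $\nu$, so $a_\nu^\rho b_\nu^{1-\rho}=\sum n_{\bar u}^\rho m_{\bar u}^{1-\rho}$ exactly. Summing over $\nu$ gives $D^N$. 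The number of distinct slopes is at most $(N+1)^r$, so some $\nu$ has $a_\nu^\rho b_\nu^{1-\rho}\ge D^N/(N+1)^r$.

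Each HN subquotient of $T^{\otimes N}$ is a restriction of $T^{\otimes N}$. Take $A$ to be a projection $U\to U_t$ along a complement, followed by the quotient map to $U_t/U_{t-1}$, and take $B\colon V\to V/V_{t-1}$. Vectors outside $U_t$ are killed, and the remaining terms give exactly the subquotient tensor. This subquotient is semistable.

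\emph{Step 3: compression.} Apply \Cref{thm:mm_from_semistable} to this semistable piece $S$. If $a\le b$, we get $S\ge \ang{1,\lceil a/2\rceil,\lfloor (a+b)/(2a)\rfloor}$, whose $\gamma$-value for $\gamma=(\rho,1,1-\rho)$ is at least $c\,a\,(b/a)^{1-\rho}=c\,a^\rho b^{1-\rho}$ for an absolute constant $c$ (say $1/8$). If $a>b$, I swap the roles of the first two modes, using the transposed statement obtained by the same basis-shifting argument, to get $\ang{\Omega(a/b),\lceil b/2\rceil,1}$ with the same value bound. Hence $\xi_\gamma(T)^N=\xi_\gamma(T^{\otimes N})\ge c\,D^N/(N+1)^r$. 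Taking $N$-th roots and letting $N\to\infty$ gives $\xi_\gamma(T)\ge D$.

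\emph{Main obstacle.} I expect the delicate point to be that \Cref{thm:mm_from_semistable} needs an infinite field, while the theorem is over arbitrary $\bbF$. \Cref{prop:base_change_stability} keeps semistability after extending scalars, but the restriction to $\ang{E,H,L}$ would then be over $\bbK$. My first attempt would be to show that the generic choices in \Cref{thm:shift} can be made over a sufficiently large finite extension, and then descend. For this I would realise a degree-$d$ extension $\bbK$ as a restriction of the $\bbF$-tensor $\ang{1,d,1}$-structure, absorbing the factor of $d$ (and any polynomial loss) into the $2^{o(N)}$ slack. Alternatively, I would avoid extensions altogether by using the asymptotic formulation in \Cref{lem:feasible_asymp}. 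A secondary check is that the HN-filtration is defined over $\bbF$ and its dimensions are unaffected by such an extension.
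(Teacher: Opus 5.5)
Your proposal is correct and follows essentially the paper's proof: verify that the HN-adapted support meets the hypotheses of \Cref{prop:argmax_2d} to get $2^{\eH_\theta(\Phi)}=\sum_u n_u^\rho m_u^{1-\rho}$, pass to the semistable HN subquotients of $T^{\otimes N}$ via \Cref{prop:tensor_hn}, compress each with \Cref{thm:mm_from_semistable} at a constant-factor loss, and absorb the polynomially many slopes (you take the best slope where the paper sums over all of them, which is equivalent). The finite-field obstacle you flag is genuine, and the paper's proof does not address it either, since it applies \Cref{thm:mm_from_semistable} directly; however, your descent via $\ang{1,d,1}$ cannot work as stated, because the $\bbF$-structure tensor of a degree-$d$ extension $\bbK$ is concise of format $d\times d\times d$, and the workable version is that a $\bbK$-restriction $T_\bbK^{\otimes N}\geq\ang{E,H,L}$ (obtainable once $d=O(N)$) yields $\ang{d^2}\otimes T^{\otimes N}\geq\ang{E,H,L}$ over $\bbF$, whose $\mathrm{poly}(N)$ overhead is then absorbed through the hyperplane description of $\Pi(T)$.
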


Before proving the main result, we need a preparatory lemma.

\begin{lemma} \label{lem:semistable_iff_balanced}
    A representation $\calV$ is semistable if and only if its support in every choice
    of bases is balanced.
\end{lemma}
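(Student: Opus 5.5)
The proof combines the two characterizations already available: the shrunk-subspace criterion for instability (\Cref{prop:unstable}) and the generalized Hall condition for balancedness (\Cref{prop:balance_char}). Throughout, $\calV$ is the representation of the tensor $T$ with $\dim U = n$ and $\dim V = m$. By ``support'' we mean the projected support $\Phi = \supp_{U\times V}(T) \subseteq J\times K$, where $J=[n]$ and $K=[m]$. The index $(j,k)$ lies in $\Phi$ exactly when some arrow $\calV(a_i)$ sends $x_j$ to a vector with nonzero $y_k$-coordinate. The key observation is that for any $J'\subseteq J$, setting $U_{J'}=\Span\{x_j : j\in J'\}$ gives
\[ \calV(U_{J'}) \subseteq \Span\{ y_k : k\in N(J')\}. \]
This holds because each $\calV(a_i)(x_j)$ is a combination of those $y_k$ with $(j,k)\in\Phi$.

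For the forward direction, assume $\calV$ is semistable and fix arbitrary bases. For any $J'\subseteq J$, the observation above together with \Cref{prop:unstable} (applied to $U^*=U_{J'}$ when $J'\neq\emptyset$) gives
\[ |N(J')| \;\geq\; \dim \calV(U_{J'}) \;\geq\; \frac{m}{n}\,|J'|. \]
This is exactly the hypothesis of \Cref{prop:balance_char}, so $\Phi$ is balanced.

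For the converse, argue by contrapositive. Suppose $\calV$ is unstable. Then \Cref{prop:unstable} provides a nonzero shrunk subspace $U^*$ of dimension $d$ with $\dim\calV(U^*) < \frac{m}{n} d$. Choose a basis of $U$ whose first $d$ vectors span $U^*$, and a basis of $V$ whose first $e=\dim\calV(U^*)$ vectors span $\calV(U^*)$. In these bases, every $x_j$ with $j\le d$ is mapped by each arrow into $\Span\{y_1,\dots,y_e\}$. Hence for $J'=[d]$ we get $N(J')\subseteq[e]$, and so
\[ |N(J')| \le e < \frac{m}{n}|J'|. \]
By \Cref{prop:balance_char} the support in these bases is not balanced.

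The main point requiring care is bookkeeping rather than substance. First, the ratio convention must match: the Hall condition uses $|K|/|J| = m/n$, while $\mu^\star = n/m$. Second, the degenerate cases must be handled: empty index sets, and empty rows or columns, which already make a support unbalanced and correspondingly give zero-image subspaces that are shrunk. Finally, the forward direction must allow arbitrary bases, since semistability is basis-free, whereas the converse needs only one well-chosen basis.
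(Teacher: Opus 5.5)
Your proof is correct and matches the paper's argument: both directions combine the shrunk-subspace criterion (\Cref{prop:unstable}) with the generalized Hall condition (\Cref{prop:balance_char}), and your converse, which chooses bases adapted to a shrunk subspace $U^*$ and to $\calV(U^*)$, is exactly the paper's. The only difference is cosmetic. You prove the forward direction directly via $|N(J')| \ge \dim \calV(U_{J'}) \ge \frac{m}{n}|J'|$, while the paper runs the same inequality chain contrapositively, turning a Hall violator $J'$ into the shrunk subspace $\Span\{x_i : i \in J'\}$.
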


\begin{proof}
    Let $\calV$ be a representation on vector spaces $U,V$ with
    $\dim U=n$ and $\dim V=m$. Let $J=[n]$ and $K=[m]$. For bases $x_1,\dots,x_n$ of $U$ and
    $y_1,\dots,y_m$ of $V$, let
    \[
        \Phi(\calV)\subseteq [n]\times [m]
    \]
    be the support of $\calV$ in these bases, i.e.,
    $(i,j)\in\Phi(\calV)$ if $\calV(a_\ell)(x_i)$ has nonzero $y_j$-coefficient
    for some arrow $a_\ell$. By \Cref{prop:balance_char}, $\Phi(\calV)$ is balanced if and only if for every $J'\subseteq J$,
    \[
        |N(J')| \ge \frac{|K|}{|J|}|J'| = \frac{m}{n}|J'|.
    \]

    We first prove the contrapositive of the forward implication.
    Suppose that $\Phi(\calV)$ is not balanced. Then there exists a subset
    $J'\subseteq J$ such that
    \[
        |N(J')| < \frac{m}{n}|J'|.
    \]
    Let
    \[
        U^*=\Span\{x_i : i\in J'\}\subseteq U,
        \qquad
        V^*=\Span\{y_j : j\in N(J')\}\subseteq V.
    \]
    By definition of $N(J')$, every arrow map sends $U^*$ into $V^*$, so
    $\calV(U^*)\subseteq V^*$. Therefore
    \[
        \dim \calV(U^*) \le \dim V^* = |N(J')|
        < \frac{m}{n}|J'|
        = \frac{m}{n}\dim U^*.
    \]
    By \Cref{prop:unstable}, this means that $\calV$ is unstable. Hence, if
    $\calV$ is semistable, then $\Phi(\calV)$ must be balanced.

    Conversely, suppose that $\calV$ is unstable. By \Cref{prop:unstable},
    there exists a nonzero subspace $U^*\subseteq U$ such that
    \[
        \dim \calV(U^*) < \frac{m}{n}\dim U^*.
    \]
    Write $a=\dim U^*$ and $b=\dim \calV(U^*)$. Choose a basis
    $x_1,\dots,x_n$ of $U$ such that
    \[
        U^*=\Span\{x_1,\dots,x_a\},
    \]
    and choose a basis $y_1,\dots,y_m$ of $V$ such that
    \[
        \calV(U^*)=\Span\{y_1,\dots,y_b\}.
    \]
    In these bases, no support edge can go from a row $i\le a$ to a column
    $j>b$, because every image of $U^*$ lies in $\calV(U^*)$.
    Thus for
    \[
        J'=\{1,\dots,a\}\subseteq J
    \]
    we have
    \[
        N(J')\subseteq \{1,\dots,b\},
    \]
    and so
    \[ |N(J')|\le b < \frac{m}{n}a = \frac{m}{n}|J'|. \qedhere \]
\end{proof}

\begin{proof}[Proof of \Cref{thm:value_geq_hn}]
    We denote the basis vectors extending $U_{u-1}$ to $U_u$ and $V_{u-1}$ to $V_u$ by the variable blocks
    \begin{center}
        $J_u = \{n_1+\cdots+n_{u-1}+1, \dots, n_1+\cdots+n_u\}$
        
        and
        
        $K_u = \{m_1+\cdots+m_{u-1}+1, \dots, m_1+\cdots+m_u\}$
    \end{center}
    respectively. The support $\Phi$ then fits exactly into the hypotheses of \Cref{prop:argmax_2d}. Since each $\calV_u / \calV_{u-1}$ is semistable, we have that $\Phi^{(u)}=\Phi \cap (J_u\times K_u)$ is balanced. Since $\calV(U_u) \subset V_u$, we have $N(J_1\cup \cdots\cup J_u) \subset K_1\cup \cdots \cup K_u$, i.e., $\Phi$ is blockwise triangular. Finally, the slopes in the HN-filtration imply the convexity condition. We conclude that
    \[ 2^{\eH_\theta(\Phi)} = \sum_{u=1}^r n_u^\rho m_u^{1-\rho} = \sum_{u=1}^r m_u \mu_u^\rho. \]

    We now prove that $\xi_\gamma(T)$ is at least this value. Consider the tensor power $T^{\otimes n}$; viewed as a representation, this is just $\calV^{\otimes n}$. By an iterative application of \Cref{prop:tensor_hn}, every slope $\nu$ in the HN-filtration of $\calV^{\otimes n}$ is a product $\nu=\mu_1^{\lambda_1} \cdots \mu_r^{\lambda_r}$, where $|\lambda| = n$. Let $T_\nu$ be the tensor corresponding to this subquotient, and let $N_\nu, M_\nu$ denote its dimensions in the first two modes. If $N_\nu \leq M_\nu$, then since $T_\nu$ is semistable, \Cref{thm:mm_from_semistable} gives
    \[ T_\nu \geq \Ang{1, \Ceil{\frac {N_\nu} 2}, \Floor{\frac{M_\nu+N_\nu}{2N_\nu}} }, \]
    and hence
    \[ \xi_\gamma(T^{\otimes n}) \geq \xi_\gamma(T_\nu) \geq \frac 1 8 \cdot N_\nu \cdot (M_\nu/N_\nu)^{1-\rho} = \frac 1 8 N_\nu^{\rho} M_\nu^{1-\rho}. \]
    Similarly, when $N_\nu \geq M_\nu$, we can switch the roles of $U$ and $V$, and in either case we conclude that $\xi_\gamma(T^{\otimes n}) \geq \frac 1 8 N_\nu^{\rho} M_\nu^{1-\rho}$.

    The number of distinct slopes is bounded by the number of distinct types, so summing over all slopes gives
    \begin{align*}
        n^{O(1)} \xi_\gamma(T^{\otimes n}) &\geq \sum_{\nu} \frac 1 8 N_\nu^\rho M_\nu^{1-\rho}\\
        &= \frac 1 8 \sum_{\nu} M_\nu \nu^\rho \\
        &= \frac 1 8 \sum_\nu \left(\sum_{\substack{\ov{u} \in [r]^{\times n} \\ \mu_{u_1}\cdots\mu_{u_n} = \nu}} m_{u_1}\cdots m_{u_n} \right) \nu^{\rho}\\
        &= \frac 1 8 \sum_{\ov{u} \in [r]^{\times n}} (m_{u_1}\cdots m_{u_n}) \cdot (\mu_{u_1}\cdots \mu_{u_n})^\rho\\
        &= \frac 1 8 \left(\sum_{u=1}^r m_u \mu_u^\rho\right)^n = \frac 1 8 2^{\eH_\theta(\Phi) n}.
    \end{align*}
    Since $\xi_\gamma(T)^n = \xi_\gamma(T^{\otimes n}) \geq n^{-O(1)} 2^{\eH_\theta(\Phi) n}$, taking $n\to\infty$ proves the claim.
\end{proof}

\begin{corollary} \label{cor:edge-universal}
    Let $\rho\in[0,1]$, $\theta=(\rho,1-\rho,0)$, and $\gamma=(\rho,1,1-\rho)$. Then for every tensor $T$,
    \[ \xi_\gamma(T) = \zeta^\theta(T). \]
    In particular, $\zeta^\theta$ is a universal spectral point, and every $\phi\in\calX$ with $\gamma(\phi)=\gamma$ satisfies
    \[ \phi = \zeta^\theta. \]
\end{corollary}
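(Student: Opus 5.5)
The plan is to combine the chain of inequalities assembled just before \Cref{thm:value_geq_hn} with that theorem itself. Fix $\rho\in[0,1]$, $\theta=(\rho,1-\rho,0)$, and $\gamma=(\rho,1,1-\rho)$. First I will produce a spectral point $\phi_0\in\calX$ with $\gamma(\phi_0)=\gamma$: star-convexity (\Cref{thm:star-convex}) applied to the gauge points $(1,1,0)$ and $(0,1,1)$ places the whole segment $\{(\rho,1,1-\rho)\}$ inside $\specMM$, as recorded in \Cref{prop:MM_edge}.

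Next, for an arbitrary $\phi\in\calX$ with $\gamma(\phi)=\gamma$ and any tensor $T$, I reduce to concise $T$. This is harmless because $\zeta^\theta$, $\xi_\gamma$ and $\phi$ are all invariant under restricting $T$ to the subspaces it actually uses: these are equivalent tensors under restriction in both directions. I also dispose of slopes $0$ or $\infty$ by this conciseness. I then chain \Cref{cor:value_leq_func}, \Cref{prop:bdd_by_supp}, the choice of an HN-adapted basis giving $\zeta^\theta(T)\le 2^{\eH_\theta(\Phi)}$, and \Cref{thm:value_geq_hn}:
\[ \xi_\gamma(T)\le\phi(T)\le\zeta^\theta(T)\le 2^{\eH_\theta(\Phi)}\le\xi_\gamma(T). \]
All inequalities therefore collapse, so $\xi_\gamma(T)=\phi(T)=\zeta^\theta(T)$ for all $T$. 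This proves both the identity $\xi_\gamma=\zeta^\theta$ and the uniqueness claim that every $\phi$ with $\gamma(\phi)=\gamma$ equals $\zeta^\theta$.

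Universality then follows by applying the chain to $\phi_0$: since $\zeta^\theta=\phi_0\in\calX$, the functional $\zeta^\theta$ is additive, multiplicative, monotone and normalized for all tensors. One subtlety is that \Cref{thm:value_geq_hn} relies on \Cref{thm:mm_from_semistable}, which assumes $\bbF$ infinite. For a finite field I would pass to an infinite extension $\bbK$ and check each piece separately:
\begin{itemize}
\item The HN-filtration and semistability are stable under base change (\Cref{prop:base_change_stability} together with uniqueness of HN), so the dimension data, and hence $2^{\eH_\theta(\Phi)}$, are the same over $\bbF$ and $\bbK$.
\item The upper bound $\zeta^\theta_\bbF(T)\le 2^{\eH_\theta(\Phi)}$ uses an $\bbF$-rational HN basis.
\item $\phi\le\zeta^\theta_{\bbF}$ holds over $\bbF$ directly.
\end{itemize}

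The main obstacle is the lower bound $\xi_\gamma\ge 2^{\eH_\theta(\Phi)}$ over $\bbF$ itself, since restrictions found over $\bbK$ need not descend. I would try to handle this via Strassen duality over $\bbF$, showing $\phi(T)\ge 2^{\eH_\theta(\Phi)}$ directly instead. For each semistable subquotient, the relation $T_\nu\gtrsim\ang{1,H,L}$ would be established over $\bbF$ by noting that generic choices in \Cref{thm:shift} exist over a sufficiently large finite extension $\bbF_{q^s}$. Restriction of scalars turns $\bbF_{q^s}$-tensors into $\bbF$-tensors, multiplying each dimension by at most $s$, which is a subexponential loss after taking tensor powers. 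If this descent fails, I would fall back to stating the corollary over infinite fields and treat finite fields separately.
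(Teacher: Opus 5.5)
Your argument is the paper's: star-convexity supplies a $\phi$ with $\gamma(\phi)=\gamma$, and the chain $\xi_\gamma\le\phi\le\zeta^\theta\le 2^{\eH_\theta(\Phi)}\le\xi_\gamma$ from \Cref{cor:value_leq_func}, \Cref{prop:bdd_by_supp}, the HN-adapted basis and \Cref{thm:value_geq_hn} collapses, giving the identity, universality and uniqueness at once. Your finite-field caveat is something the paper's proof does not spell out (\Cref{thm:value_geq_hn} invokes \Cref{thm:mm_from_semistable}, which is stated only for infinite $\bbF$), and your descent does work: by Schwartz--Zippel an extension of degree $s=O(n)$ suffices for the generic choices on the subquotients of $T^{\otimes n}$; a $\bbK$-restriction $T_\bbK\geq S_\bbK$ gives $\ang{s^2}\otimes T\geq S$ over $\bbF$; and this polynomial loss vanishes in the limit via the hyperplane description of $\Pi(T)$, which yields the bound for $\xi_\gamma$ and not only for $\phi$.
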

\begin{proof}
    By \Cref{thm:star-convex}, there exists $\phi\in\calX$ with $\gamma(\phi)=\gamma$. For this $\phi$, the discussion preceding \Cref{thm:value_geq_hn} gives
    \[ \xi_\gamma(T) \leq \phi(T) \leq \zeta^\theta(T) \leq 2^{\eH_\theta(\Phi)}, \]
    while \Cref{thm:value_geq_hn} gives the reverse inequality $\xi_\gamma(T)\geq 2^{\eH_\theta(\Phi)}$. Hence all inequalities are equalities, so $\xi_\gamma(T)=\phi(T)=\zeta^\theta(T)$ for every tensor $T$. Since $\phi$ is a spectral point, so is $\zeta^\theta$. The same chain of equalities applies to any other $\phi\in\calX$ with $\gamma(\phi)=\gamma$, proving uniqueness.
\end{proof}

\begin{corollary} \label{cor:edge-hn-formula}
    Let $T$ be a tensor, and let $\{(n_u,m_u)\}_{1\leq u\leq r}$ be the dimension data of the HN-filtration of the associated Kronecker-quiver representation. Then for every $\rho\in[0,1]$,
    \[ \zeta^{(\rho,1-\rho,0)}(T) = \sum_{u=1}^r n_u^\rho m_u^{1-\rho}. \]
\end{corollary}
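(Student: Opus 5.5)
This follows almost directly from the chain of inequalities established before \Cref{thm:value_geq_hn}, combined with \Cref{prop:argmax_2d}. Fix $\rho\in[0,1]$ and set $\theta=(\rho,1-\rho,0)$ and $\gamma=(\rho,1,1-\rho)$.

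\textbf{Reduction to the concise case.} First we reduce to a concise tensor $T$. Both sides of the claimed identity are unchanged if we discard the zero part of the modes.

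\begin{itemize}
    \item On the support side, passing to the minimal subspaces containing the slices does not change $\zeta^\theta$. This is because $\zeta^\theta$ is restriction-monotone in both directions: the concise tensor and $T$ restrict to each other.
    \item On the HN side, a non-concise representation has HN-subquotients of slope $\infty$ (dimension data $(n_u,0)$) or slope $0$ (dimension data $(0,m_u)$). These correspond exactly to vectors of $U$ killed by all maps, respectively to the part of $V$ outside the span of the images.
\end{itemize}

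So we should check the edge conventions separately. For $\rho\in(0,1)$, a term $n_u^\rho\cdot 0^{1-\rho}$ vanishes. At the endpoints $\rho\in\{0,1\}$ one adopts the appropriate convention, or notes that the formula is claimed for the concise data. I will state that the dimension data are those of the concise reduction, so that every $n_u,m_u\ge1$.

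\textbf{Main step for concise $T$.} We invoke \Cref{cor:edge-universal} together with the chain
\[ \xi_\gamma(T)\le\phi(T)\le\zeta^\theta(T)\le 2^{\eH_\theta(\Phi)}\le \xi_\gamma(T). \]
Here $\Phi$ is the projected support in a basis adapted to the HN-filtration, and the last inequality is \Cref{thm:value_geq_hn}. This gives $\zeta^\theta(T)=2^{\eH_\theta(\Phi)}$.

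It remains to evaluate $2^{\eH_\theta(\Phi)}$. In the proof of \Cref{thm:value_geq_hn}, $\Phi$ was shown to satisfy the three hypotheses of \Cref{prop:argmax_2d} with blocks $J_u,K_u$ of sizes $n_u,m_u$:
\begin{itemize}
    \item balancedness of the diagonal blocks, via \Cref{lem:semistable_iff_balanced} applied to the semistable subquotients;
    \item block triangularity, from $\calV(U_u)\subseteq V_u$;
    \item strictly decreasing slopes, from \Cref{thm:HN-filtration}.
\end{itemize}
\Cref{prop:argmax_2d} then gives $2^{\eH_\theta(\Phi)}=\sum_u n_u^\rho m_u^{1-\rho}$, which is the claimed formula.

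\textbf{Where care is needed.} The only delicate points are bookkeeping. First, \Cref{prop:argmax_2d} assumes $\Phi$ has no empty rows or columns, which is guaranteed by conciseness. Second, the basis realizing the formula is independent of $\rho$, since the HN-filtration does not depend on $\rho$; this is the informal statement \Cref{cor:edge-hn-formula-intro}. I expect no real obstacle beyond confirming that the conciseness reduction preserves both sides.
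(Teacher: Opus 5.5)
Your proposal is correct and is the paper's argument: the chain from \Cref{cor:edge-universal} identifies $\zeta^\theta(T)$ with $2^{\eH_\theta(\Phi)}$ for the HN-adapted basis, and the evaluation of $2^{\eH_\theta(\Phi)}$ via \Cref{prop:argmax_2d} is the one carried out in the proof of \Cref{thm:value_geq_hn}. Your extra bookkeeping on the concise reduction is a correct refinement of a point the paper passes over with ``without loss of generality'': for non-concise $T$ the HN data acquire slope-$0$/slope-$\infty$ pieces, and at $\rho\in\{0,1\}$ these must be dropped (or $0^0=0$ adopted) for the formula to hold literally.
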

\begin{proof}
    In the proof of \Cref{thm:value_geq_hn}, we showed that
    \[ 2^{\eH_\theta(\Phi)} = \sum_{u=1}^r n_u^\rho m_u^{1-\rho}. \]
    Combining this with \Cref{cor:edge-universal}, which identifies $\zeta^\theta(T)$ with $2^{\eH_\theta(\Phi)}$, proves the claim.
\end{proof}

\begin{corollary} \label{cor:edge-mm-char}
    Let $\theta=(\rho,1-\rho,0)\in \Theta_{(3)}$. Then for every tensor $T$,
    \[ \zeta^\theta(T) = \sup \left\{ \zeta^\theta(\ang{E,H,L})^{1/n} : T^{\otimes n} \geq \ang{E,H,L} \right\}. \]
\end{corollary}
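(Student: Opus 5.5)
This is essentially a rewriting of \Cref{cor:edge-universal}: the plan is to identify the right-hand side with $\xi_\gamma(T)$ for $\gamma=(\rho,1,1-\rho)$, and then invoke $\xi_\gamma=\zeta^\theta$.

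The first step is the evaluation of $\zeta^\theta$ on matrix multiplication tensors,
\[ \zeta^\theta(\ang{E,H,L}) = E^{\rho}HL^{1-\rho} = E^{\gamma_1}H^{\gamma_2}L^{\gamma_3}. \]
There are two ways to justify this. One is to note that \Cref{cor:edge-universal} makes $\zeta^\theta$ a spectral point equal to the $\phi$ with $\gamma(\phi)=\gamma$, and then apply \Cref{prop:param-specmm}. The other is direct: \Cref{cor:edge-hn-formula} applies because the Kronecker representation of $\ang{E,H,L}$ in the first two modes is $(E\cdot H)\times(H\cdot L)$ and is semistable. Semistability holds because, after regrouping, the representation is the tensor product of $\ang{E,1,1}$-type and $\ang{1,1,L}$-type pieces with the identity-like $\ang{1,H,1}$ piece, so \Cref{prop:tensor_preserve_stab} applies. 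The trivial HN filtration then gives $\zeta^\theta = (EH)^\rho(HL)^{1-\rho}$. I would use the first route, since it needs nothing beyond results already stated.

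With that evaluation in hand, the supremum in the statement equals
\[ \sup\{(E^{\gamma_1}H^{\gamma_2}L^{\gamma_3})^{1/n} : T^{\otimes n}\geq \ang{E,H,L}\}, \]
which is exactly the definition of $\xi_\gamma(T)$. Then \Cref{cor:edge-universal} gives $\xi_\gamma(T)=\zeta^\theta(T)$, and this concludes the proof.

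The proof needs no real computation beyond this. The only point requiring care is the bookkeeping of the correspondence $\theta=(\rho,1-\rho,0)\leftrightarrow\gamma=(\rho,1,1-\rho)$, together with the convention that $\ang{E,H,L}$ is matrix multiplication so that $\zeta_{(1)}(\ang{E,H,L})=EH$ and $\zeta_{(2)}(\ang{E,H,L})=HL$.
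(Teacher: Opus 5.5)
Your proposal is correct and matches the paper's proof. Like the paper, you use $\zeta^\theta(\ang{E,H,L})=E^\rho H L^{1-\rho}$ to identify the right-hand side with $\xi_\gamma(T)$ for $\gamma=(\rho,1,1-\rho)$, and then apply \Cref{cor:edge-universal}. Your alternative HN-based check of that evaluation is also valid, but it is unnecessary: the evaluation is already the standard property \eqref{eqn:itp_mm} of $\zeta^\theta$.
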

\begin{proof}
    By \Cref{cor:edge-universal}, we have $\zeta^\theta = \xi_\gamma$ for $\gamma=(\rho,1,1-\rho)$. By definition of $\xi_\gamma$,
    \[ \xi_\gamma(T) = \sup \left\{ (E^\rho H L^{1-\rho})^{1/n} : T^{\otimes n} \geq \ang{E,H,L} \right\}. \]
    Since $\zeta^\theta(\ang{E,H,L}) = E^\rho H L^{1-\rho}$, the result follows.
\end{proof}

\begin{corollary} \label{cor:edge-unique}
    If $\phi\in\calX$ satisfies $\phi(\ang{1,n,1}) = n$ for all $n\geq 1$, then there exists $\rho\in[0,1]$ such that
    \[ \phi = \zeta^{(\rho,1-\rho,0)}. \]
\end{corollary}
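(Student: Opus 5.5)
The plan is to reduce the statement to \Cref{cor:edge-universal} by identifying $\gamma(\phi)$. By \Cref{prop:param-specmm}, $\phi(\ang{1,n,1}) = n^{\gamma_2(\phi)}$. The hypothesis $\phi(\ang{1,n,1}) = n$ for all $n \geq 1$, applied for instance with $n = 2$, therefore forces $\gamma_2(\phi) = 1$. In the notation of \Cref{cor:acr-minimize-edge}, this says $\phi \in \calZ_{(3)}$.

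Next, \Cref{prop:MM_edge} says that any spectral point with $\gamma_2(\phi)=1$ has $\gamma_1(\phi)+\gamma_3(\phi)=1$. Since $\specMM\subseteq[0,1]^3$ by \Cref{fact:reg_MM}, setting $\rho \coloneq \gamma_1(\phi)\in[0,1]$ gives $\gamma(\phi) = (\rho,1,1-\rho)$.

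Finally, apply the uniqueness clause of \Cref{cor:edge-universal} with this $\rho$. Every $\phi\in\calX$ with $\gamma(\phi) = (\rho,1,1-\rho)$ equals $\zeta^{(\rho,1-\rho,0)}$, which is the desired conclusion.

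The only substantive input is \Cref{prop:MM_edge}, which rules out spectral points with $\gamma_2=1$ off the segment by using Sch\"onhage's identity. That result is already established, so I expect no real obstacle here; the argument is just a chain of earlier results.
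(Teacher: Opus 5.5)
Your proposal is correct and follows the paper's proof exactly: the hypothesis forces $\gamma_2(\phi)=1$, \Cref{prop:MM_edge} then gives $\gamma(\phi)=(\rho,1,1-\rho)$ with $\rho\in[0,1]$, and the uniqueness clause of \Cref{cor:edge-universal} finishes. Your added remark that $\rho=\gamma_1(\phi)\in[0,1]$ follows from \Cref{fact:reg_MM} makes explicit a step the paper leaves implicit.
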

\begin{proof}
    The assumption says exactly that $\gamma_2(\phi)=1$. By \Cref{prop:MM_edge}, we then have $\gamma(\phi)=(\rho,1,1-\rho)$ for some $\rho\in[0,1]$. Now apply the uniqueness statement in \Cref{cor:edge-universal}.
\end{proof}

\subsection{Asymptotic (non-)commutative rank}

\begin{theorem} \label{thm:acr_formula}
    Over any field $\bbF$, the asymptotic (non-)commutative rank is given by
    \[ \ACR(T) = \min_{\rho\in[0, 1]} \zeta^{(\rho,1-\rho, 0)}(T), \]
    more precisely, let $\{(n_i, m_i)\}_{1\leq i\leq r}$ be the dimension data of HN-filtration of $T$, then
    \[ \ACR(T) = \min_{\rho\in[0, 1]} \sum_{i=1}^r n_i^\rho m_i^{1-\rho}. \]
\end{theorem}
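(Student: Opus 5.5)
The plan is to combine \Cref{cor:acr-minimize-edge} with \Cref{prop:MM_edge} and \Cref{cor:edge-universal}; the second formula then follows from \Cref{cor:edge-hn-formula}. First I would reduce to an infinite field. Commutative rank is invariant under field extension, so $\ACR$ is too. The dimension data of the HN-filtration should also be invariant: by \Cref{prop:base_change_stability} each subquotient stays semistable after base change, and the slopes remain strictly decreasing. By the uniqueness in \Cref{thm:HN-filtration}, the base-changed filtration is therefore the HN-filtration of $T_\bbK$. Hence both sides of the displayed identity are unchanged by base change. For the left side this is immediate. For the right side I would use \Cref{cor:edge-hn-formula}, which holds over any field, and apply it once over $\bbF$ and once over $\bbK$. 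This also disposes of the question of whether the support functional itself changes under extension.

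Next, over an infinite field, \Cref{cor:acr-minimize-edge} gives
\[ \ACR(T) = \min_{\phi\in\calZ_{(3)}}\phi(T), \qquad \calZ_{(3)}=\{\phi\in\calX:\gamma_2(\phi)=1\}. \]
By \Cref{prop:MM_edge}, every such $\phi$ has $\gamma(\phi)=(\rho,1,1-\rho)$ for some $\rho\in[0,1]$. Conversely, every such $\rho$ occurs, and the uniqueness part of \Cref{cor:edge-universal} identifies $\phi=\zeta^{(\rho,1-\rho,0)}$. Thus $\calZ_{(3)}=\{\zeta^{(\rho,1-\rho,0)}:\rho\in[0,1]\}$, and the minimum over $\calZ_{(3)}$ is exactly $\min_\rho \zeta^{(\rho,1-\rho,0)}(T)$. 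The minimum is attained because the function $\rho\mapsto\sum_i n_i^\rho m_i^{1-\rho}$ is continuous on the compact interval $[0,1]$. Substituting \Cref{cor:edge-hn-formula} gives the explicit formula.

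Two loose ends remain. The first is the concision assumption used in \Cref{sec:final_strike}. If $T$ is not concise, I would restrict to its concise part: restriction to the concise part preserves $\CR$, hence $\ACR$, and also preserves the support functionals. The HN dimension data changes only by slope-$0$ or slope-$\infty$ pieces, such as rows or columns that are identically zero. I would need to check that these are handled consistently; the cleanest route is to state the formula for the concise part. The second loose end is the parenthetical ``(non-)commutative''. It is covered by the Fortin--Reutenauer bound $\CR\le\NCR\le2\CR$, which shows that the two asymptotic ranks coincide.

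The main obstacle is the field-reduction step. Specifically, I must confirm that the HN dimension data is stable under base change. This requires the images of the subrepresentations to remain a filtration with semistable subquotients, and \Cref{prop:base_change_stability} supplies precisely that. Everything else is bookkeeping on top of earlier results.
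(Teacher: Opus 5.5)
Your proposal is correct and follows the paper's proof. \Cref{cor:acr-minimize-edge} writes $\ACR$ as a minimum over $\calZ_{(3)}$; \Cref{prop:MM_edge} together with the uniqueness in \Cref{cor:edge-universal} (packaged in the paper as \Cref{cor:edge-unique}) identifies $\calZ_{(3)}$ with the edge support functionals; and \Cref{cor:edge-hn-formula} gives the explicit formula. Your reduction to an infinite field, using base-change invariance of $\CR$ and of the HN dimension data, is a sound extra bit of care that the paper leaves implicit in the ``without loss of generality $\bbF$ is infinite'' step of \Cref{cor:acr-minimize-edge}, and your handling of concision matches the paper's standing assumption that $T$ is concise.
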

\begin{proof}
    Since asymptotic non-commutative rank agrees with asymptotic commutative rank, it suffices to prove the formula for $\ACR$.

    By \Cref{cor:acr-minimize-edge}, we have
    \[ \ACR(T) = \min_{\phi : \gamma_2(\phi)=1} \phi(T). \]
    By \Cref{cor:edge-unique}, the spectral points with $\gamma_2=1$ are exactly the edge support functionals
    \[ \zeta^{(\rho,1-\rho,0)}, \qquad \rho\in[0,1]. \]
    Therefore
    \[ \ACR(T) = \min_{\rho\in[0,1]} \zeta^{(\rho,1-\rho,0)}(T). \]
    The rest follows from \Cref{cor:edge-hn-formula}. \qedhere
\end{proof}

The formula above is a consequence of the edge-support-functional theory developed in this paper. On the other hand, there is also a much more direct characterization of $\ACR$, obtained only from Strassen duality together with the restriction-theoretic description of commutative rank.

\begin{proposition} \label{prop:acr_func_pow}
    We have
    \[ \ACR(T) = \inf \left\{ \phi(T)^{1/\gamma_2(\phi)} : \phi \in \calX,\ \gamma_2(\phi)>0 \right\}. \]
\end{proposition}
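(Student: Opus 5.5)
We prove the two inequalities separately, working only with Strassen duality (\Cref{thm:duality_tensor}) and the restriction description of commutative rank (\Cref{prop:cr-restr}). As elsewhere, we may pass to an infinite extension field, since $\CR$ is invariant under base change. Then $\CR(T^{\otimes n}) = \max\{q : \ang{1,q,1}\leq T^{\otimes n}\}$.

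\emph{Upper bound.} Fix $\phi\in\calX$ with $\gamma_2(\phi)>0$.
\begin{itemize}
\item Set $q_n=\CR(T^{\otimes n})$, so that $\ang{1,q_n,1}\leq T^{\otimes n}$.
\item Monotonicity and multiplicativity of $\phi$, together with \Cref{prop:param-specmm}, give $q_n^{\gamma_2(\phi)} = \phi(\ang{1,q_n,1}) \leq \phi(T)^n$.
\item Taking $n$-th roots and letting $n\to\infty$ yields $\ACR(T)^{\gamma_2(\phi)}\leq \phi(T)$, that is, $\ACR(T)\leq \phi(T)^{1/\gamma_2(\phi)}$.
\end{itemize}
Taking the infimum over all such $\phi$ gives ``$\leq$''.

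\emph{Lower bound.} We show that any $r$ with $r>\ACR(T)$ admits some $\phi$ with $\gamma_2(\phi)>0$ and $\phi(T)^{1/\gamma_2(\phi)}<r$. We argue by contradiction and use duality.
\begin{itemize}
\item Suppose every $\phi$ with $\gamma_2(\phi)>0$ satisfies $\phi(T)\geq r^{\gamma_2(\phi)}$.
\item Spectral points with $\gamma_2(\phi)=0$ also satisfy $\phi(T)\geq 1 = r^{0}$, by normalization and $\ang{1}\leq T$ for nonzero $T$. (The $T=0$ case is trivial.)
\item Choose rationals $r'<r$ with $r'>\ACR(T)$, and integers $a,b$ with $r'^{\,b}\le a < r^{b}$. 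Then every $\phi\in\calX$ satisfies $\phi(\ang{1,a,1}) = a^{\gamma_2(\phi)} \leq r^{b\gamma_2(\phi)} \leq \phi(T)^b = \phi(T^{\otimes b})$. The middle inequality uses $\gamma_2\ge0$.
\item By \Cref{thm:duality_tensor}, $\ang{1,a,1}\lesssim T^{\otimes b}$, so $\ang{1,a^{N},1}\leq T^{\otimes bN+o(N)}$.
\item Hence $\CR(T^{\otimes bN+o(N)})\geq a^N$, which forces $\ACR(T)\geq a^{1/b}\geq r'$. This contradicts $r'>\ACR(T)$.
\end{itemize}
This gives ``$\geq$''.

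The main obstacle is the step turning the pointwise inequality $\phi(\ang{1,a,1})\le\phi(T^{\otimes b})$ into a restriction. Duality only gives an asymptotic restriction, so we must check that the $o(N)$ overhead in the exponent does not change the limit $\ACR$. It does not: $\CR$ is supermultiplicative, so $\CR(T^{\otimes n})^{1/n}$ converges to its supremum, and any sublinear shift in $n$ disappears after taking roots.

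The second point to handle carefully is the rational approximation. We need $a<r^b$ while $a^{1/b}$ stays above $\ACR(T)$. This is possible because $(r')^b$ and $r^b$ differ by a factor growing in $b$, so for large $b$ an integer $a$ fits in $[r'^{\,b}, r^b)$.
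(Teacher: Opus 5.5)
Your proof is correct and takes essentially the paper's route: the restriction description of $\CR$ combined with Strassen duality applied to $\ang{1,a,1}$ versus powers of $T$, with the split into two directions, the explicit treatment of points with $\gamma_2(\phi)=0$, the integer approximation $r'^{\,b}\le a<r^b$, and the $o(N)$ overhead making rigorous what the paper compresses into a single chain of equivalences. Your lower-bound half uses only duality over $\bbF$ and monotonicity of $\CR$, so the passage to an infinite extension (which you share with the paper) matters only in the upper bound; there, for finite $\bbF$, some extra care is needed, because a restriction $\ang{1,q_n,1}\le T^{\otimes n}$ that exists only over the extension is not an $\bbF$-restriction to which $\phi\in\calX$ can be applied directly.
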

\begin{proof}
    By invariance of commutative rank under field extension, we may replace $\bbF$ by an infinite extension field. Then \Cref{prop:cr-restr} implies that for every $n$,
    \[ \CR(T^{\otimes n}) = \max\{ q\in \bbN : \ang{1,q,1} \leq T^{\otimes n} \}. \]
    Therefore, for any real number $a\geq 1$, we have
    \[
        a \leq \ACR(T)
        \iff \ang{1,a^{n-o(n)},1} \leq T^{\otimes n}.
    \]
    By Strassen duality, this is equivalent to
    \[
        \forall \phi\in\calX,\qquad
        \phi(\ang{1,a^{n-o(n)},1}) \leq \phi(T)^n.
    \]
    Since
    \[
        \phi(\ang{1,a^{n-o(n)},1}) = a^{\gamma_2(\phi)n-o(n)},
    \]
    the condition above is equivalent to
    \[
        \forall \phi\in\calX \text{ with } \gamma_2(\phi)>0,\qquad
        a \leq \phi(T)^{1/\gamma_2(\phi)}.
    \]
    Equivalently,
    \[
        a \leq \inf \left\{ \phi(T)^{1/\gamma_2(\phi)} : \phi \in \calX,\ \gamma_2(\phi)>0 \right\}.
    \]
    Since this holds for every $a\geq 1$, the claimed formula follows. \qedhere
\end{proof}

The characterization in \Cref{prop:acr_func_pow} is obtained directly from Strassen duality, and \emph{a priori} looks quite different from the one in \Cref{thm:acr_formula}, which comes from the abstract minimization method and the analysis of edge support functionals. Comparing the two formulas suggests that there should be an underlying comparison between general spectral points and edge support functionals. We now make this comparison explicit.

\begin{proposition} \label{prop:phi_vs_edge}
    Let $\phi\in \calX$ satisfy $\gamma_2(\phi)>0$. Then the interval
    \[
        [0, 1] \cap \left[1 - \frac{\gamma_3(\phi)}{\gamma_2(\phi)}, \frac{\gamma_1(\phi)}{\gamma_2(\phi)}\right]
    \]
    is nonempty. Moreover, for every $\rho$ in this interval, we have
    \[ \phi(T) \geq \left(\zeta^{(\rho,1-\rho,0)}(T)\right)^{\gamma_2(\phi)} \]
    for all tensors $T$.
\end{proposition}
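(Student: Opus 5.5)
The plan is to compare $\phi$ with the lower value functional $\xi_{\gamma'}$ for $\gamma'=(\rho,1,1-\rho)$. By \Cref{cor:edge-universal}, $\xi_{\gamma'}$ coincides with $\zeta^{(\rho,1-\rho,0)}$. After that identification, the inequality reduces to a comparison of monomials evaluated on matrix multiplication tensors. Throughout, write $\gamma=\gamma(\phi)=(\gamma_1,\gamma_2,\gamma_3)$.

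\emph{Nonemptiness of the interval.} The condition $1-\gamma_3/\gamma_2\le \gamma_1/\gamma_2$ is equivalent to $\gamma_2\le\gamma_1+\gamma_3$. By \Cref{fact:reg_MM} we have $\gamma_1+\gamma_2+\gamma_3\ge 2$ and $\gamma_2\le 1$. Hence
\[ \gamma_1+\gamma_3\ \ge\ 2-\gamma_2\ \ge\ \gamma_2 . \]
The remaining conditions for meeting $[0,1]$ hold automatically, since $1-\gamma_3/\gamma_2\le 1$ and $\gamma_1/\gamma_2\ge 0$. So the interval is a nonempty subinterval of $[0,1]$.

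\emph{Main inequality.} Fix $\rho$ in the interval. Membership means exactly that $\rho\gamma_2\le\gamma_1$ and $(1-\rho)\gamma_2\le\gamma_3$. By \Cref{cor:edge-universal} and the definition of $\xi_{\gamma'}$,
\[ \zeta^{(\rho,1-\rho,0)}(T)^{\gamma_2}=\sup\Bigl\{\bigl(E^{\rho\gamma_2}H^{\gamma_2}L^{(1-\rho)\gamma_2}\bigr)^{1/N} : T^{\otimes N}\ge\ang{E,H,L}\Bigr\}. \]
Take any restriction $T^{\otimes N}\ge\ang{E,H,L}$. Monotonicity and multiplicativity of $\phi$, together with \Cref{prop:param-specmm}, give $\phi(T)^N\ge E^{\gamma_1}H^{\gamma_2}L^{\gamma_3}$. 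Since $E,L\ge 1$, the two exponent inequalities yield $E^{\gamma_1}\ge E^{\rho\gamma_2}$ and $L^{\gamma_3}\ge L^{(1-\rho)\gamma_2}$. Therefore
\[ \phi(T)\ \ge\ \bigl(E^{\rho\gamma_2}H^{\gamma_2}L^{(1-\rho)\gamma_2}\bigr)^{1/N}. \]
Taking the supremum over all such $N,E,H,L$ gives $\phi(T)\ge\zeta^{(\rho,1-\rho,0)}(T)^{\gamma_2}$.

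\emph{Where the difficulty lies.} There is no serious obstacle here. All the depth sits in the identification $\zeta^\theta=\xi_{\gamma'}$ from \Cref{cor:edge-universal}, which we take as given. The only point needing care is that the comparison of monomials requires $E,L\ge 1$, which holds because these are dimensions of a nonzero matrix multiplication tensor.
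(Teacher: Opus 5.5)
Your proof is correct and follows the paper's argument. Nonemptiness uses the same chain $\gamma_1+\gamma_3\ge 2-\gamma_2\ge\gamma_2$, and the inequality is the same chain $\zeta^{(\rho,1-\rho,0)}(T)^{\gamma_2}=\xi_{\gamma_2(\rho,1,1-\rho)}(T)\le\xi_{\gamma(\phi)}(T)\le\phi(T)$. The only difference is that you prove the last two steps directly from monotonicity of $\phi$ on restrictions $T^{\otimes N}\ge\ang{E,H,L}$ (using $E,L\ge 1$), whereas the paper cites \Cref{cor:value_leq_func}, which it derives from the feasible-region description.
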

\begin{proof}
    Since every point of $\specMM$ lies in $[0,1]^3$ and satisfies $\gamma_1+\gamma_2+\gamma_3\geq 2$, we have
    \[
        \gamma_1(\phi)+\gamma_3(\phi)\geq 2-\gamma_2(\phi)\geq 1\geq \gamma_2(\phi).
    \]
    Hence
    \[
        1 - \frac{\gamma_3(\phi)}{\gamma_2(\phi)} \leq \frac{\gamma_1(\phi)}{\gamma_2(\phi)},
    \]
    while the left endpoint is at most $1$ and the right endpoint is at least $0$. This proves the interval is nonempty.

    Let $\theta=(\rho,1-\rho,0)$. By the assumption on $\rho$, we have
    \[
        \zeta^\theta(T)^{\gamma_2(\phi)} \stackrel{\mathrm{Cor.~\ref{cor:edge-universal}}}{=} \xi_{(\rho, 1, 1-\rho)}(T)^{\gamma_2(\phi)} = \xi_{\gamma_2(\phi)\cdot (\rho, 1, 1-\rho)}(T) \leq \xi_{\gamma(\phi)}(T) \stackrel{\mathrm{Cor.~\ref{cor:value_leq_func}}}{\leq} \phi(T). \qedhere
    \]
\end{proof}

\subsection{Lower support functional}

We briefly discuss Strassen's \emph{lower support functional} \cite[\S3]{Str91SuppFunc}. For
$\theta\in\Theta$, define
\[ \zeta_\theta(T) = \max_{\mathrm{bases~of~}U,V,W} 2^{\eH_\theta(\max \supp T)}, \]
where $\max \supp T$ denotes the set of maximal elements of $\supp T$ in the product partial order on
$[n]\times[m]\times[k]$. Strassen proved that $\zeta_\theta$ is normalized, monotone, superadditive, and supermultiplicative, and moreover that
\[ \zeta_\theta \leq \zeta^\theta. \]
He also showed that equality holds for oblique tensors, namely tensors for which some choice of basis satisfies
\[ \supp T = \max \supp T. \]
This was used to show that, on the class of oblique tensors, the upper support functional $\zeta^\theta$ is itself a spectral point.

The lower and upper support functionals are nevertheless genuinely different in general. B\"urgisser proved \cite{Bur90Thesis} (see also \cite[\S7 (iii)]{Str91SuppFunc}) that for generic tensors they are separated whenever $\theta$ lies in the relative interior of $\Theta$, that is, when $\theta_\varkappa>0$ for all $\varkappa$. Thus the remaining open case is the boundary. In view of the rigidity results proved in this paper for support functionals whose parameter lies on an edge of $\Theta$, it is natural to conjecture that equality should still hold there.

\begin{conjecture}
    Over an algebraically closed field, one has $\zeta_\theta = \zeta^\theta$ for every $\theta\in\Theta_{(\varkappa)}$.
\end{conjecture}

We can verify the asymptotic version of this conjecture.
\begin{proposition}
    For every $\theta\in\Theta_{(\varkappa)}$, the regularization of the lower support functional
    \[ \utilde{\zeta_\theta}(T) \coloneq \lim_{n\to\infty} \zeta_\theta(T^{\otimes n})^{1/n} \]
    exists and satisfies
    \[ \utilde{\zeta_\theta} = \zeta^\theta. \]
\end{proposition}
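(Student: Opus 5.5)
We prove the statement by squeezing $\zeta_\theta$ between $\zeta^\theta$ from above and a tensor-power lower bound. By symmetry of the three modes we may assume $\varkappa=3$, so $\theta=(\rho,1-\rho,0)$. Strassen proved that $\zeta_\theta$ is supermultiplicative and that $\zeta_\theta\le\zeta^\theta$. Since $\zeta_\theta(T^{\otimes n})\le\zeta^\theta(T^{\otimes n})\le \Rk(T)^n$, Fekete's lemma shows that the regularization $\utilde{\zeta_\theta}(T)=\sup_n\zeta_\theta(T^{\otimes n})^{1/n}$ exists. By \Cref{cor:edge-universal}, $\zeta^\theta$ is multiplicative, so $\zeta^\theta(T^{\otimes n})=\zeta^\theta(T)^n$ and hence $\utilde{\zeta_\theta}\le\zeta^\theta$. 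It therefore suffices to prove the reverse inequality $\utilde{\zeta_\theta}(T)\ge\zeta^\theta(T)$.

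For the lower bound we use \Cref{cor:edge-mm-char}, which gives $\zeta^\theta(T)=\sup\{(E^\rho H L^{1-\rho})^{1/N}: T^{\otimes N}\ge\ang{E,H,L}\}$. Fix $N$ and $E,H,L$ with $T^{\otimes N}\ge\ang{E,H,L}$. Strassen showed that $\zeta_\theta$ is monotone and that $\zeta_\theta=\zeta^\theta$ on oblique tensors. Matrix multiplication tensors are oblique, since in the standard bases every support element $(x_{ij},y_{jk},z_{ki})$ is maximal in the product order (the support is a "tight" set). This gives
\[
\zeta_\theta(T^{\otimes N})\ge \zeta_\theta(\ang{E,H,L})=\zeta^\theta(\ang{E,H,L})=E^\rho H L^{1-\rho}.
\]
Consequently $\utilde{\zeta_\theta}(T)\ge \zeta_\theta(T^{\otimes N})^{1/N}\ge (E^\rho HL^{1-\rho})^{1/N}$, where the first inequality uses the supremum form of the regularization. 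Taking the supremum over $N,E,H,L$ and applying \Cref{cor:edge-mm-char} yields $\utilde{\zeta_\theta}(T)\ge\zeta^\theta(T)$.

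The only step needing real care is verifying that $\zeta_\theta(\ang{E,H,L})=E^\rho HL^{1-\rho}$. The cleanest route is Strassen's oblique-tensor theorem, which requires checking that the support of $\ang{E,H,L}$ is an antichain. I expect this check to be the main (mild) obstacle. Suppose $(i,j,j,k,k,i)\le(i',j',j',k',k',i')$ coordinatewise under a suitable lexicographic labelling of the index pairs. Comparing the first coordinates with the third shows the pairs coincide, so the support is indeed an antichain. If that labelling argument is awkward, a fallback is to evaluate $\zeta_\theta$ on $\ang{E,H,L}$ directly. In the standard bases, $\max\supp=\supp$, and the uniform distribution on the support has marginals that are uniform on $E H$, $H L$, and $L E$ elements respectively. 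This gives $2^{\eH_\theta}\ge (EH)^\rho (HL)^{1-\rho}=E^\rho HL^{1-\rho}$, which is the only direction needed.
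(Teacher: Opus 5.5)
Your proposal is correct and follows the paper's proof step for step. The upper bound comes from Fekete's lemma, $\zeta_\theta\le\zeta^\theta$, and multiplicativity of $\zeta^\theta$; the lower bound comes from obliqueness of $\ang{E,H,L}$, monotonicity of $\zeta_\theta$, and $\xi_\gamma=\zeta^\theta$ in the form of \Cref{cor:edge-mm-char}.

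One caution on the detail you flagged. With plain lexicographic orders, the support of $\ang{E,H,L}$ is not an antichain: the elements indexed by $(i,j,k)=(1,1,1)$ and $(2,2,2)$ are comparable. Your fallback does not avoid this, because it still needs $\max\supp=\supp$. Justify obliqueness by tightness instead: order the $x$-, $y$-, $z$-indices by the injective labels $(H+1)i+j$, $(H+1)(E+1)k-j$, and $-(H+1)\bigl((E+1)k+i\bigr)$. These sum to zero on every support element, so no two distinct support elements are comparable.
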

\begin{proof}
    Since $\zeta_\theta$ is supermultiplicative, the limit exists by Fekete's lemma.

    By Strassen's inequality $\zeta_\theta\leq \zeta^\theta$ \cite[Corollary~4.3]{Str91SuppFunc}, and since $\zeta^\theta$ is spectral on every edge by \Cref{cor:edge-universal}, we immediately get
    \[ \utilde{\zeta_\theta}(T) \leq \zeta^\theta(T). \]

    It remains to prove the reverse inequality. By symmetry it suffices to consider
    \[ \theta=(\rho,1-\rho,0)\in\Theta_{(3)}, \]
    and write
    \[ \gamma=(\rho,1,1-\rho)=\gamma(\zeta^\theta). \]
    Matrix multiplication tensors are oblique, so for every $E,H,L$ we have
    \[ \zeta_\theta(\ang{E,H,L}) = \zeta^\theta(\ang{E,H,L}) = E^{\gamma_1}H^{\gamma_2}L^{\gamma_3}. \]
    Therefore, whenever $T^{\otimes N}\geq \ang{E,H,L}$, monotonicity of $\zeta_\theta$ gives
    \[ \zeta_\theta(T^{\otimes N}) \geq E^{\gamma_1}H^{\gamma_2}L^{\gamma_3}. \]
    Taking $N$-th roots and then the supremum over all such restrictions, we obtain
    \[ \utilde{\zeta_\theta}(T) \geq \xi_\gamma(T). \]
    Finally, \Cref{cor:edge-universal} gives
    \[ \xi_\gamma = \zeta^\theta \]
    on the edge $\Theta_{(3)}$. Hence $\utilde{\zeta_\theta}(T)\geq \zeta^\theta(T)$, completing the proof.
\end{proof}

\section{Efficiently computing the edge functionals} \label{sec:algorithm}

By our characterization of the edge support functionals, for $\theta \in \Theta_{(\varkappa)}$, $\zeta^\theta(T)$ is determined by the HN-filtration of the quiver representation associated to $T$. By combining this with the following algorithm we can efficiently compute $\zeta^\theta(T)$.

\begin{theorem}[Theorem 3.7, \cite{Cheng24HN}, specialized to Kronecker quivers]\label{thm:hn_alg}
	Let $T \in \bbF^{n \times m \times k}$ be a tensor over an infinite field. The HN-filtration of $T$ (with slope $\mu^\star$ as in \Cref{sec:quiver}) can be computed in time polynomial in $n$, $m$, and $k$, assuming unit-cost arithmetic over $\bbF$.
\end{theorem}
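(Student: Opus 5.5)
We want a deterministic polynomial-time algorithm for the HN-filtration of the Kronecker-quiver representation attached to $T$, with slope $\mu^\star=\dim U/\dim V$. The plan is to reduce the computation to a polynomial number of calls to a deterministic shrunk-subspace oracle. Concretely, we use the non-commutative rank algorithms of Ivanyos--Qiao--Subrahmanyam or Garg--Gurvits--Oliveira--Wigderson, which work over any infinite field (or over large enough extensions) with unit-cost arithmetic. Given a square matrix tuple that is not of full non-commutative rank, these algorithms return a witness: a subspace $U^*$ with $\dim\calV(U^*)<\dim U^*$. More strongly, they return a subspace maximizing $\dim U^*-\dim\calV(U^*)$.

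The first step handles rectangular semistability testing. This follows the trick already used in the proof of \Cref{prop:base_change_stability}. Take $n\times m$ data $\calV$ with $g=\gcd(n,m)$, and tensor with a representation that equalizes dimensions. Replace $\calV$ by $\calV\otimes\calW$, where $\calW$ is a semistable representation of dimension $(m/g,n/g)$, for instance a generic tuple of $(m/g)\times(n/g)$ matrices, or an explicit semistable one. The product is square of size $nm/g$. By \Cref{prop:tensor_preserve_stab} and the converse direction of \Cref{prop:unstable}, $\calV$ is semistable iff the square product has full non-commutative rank. The reverse implication needs a small argument: a shrunk subspace $U^*$ of $\calV$ yields the shrunk subspace $U^*\otimes U_{\calW}$ of the product.

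The second step finds the maximal destabilizing subrepresentation $\calV_1$, which is the first HN piece. We characterize it as the subrepresentation $(U',\calV(U'))$ that maximizes the slope $\dim U'/\dim\calV(U')$, and among slope maximizers has the largest dimension. To find it, parametrize a candidate slope $\mu=a/b$; only $O(nm)$ rationals are possible. For each $\mu$, compute
\[ \max_{U'}\bigl(b\dim U' - a\dim\calV(U')\bigr), \]
which equals a non-commutative-rank deficiency of the blown-up tuple $\calV\otimes \calW_{a,b}$, where $\calW_{a,b}$ is a representation of dimensions $(b,a)$. The largest $\mu$ with positive (or zero, including $U'=U$) value gives the top slope. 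The maximal optimizer is recovered as the canonical maximal shrunk subspace, which the IQS algorithm outputs. We then pass to the quotient $\calV/\calV_1$ and recurse. There are at most $n+m$ steps, and each involves polynomially many oracle calls on matrices of size polynomial in $n,m$, since $a,b\le n+m$.

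The third step verifies that the output is the HN-filtration. By construction the pieces have strictly decreasing slopes. The semistability of each quotient comes from maximality of the slope, by the standard HN argument, and uniqueness is \Cref{thm:HN-filtration}.

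I expect the main obstacle to be step two. We must extract the canonical maximal destabilizing subspace of the blown-up tuple, and then descend it to a subspace of $U$ of the form $U'\otimes U_\calW$. The plan is to use that the maximal-deficiency shrunk subspace is unique, being a lattice maximum under sum and intersection by submodularity of $U'\mapsto\dim\calV(U')$. It is therefore invariant under $\GL$ acting on the $\calW$ factor, which forces the product form. Alternatively, we can simply invoke \cite{Cheng24HN}, whose Theorem~3.7 does exactly this for general quivers via such a reduction.
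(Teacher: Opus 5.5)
The paper gives no proof of this statement. It imports Theorem~3.7 of \cite{Cheng24HN}, which is also your fallback, and only remarks in the proof of \Cref{cor:alg} that the underlying subroutines are the shrunk-subspace algorithms of \cite{IQS18NCR}. Your self-contained reduction is in that spirit, and its first half holds up. Step one is correct. The deficiency identity in step two also holds, provided you fix $\calW_{a,b}$ to be the full space of $a\times b$ matrices. For $U^*\subseteq U\otimes\bbF^b$ and $U'=\sum_j\pi_j(U^*)$ (coordinate projections), you get $U^*\subseteq U'\otimes\bbF^b$ and $(\calV\otimes\calW_{a,b})(U^*)=\calV(U')\otimes\bbF^a$. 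This gives the identity and the descent to product form directly. For a merely generic $\calW$, the $\GL$-invariance you appeal to is not available. Also, GGOW is specific to $\bbQ$, so over an arbitrary infinite field only IQS applies.

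The gap is in extracting $\calV_1$. The quantity $\max_{U'}(b\dim U'-a\dim\calV(U'))$ is always $\geq 0$ (take $U'=0$), and it is positive exactly when $a/b<\mu_{\max}$, so your threshold rule is off. At $a/b=\mu_{\max}$ the maximum is $0$ and is attained by $U'=0$, so you genuinely need the \emph{largest} maximizer.

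The witness IQS construct is the \emph{smallest} one, not the largest. It is the preimage, under a maximum-rank $A$, of the limit $W^*$ of the second Wong sequence. Every maximum-deficiency $X$ contains $\ker A$ and satisfies $A^{-1}(\mathcal{B}(X))=X$, hence contains every $A^{-1}(W_i)$. At $\mu_{\max}$ this witness is $0$. So step two as written returns nothing, and the claim in step three that slopes are ``strictly decreasing by construction'' has nothing to rest on.

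Here is a repair that needs no canonical witness. Call the oracle at a rational $\mu$ strictly between $\mu_{\max}$ and the next smaller ratio $p/q$ with $p\le n$, $q\le m$; their mediant keeps $a\le 2n$ and $b\le 2m$. Iterating the estimate from the proof of \Cref{prop:filt_preserve_stab} along the HN-filtration gives
\[ \dim U'-\mu\dim\calV(U')\le \sum_i d_i\Bigl(1-\frac{\mu}{\mu_i}\Bigr), \qquad d_i=\dim\frac{U'\cap U_i}{U'\cap U_{i-1}}. \]
Only the $i=1$ coefficient is positive, so the maximizer is unique and equals $U_1$; in the blow-up it is $U_1\otimes\bbF^b$. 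Any maximum-deficiency witness then yields $U_1$.

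Alternatively, you can get the largest maximizer at $\mu_{\max}$ by running IQS on the transposed tuple and taking annihilators. Finally, the slope-$\infty$ piece (the common kernel) and the slope-$0$ piece ($V/\calV(U)$) have to be split off separately.
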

\begin{corollary}\label{cor:alg}
Let $\bbF$ be arbitrary, $T \in \bbF^{n \times m \times k}$, and
\[
\theta=(\rho,1-\rho,0), \qquad \rho=\frac{a}{b}\in [0,1]\cap \bbQ.
\]
Then $\zeta^\theta(T)$ can be approximated to absolute error $2^{-\ell}$ in time polynomial in $n,m,k,\ell,\log a$, and $\log b$.
\end{corollary}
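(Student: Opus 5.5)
The plan has four steps: reduce to an infinite field, compute the HN-filtration with \Cref{thm:hn_alg}, evaluate the closed formula of \Cref{cor:edge-hn-formula}, and control the numerical error.

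\textbf{Step 1: reduction to an infinite field.} If $\bbF$ is finite, we pass to an infinite extension, for instance $\bbF(t)$, or a sufficiently large finite extension in which we can still do arithmetic in polynomial time. This loses nothing, for two reasons. First, $\zeta^\theta$ is determined by the dimension data of the HN-filtration by \Cref{cor:edge-hn-formula}. Second, this dimension data is invariant under base change. To see the invariance, note that by \Cref{prop:base_change_stability} semistability is preserved under base change. Hence the base change of the $\bbF$-filtration still has semistable subquotients with strictly decreasing slopes, and by the uniqueness in \Cref{thm:HN-filtration} it is the HN-filtration over the extension. For a general mode $\varkappa$, we first permute the modes so that $\varkappa=3$.

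\textbf{Step 2: compute the filtration.} We run the algorithm of \Cref{thm:hn_alg} to obtain the HN-filtration and read off the integers $(n_u,m_u)$, $1\le u\le r\le n+m$. If the tensor is not concise, the slope-$0$ and slope-$\infty$ pieces simply contribute terms with $n_u=0$ or $m_u=0$. With the convention $0^\rho=0$ the formula still applies, after first removing the kernel directions, which is again polynomial-time linear algebra. If a bit-complexity bound is needed over $\bbQ$, the cost can be handled either by appealing to the bit-size guarantees of \cite{Cheng24HN} or by noting that only the integers $n_u,m_u$ are output.

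\textbf{Step 3: evaluate the formula.} We then evaluate $\zeta^\theta(T)=\sum_u n_u^{a/b}m_u^{1-a/b}$. Each term equals $2^{\rho\log_2 n_u+(1-\rho)\log_2 m_u}$, where the exponent is at most $\log_2\max(n,m)$.

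\textbf{Step 4: error control (the main obstacle).} This is the only mildly delicate part. The total value is at most $n+m$, so it suffices to compute each term to absolute error $2^{-\ell}/(n+m)$. We compute $\log_2 n_u$ and $\log_2 m_u$ to $O(\ell+\log(nm))$ bits and multiply them by the rational $a/b$. This multiplication costs polynomial time in $\log a$ and $\log b$. We then exponentiate with standard series and Newton methods. Since the exponent is bounded by $\log_2(n+m)$, the relative errors stay controlled with $O(\ell+\log(n+m))$ bits of precision. Alternatively, we can compute $x=n_u^a m_u^{b-a}$ as an exact integer with $O(b\log(nm))$ bits and take its $b$-th root by bisection. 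That route, however, is polynomial in $b$ rather than $\log b$, so we prefer the logarithm approach.
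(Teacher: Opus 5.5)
Your proposal is correct and follows the paper's proof step for step: the base change is justified by \Cref{prop:base_change_stability} (your uniqueness argument makes explicit what the paper only cites), the HN-filtration is computed with \Cref{thm:hn_alg}, and $\sum_u n_u^\rho m_u^{1-\rho}$ is evaluated via $\log$/$\exp$ with $O(\ell+\log(n+m))$ bits of precision. The one point to tighten is Step 1: unit-cost arithmetic over $\bbF(t)$ is not polynomial time over $\bbF$ without a degree bound, so you should use the large finite extension, and there you need the fact the paper cites from \cite[Remark A.3]{Cheng24HN} and \cite[Theorem 1.5]{IQS18NCR}, namely that the algorithm behind \Cref{thm:hn_alg}, which is stated for infinite fields, already works over any field with $(nmk)^{\Omega(1)}$ elements.
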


\begin{proof}
First, a technical point: if $\bbF$ is a small finite field, we assume access to efficient computation in an extension field of size $(nmk)^{\Omega(1)}$. Although \Cref{thm:hn_alg} is stated for infinite fields, it turns out that the matrix-space algorithms used in its proof extend to fields with sufficiently many elements; see \cite[Remark A.3]{Cheng24HN} and \cite[Theorem 1.5]{IQS18NCR}. In our case $|\bbF| \ge (nmk)^{\Omega(1)}$ suffices. This does not change the output of our algorithm: since semistability is preserved by base change (\Cref{prop:base_change_stability}), the HN-filtration is invariant under base change (cf.~\cite[Proposition 2.4]{HS20Stab}).

Using \Cref{thm:hn_alg} we obtain the dimension data $\{(n_u,m_u)\}_{u=1}^r$ of the HN-filtration in polynomial time. Finally, we compute
\[
\sum_{u=1}^r n_u^\rho m_u^{1-\rho}
\]
to absolute error $2^{-\ell}$. Each term $n_u^\rho m_u^{1-\rho}$ is evaluated as
\[
\exp\!\left(\rho \log n_u + (1-\rho)\log m_u\right).
\]
Standard high-precision algorithms compute $\log$ and $\exp$ of positive rational inputs to $s$ bits in time polynomial in $s$ and the input bitlengths. Since $n_u^\rho m_u^{1-\rho}\le \max\{n_u,m_u\}$, taking
$s=\ell+O(\log r+\log n_u+\log m_u)$ suffices to ensure total absolute error at most $2^{-\ell}$ after summing over all $u$.
\end{proof}

\begin{remark}
	The above algorithms assume unit-cost arithmetic over $\bbF$. For the case of finite fields, this implies efficient algorithms in the Turing model under any standard encoding of field elements. Over $\bbQ$, polynomial-time Turing complexity follows from the bit-complexity analysis of the shrunk subspace subroutine of \cite[Theorem 1.5]{IQS18NCR}. Alternatively, for tensors over $\bbC$ with rational or Gaussian integer entries, we may use \cite{IOS25HN}, which explicitly gives a polynomial-time HN-filtration algorithm in the Turing model.
\end{remark}

\bibliographystyle{alphaurl}

\bibliography{main.bib}

\end{document}